\documentclass[a4paper,onecolumn,accepted=2022-08-08]{quantumarticle}
\pdfoutput=1

\usepackage{amsfonts,amssymb}

\usepackage[T1]{fontenc}

\usepackage{amsmath,amsthm,dsfont}
\usepackage{mathabx}

\usepackage{microtype}

\usepackage[affil-it]{authblk}
\usepackage{enumerate}
\usepackage[english]{babel}
\usepackage{graphicx}	
\usepackage[margin=15pt,font=small,labelfont={bf,sf}]{caption}
\usepackage{subcaption}
\usepackage[margin=3cm]{geometry}
\usepackage{url}
\usepackage{todonotes}
\usepackage{verbatim}
\usepackage{hyperref}
\hypersetup{colorlinks=true,citecolor=blue,linkcolor=blue,filecolor=blue,urlcolor=blue,breaklinks=true,bookmarksnumbered=true, bookmarksopen=true,bookmarksopenlevel=1}
\definecolor{dullmagenta}{rgb}{0.4,0,0.4}
\allowdisplaybreaks
\usepackage{braket}
\usepackage{dsfont}
\usepackage{mathtools}
\usepackage{tikz}
\usetikzlibrary{shapes.geometric,decorations.markings,external}
%\tikzexternalize[optimize=false,prefix=figures/]
\usepackage{quantikz}
\usepackage{adjustbox}

\usepackage[ruled,vlined]{algorithm2e}
\usepackage[capitalise]{cleveref}
\crefname{equation}{}{}

% setup theorem styles
\newtheorem{theorem}{Theorem}[section]

\newtheorem{lemma}[theorem]{Lemma}

\newtheorem{claim}{Claim}[section]

% algorithm setup
\SetKwInput{KwInput}{Input}
\SetKwInput{KwOutput}{Output}
\SetKwProg{Fn}{Def}{:}{}

\SetCommentSty{mycommfont}
\SetKwRepeat{Struct}{struct \{}{\}}%

% define custom commands
\DeclareFontFamily{U}{mathb}{\hyphenchar\font45}
\DeclareFontShape{U}{mathb}{m}{n}{
      <5> <6> <7> <8> <9> <10> gen * mathb
      <10.95> mathb10 <12> <14.4> <17.28> <20.74> <24.88> mathb12
      }{}
\DeclareSymbolFont{mathb}{U}{mathb}{m}{n}

\DeclareFontFamily{U}{matha}{\hyphenchar\font45}
\DeclareFontShape{U}{matha}{m}{n}{
      <5> <6> <7> <8> <9> <10> gen * matha
      <10.95> matha10 <12> <14.4> <17.28> <20.74> <24.88> matha12
      }{}
\DeclareSymbolFont{matha}{U}{matha}{m}{n}

\DeclareMathSymbol{\oasterisk}{3}{matha}{"66}
\DeclareMathSymbol{\boxasterisk}{3}{mathb}{"66}
\def\ostar{\oasterisk}
\def\boxstar{\boxasterisk}

\DeclareFontFamily{U}{bbold}{}
\DeclareFontShape{U}{bbold}{m}{n}
 {
  <-5.5> s*[1.069] bbold5
  <5.5-6.5> s*[1.069] bbold6
  <6.5-7.5> s*[1.069] bbold7
  <7.5-8.5> s*[1.069] bbold8
  <8.5-9.5> s*[1.069] bbold9
  <9.5-11> s*[1.069] bbold12 %was 10
  <11-15> s*[1.069] bbold12
  <15-> s*[1.069] bbold17
 }{}

\DeclareRobustCommand{\id}{%
  \text{\usefont{U}{bbold}{m}{n}1}%
}

\usepackage[backend=biber,sorting=none,style=phys-jmr,biblabel=brackets,backref=true,bibencoding=utf8,maxnames=20,eprint=true]{biblatex}
\makeatletter
\pretocmd{\blx@head@bibintoc}{\phantomsection}{}{\ddt}
\makeatother
\DefineBibliographyStrings{english}{%
  backrefpage = {page},% originally "cited on page"
  backrefpages = {pages},% originally "cited on pages"
}
\addbibresource{bpqm.bib}
\usepackage{csquotes}

\def\concat{\mathbin\Vert}
\def\cnot{\textsc{cnot}}
\def\notc{\textsc{notc}}

\newcommand{\vmp}[1]{\ensuremath{V_{#1}^{\text{mp}}}}
\newcommand{\vext}[1]{\ensuremath{V_{#1}^{\text{ext}}}}
\newcommand{\ketmp}[1]{\ensuremath{\ket{#1^{\text{mp}}}}}
\newcommand{\ketext}[1]{\ensuremath{\ket{#1^{\text{ext}}}}}
\newcommand{\ketbra}[1]{\ket{#1}\bra{#1}}
\newcommand{\abs}[1]{\left\lvert#1\right\rvert}
\newcommand{\norm}[1]{\left\|#1\right\|}
\newcommand{\qstate}[2]{\ket{Q (#1,#2)}}
\newcommand{\qstateconj}[2]{\bra{Q (#1,#2)}}
\newcommand{\anglestate}[1]{\ket{\sphericalangle (#1)}}
\newcommand{\anglestateconj}[1]{\bra{\sphericalangle (#1)}}
\NewDocumentCommand\quant{g}{%
  \IfNoValueTF{#1}{R_B}{R_B({#1})}%
}
\NewDocumentCommand\quantalphabeta{m +g}{%
  \IfNoValueTF{#2}{\tilde{R}_{#1}}{\tilde{R}_{#1}({#2})}%
}
\DeclareMathOperator*{\argmax}{arg\,max}
\DeclareMathOperator*{\argmin}{arg\,min}
\DeclareMathOperator{\poly}{poly}
\DeclareMathOperator{\polylog}{polylog}
\DeclareMathOperator{\Tr}{Tr} 
% custom uniformly controlled symbol for quantikz (in analogue of \phase)
\DeclareExpandableDocumentCommand{\uctrl}{O{}m}{|[shape=circle,draw=black,minimum size=5pt,#1,path picture={\fill[black] (path picture bounding box.north east) -- plot (path picture bounding box.south west) -- (path picture bounding box.south east) -- cycle;},label={[phase label,#1]#2}]| {} \qw}

% tikz global definitions
% nodes
\tikzset{roundnode/.style = {shape=circle, text=black, draw=black, fill=white, inner sep=0.5mm, outer sep=0.0mm}}
\tikzset{quadnode/.style = {shape=rectangle, text=black, draw=black, fill=white, inner sep=0.5mm, outer sep=0.0mm}}
\tikzset{squarenode/.style = {shape=regular polygon,regular polygon sides=4, text=black, draw=black, fill=white, inner sep=0.0mm, outer sep=0.0mm}}
\tikzset{phase/.style = {draw,fill,shape=circle,minimum size=5pt,inner sep=0pt}}
%\tikzset{ucontrolled/.style = {draw=black,fill=white,shape=rectangle,minimum size=5pt,inner sep=0pt}}
\tikzset{ucontrolled/.style = {fill=white,shape=circle,draw=black,minimum size=5pt,inner sep=0pt,path picture={\fill[black] (path picture bounding box.north east) -- plot (path picture bounding box.south west) -- (path picture bounding box.south east) -- cycle;}}}
\tikzset{circlewc/.style={draw,circle,cross,minimum width=0.3 cm}}

\newlength\eqheight
\settoheight\eqheight{$=$}
\def\cequal{\raisebox{0pt}[\eqheight][0pt]{\raisebox{-0.85pt}{$=$}}}

% allow for larger matrices
\setcounter{MaxMatrixCols}{20}

\title{Quantum message-passing algorithm for optimal and efficient decoding}% of CQ channels}
\author{Christophe Piveteau and Joseph M. Renes}
\affil{\normalsize Institute for Theoretical Physics, ETH Z\"urich, Switzerland}
\date{}

\begin{document}

\renewcommand{\abstractname}{\vspace{-2.5\baselineskip}} % https://tex.stackexchange.com/a/53175

\maketitle

\begin{abstract}
Recently, Renes proposed a quantum algorithm called belief propagation with quantum messages (BPQM) for decoding classical data encoded using a binary linear code with tree Tanner graph that is transmitted over a pure-state CQ channel~\cite{renes_2017}, i.e., a channel with classical input and pure-state quantum output. 
The algorithm presents a genuine quantum counterpart to decoding based on the classical belief propagation algorithm, which has found wide success in classical coding theory when used in conjunction with LDPC or Turbo codes. 
More recently Rengaswamy \emph{et al.}~\cite{rengaswamy_2020} observed that BPQM implements the optimal decoder on a small example code, in that it implements the optimal measurement that distinguishes the quantum output states for the set of input codewords with highest achievable probability. 
Here we significantly expand the understanding, formalism, and applicability of the BPQM algorithm with the following  contributions.
First, we prove analytically that BPQM realizes optimal decoding for any binary linear code with tree Tanner graph. 
We also provide the first formal description of the BPQM algorithm in full detail and without any ambiguity. 
In so doing, we identify a key flaw overlooked in the original algorithm and subsequent works which implies quantum circuit realizations will be exponentially large in the code dimension. 
Although BPQM passes quantum messages, other information required by the algorithm is processed globally.  
We remedy this problem by formulating a truly message-passing algorithm which approximates BPQM and has quantum circuit complexity $\mathcal{O}(\poly n, \polylog \frac{1}{\epsilon})$, where $n$ is the code length and $\epsilon$ is the approximation error. 
Finally, we also propose a novel method for extending BPQM to factor graphs containing cycles by making use of approximate cloning. 
We show some promising numerical results that indicate that BPQM on factor graphs with cycles can significantly outperform the best possible classical decoder.
\end{abstract}

%\tableofcontents

\section{Introduction}
Message-passing algorithms on graphical models have proven to be an integral computational tool for many fields such as statistics, information theory, machine learning, and statistical physics. \emph{Belief propagation} (BP), which is perhaps the most well-known message-passing algorithm, approximates the marginal distributions of a given probabilistic graphical model. 
In the context of a spin model in statistical physics, for example, the single-site marginals yield the magnetization of the model. 
BP operates by sending messages across the edges of the graph in question. 
The nodes in the graph receive messages from their neighbors, perform a local computation, and then transmit the output to their neighbors. 
It has found tremendous success in the field of coding theory as a means of decoding data encoded into linear block codes and transmitted  over noisy classical channels. 
For instance, certain low-density parity-check (LDPC) codes have been shown to approach the Shannon capacity of the channel when paired with BP decoding on the Tanner graph describing the parity-checks of the code~\cite{kudekar_2013}.
The BP decoder works by estimating the bits of the codeword individually, which requires marginalization of the conditional probability distribution of the input codewords given the observed noisy channel output. 

The task of decoding noisy channels with quantum output is quite different. 
Even for channels with classical input and quantum output, so-called CQ channels, there is generally no analog to the conditional probability distribution of the inputs. 
Quantum information, famously, does not take specific values, and so there is no sensible notion of conditioning on quantum information in a probability distribution. 
Only if the output quantum states all commute can one find such a description. 
Instead, the decoding task is to implement a measurement of the quantum output which can reliably determine which codeword was input to the channel. 
The analogous situation for classical channels is to regard the decoding task as a procedure for distinguishing between the conditional distributions of the {output} given the {input}. 
But, due to the nature of classical information, it is not necessary to regard the decoding task in this manner since the opposite conditional distribution of the input given the output is available by Bayes' rule. 

A simple approach for decoding CQ channels is to just measure the output quantum system for each transmitted codeword bit individually, which effectively transforms the CQ channel into a classical channel, and then proceed with classical decoding.
Alas, this approach cannot generally achieve the capacity of the underlying CQ channel. 
This is even true for the simplest family of binary-input CQ channels, where the two possible output states are pure states. 
For instance, this channel arises when using Bosonic coherent states as signals over a pure-loss Bosonic channel. 
The maximal rate at which information can be transmitted over a CQ channel is given by the Holevo capacity $\chi$, which in this case is simply $\chi=h_2(\tfrac12(1-\cos\theta))$, where $\cos\theta$ is the fidelity (overlap) of the two pure state outputs and $h_2$ is the binary entropy function. 
Even using the optimal basis for measurement in the individual measurement strategy only results in a Shannon capacity of the effective classical channel of $1-h_2(\tfrac12(1-\sin\theta))$, which is strictly smaller than $\chi$ for all $\theta\in(0,\tfrac\pi2)$. 
Indeed the ratio of the latter to the former goes to zero as $\theta$ goes to zero; in the above example this implies that the individual measurement strategy will be especially suboptimal in the regime of small amplitude coherent states.

Recently, Renes proposed a decoding algorithm for just such pure state CQ channels which generalizes belief propagation decoding~\cite{renes_2017}.
Like BP, it decodes the input codeword bitwise and operates by passing information, now quantum,  among the nodes of cycle-free Tanner graphs.
Unlike BP, though, it implements a measurement to distinguish between the quantum states associated to the two possible values of any chosen codeword bit. 
In the technical sense it is not strictly a belief propagation algorithm as it does not operate by marginalizing a probability distribution. 
But, like BP, it is an algorithm for inference.
There are other quite distinct notions of belief propogation and other applications in the quantum setting, and we comment on these at the end of this section. 

In fact, the algorithm of \cite{renes_2017} performs the optimal distinguishing measurement,  the so-called Helstrom measurement, which determines the value of any given codeword bit with optimal error probability. 
Like BP, this decoder is bitwise optimal on codes with tree Tanner graphs.
In \cite{rengaswamy_2020} Rengaswamy \emph{et al.} proposed several simplifications of the algorithm and coined the name \textit{belief propagation with quantum messages} (BPQM). 
Moreover, quite surprisingly, they also showed analytically that BPQM is blockwise optimal for a simple example code, i.e., it determines the entire input codeword with optimal error probability.
This was unexpected, as bitwise optimality does not guarantee blockwise optimality, and indeed BP is provably not blockwise optimal for the classical binary symmetric channel (BSC) in all parameter regimes. 

Both previous works on the subject of BPQM~\cite{renes_2017,rengaswamy_2020} lack an explicit and formal description of the algorithm. Some details are only implicitly suggested or illustrated by example. A more precise description of the algorithm reveals that the original formulation is, contrary to what is claimed in the previous works, not truly a message-passing algorithm. While BPQM can be described as an algorithm acting on qubits which are passed along the edges of the Tanner graph, the operations performed by the nodes do not only depend solely on those qubits. 
Due to this problem, the resulting quantum circuit is not efficient as claimed, but in fact exhibits a runtime that scales exponentially in the number of encoded bits, $k$.

In this paper we address several of the limitations and open questions of BPQM raised in~\cite{renes_2017,rengaswamy_2020}. 
We first list these in order of presentation before proceeding to sketch out the ideas and methods used for each. 
\begin{itemize}
  \item We give the first formally complete and concise description of BPQM. As part of that, we recognise a flaw in its original formulation which leads to a quantum circuit depth that grows exponentially in the code rank $k$ (\Cref{sec:description}).
  \item We prove analytically that BPQM performs block-optimal decoding for any pure-state CQ channel and any binary linear 
  classical code with tree Tanner graph (\cref{thm:optimality} in \Cref{sec:optimality}).
  \item We provide a new algorithm, which we call \emph{message-passing BPQM}. This new algorithm is truly a message-passing algorithm and does not exhibit the exponential cost present in BPQM. Message-passing BPQM can approximate BPQM to arbitrary precision. We provide analytical and numerical results characterizing the effect of the involved discretization errors (\Cref{sec:message_passing}).
  \item We suggest a strategy for extending BPQM to linear codes with non-tree Tanner graphs by making use of approximate quantum cloning. We numerically investigate this approach and demonstrate that it achieves near-optimal decoding performance in a simple example (\Cref{sec:nontree_graphs}).
\end{itemize}
Further, we provide the source code to reproduce all numerical results and figures included in this work.
\footnote{The source code is available at \href{https://github.com/ChriPiv/quantum-message-passing-paper}{https://github.com/ChriPiv/quantum-message-passing-paper}.}

\subsection{Description of BPQM}
Instead of working directly on the Tanner graph, BPQM utilizes what we call a \emph{message-passing graph} (MPG), a binary tree derived from the Forney-style representation of the factor graph of the code. 
BPQM sequentially decodes one codeword bit at a time, and a separate MPG is required for each individual codeword bit. 
As an example, consider the 5-bit code with two parity-checks $X_1+X_2+X_4=0$ and $X_1+X_3+X_5=0$ (also considered in \cite{rengaswamy_2020}). 
Its Tanner graph is depicted in \cref{fig:5bitTanner} and the MPG associated to the bit $X_1$ is shown in \cref{fig:5bitMPG}. 
Qubits are passed along the edges of the MPG, starting from the leaves and going towards the root of the graph.  
The qubits sent from the leaves are the output qubits obtained from the CQ channel. 
Every subsequent node receives two qubits and performs some processing in order to generate a single-qubit output which is then passed toward the root. 
There are two types of nodes, equality nodes denoted by `$=$' and check nodes denoted by `$+$', and the performed node operation depends on the corresponding type of the node.
Ultimately a single qubit is produced by the node operation at the root, which is then projectively measured to obtain the codeword bit estimate. 

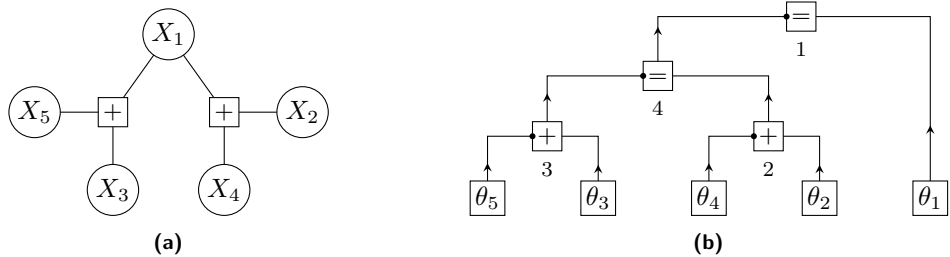
\begin{figure}[h]
  \centering
  \begin{subfigure}[b]{0.3\textwidth}
  \centering
  \tikzsetnextfilename{5bitTannerGraph}
  \begin{tikzpicture}
    \node[roundnode] (X1) {$X_1$};
    \node[squarenode] (c1) [below left=0.6cm and 0.3cm of X1] {$+$};
    \node[squarenode] (c2) [below right=0.6cm and 0.3cm of X1] {$+$};
    \node[roundnode] (X2) [right=0.5cm of c2] {$X_2$};
    \node[roundnode] (X4) [below=0.5cm of c2] {$X_4$};
    \node[roundnode] (X3) [below=0.5cm of c1] {$X_3$};
    \node[roundnode] (X5) [left=0.5cm of c1] {$X_5$};
    \draw (X1) to (c1);
    \draw (X1) to (c2);
    \draw (c2) to (X2);
    \draw (c2) to (X4);
    \draw (c1) to (X3);
    \draw (c1) to (X5);
  \end{tikzpicture}
  \caption{}
  \label{fig:5bitTanner}
  \end{subfigure}
  \hspace{5mm}
  \begin{subfigure}[b]{0.55\textwidth}
  \centering
  \tikzset{square/.style = {shape=regular polygon,regular polygon sides=4, text=black, draw=black, fill=white,inner sep=0.5mm, outer sep=0.0mm,text height=1.5ex,
    text depth=.25ex}}
  \tikzsetnextfilename{5bitMPG}
  \begin{tikzpicture}
[ 
    arrow/.style={postaction={decoration={markings,mark=at position 0.2 with {\arrow[>=stealth]{>}}},decorate}}
]
    \node[squarenode] (W1) {$\theta_1$};
    \node[squarenode] (W2) [left=1cm of W1] {$\theta_2$};
    \node[squarenode] (W4) [left=1cm of W2] {$\theta_4$};
    \node[squarenode] (W3) [left=1cm of W4] {$\theta_3$};
    \node[squarenode] (W5) [left=1cm of W3] {$\theta_5$};
    \node[squarenode] (ch1) [above left=0.4cm and 0.25cm of W2] {$+$};
    \node[below=.25mm of ch1] {\footnotesize 2};
    \node[fill,circle,inner sep=0.8pt] at (ch1.west) {};
    \node[squarenode] (ch2) [above left=0.4cm and 0.25cm of W3] {$+$};
    \node[below=.25mm of ch2] {\footnotesize 3};
    \node[fill,circle,inner sep=0.8pt] at (ch2.west) {};
    \node[squarenode] (eq1) [above left=1.2cm and 0.25cm of W4] {\cequal};
    \node[below=.25mm of eq1] {\footnotesize 4};
    \node[fill,circle,inner sep=0.8pt] at (eq1.west) {};
    \node[squarenode] (eq2) [above right=0.4cm and 1.5cm of eq1] {\cequal};
    \node[below=.25mm of eq2] {\footnotesize 1};
    \node[fill,circle,inner sep=0.8pt] at (eq2.west) {};

    \draw[arrow] (W2) |-  (ch1);
    \draw[arrow] (W4) |- (ch1);
    \draw[arrow] (W3) |- (ch2);
    \draw[arrow] (W5) |-  (ch2);
    \draw[arrow] (ch1) |-  (eq1);
    \draw[arrow] (ch2) |-  (eq1);
    \draw[arrow] (eq1) |-  (eq2);
    \draw[arrow] (W1) |- (eq2);
  \end{tikzpicture}
  \caption{}
  \label{fig:5bitMPG}
  \end{subfigure}
  \caption{\label{fig:5bitTannerMPG} Tanner graph and message-passing graph (MPG) for a five-bit code. The MPG is centered on $X_1$.}
  \end{figure}

However, the equality node operations must be uniformly controlled\footnote{A gate acting on the system $A$ and controlled by the $m$-qubit system $B$ is said to be a \emph{uniformly controlled gate} if the corresponding unitary acting on the joinst system $AB$ is of the form \smash{$\sum_{i=0}^{2^m-1}U^{(i)}_A\otimes\ket{i}\bra{i}_B$} where $U^{(0)}_A,\dots,U^{(2^m-1)}_A$ is a collection of unitaries acting on $A$ and $\ket{i}$ denotes a computational basis state on $B$.} on ancilla qubits which are generated by the node operations of all the preceding check nodes in the tree. 
That is, a different action must be taken on the qubits at the equality node for each of the different values of the ancilla qubits. 
These ancillas accumulate and their total number scales with the number of encoded bits $k$. Provided, as usual, that the universal set of gates supported by the quantum computer only contains gates acting on a bounded number of qubits, such uniformly-controlled gates generally need to be decomposed in a number of basic gates that scales exponentially in $k$. Previous works on BPQM do not explicitly address how to deal with this issue.

To illustrate why BPQM achieves optimal bit-wise decoding (as also described in \cite{renes_2017,rengaswamy_2020}), we employ a formalism where classical and quantum states and channels can be represented by a variant of Forney factor graphs. This formalism is based on~\cite{loeliger_2017,cao_2017} and is introduced in detail in~\cref{sec:fg_states_channels}.
The procedure to decode the $r$-th codeword bit $X_r$ can be understood by considering the factor graph representation of the CQ channel from $X_r$ to the channel outputs $Y_1,\dots,Y_n$ obtained from passing the codeword bits $X_1,\dots,X_n$ through the CQ channel $W[\theta]$, for a random choice of the codeword. 
The operations performed in the BPQM algorithm can be modeled as adding additional nodes to the factor graph, and the operations in BPQM are carefully chosen so that the graph simplifies two qubits at a time until the codeword bit value is encoded in just a single qubit. 
This qubit can then be projectively measured in the appropriate basis to determine the codeword bit value.

On its face, this description seems to differ strongly from the classical BP algorithm, where marginalization is performed by merging the nodes of the graph appropriately, sometimes referred to as “closing the boxes” in Forney-style factor graph language~\cite{loeliger_2004}. We show in~\cref{app:relation_to_bp} that BP can also be formulated in a channel contraction formalism similar to BPQM, which makes clear that BPQM is a quantum analog of BP.

\subsection{Block optimality}
In order to decode the complete codeword, the single-codeword bit procedure is executed sequentially for $k$ independent codeword bits. 
Ideally, each of these $k$ procedures would have access to the original $n$ channel output qubits. 
But since quantum information cannot be cloned, one must attempt to retrieve the original $n$-qubit state after some codeword bit has been decoded, to allow for the decoding for the subsequent codeword bits. 
This is nearly possible: The node operations involved in BPQM are all unitary, and can therefore be reversed. 
The only non-unitary part of the algorithm is the single-qubit projective measurement which produces the estimate for the codeword bit. %  in the $\{\ket{+},\ket{-}\}$ basis at the end.
Thus, for each codeword bit, the POVM elements associated to the two possible bit values are orthogonal projection operators.\footnote{Here orthogonality refers to the fact that the POVM elements $E_i$ fulfill $E_iE_j=\delta_{ij}E_i$.}

A central ingredient for the proof of optimality of the complete codeword decoder is the fact that an optimal decoder is realized by the so-called \emph{pretty-good measurement} (PGM)~\cite{belavkin_1975,hausladen_pretty_1994}, sometimes also called square-root measurement (also pointed out in \cite{rengaswamy_2020,rengaswamy_semiclassical_2021}).
This ensures the optimal measurement is a rank-one projective measurement, due to the linear structure of the code and earlier results on the optimality and orthogonality of the PGM~\cite{ban_optimum_1997,sasaki_quantum_1998,eldar_2001}. 
Orthogonality of the optimal measurement then implies that optimal decoding can in principle be performed bitwise. 
Simply construct projection operators $\Lambda_{r,j}$ for the $r$th codeword bit to have the value $j$ by summing the associated codeword projections $\Pi_{\vec x}$ over codewords $\vec x$ such that $x_r=j$. 
The intersection of the bitwise projection operators thus determines the codeword. 

The proof is therefore reduced to demonstrating that BPQM realizes $\Lambda_{r,j}$. 
To show this, a precise analysis is required that characterizes how the BPQM unitary acts on all possible pure state inputs. 
The result of this analysis is captured by~\cref{lem:purestate_evolution}. 
It can be considered a generalization of the factor graph contraction argument used to demonstrate bitwise optimality, as the latter result can be recovered by simply averaging out the corresponding codeword bits.
As a consequence of the structure of the projections, it follows that the (optimal) decoding performance of BPQM does not depend on how the $k$ independent codeword bits are chosen nor in which order these $k$ codeword bits are decoded.

\subsection{Message-passing BPQM}
As previously mentioned, BPQM is not truly a message-passing algorithm. Consider the qubit that is passed over some edge $e$ of the MPG towards the root. 
The state of this qubit is a superposition of states $\qstate{z}{\varphi} := \cos(\varphi /2)\ket{0}+(-1)^z\sin(\varphi /2)\ket{1}, z\in\{0,1\},\varphi\in(0,\pi)$ for different values of $\varphi$ (but fixed $z$). 
The two states corresponding to the two possible values of $z$ are related by a Pauli-$Z$ operation on the qubit.
The angle $\varphi$ takes values in some set $\{\varphi_{\vec{j}}\}_{\vec{j}\in\{0,1\}^{k_e-1}}$ where $k_e-1$ is the number of check nodes that precede $e$ in the graph.
The information specifying the angle is stored in the ancilla qubits produced by all the check nodes that precede $e$ in the MPG. 
This information is classical, as the states of the ancilla qubits corresponding to the different angles all commute.  
Equality node operations depend on the angle of the two incoming qubits, and for this reason they must be conditioned on the $k_e-1$ ancilla qubits produced by the preceeding check nodes. This is in turn the cause for the non-message-passing nature of BPQM that exhibits an exponential quantum circuit complexity.

We address this problem by introducing a quantum register that directly stores the angle of the qubits passed over the edge $e$. 
The messages passed around the MPG now consist of a qubit as well as the angle register. 
This on-the-fly bookkeeping of the involved angles is the essence of message-passing BPQM.
Were these new quantum registers able to represent all angles $\varphi\in(0,\pi)$ deterministically, they would need to be infinite-dimensional quantum systems. 
To overcome this issue, we instead store the angle in a discretized representation using only $B$ qubits.
%This is not novel, of course; representing a real number with a classical computer a priori also requires an infinite number of bits. 
In principle, the angle register could also be chosen to be a classical register, but in that case it is impossible to revert the action of BPQM at a later point as is required to decode multiple bits. Therefore, the quantum nature of the angle register is imperative for decoding the entire  codeword.

We provide an in-depth analysis of how the discretization errors propagate throughout the execution of the algorithm and how much they reduce the probability of successful decoding. 
The understanding of this error propagation is critical as it fundamentally differs from its classical counterpart: If BPQM operates under the assumption that a qubit involved in a node operation has the angle $\tilde{\varphi}$ which differs from the true value $\varphi$, then the output of the node operation will generally no longer be in a state that exhibits the aforementioned Pauli-$Z$ symmetry. 
The technical difficulty of analyzing the error is that the functions which need to be computed are not uniformly continuous. 
The central result is given by~\cref{thm:error_probab}, which states an explicit upper bound on the error of the decoder. As a direct consequence, one has to choose $B=\mathcal{O}(n,\log\frac{1}{\epsilon})$ in order to achieve a desired decoding error $\epsilon$. Consequently, the quantum circuit width and depth of message-passing BPQM are shown to scale as $\mathcal{O}(\poly n, \polylog \frac{1}{\epsilon})$. 
We also provide some numerical results on how the discretization errors effect the decoding error when decoding a single codeword bit of some $17$-bit binary linear code. As expected, we observe in~\cref{fig:17bitcode_numerical_results} that the decoding error decreases exponentially in $B$.

If one takes the value of $B$ to be fixed, which corresponds to a model of computation in which classical scalar operations take constant time, then the circuit depth of message-passing BPQM scales as $\mathcal{O}(kn)$ where $k$ is the number of encoded bits.
However, if the topology of the MPG allows us to parallelize some of the node operations, this complexity can in some cases even be reduced to quasilinear time $\mathcal{O}(k\log n)$.

\subsection{Codes with cycles}
Finally, we also study the question of how the BPQM decoding algorithm can be extended to codes with cycles in their Tanner graphs. 
As a guiding principle, we consider how classical BP is heuristically extended from tree to non-tree factor graphs by defining a fixed message-passing schedule at each node~\cite{richardson_2008}. 
In general, classical BP on non-tree factor graphs is not guaranteed to produce the correct marginals, let alone converge. 
Nevertheless, it has empirically been observed to give good approximate results in the case where the cycles are large. 
Similarly, when applying BPQM to non-tree graphs, one should not expect to achieve optimal decoding.

Fixing the message-passing schedule for non-tree graphs can be thought of as transforming the original decoding problem, including the actually observed output $\vec y$ as well as the descriptions of both the code $\mathcal C$ and the channel, to a new decoding problem involving a new output $\vec y'$, a tree code $\mathcal C'$, and a new channel model.
The code $\mathcal C'$ in question is obtained by ``unrolling'' the Tanner graph of $\mathcal{C}$ around the variable $X_r$ for a certain number of steps, giving each variable node a unique name.
The obtained graph is sometimes also referred to as \emph{computation tree}.
For example, consider decoding $X_1$ of the (6,3) binary linear code with Tanner graph depicted in~\cref{fig:6bitTanner}. 
By unrolling the Tanner graph around $X_1$ for $h=2$ steps, we obtain a Tanner graph defining the 9-bit code $\mathcal{C}'$, shown in \cref{fig:9bitTanner}. 
A codeword $\vec X$ of $\mathcal C$ specified by $X_1,\dots, X_6$ can be transformed to a codeword $\vec X'$ of $\mathcal C'$ by taking $X'_j=X_j$ for $j=1,\dots,6$, and $X'_7=X_4$, $X'_{8}=X_3$, and $X'_{9}=X_6$. 
This is depicted in \cref{fig:6bitComputational}. 
The same duplication procedure can be applied to the channel output $\vec y$, and the channel parameters for duplicated bits can be taken to be the same as their original values. 
Then, BP decoding of the resulting $\vec y'$ using the duplicated channel model will give an estimate of the original input $X_1$. 
One can check that running classical BP on $\mathcal{C}$ to decode $X_r$ given a noisy channel output $\vec{y}\in\mathbb{F}_2^n$ for $h$ time steps is equivalent to running classical BP on $\mathcal{C}'$ given the noisy channel output $\vec{y}'$. 

\begin{figure}[h]
\centering
\begin{subfigure}[b]{0.3\textwidth}
\centering
\tikzsetnextfilename{6bitTannerGraph}
  \begin{tikzpicture}
    \node[roundnode] (X1) {$X_1$};
    \node[squarenode] (c1) [below left=0.6cm and 0.3cm of X1] {$+$};
    \node[squarenode] (c2) [below right=0.6cm and 0.3cm of X1] {$+$};
    \node[roundnode] (X2) [right=0.5cm of c2] {$X_2$};
    \node[roundnode] (X4) [below=0.5cm of c2] {$X_4$};
    \node[roundnode] (X3) [below=0.5cm of c1] {$X_3$};
    \node[roundnode] (X5) [left=0.5cm of c1] {$X_5$};
    \node[squarenode] (c3) [below right=0.5cm and 0.3 cm of X3] {$+$};
    \node[roundnode] (X6) [below=0.5cm of c3] {$X_6$};
    \draw (X1) to (c1);
    \draw (X1) to (c2);
    \draw (c2) to (X2);
    \draw (c2) to (X4);
    \draw (c1) to (X3);
    \draw (c1) to (X5);
    \draw (c3) to (X6);
    \draw (c3) to (X3);
    \draw (c3) to (X4);
  \end{tikzpicture}
  \caption{}
  \label{fig:6bitTanner}
  \end{subfigure}
  \hspace{2mm}
  \begin{subfigure}[b]{0.3\textwidth}
  \centering
\tikzsetnextfilename{9bitTannerGraph}
  \begin{tikzpicture}
    \node[roundnode] (X1) {$X'_1$};
    \node[squarenode] (c1) [below left=0.6cm and 0.3cm of X1] {$+$};
    \node[squarenode] (c2) [below right=0.6cm and 0.3cm of X1] {$+$};
    \node[roundnode] (X2) [right=0.5cm of c2] {$X'_2$};
    \node[roundnode] (X4) [below=0.5cm of c2] {$X'_4$};
    \node[roundnode] (X3) [below=0.5cm of c1] {$X'_3$};
    \node[roundnode] (X5) [left=0.5cm of c1] {$X'_5$};
    \node[squarenode] (c3) [below=0.5cm  of X3] {$+$};
    \node[squarenode] (c4) [below=0.5cm  of X4] {$+$};
    \node[roundnode] (X6) [below=0.5cm of c3] {$X'_6$};
    \node[roundnode] (X2_2) [left=0.5cm of c3] {$X'_7$};
    \node[roundnode] (X6_2) [below=0.5cm of c4] {$X'_9$};
    \node[roundnode] (X3_2) [right=0.5cm of c4] {$X'_8$};
    \draw (X1) to (c1);
    \draw (X1) to (c2);
    \draw (c2) to (X2);
    \draw (c2) to (X4);
    \draw (c1) to (X3);
    \draw (c1) to (X5);
    \draw (X3) -- (c3) -- (X6);
    \draw (X2_2) -- (c3);
    \draw (X4) -- (c4) -- (X6_2);
    \draw (X3_2) -- (c4);
  \end{tikzpicture}
  \caption{}
  \label{fig:9bitTanner}
  \end{subfigure}
  \hspace{2mm}
  \begin{subfigure}[b]{0.3\textwidth}
  \centering
\tikzsetnextfilename{6bitComputationalGraph}
  \begin{tikzpicture}
    \node[roundnode] (X1) {$X_1$};
    \node[squarenode] (c1) [below left=0.6cm and 0.3cm of X1] {$+$};
    \node[squarenode] (c2) [below right=0.6cm and 0.3cm of X1] {$+$};
    \node[roundnode] (X2) [right=0.5cm of c2] {$X_2$};
    \node[roundnode] (X4) [below=0.5cm of c2] {$X_4$};
    \node[roundnode] (X3) [below=0.5cm of c1] {$X_3$};
    \node[roundnode] (X5) [left=0.5cm of c1] {$X_5$};
    \node[squarenode] (c3) [below=0.5cm  of X3] {$+$};
    \node[squarenode] (c4) [below=0.5cm  of X4] {$+$};
    \node[roundnode] (X6) [below=0.5cm of c3] {$X_6$};
    \node[roundnode] (X2_2) [left=0.5cm of c3] {$X_4$};
    \node[roundnode] (X6_2) [below=0.5cm of c4] {$X_6$};
    \node[roundnode] (X3_2) [right=0.5cm of c4] {$X_3$};
    \draw (X1) to (c1);
    \draw (X1) to (c2);
    \draw (c2) to (X2);
    \draw (c2) to (X4);
    \draw (c1) to (X3);
    \draw (c1) to (X5);
    \draw (X3) -- (c3) -- (X6);
    \draw (X2_2) -- (c3);
    \draw (X4) -- (c4) -- (X6_2);
    \draw (X3_2) -- (c4);
  \end{tikzpicture}
  \caption{}
  \label{fig:6bitComputational}
  \end{subfigure}
  \caption{Unrolling the Tanner graph of a code $\mathcal C$ to generate a new code $\mathcal C'$ and computation tree of $\mathcal C$. (a) The Tanner graph of a 6-bit code $\mathcal C$. (b) Tanner graph of the 9-bit code $\mathcal C'$ defined by unrolling the original Tanner graph around $X_1$ to depth 2. (c) $h=2$ computation tree of $\mathcal C$, obtained by identifying the variables of $\mathcal C'$ with those of $\mathcal C$.}
  \label{fig:unrollingcomputational}
\end{figure}
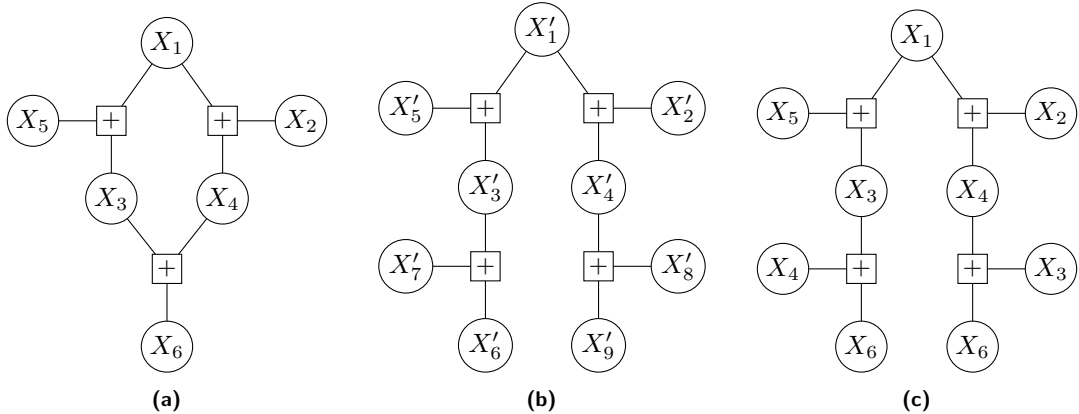

Returning to the case of a CQ channel, we encounter a major problem if we try to apply the same procedure: It is impossible to copy the channel output qubits due to the no-cloning theorem. 
We address this problem by employing approximate cloning, which generates imperfect clones of the channel output qubits. 
Choosing the channel parameters of the duplicated outputs in accordance with the cloning procedure, numerical results for an 8-bit code in~\cref{sec:cloning_numerics} show that the resulting decoder can significantly outperform the best possible classical decoder\footnote{Here the best classical decoding refers to measuring the channel outputs in the $\ket{\pm}$ basis and doing classical maximum likelihood decoding.} using this strategy. 
In fact, it nearly reaches optimal decoding performance.

Note that due to the approximate cloning issue, BPQM decoding on non-tree graphs exhibits a strongly different characteristic to its classical counterpart: Increasing the number of unrolling steps $h$ does not always improve the result. After some point, decoder performance deteriorates, as a larger value of $h$ implies that more approximate cloning operations must be performed and therefore the quality of the involved qubits decreases.

We note that the described unrolling strategy is very unlikely to be the best choice to generate $\mathcal{C}'$ for any possible code $\mathcal{C}$, especially for codes with many small cycles, where the number of approximate cloning operations would increase very quickly with the depth $h$ of the unrolling. Most likely, different families of codes will require different strategies to generate the best possible choice of $\mathcal{C}'$. 
Similarly, which kind of codes lend themselves best to BPQM decoding is an interesting  question for future research. 
LDPC codes are a promising family of codes, as they have proven to exhibit excellent performance for transmission over classical channels and their Tanner graphs exhibit few and large cycles. 

\subsection{Other notions of quantum belief propagation}
The notion of belief propagation has been applied in the quantum setting in many different previous works; here we comment on their relation to the present work. 
Most closely related is the use of belief propagation for decoding quantum information subject to Pauli errors, as studied by Poulin in~\cite{poulin_optimal_2006,poulin_iterative_2008}, as well as many others since~\cite{wang_enhanced_2012,criger_multi-path_2018,liu_neural_2019,rigby_modified_2019,panteleev_degenerate_2019,li_pseudocodeword-based_2019,roffe_decoding_2020,li_pseudocodeword-based_2020}.
Here, however, classical BP is sufficient: The task is to infer which error occured from the classical syndrome information, which only involves the classical conditional probability of the former given the latter. 
Conversely, \cite{kasi_2020} proposes decoding LDPC codes over a classical channel by means of quantum annealing. 
This does not involve decoding information from quantum systems, but instead using quantum systems to perform a classical computation.  
Slightly further afield is the work of Leifer and Poulin on developing quantum graphical models and belief propagation algorithms for marginalization of density operators~\cite{leifer_quantum_2008}. 
This issue does not immediately arise here, as essentially only the leaves of Forney factor graphs we consider involve quantum systems. 
Finally, BP is also related to the Bethe-Peirels approximation of the free energy in statistical mechanics~\cite{betheStatisticalTheorySuperlattices1935,peierls_statistical_1936}. 
In this approximation one computes the free energy from the probability distribution using only the marginals describing single sites and neighboring sites. 
As first observed by Yedidia, Freeman, and Weiss, the fixed points of BP are stationary points of the approximation of the free energy~\cite{yedidia_generalized_2000,yedidia_understanding_2003,yedidia_constructing_2005}. 
Hastings and Poulin have constructed quantum belief propagation algorithms for approximating the free energy in quantum statistical mechanics~\cite{hastings_quantum_2007,poulin_markov_2011}.
Further variations of this approach were investigated by Cao and Vontobel~\cite{cao_quantum_2016,cao_2017}.

\section{Preliminaries}
\subsection{Factor graphs and classical belief propagation decoding}
Factor graphs are a graphical representation of the factorization of a function of several variables, e.g., a joint probability distribution, into a product of separate terms each involving only a subset of the variables. 
The standard variant of factor graphs~\cite{kschischang_2001} uses a bipartite graph consisting of two types of nodes: variable and factor nodes. 
Variable nodes are usually represented with circles and factor nodes with rectangular boxes. 
Consider a factorizable function
\begin{equation}
  f(x_1,\dots,x_n)=\prod\limits_{j\in J} f_j(Z_j)
\end{equation}
where $J$ is some finite index set and the $f_j$ are functions having some subset $Z_j$ of $\{x_1,\dots,x_n\}$ as arguments. The corresponding factor graph consists of $n$ variable nodes $X_1,\dots,X_n$ as well as a factor node $f_j$ for each $j\in J$. A variable node $X_i$ and a factor node $f_j$ are connected by an edge if and only if $x_i\in Z_j$. 

Consider an $(n,k)$ binary linear code $\mathcal{C}$, i.e., a linear subspace of $\mathbb{F}_2^n$ of dimension $k$. A parity-check matrix $H$ of this code directly tells us how to represent the membership indicator function
\begin{equation}
  I_{\mathcal{C}}:\mathbb{F}_2^n\rightarrow \{0, 1\} \, , \, I_{\mathcal{C}}(x) := \begin{cases}1 \text{ if } x\in \mathcal{C} \\ 0 \text{ else}\end{cases}
\end{equation}
in terms of such a factor graph.
The parity-checks manifest as check factor nodes (denoted by `$+$') in the factor graph, which represent the function
\begin{equation}
  x_1,\dots,x_m \mapsto \begin{cases}1 \text{ if } x_1+\dots+x_m \equiv 0 \pmod{2} \\ 0 \text{ else}\end{cases}
\end{equation}
where $m$ is the number of edges connected to the check node.
The resulting factor graph is commonly called a Tanner graph of the code. 
Generally a Tanner graph is not unique, as there exist different parity-check matrices $H$ associated with the code $\mathcal{C}$. 
Consider again the $(5,3)$ code with Tanner graph in \cref{fig:5bitTanner}. 
It has the following parity-check and generator matrices:
\begin{equation}
  H = \begin{pmatrix}1&1&0&1&0\\1&0&1&0&1\end{pmatrix} \, , \quad G = \begin{pmatrix}1&0&0&1&1\\0&1&0&1&0\\0&0&1&0&1\end{pmatrix} \, .
\end{equation}

Throughout the majority of this work, it will prove advantageous to work with a different graphical representation of factorizable functions, namely the so-called \emph{Forney-style factor graphs}~\cite{forney_2001,loeliger_2004}. Here variables are represented by the edges of the graph whereas the factor terms of the function are represented by the nodes of the graph. Similarly, an edge and a node of the graph are connected if and only if the corresponding variable is part of the corresponding factor term. The Forney-style representation of the Tanner graph of the 5-bit code is depicted in~\cref{fig:5bitcode_forney}. 
When translating a factor graph to the Forney-style representation, one might encounter the issue that some variables are involved in more than two factors, as depicted in the example in~\cref{fig:eqnode_example}. 
In that case, equality factor nodes must be introduced in the Forney-style factor graph. 
Equality nodes are depicted by the symbol $=$ and represent the function that returns $1$ if all connected variables have the same value and $0$ otherwise. Thus, the Forney-style representation of the Tanner graph of any binary linear code contains only check nodes and equality nodes.

\begin{figure}
  \centering
  % \begin{subfigure}[b]{0.18\textwidth}
  % \begin{adjustbox}{width=\textwidth}
  % \begin{tikzpicture}
  %   \node[roundnode] (X1) {$X_1$};
  %   \node[squarenode] (c1) [above right=0.1cm and 0.6cm of X1] {$+$};
  %   \node[squarenode] (c2) [below right=0.1cm and 0.6cm of X1] {$+$};
  %   \node[roundnode] (X2) [below=0.5cm of c2] {$X_2$};
  %   \node[roundnode] (X4) [right=0.5cm of c2] {$X_4$};
  %   \node[roundnode] (X3) [right=0.5cm of c1] {$X_3$};
  %   \node[roundnode] (X5) [above=0.5cm of c1] {$X_5$};
  %   \draw (X1) to (c1);
  %   \draw (X1) to (c2);
  %   \draw (c2) to (X2);
  %   \draw (c2) to (X4);
  %   \draw (c1) to (X3);
  %   \draw (c1) to (X5);
  % \end{tikzpicture}
  % \end{adjustbox}
  % \caption{}
  % \label{fig:5bitcode_tanner}
  % \end{subfigure}
  \begin{subfigure}[b]{0.25\textwidth}
  \begin{adjustbox}{width=\textwidth}
  \tikzsetnextfilename{5bitForney}
  \begin{tikzpicture}
    \node[squarenode] (c1) [] {$+$};
    \node[squarenode] (c2) [right=1.5cm of c1] {$+$};
    \draw (c1) to[out=90,in=180] ++(0.9cm,1.0cm) node[above] {$X_1$} to[out=0,in=90] (c2);
    \draw (c1) to node[left] {$X_2$} ++(-0.4cm,-1.0cm);
    \draw (c1) to node[right] {$X_4$} ++(0.4cm,-1.0cm);
    \draw (c2) to node[left] {$X_3$} ++(-0.4cm,-1.0cm);
    \draw (c2) to node[right] {$X_5$} ++(0.4cm,-1.0cm);
  \end{tikzpicture}
  \end{adjustbox}
  \caption{}
  \label{fig:5bitcode_forney}
  \end{subfigure}
  \hspace{5mm}
  \begin{subfigure}[b]{0.2\textwidth}
  \begin{adjustbox}{width=\textwidth}
  \tikzsetnextfilename{StandardToNormal}
  \begin{tikzpicture}
    \node[roundnode] (X4) {$X_4$};
    \node[quadnode] (f1) [above right=0.6cm and 0.8cm of X4.center] {$f_1$};
    \node[quadnode] (f2) [above right=0.0cm and 0.8cm of X4.center] {$f_2$};
    \node[quadnode] (f3) [above right=-0.6cm and 0.8cm of X4.center] {$f_3$};
    \node[roundnode,inner sep=0pt] (X1) [right=0.6cm of f1.center] {$X_1$};
    \node[roundnode,inner sep=0pt] (X2) [right=0.6cm of f2.center] {$X_2$};
    \node[roundnode,inner sep=0pt] (X3) [right=0.6cm of f3.center] {$X_3$};
    \node[rotate=270,anchor=center] (equiv) [below right=0.1cm and 0.4cm of f3] {$\rightarrow$};
    \node[squarenode] (eq) [below right=2.0cm and -0.1cm of X4] {\cequal};
    \node[quadnode] (f12) [above right=0.6cm and 0.8cm of eq.center] {$f_1$};
    \node[quadnode] (f22) [above right=0.0cm and 0.8cm of eq.center] {$f_2$};
    \node[quadnode] (f32) [above right=-0.6cm and 0.8cm of eq.center] {$f_3$};

    \draw (X4) to (f1);
    \draw (X4) to (f2);
    \draw (X4) to (f3);
    \draw (f1) to (X1);
    \draw (f2) to (X2);
    \draw (f3) to (X3);
    \draw (eq) to node[above left] {$X_4$} (f12);
    \draw (eq) to (f22);
    \draw (eq) to (f32);
    \draw (f12) to ++(1.0cm,0) node[above] {$X_1$};
    \draw (f22) to ++(1.0cm,0) node[above] {$X_2$};
    \draw (f32) to ++(1.0cm,0) node[above] {$X_3$};
  \end{tikzpicture}
  \end{adjustbox}
  \caption{}
  \label{fig:eqnode_example}
  \end{subfigure}
  \hspace{5mm}
  \begin{subfigure}[b]{0.28\textwidth}
  \begin{adjustbox}{width=\textwidth}
  \tikzsetnextfilename{ChannelTanner}
  \begin{tikzpicture}
    \node[roundnode] (X1) {$X_1$};
    \node[squarenode] (c1) [above right=0.1cm and 0.6cm of X1] {$+$};
    \node[squarenode] (c2) [below right=0.1cm and 0.6cm of X1] {$+$};
    \node[roundnode] (X2) [below=0.5cm of c2] {$X_2$};
    \node[roundnode] (X4) [right=0.5cm of c2] {$X_4$};
    \node[roundnode] (X3) [right=0.5cm of c1] {$X_3$};
    \node[roundnode] (X5) [above=0.5cm of c1] {$X_5$};
    \node[quadnode] (W1) [above left=0.4cm and -0.6cm of X1] {$P_{Y=y_1|X}$};
    \node[quadnode] (W5) [above right=0.3cm and 0.3cm of X5] {$P_{Y=y_5|X}$};
    \node[quadnode] (W3) [above right=0.4cm and -0.5cm of X3] {$P_{Y=y_3|X}$};
    \node[quadnode] (W4) [below right=0.4cm and -0.5cm of X4] {$P_{Y=y_4|X}$};
    \node[quadnode] (W2) [below right=0.3cm and 0.3cm of X2] {$P_{Y=y_2|X}$};
    \draw (X1) to (c1);
    \draw (X1) to (c2);
    \draw (c2) to (X2);
    \draw (c2) to (X4);
    \draw (c1) to (X3);
    \draw (c1) to (X5);
    \draw (X1) to (W1);
    \draw (X2) to (W2);
    \draw (X3) to (W3);
    \draw (X4) to (W4);
    \draw (X5) to (W5);
  \end{tikzpicture}
  \end{adjustbox}
  \caption{}
  \label{fig:5bitcode_tanner_channelnodes}
  \end{subfigure}
  \caption{%(a) Tanner graph of the 5-bit code. 
  (a) Forney-style representation of the Tanner graph of the 5-bit code. 
  (b) The graph on top is an example of a (standard) factor graph containing a variable $X_4$ that is connected to more than two factor nodes. The corresponding Forney-style representation of the factor graph, depicted on the bottom, correspondingly contains an equality node. 
  (c) (Standard) factor graph characterizing the joint distribution of the codeword bits of the 5-bit code given that some channel output $\vec{y}$ was observed.}
\end{figure}
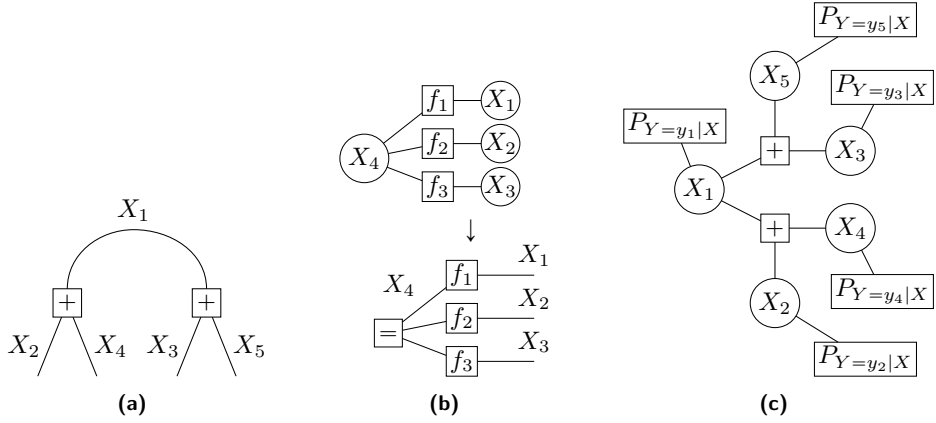

Given a multivariate function $f(x_1,\dots,x_n)$, consider the task of finding the marginal w.r.t. the variable $x_i$, i.e., the function
\begin{equation}
  f_i(x_i) := \sum\limits_{x_1,x_2,\dots,x_{i-1},x_{i+1},\dots,x_n} f(x_1,\dots,x_n) \, .
\end{equation}
In general this task is very expensive as the runtime of the brute force computation scales exponentially in $n$. 
However, if we are given some additional information about the factorizability of $f$ in form of a factor graph, then the belief propagation (BP) algorithm can utilize this additional structure to approximate the marginal at a reduced complexity. 
BP is a message-passing algorithm, i.e., it can be realized as a sequence of messages being passed between nodes of the factor graphs.\footnote{Belief propagation can be formulated either for the standard factor graphs or for Forney-style factor graphs.} Each node receives some input messages from its neighbors, performs a certain computation given that input, and produces output messages that it passes to its neighbors. Under the assumption that the factor graph is a tree, BP is able to compute the exact marginals, otherwise it will generally only return an approximate result. For a more complete description of factor graphs and classical belief propagation, we refer the reader to~\cite{loeliger_2004,richardson_2008}.

BP can be directly applied to the task of communicating classical information over a noisy classical channel described by the transition probability $P_{Y|X}$ which maps some bit $x\in\mathbb{F}_2$ into some output alphabet $\mathcal{Y}$.
Reliable communication is achieved by coding our message using some $(n,k)$ binary linear code.
Assume that the transmitted codeword is chosen uniformly randomly.
If we observe the output of the channel to be some vector $\vec{y}\in\mathcal{Y}^n$, that new knowledge about the joint probability distribution of the codeword bits is reflected by adding new factor nodes to the codeword bits in the factor graph, as depicted in~\cref{fig:5bitcode_tanner_channelnodes}. 
Marginalizing this new joint probability distribution with respect to the $i$-th codeword bit $X_i$ using BP gives the conditional probability density $\mathbb{P}_{X_i|\vec{Y}=\vec{y}}$. 
From this we can easily compute the bitwise maximum a-posteriori (bit-MAP) estimate for the $i$-th bit, defined by $\argmax_{x\in\{0,1\}}\mathbb{P}_{X_i|\vec{Y}=\vec{y}}(x)$.

\subsection{Commmunication over a pure state classical-quantum channel}
We consider the task of achieving reliable communication of classical information over a CQ channel, i.e., a channel with classical input and quantum output.
We restrict ourselves to the study of pure-state CQ channels $W[\theta]$ that produce qubit states of the form
\begin{equation}
  \{0,1\} \ni x \mapsto \qstate{x}{\theta} := \cos\frac{\theta}{2}\ket{0} + (-1)^x\sin\frac{\theta}{2}\ket{1}
\end{equation}
for some angle $\theta\in(0,\pi)$. Note that any pure-state CQ channel taking a single bit as input can be brought into this form by performing a corresponding unitary transformation on the output of the channel, since the two possible pure-state outcomes span a two-dimensional subspace of the total Hilbert space.

To achieve reliable communication of classical information over this channel, we can redundantly encode our information using a binary linear code, completely analogous to the procedure for transmitting classical information over a noisy classical channel. The only difference is that the channel outputs $Y_1,\dots,Y_n$ now denote two-dimensional quantum systems (i.e.~qubits) instead of classical random variables. The question now arises, how one can decode the original codeword given access to this quantum system.

A first naive approach is to convert the CQ channel to a classical channel, by estimating for each channel output separately if it is more likely in the state $\qstate{0}{\theta}$ or $\qstate{1}{\theta}$. This can be done optimally by the Helstrom measurement~\cite{helstrom_1976}, which in this case corresponds to measuring the corresponding qubit in the $\ket{\pm}$ basis. The resulting effective classical channel is a binary symmetric channel with error probability $p=\frac{1-\sin\theta}{2}$ and capacity $C=1-h(p)$.
We refer to this decoding strategy as a \emph{classical decoder}, since we can make use of any classical decoding algorithm and we do not make use of the fact that the channel output is a quantum system.

However, a classical decoder cannot realize the best possible decoding performance. This is illustrated by the celebrated Holevo-Schumacher-Westmoreland (HSW) theorem which states that the maximal possible rate of classical information that can be reliably transmitted is given by the Holevo capacity of the channel, which for $W[\theta]$ turns out to be $\chi=h_2(\frac{1+\cos\theta}{2})$ where $h_2$ is the binary entropy function. It can be easily checked that $\chi$ can be significantly larger than $C$, and in fact one find that $\chi /C\rightarrow\infty$ in the limit $\theta\rightarrow 0$. 
To approach the Holevo capacity, it is therefore imperative to design truly quantum decoders which explicitly make use of the quantum nature of the channel output. 
See Guha~\cite{guha_structured_2011} for more on this point. %In contrary to classical BP, BPQM is a truly quantum decoder and can generally significantly outperform the best classical decoder.

When the Tanner graph of the considered code is a tree, classical BP can achieve bitwise optimal decoding (i.e.~bit-MAP decoding) of information transmitted over a classical channel.
Message passing on such tree graphs is the foundation of belief propagation in general, though codes with cycle-free Tanner graphs are known to perform badly, since they exhibit a distance $d\leq 2$ as soon as the rate $k/n$ is greater or equal one half~\cite{etzion_1999}. 
Interestingly, when decoding codes with cycle-free Tanner graphs transmitted over pure-state CQ channels with BPQM, one does not only attain bitwise optimal decoding, but even the block optimal decoding.

\subsection{Representing states and channels using Forney-style factor graphs}\label{sec:fg_states_channels}
This section explains how we use the Forney-style factor graph formalism to represents classical and quantum states and channels throughout this work. The introduced formalism uses the language of Forney-style factor graphs, but at its heart it is very closely related to the tensor network formalism (for an introduction, see~\cite{bridgeman_hand-waving_2017}). 
The main advantage is that the formalism naturally represents states and operations that involve both classical and quantum systems.
The concepts introduced in this section are necessary to understand the central ideas in the proof of bit optimality of BPQM in~\cref{sec:contraction}, but they have no further importance for the rest of this manuscript.

For the rest of this section we will always restrict ourselves to (random) variables over finite alphabets as well as finite-dimensional quantum systems. 
Consider a Forney-style factor graph representing the factorization of some function in $m+l$ variables $f(V_1,\dots,V_m,H_1,\dots,H_l)$ where one subset of the variables $V_1,\dots,V_m$, called the 'visible' variables, are represented by half-edges, i.e., they are each connected to exactly one factor node. 
The other variables $H_1,\dots,H_l$ are 'hidden' inside the graph, i.e., they are each connected to exactly two factor nodes. We define the \emph{exterior function} of the factor graph to be the joint marginal of the visible variables, i.e.,
\begin{equation}
  v_1,\dots,v_m \mapsto \sum\limits_{h_1,\dots,h_l} f(v_1,\dots,v_m,h_1,\dots,h_l) \, .
\end{equation}
We define the \emph{partition function} of the factor graph to be the sum over all variables, i.e.,
\begin{equation}
  \sum\limits_{v_1,\dots,v_m,h_1,\dots,h_l} f(v_1,\dots,v_m,h_1,\dots,h_l) \, .
\end{equation}

Consider some rank-$k$ tensor\footnote{Note that we use the term \emph{tensor} in the sense of a multidimensional array (as is common in machine learning or tensor network theory) and not of a multilinear map. The rank of a tensor is its number of indices, e.g., a rank-$1$ tensor is a vector and a rank-$2$ tensor is a matrix.} $T_{i_1,\dots,i_k}$ where the indices $i_j$ take values in the alphabets $\mathcal{I}_j$.
We say that a factor graph $G$ represents the tensor $T$ if $G$ has $k$ visible variables $I_1,\dots,I_k$ defined on the alphabets $\mathcal{I}_1,\dots,\mathcal{I}_k$ and the exterior function of $G$ is exactly equal to
\begin{equation}
  i_1,\dots,i_k \mapsto T_{i_1,\dots,i_k}\,.
\end{equation}

The contraction of two tensors can be represented by the factor graph obtained by connecting half-edges of the factor graphs representing the two original tensors. Consider for instance two factor graphs $G$ and $G'$ representing the tensors $T_{i_1,\dots,i_k}$ and $T'_{j_1,\dots,j_{k'}}$ where the indices $i_l$ and $j_l$ run over the same alphabets for $l=1,\dots,m$, for $m\leq \min\{k,k'\}$. Consider the factor graph $\tilde{G}$ defined by connecting the half-edges for $I_l$ and $J_l$ between $G$ and $G'$ for $l=1,\dots,m$. $\tilde{G}$ then represents the tensor $\tilde{T}$ obtained by contracting $T$ and $T'$:
\begin{equation}
  \tilde{T}_{i_{m+1},\dots,i_k,j_{m+1},\dots,j_{k'}} = \sum\limits_{k_1,\dots,k_m} T_{k_1,\dots,k_m,i_{m+1},\dots,i_k}T'_{k_1,\dots,k_m,j_{m+1},\dots,j_{k'}}\,.
\end{equation}

The central quantities and operations in probability theory can be expressed in terms of tensors and tensor contractions. 
For instance, a probability distribution $P_X$ of some random variable $X$ defined over the alphabet $\mathcal{X}$ can be simply thought of as a rank-$1$ tensor where the index takes values in $\mathcal{X}$. 
The following graph is a trivial example of a factor graph representing the distribution $P_X$:
\begin{equation}
\tikzsetnextfilename{PXfigure}
  \begin{tikzpicture}[baseline]
    \node[quadnode] (PX) {$P_X$};
    \draw (PX.east) to node[above] {$X$} ++(1.5cm,0);
  \end{tikzpicture}\,\,.
\end{equation}
As another example, consider the Forney-style representation of the Tanner graph of a binary linear $(n,k)$ code where every variable is represented by a half-edge, as in~\cref{fig:5bitcode_randomstate} for the 5-bit code. 
Since the constructed factor graph corresponds to the membership indicator function, it represents the probability distribution of a uniformly randomly chosen codeword $(X_1,\dots,X_n)$, up to a normalization constant ($1/8$ in the case of the 5-bit code).

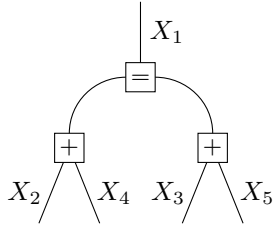
\begin{figure}
  \centering
    \tikzsetnextfilename{5bitFFG}
  \begin{tikzpicture}
    \node[squarenode] (c1) [] {$+$};
    \node[squarenode] (c2) [right=1.5cm of c1] {$+$};
    \node[squarenode] (eq) [above right=0.55cm and 0.55cm of c1] {\cequal};
    \draw (c1) to[out=90,in=180] (eq) to[out=0,in=90] (c2);
    \draw (c1) to node[left] {$X_2$} ++(-0.4cm,-1.0cm);
    \draw (c1) to node[right] {$X_4$} ++(0.4cm,-1.0cm);
    \draw (c2) to node[left] {$X_3$} ++(-0.4cm,-1.0cm);
    \draw (c2) to node[right] {$X_5$} ++(0.4cm,-1.0cm);
    \draw (eq) to node[right] {$X_1$} ++(0,1.0cm);
  \end{tikzpicture}
  \caption{Factor graph representing the distribution of a uniformly random codeword of the 5-bit code.}
  \label{fig:5bitcode_randomstate}
\end{figure}
In general, when we represent probability distributions (and also channels as we will see later) with factor graphs in this manuscript, we do not bother to properly normalize the tensors.
This is done to simplify the notation.
The correct normalization can always be inferred by context, e.g., the entries of a tensor representing a probability distribution should always sum up to one.

Channels are linear maps that are uniquely represented by the matrix containing all the transition probabilities, and therefore they can also thought of as a tensor. 
Moreover, the application of a channel on some state consists of a contraction of the two involved tensors which can accordingly be represented using our factor graphs.
Note that the tensors corresponding to probability distributions and channels contain only non-negative entries.
Correspondingly, the contraction of two tensors with non-negative entries is itself again a tensor with non-negative entries.

Consider as an example three random variables $X_1,X_2,X_3$ distributed as $P_{X_1},P_{X_2},P_{X_3}$ as well as a channel $P_{Z|X_1,X_2}$ that maps $X_1,X_2$ to a random variable denoted by $Z$. The joint state of $X_3$ and $Z$ is represented by the following factor graph:
\begin{equation}
\tikzsetnextfilename{PX3Figure}
\begin{tikzpicture}[baseline=-20pt]
\node[quadnode] (PX1) {$P_{X_1}$};
\node[quadnode] (PX2) [below=0.2cm of PX1] {$P_{X_2}$};
\node[quadnode] (PX3) [below=0.2cm of PX2] {$P_{X_3}$};
\node[quadnode] (channel) [below right=-0.2cm and 0.4cm of PX1] {$P_{Z|X_1,X_2}$};
\draw (PX1) to (channel);
\draw (PX2) to (channel);
\draw (channel.east) to node[above] {$Z$} +(0.5cm,0);
\draw (PX3) to node[above] {$X_3$} +(2.7cm,0);
\end{tikzpicture}\,\,.
\end{equation}

For further illustration, we provide a non-trivial example of a factor graph representation of a channel that will play an important role later in this work. 
Consider the Forney-style representation of the Tanner graph of some $(n,k)$ binary linear code, such that the half-edges consist exactly of the codeword bits $X_i, i=1,\dots,n$, and one of the codeword bits $X_r$ for $r\in\{1,\dots,n\}$ has two half-edges (which can be achieved by the use of an equality node).
\Cref{fig:5bitcode_CCchannel} depicts this factor graph for the 5-bit code and $r=1$.
This factor graph represents the channel from $X_r$ to $X_1,\dots,X_n$, i.e., the channel that given some fixed value $X_r=x_r$ generates one of the $2^{k-1}$ possible codewords $\vec{z}\in\mathcal{C}$ with $z_r=x_r$ uniformly randomly.\footnote{Note that here we implicitly assume that there exists some codeword $z\in\mathcal{C}$ where $z_r\neq 0$.}
Here again, the tensor represented by the factor graph differs from the channel by a normalization constant.

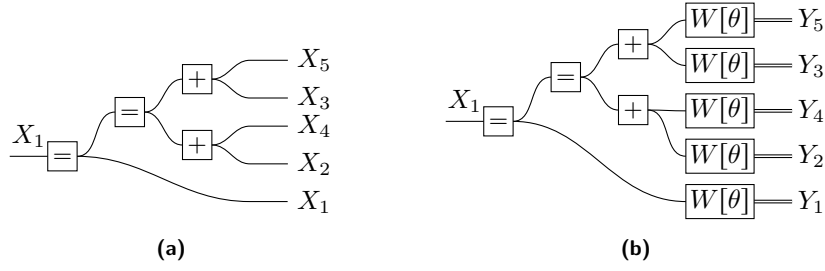
\begin{figure}
\centering
\begin{subfigure}[b]{0.4\textwidth}
\centering
\tikzsetnextfilename{5bitFFGX1}
\begin{tikzpicture}
\pgfdeclarelayer{bg}
\pgfsetlayers{bg,main}

\node[squarenode] (eq1) {\cequal};
\node[squarenode] (eq2) [above right=0.2cm and 0.5cm of eq1] {\cequal};
\node[squarenode] (ch1) [above right=0.05cm and 0.5cm of eq2] {$+$};
\node[squarenode] (ch2) [below right=0.05cm and 0.5cm of eq2] {$+$};
%\node[quadnode] (norm) [above left=0.6cm and 0cm of eq1] {$\frac{1}{4}$};

\draw (eq1.east) to[out=0, in=180] (eq2.west);
\draw (eq2.east) to[out=0, in=180] (ch1.west);
\draw (eq2.east) to[out=0, in=180] (ch2.west);

\draw (ch1.east) to[out=0,in=180] ++(0.5cm,0.25cm) to ++(0.5cm,0) node[right] {$X_5$};
\draw (ch1.east) to[out=0,in=180] ++(0.5cm,-0.25cm) to ++(0.5cm,0) node[right] {$X_3$};
\draw (ch2.east) to[out=0,in=180] ++(0.5cm,0.25cm) to ++(0.5cm,0) node[right] {$X_4$};
\draw (ch2.east) to[out=0,in=180] ++(0.5cm,-0.25cm) to ++(0.5cm,0) node[right] {$X_2$};
\draw (eq1.east) to[out=0,in=180] ++(2.27cm,-0.6cm) to ++(0.5cm,0) node[right] {$X_1$};

\draw (eq1.west) -- node[above] {$X_1$} ++(-0.5cm,0);
\end{tikzpicture}
\caption{}
\label{fig:5bitcode_CCchannel}
\end{subfigure}
\begin{subfigure}[b]{0.4\textwidth}
\centering
\tikzsetnextfilename{5bitFFGX1Channel}
\begin{tikzpicture}
\pgfdeclarelayer{bg}
\pgfsetlayers{bg,main}

\node[squarenode] (eq1) {\cequal};
\node[squarenode] (eq2) [above right=0.2cm and 0.5cm of eq1] {\cequal};
\node[squarenode] (ch1) [above right=0.05cm and 0.5cm of eq2] {$+$};
\node[squarenode] (ch2) [below right=0.05cm and 0.5cm of eq2] {$+$};
%\node[quadnode] (norm) [above left=0.6cm and 0cm of eq1] {$\frac{1}{4}$};
\node[quadnode] (W5) [above right=-0.1cm and 0.5cm of ch1] {$W[\theta]$};
\node[quadnode] (W3) [below=0.15cm of W5] {$W[\theta]$};
\node[quadnode] (W4) [below=0.15cm of W3] {$W[\theta]$};
\node[quadnode] (W2) [below=0.15cm of W4] {$W[\theta]$};
\node[quadnode] (W1) [below=0.15cm of W2] {$W[\theta]$};

\draw (eq1.east) to[out=0, in=180] (eq2.west);
\draw (eq2.east) to[out=0, in=180] (ch1.west);
\draw (eq2.east) to[out=0, in=180] (ch2.west);
\draw (ch1.east) to[out=0, in=180] (W5.west);
\draw (ch1.east) to[out=0, in=180] (W3.west);
\draw (ch2.east) to[out=0, in=180] (W4.west);
\draw (ch2.east) to[out=0, in=180] (W2.west);
\draw (eq1.east) to[out=0, in=180] (W1.west);

\draw (eq1.west) -- node[above] {$X_1$} ++(-0.5cm,0);
% We have to put the double dashed lines in a background layer, otherwise they look ugly
\begin{pgfonlayer}{bg}
  \draw [style=double] (W5.east) to node[right=0.2cm] {$Y_5$} ++(0.5cm,0);
  \draw [style=double] (W3.east) to node[right=0.2cm] {$Y_3$} ++(0.5cm,0);
  \draw [style=double] (W4.east) to node[right=0.2cm] {$Y_4$} ++(0.5cm,0);
  \draw [style=double] (W2.east) to node[right=0.2cm] {$Y_2$} ++(0.5cm,0);
  \draw [style=double] (W1.east) to node[right=0.2cm] {$Y_1$} ++(0.5cm,0);
\end{pgfonlayer}
\end{tikzpicture}
\caption{}
\label{fig:5bitcode_fg}
\end{subfigure}
\caption{(a) Factor graph describing the classical channel from $X_1$ to $X_1,X_2,X_3,X_4,X_5$ for the 5-bit code. (b) Factor graph describing the CQ channel from $X_1$ to the CQ channel outputs $Y_1,Y_2,Y_3,Y_4,Y_5$ for the 5-bit code.}
\end{figure}

Forney-style factor graphs have also been generalized to describe quantum mechanical systems~\cite{loeliger_2017,cao_2017}. Here the central quantities of interest are mixed states (density matrices) and quantum channels, which are again tensors and thus representable with our factor graph formalism. Consider for instance a density matrix $\rho$ on some joint quantum system $Q_1Q_2\dots Q_m$ described by the Hilbert space $\mathcal{H}_1\otimes\dots\otimes\mathcal{H}_m$ where $d_i:=\mathrm{dim}(H_i)$. One possibility to represent $\rho$ with a tensor works as follows: We identify $\rho$ with a rank-$2m$ tensor $T$ defined as
\begin{equation}
T_{i_1^+,i_1^-,\dots,i_m^+,i_m^-} = \left( \bra{i_1^+}_1\otimes\dots\otimes\bra{i_m^+}_m\right) \rho \left(\ket{i_1^-}_1\otimes\dots\otimes\ket{i_m^-}_m\right)\,,
\end{equation}
where for all $j=1,\dots,m$ we have implicitly made some choice of basis $\{\ket{0}_j,\dots,\ket{d_j-1}_j\}$ of $\mathcal{H}_j$.
Notice that the indices of $T$ always appear in pairs $I_j^+,I_j^-$ that run over the same alphabet. Following the notation introduced by~\cite{cao_2017}, we depict both random variables with a single double-edged wire that we simply denote with the symbol $I_j$.
For instance, the state $\rho_Q$ of a quantum system $Q$ can be trivially described by the factor graph
\begin{equation}
\tikzsetnextfilename{rhoQfigure}
\begin{tikzpicture}[baseline]
\pgfdeclarelayer{bg}
\pgfsetlayers{bg,main}
\node[quadnode] (PX) {$\rho_Q$};
\begin{pgfonlayer}{bg}
  \draw[double] (PX.east) to node[above] {$Q$} ++(2cm,0);
\end{pgfonlayer}
\end{tikzpicture}\,\,.
\end{equation}
Since quantum channels are linear maps on the space of endomorphisms of the involved Hilbert spaces, they can also be represented by tensors. Fully analogous to the case of classical channels, an application of a quantum channel on some quantum state can be represented by a contraction of the two corresponding factor graphs. In fact, by correctly identifying half-edges of a factor graph describing some channel, one can verify that the tensor described by the factor graph is the Choi matrix of the involved channel. Therefore the fundamental quantities of quantum information theory (density matrices and Choi matrices) are positive semi-definite operators, in contrast to the non-negative tensors in classical probability theory. The contraction of two positive semi-definite operators can again be shown to be itself a positive semi-definite operator.

In analogy to the example in the classical case presented previously, consider three quantum systems $Q_1,Q_2,Q_3$ in the separable state $\rho_1\otimes\rho_2\otimes\rho_3$. A quantum channel $\mathcal{E}$ takes the systems $Q_1,Q_2$ as input and produces an output described by the quantum system $O$. The joint state of $Q_3O$ is then described by the factor graph
\begin{equation}
\tikzsetnextfilename{rhoQ3figure}
\begin{tikzpicture}[baseline=-20pt]
\pgfdeclarelayer{bg}
\pgfsetlayers{bg,main}
\node[quadnode] (PX1) {$\rho_{1}$};
\node[quadnode] (PX2) [below=0.2cm of PX1] {$\rho_{2}$};
\node[quadnode] (PX3) [below=0.2cm of PX2] {$\rho_{3}$};
\node[quadnode] (channel) [below right=-0.1cm and 0.6cm of PX1] {$\mathcal{E}$};
\begin{pgfonlayer}{bg}
\draw[double] (PX1) to (channel);
\draw[double] (PX2) to (channel);
\draw[double] (channel.east) to node[above] {$O$} +(0.5cm,0);
\draw[double] (PX3) to node[above] {$Q_3$} +(1.75cm,0);
\end{pgfonlayer}
\end{tikzpicture}\,.
\end{equation}

We illustrate one special family of a quantum channels that will be encountered later in this work: uniformly-controlled gates, which apply the unitary $\sum_{i=0,\dots,2^m-1}\ket{i}\bra{i}_C\otimes (U_i)_A$ on the $m$-qubit control system $C$ and the target system $A$, where the $U_i,i=0,\dots,2^m-1$, are some unitaries on the system $A$. The Following factor graph can be shown to describe the corresponding channel
\begin{equation}
\tikzsetnextfilename{ControlUDEFG}
\begin{tikzpicture}[baseline=-10pt]
\pgfdeclarelayer{bg}
\pgfsetlayers{bg,main}
\node[squarenode] (phase) {$\bullet$};
\node[squarenode] (Ui) [below=0.5cm of phase] {$U_i$};
\begin{pgfonlayer}{bg}
\draw [style=double] (phase) to node[above] {$C$} ++(-1.0cm,0);
\draw [style=double] (phase) to node[above] {$C'$} ++(1.0cm,0);
\draw [style=double] (Ui) to node[above] {$A$} ++(-1.0cm,0);
\draw [style=double] (Ui) to node[above] {$A'$} ++(1.0cm,0);
\draw [style=double] (phase) to node[right] {$I$} (Ui);
\end{pgfonlayer}
\end{tikzpicture}\,,
\end{equation}
where $I$ is a $m$-qubit system and the factor node $\bullet$ stands for an equality node for each separate edge, i.e.,
\begin{equation}
\tikzsetnextfilename{DotControlDEFG}
\begin{tikzpicture}[baseline]
\pgfdeclarelayer{bg}
\pgfsetlayers{bg,main}
\node[squarenode] (phase) {$\bullet$};
\begin{pgfonlayer}{bg}
\draw [double] (phase.west) to ++(-0.5cm,0);
\draw [double] (phase.east) to ++(0.5cm,0);
\draw [double] (phase.south) to ++(0,-0.5cm);
\end{pgfonlayer}

\node (equiv) [right=.8cm of phase] {$:=$};

\pgfdeclarelayer{bg}
\pgfsetlayers{bg,main}
\node[squarenode] (eq1) [above right=0.2cm and 1.0cm of equiv] {\cequal};
\node[squarenode] (eq2) [below right=0.1cm and 0.1cm of eq1]{\cequal};
\draw (eq1.west) to ++(-0.75cm,0);
\draw (eq1.east) to ++(1.2cm,0);
\draw (eq1.south) to ++(0,-1.2cm);
\draw (eq2.west) to ++(-1.2cm,0);
\draw (eq2.east) to ++(0.75cm,0);
\draw (eq2.south) to ++(0,-0.75cm);
\end{tikzpicture}\,\,,
\end{equation}
and $U_i$ is the quantum channel corresponding to the isometry $\sum_i\bra{i}_I\otimes (U_i)_A$ acting on the system $IA$. Note that the subscript $i$ in the factor node $U_i$ is chosen to indicate which system (here $I$) acts as the control for choosing which unitary is applied.
Moreoever, for unitaries we slightly abuse the convention of labels of nodes corresponding to channels and simply label the node by the (name of the) unitary, say $U$, rather than the corresponding channel $\mathcal U:\rho\to U\rho U^\dagger$.

Finally, this factor graph formalism can be straightfowardly generalized to also describe CQ and QC channels, following the same procedure to describe classical and quantum objects introduced above. In this case, the corresponding factor nodes will have a mix of single-edged and double-edged wires.
For instance, the factor node
\begin{equation}
\tikzsetnextfilename{FFGCQchannel}
\begin{tikzpicture}
\pgfdeclarelayer{bg}
\pgfsetlayers{bg,main}
\node[squarenode] (W) {\cequal};
\begin{pgfonlayer}{bg}
  \draw [style=double] (W.east) to ++(0.5cm,0);
  \draw (W.west) to ++(-0.5cm,0);
\end{pgfonlayer}
\end{tikzpicture}
\end{equation}
represents the CQ channel which embeds a bit in the computational basis (i.e.~$0\mapsto \ket{0}\bra{0}$ and $1\mapsto \ket{1}\bra{1}$), or when read from the other side as the QC channel corresponding to non-selectively measuring a qubit. Similarly, the CQ channel $W[\theta]$ can be represented by a corresponding factor node with a single-edged and double-edged wire. For example,~\cref{fig:5bitcode_fg} depicts the channel from the codeword bit $X_1$ to the CQ channel outputs $Y_1,\dots,Y_5$ for the 5-bit code, i.e., the channel that given some value $X_1=x_1$ generates a uniformly random codeword $\vec{z}\in\mathcal{C}$ such that $z_1=x_1$ and then passes that codeword $\vec{z}$ through five instances of the channel $W[\theta]$.

To end this section, we introduce a generalization of the channel $W[\theta]$ that will play an important role in later sections. We consider the CQ channel, that takes some bit $x\in\{0,1\}$ as well as a bit string $i\in\{0,1\}^m$ as input for some $m\geq 0$. The channel output is a qubit $Q$ prepared in the state $\qstate{x}{\varphi_i}$ where the angle is chosen from some pre-determined list of angles $\varphi_0,\dots,\varphi_{2^m-1}$ according to the value of $i$. Visually, we will depict this channel with following factor graph node:
\begin{equation}
\tikzsetnextfilename{2InputCQChannelFFG}
\begin{tikzpicture}[baseline]
\pgfdeclarelayer{bg}
\pgfsetlayers{bg,main}
\node[quadnode] (W) {$W[\varphi_i]$};
\begin{pgfonlayer}{bg}
  \draw (W) to node[above] {$X$} ++(-1.5cm,0);
  \draw [double] (W) to node[above] {$Q$} ++(1.5cm,0);
  \draw (W) to node[right] {$I$} ++(0,1.0cm);
\end{pgfonlayer}
\end{tikzpicture}\,\,\,.
\end{equation}
Note that we can recover the original channel $W[\theta]$ by choosing $m=0$. The subscript $i$ in the channel node $W[\varphi_i]$ is used to indicate that the system $I$ controls the chosen angle.

\section{Description of BPQM}\label{sec:description}
Consider a $(n,k)$ binary linear code $\mathcal{C}$ that exhibits a Tanner graph that is a tree.
A codeword $(X_1,\dots,X_n)$ is chosen uniformly randomly from $\mathcal{C}$.
For $i=1,\dots,n$, the $i$th codeword bit $X_i$ is passed through the CQ channel $W[\theta_i]$ for some $\theta_i\in (0,\pi)$ and the output qubit of the channel is denoted by $Y_i$.
BPQM is an algorithm that takes as input the qubits $Y_1,\dots,Y_n$ as well as the channel parameters $\theta_1,\dots,\theta_n$, and its objective is to correctly guess the original codeword $(X_1,\dots,X_n)$.

To decode the full codeword $(X_1,\dots,X_n)$, it suffices to decode $k$ independent codeword bits from which the value of the remaining codeword bits can be uniquely determined. 
By relabelling, we can assume that $X_1,X_2,\dots,X_k$ form such an independent set.
The BPQM decoder of \cite{renes_2017} consists of $k$ sequentional single-bit decoding operations for the codeword bits $X_j,j=1,\dots,k$, each specified by a unitary $V_j$ acting on the qubits $Y_1,\dots,Y_n$ followed by a single-qubit measurement in the $\ket{\pm}$ basis. 
The measurement outcome is the estimate of the corresponding codeword bit.
After the bit $X_j$ has been decoded, the decoding operations are rewound by applying \smash{$V_j^\dagger$}.
For example, the quantum circuit realizing the codeword decoding for the previously introduced 5-bit code is depicted in~\cref{fig:5bitcode_circuit_complete}.

\Cref{sec:single_bit_decoding} is concerned with describing an algorithm that generates a sequence of operations on $Y_1,\dots,Y_n$, which can be represented using a quantum circuit, that realize the single-bit decoding unitary $V_j$.
That is, the single-bit BPQM decoding algorithm consists of two parts, the \emph{quantum circuit} realization of $V_j$ and the \emph{compiler} which generates the quantum circuit description. 
Many steps involved in the single-bit decoding of BPQM might seem somewhat arbitrary at first glance.
In section~\cref{sec:contraction}, the action of the single bit decoding procedure will be analyzed using the factor graph formalism from~\cref{sec:fg_states_channels}, and by doing so it becomes naturally clear why optimal bit decoding is realized.

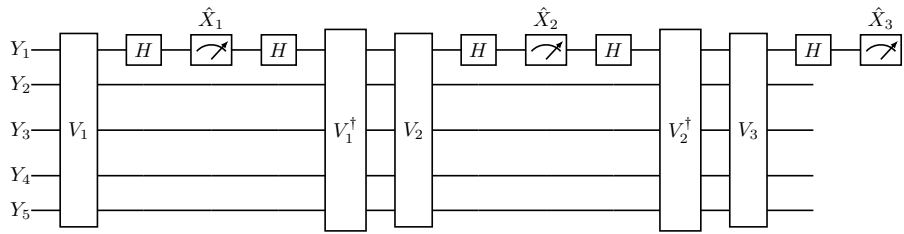
\begin{figure}[h]
  \centering
  \begin{adjustbox}{width=0.8\textwidth}
    \tikzsetnextfilename{CodewordDecoderCircuit}
  \begin{quantikz}
    Y_1 & \gate[wires=5]{V_1} & \gate{H} & \meter{$\hat{X}_1$} & \gate{H} & \gate[wires=5]{V_1^{\dagger}} & \gate[wires=5]{V_2} & \gate{H} & \meter{$\hat{X}_2$} & \gate{H} & \gate[wires=5]{V_2^{\dagger}} & \gate[wires=5]{V_3} & \gate{H} & \meter{$\hat{X}_3$} \\ [-0.4 cm]
    Y_2 &                            & \qw      & \qw                 & \qw      &                                      &                            & \qw      & \qw      & \qw      &                                      &                                       & \qw      &                     \\
    Y_3 &                            & \qw      & \qw                 & \qw      &                                      &                            & \qw      & \qw      & \qw      &                                      &                                       & \qw      &                     \\
    Y_4 &                            & \qw      & \qw                 & \qw      &                                      &                            & \qw      & \qw      & \qw      &                                      &                                       & \qw      &                     \\ [-0.4 cm]
    Y_5 &                            & \qw      & \qw                 & \qw      &                                      &                            & \qw      & \qw      & \qw      &                                      &                                       & \qw      &                     \\
  \end{quantikz}
  \end{adjustbox}
  \caption{Quantum circuit for decoding the complete codeword for the 5-bit code. The input of the circuit are the 5 channel output qubits $Y_1,\dots,Y_5$. The classical output of this circuit are the estimates $\hat{X}_1,\hat{X}_2,\hat{X}_3$ of the bits $X_1,X_2,X_3$.}
  \label{fig:5bitcode_circuit_complete}
\end{figure}

On a simple 5 bit code, it has been observed in~\cite{rengaswamy_2020} that the quantum circuits for decoding subsequent bits can be simplified by constructing them from an updated factor graph where the information of previously decoded codeword bits is incorporated. We do not address this optimization here and leave its analysis as an open question for further research.

\subsection{Single-bit decoding}\label{sec:single_bit_decoding}
\subsubsection{Message passing graph}
\label{sec:mpgdef}
The BPQM algorithm to decode a single bit $X_r$ for some $r\in\{1,\dots,k\}$ is most easily expressed in terms of what we call a \emph{message-passing graph} (MPG) for the code $\mathcal C$ w.r.t.\ $X_r$. 
It is a binary tree associated to a codeword bit $X_r$ which simplifies the bookkeeping of all the conditional operations required in the quantum circuit implementing the decoder.
It can be constructed from the Tanner graph as follows.
First, determine the Forney-style description of the Tanner graph of $\mathcal C$, adding equality nodes as necessary so that to each codeword bit is associated with a half-edge and there is no other half-edge in the graph. 
The resulting graph consists of check and equality nodes of varying degrees.
Using the decomposition rules depicted in~\cref{fig:degree_reduction}, one adds additional nodes and edges to replace any node of degree larger than three by nodes of degree three.
Observe that doing so does not alter the tree structure of the graph. 
Any node of degree two can be removed, as the two involved random variables are essentially equal.
The resulting factor graph therefore only has nodes of degree three.
Connect additional ``channel'' nodes to the half-edges, labelled by channel parameter $\theta_j$ for the codeword bit $X_j$. 
Place an equality node on the edge corresponding to $X_r$, and label this node as the root of the graph. 
Give a direction to each edge by pointing to its incident node which is closer to the root.  
For each node, choose an ordering of its two directly preceding nodes as first and second. 
The ordering is indicated on the graph by adding a dot to the incident edge from the node chosen to be first.
Either choice is valid, but the particular choice will have implications for the quantum circuit. 
However, the decoding performance of BPQM is unaffected by this choice.
Finally, name the check and variable nodes arbitrarily with integers $1,\dots,n-1$.

\begin{figure}
\centering
  \begin{subfigure}[b]{0.45\textwidth}
  \centering
  \tikzsetnextfilename{EqualityFFGDecomposition}
  \begin{tikzpicture}
    \node[squarenode] (eqlhs) {\cequal};
    \node[squarenode] (eq1rhs) [right=1.5cm of eqlhs] {\cequal};
    \node[squarenode] (eq2rhs) [right=2.5cm of eq1rhs] {\cequal};
    \node (dotslhs) [below=0.5cm of eqlhs] {$\dots$};
    \node (dotsrhs) [below=0.5cm of eq1rhs] {$\dots$};
    \node (equiv) [right=0.5cm of eqlhs]{$\cong$};
   
    \draw (eq1rhs) to (eq2rhs);
    \draw (eqlhs) to node[left] {$Z_1$} ++(-0.5cm, -1.0cm);
    \draw (eqlhs) to node[right] {$Z_d$} ++(0.5cm, -1.0cm);
    \draw (eq1rhs) to node[left] {$Z_1$} ++(-0.5cm, -1.0cm);
    \draw (eq1rhs) to node[right] {$Z_{d-2}$} ++(0.5cm, -1.0cm);
    \draw (eq2rhs) to node[left] {$Z_{d-1}$} ++(-0.3cm, -1.0cm);
    \draw (eq2rhs) to node[right] {$Z_{d}$} ++(0.3cm, -1.0cm);
  \end{tikzpicture}
  \caption{}
  \end{subfigure}
  \hspace{6mm}
  \begin{subfigure}[b]{0.45\textwidth}
  \centering
    \tikzsetnextfilename{CheckFFGDecomposition}
  \begin{tikzpicture}
    \node[squarenode] (chlhs) {$+$};
    \node[squarenode] (ch1rhs) [right=1.5cm of chlhs] {$+$};
    \node[squarenode] (ch2rhs) [right=2.5cm of ch1rhs] {$+$};
    \node (dotslhs) [below=0.5cm of chlhs] {$\dots$};
    \node (dotsrhs) [below=0.5cm of ch1rhs] {$\dots$};
    \node (chuiv) [right=0.5cm of chlhs]{$\cong$};
   
    \draw (ch1rhs) to (ch2rhs);
    \draw (chlhs) to node[left] {$Z_1$} ++(-0.5cm, -1.0cm);
    \draw (chlhs) to node[right] {$Z_d$} ++(0.5cm, -1.0cm);
    \draw (ch1rhs) to node[left] {$Z_1$} ++(-0.5cm, -1.0cm);
    \draw (ch1rhs) to node[right] {$Z_{d-2}$} ++(0.5cm, -1.0cm);
    \draw (ch2rhs) to node[left] {$Z_{d-1}$} ++(-0.3cm, -1.0cm);
    \draw (ch2rhs) to node[right] {$Z_{d}$} ++(0.3cm, -1.0cm);
  \end{tikzpicture}
  \caption{}
  \end{subfigure}
  \caption{Decomposition of degree $d\geq 4$ equality and check nodes into multiple nodes of lower degree. The equality $\cong$ can be understood in terms of the exterior functions of the two factor graphs. By iteratively applying these two rules, one can obtain a graph where every equality and check has exactly degree three.}
  \label{fig:degree_reduction}
\end{figure}
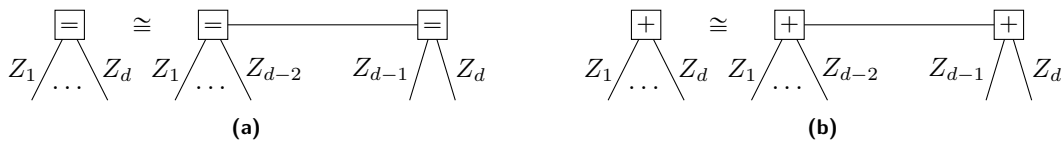

The resulting graph is defined as an MPG of $\mathcal{C}$ w.r.t.\ $X_r$.
Note that an MPG is not unique, as the original Tanner graph from which it is derived is generally also not unique.
As an example,~\cref{fig:5bitMPG} depicts an MPG of the 5-bit code w.r.t.~$X_1$.
An MPG is always a binary tree with $n$ leaves (channel nodes) which is full, meaning every node has either two or zero predecessors.\footnote{As opposed to descendants, since the edges are directed toward the root.}
The number of check and equality nodes must therefore be $n-1$, because a full binary tree with $n$ leaves always has $2n-1$ nodes.\footnote{This can be seen by induction, since adding a degree three node increases the number of leaves by one and the root equality node has degree two.}

Out of these $n-1$ nodes, precisely $k-1$ are check nodes and $n-k$ are equality nodes.
One way to show this fact is with a degree of freedom counting argument, as follows.
For this purpose, let us again interpret the MPG without channel nodes as a Forney-style factor graph.
It contains precisely $2n-2$ edges, meaning $2n-2$ binary variables are involved.
But the number of degrees of freedom is smaller than $2n-2$, as the check and equality nodes impose linear constraints on these binary variables.
More precisely, every check node imposes a linear constraint on the three involved variables, whereas every equality node imposes two linear constraints on the three involved variables.
If $n_c$ denotes the number of check nodes, the number of equality nodes is thus $n-1-n_c$.
Observe that the root equality node only has degree $2$ and therefore only imposes a single linear constraint.
All these constraints can easily be seen to be linearly independent,\footnote{Put differently: The matrix representing this homogeneous system of linear equations has the rank $n_c + 2(n-1-n_c-1) + 1$.} since the factor graph in question is a tree.
The final number of degrees of freedom is thus
\begin{equation}
  n_{\text{dof}} = (2n-2) - n_c - 2(n-1-n_c-1) - 1 = n_c+1\, .
\end{equation}
Since the MPG under consideration here was derived from a factor graph encapsulating the constraints of a linear code of dimension $k$, there must be $k$ independent random variables corresponding to the encoded bits, so $n_{\text{dof}}=k$. 
This directly implies that $n_c=k-1$.

\begin{table}[h]
  \centering
  \begin{tabular}{||c|c|c||} 
  \hline
  Node & Check node list & Branch list \\ [0.5ex] 
  \hline\hline
  $\theta_5$ & $\varnothing$ & $(\varnothing,\theta_5,1)$ \\ 
  \hline
  $\theta_3$ & $\varnothing$ & $(\varnothing,\theta_3,1)$ \\ 
  \hline
  $\theta_4$ & $\varnothing$ & $(\varnothing,\theta_4,1)$ \\ 
  \hline
  $\theta_2$ & $\varnothing$ & $(\varnothing,\theta_2,1)$ \\ 
  \hline
  $\theta_1$ & $\varnothing$ & $(\varnothing,\theta_1,1)$ \\ 
  \hline
  $3$ & $(3)$ & $(i, \theta_5\boxstar_i\theta_3, p_{\boxstar}(\theta_5,\theta_3,i))_{i=0,1}$ \\
  \hline
  $2$ & $(2)$ & $(j, \theta_4\boxstar_j\theta_2, p_{\boxstar}(\theta_4,\theta_2,j))_{j=0,1}$ \\
  \hline
  $4$ & $(3,2)$ & $(ij, (\theta_5\boxstar_i\theta_3)\ostar (\theta_4\boxstar_j\theta_2), p_{\boxstar}(\theta_5,\theta_3,i)p_{\boxstar}(\theta_4,\theta_2,j))_{i,j=0,1}$ \\
  \hline
  $1$ & $(3,2)$ & $(ij, (\theta_5\boxstar_i\theta_3)\ostar (\theta_4\boxstar_j\theta_2)\ostar \theta_1, p_{\boxstar}(\theta_5,\theta_3,i)p_{\boxstar}(\theta_4,\theta_2,j))_{i,j=0,1}$ \\
  \hline
  \end{tabular}
  \caption{Node lists computed by BPQM for the MPG depicted in~\cref{fig:5bitMPG}. The check and equality nodes are referred to by their name (integer ranging from $1$ to $n$) which they are given in the MPG. To shorten the expressions, we use the notation $p_{\boxstar}(\varphi_1,\varphi_2,l):=(1+(-1)^l\cos\varphi_1\cos\varphi_2)/2$ for $\varphi_1,\varphi_2\in (0,\pi)$ and $l\in\{0,1\}$. The branch list of the root node $1$ contains four elements.}
  \label{tab:5bitcode_mpg_labels}
\end{table}

\subsubsection{Check and branch lists}
The MPG will be a central ingredient to define the quantum operations that constitute the unitary $V_r$.
But first, additional classical information needs to be generated from it.
This is the compilation step, which keeps track of conditional operations that need to be performed in the circuit.  
More precisely, the BPQM quantum circuit compiler generates two lists for each node of the MPG.
The first is a length-$m$ list of distinct integers in $\{1,\dots,n-1\}$, where $m\in\{0,\dots,k-1\}$ is the number of check nodes in the set containing the node in question as well as all its precedessors in the MPG.
We refer to this first list as the check node list.
The second is a length-$2^m$ list of triples $(s,\theta,p)$, where $s\in \mathbb Z_2^m$, $\theta\in(0,\pi)$, and  $p\in [0,1]$. %,  $h\in \mathbb Z_2^m$. 
We refer to this as the branch list. 
The final entry in the triple, a probability, will not play a role here in specifying the decoding quantum circuit, but will be useful in analyzing block optimality later.

At the leaves, $m=0$ and the check node list is empty, denoted by $\varnothing$, and the branch list contains the single entry $(\varnothing,\theta_j,1)$ at the $j$th leaf node.
The check node and branch lists of the remaining nodes are computed iteratively, as for any node the two lists are defined from those of their direct predecessors.
Let $L$ and $L'$ be the check node lists of the two predecessors, of lengths $m$ and $m'$, respectively.  
Furthermore, let $(s_i, \varphi_i,p_i)_i$ and $(s'_j, \varphi'_j,p'_j)_j$ be the respective branch lists.  
At equality nodes, the check node list is simply the length-$(m+m')$ list $\smash{L\concat L'}$, where `$\smash{\concat}$' denotes concatenation. 
The branch list contains the $\smash{2^{m+m'}}$ triples $\smash{(s_i\concat s'_j\,,\varphi_i\ostar \varphi'_j\,,p_ip'_j)}$ for all $i$ and $j$, where 
\begin{equation}
\label{eq:ostarangle}
\varphi\ostar \varphi':=\arccos \left(\cos \varphi \cdot \cos \varphi'\right)\,.
\end{equation}
Meanwhile for check nodes the lists have lengths $m+m'+1$ and $2^{m+m'+1}$, respectively. 
The check node list is simply ${t}\concat L\concat L'$ where $t$ is the name of the check node in question.
To construct the branch list, first define, for $l\in \mathbb Z_2$,
\begin{equation}
\label{eq:boxstarangle}
\varphi\boxstar_l \varphi'=\arccos\left(\frac{\cos\varphi+(-1)^l \cos\varphi'}{1+(-1)^l \cos\varphi\cos\varphi'}\right)\,,
\end{equation}
\begin{equation}
p_{\boxstar}(\varphi,\varphi',l):=\tfrac12(1+(-1)^l\cos\varphi\cos\varphi')\,.
\end{equation}
Then the branch list contains the $2^{m+m'+1}$ triples
\begin{equation}
\left(l\concat s_i\concat s'_j,\,\,\varphi_i\boxstar_l \varphi'_j,\,\,p_ip'_j\cdot p_{\boxstar}(\varphi_i,\varphi'_j,l)\right)\,,
\end{equation}
where $l$ ranges over $\{0,1\}$ and again $i$ and $j$ range over all their respective values.
As an example, \cref{tab:5bitcode_mpg_labels} gives the computed lists for the MPG of the 5-bit code w.r.t~$X_1$ depicted in~\cref{fig:5bitMPG}.

Observe that the functions $\ostar$ and $\boxstar_0$ are symmetric in their arguments, but $\boxstar_1$ is not. 
In fact, $\beta\boxstar_1\alpha=\pi-\alpha\boxstar_1\beta$. 
Therefore, the choice of which of the two predecessor nodes corresponds to the first argument results in different angle entries in the branch list. 

Note that the first entries $s$ of the triples range over all elements in $\mathbb Z_2^m$. 
Thus, it is not strictly necessary to include $s$ in the triple, as it could be specified by order of the triple (now pair) in the branch list. 
However, we have opted for this redundancy to avoid having to explicitly deal with the ordering in the description of the compiliation step.

\subsubsection{Quantum circuit}
Using the MPG and the generated check and branch lists, we can now easily construct a quantum circuit to realize $V_r$. 
The circuit can be thought of as the process of passing qubits along the edges of the MPG.
First, every leaf node $\theta_j$ passes the channel output qubit $Y_j$ to its immediate successor.
Once a check or equality node has received a qubit from both of its immediate predecessors, it performs some unitary gate on these two qubits, resulting in two output qubits.
One of these qubits is then passed to the immediate successor of the node, whereas the other remains at the node.
We call the former \emph{data qubit} and the later \emph{ancilla qubit}.
The operation that a node performs on its two input qubits depends on whether the node is an equality or a check node.
We call the node operation \emph{equality node operation} or \emph{check node operation} correspondigly.
The final data qubit produced by the root of the graph is precisely the qubit that is to be measured in the $\ket{\pm}$ basis in order to obtain the estimate $\hat{X}_r$ of $X_r$.
To fully specify the BPQM algorithm and the quantum circuit implementation of $V_r$, it thus suffices to specify the quantum gates realizing the check and equality node operations. 

The check node operation is simply a \cnot{} gate.
More specifically, choose the control qubit to be the input data qubit coming from the first predecessor in the MPG, as indicated by the dot in the MPG, and the second data qubit to the be target.  
The control qubit is passed on further as a data qubit, while the target is an ancilla qubit and remains at the check node.  

The equality node operation is more involved. 
It consists of a two-qubit unitary applied to the incoming data qubits, uniformly controlled by the ancilla qubits produced by the predecessor check nodes in the MPG. 
The unitary in question is given by\footnote{Note that this is a slight variation of the unitary appearing in \cite{renes_2017,rengaswamy_2020}; the second row is negated. Additionally, $b_+$ and $b_-$ are interchanged, for reasons which will become apparent in \cref{sec:quantummessagepassing}.}
\begin{align}
\label{eq:equalitynodeU}
    U_{\ostar}(\alpha,\beta) &:= \begin{pmatrix}a_+&0&0&a_- \\ -a_-&0&0&a_+ \\ 0&b_-&b_+&0 \\ 0&b_+&-b_-&0\end{pmatrix}\,,\\
    \label{eq:apmdef}
    a_{\pm}&:=\frac{1}{2}\frac{\cos(\frac{\alpha-\beta}{2})\pm \cos(\frac{\alpha+\beta}{2})}{\abs{\cos(\frac{\alpha\ostar\beta}{2})}}, \, \\
    \label{eq:bpmdef}
    b_{\pm}&:=\frac{1}{2}\frac{\sin(\frac{\alpha+\beta}{2})\pm \sin(\frac{\alpha-\beta}{2})}{\abs{\sin(\frac{\alpha\ostar\beta}{2})}}\,,
  \end{align}
for parameters $\alpha,\beta\in (0,\pi)$.
Again, the first qubit is chosen to be the qubit arriving from the first predecessor in the MPG. 
At the output of the gate, the first qubit sent to the next node as the data qubit while the second is the ancilla which is retained.
The check and branch lists of the MPG specify the precise conditional $U_{\ostar}$ operation that is required. 
For a given equality node, denote its check list by $L$ and its branch list by $(s_i,\varphi_i,p_i)_i$.
Then the equality node operation is the uniformly-controlled unitary  
\begin{equation}
\label{equalitynodeunitary}
\sum_{i,\ell}U_{\ostar}(\varphi_i,\varphi'_\ell)\otimes \ketbra{s_i}\otimes \ketbra{s'_\ell}\,,
\end{equation}
where $U_{\ostar}$ acts on the data qubits and the projectors act on ancilla qubits. 
Precisely which are the ancilla qubits is recorded in the check node lists, as one ancilla qubit has been generated at each check node passed thus far in the algorithm.

\begin{figure}
  \centering
  \tikzsetnextfilename{5bitBPQMCircuit}
  \resizebox{.9\textwidth}{!}{
  \begin{quantikz}
    \lstick{$Y_5$} & \ctrl{1} & \qw      & \gate[wires=2]{U_{\ostar}(\theta_5\boxstar_i\theta_3,\theta_4\boxstar_j\theta_2)} & \qw      & \gate[wires=2]{U_{\ostar}((\theta_5\boxstar_i\theta_3)\ostar(\theta_4\boxstar_j\theta_2), \theta_1)} & \qw & \gate{H} & \meter{0/1} \\ [-0.4cm]
    \lstick{$Y_3$} & \targ{}  & \swap{1} &                                                                           & \swap{3} &                                                                                         & \qw & \qw     & \qw \\ [-0.4cm]
    \lstick{$Y_4$} & \ctrl{1} & \targX{} & \uctrl{i}\vqw{-1}                                                         & \qw      & \uctrl{i}\vqw{-1}                                                                       & \qw & \qw     & \qw \\ [-0.3cm]
    \lstick{$Y_2$} & \targ{}  & \qw      & \uctrl{j}\vqw{-1}                                                         & \qw      & \uctrl{j}\vqw{-1}                                                                       & \qw & \qw     & \qw \\ [-0.1cm]
    \lstick{$Y_1$} & \qw      & \qw      & \qw                                                                       & \targX{} & \qw                                                                                     & \qw & \qw     & \qw \\
  \end{quantikz}
  }
  \tikzexternaldisable
  \caption{Quantum circuit implementing BPQM decoding of $X_1$ for the 5-bit code. Without the Hadamard gate and measurement at the end, this circuit realizes the unitary $V_1$. The symbol {\protect\tikz\protect \node [ucontrolled] (c) {}; } denotes the qubits on which a gate is uniformly controlled.}
  \tikzexternalenable
  \label{fig:5bitcode_circuit}
\end{figure}

\bigskip
The above algorithm is not truly a message-passing algorithm, since the equality node operation does not act on the two received data qubits alone. 
Furthermore, the uniformly-controlled $U_{\ostar}$ gates can generally only be decomposed into a number of two-qubit gates scaling with the number of different control patterns, i.e.\ $2^{k-1}$~\cite{bergholm_2005}. 
Hence, for codes of finite rate such that $k$ scales linearly with $n$, the resulting quantum circuit depth will be exponential in the blocklength of the code. 
In fact the exponential overhead already shows up before executing the circuit, in the compilation, which requires an exponential amount of (classical) memory to store the branch list.

This significant drawback was overlooked in previous literature, mainly because the exponential runtime of the circuit can be avoided in the case where one only decodes a single codeword bit. 
In this case the ancilla qubits can simply be measured as soon as they are produced, and the appropriate angle arguments for equality node gates $U_{\ostar}$ can be generated for the specific observed cases. 
However, this hybrid classical-quantum approach is not useful when decoding the complete codeword, as it prevents the node operations from being rewound (which can normally be done by executing $V_r^\dagger$).

\subsection{Bit optimality of BPQM}\label{sec:contraction}
To establish that the single-bit decoding operation for $X_r$ realizes the Helstrom measurement, and thus constitutes the optimal bit decoding, consider the CQ channel from some codeword bit $X_r$ to the output quantum systems $Y_1,\dots,Y_n$  obtained by passing a randomly-chosen codeword with the prescribed value of $X_r$ through the channel $W[\theta_1]\otimes W[\theta_2]\otimes \cdots\otimes W[\theta_n]$. 
Optimality is the statement that the decoding quantum circuit realizes the Helstrom measurement for the two output states of the channel. 

A convenient means to show this is provided by the Forney-style factor graph representation of the CQ channel and the decoding operations. 
Such a factor graph of the channel can be constructed by taking the binary tree of the MPG and adding a half-edge to the root equality node and double edges to the channel nodes at the leaves. 
For example, a factor graph representing this CQ channel for the 5-bit code and $r=1$ is depicted in~\cref{fig:5bitcode_fg}. 
To this factor graph we may append the factor graph representation of the gates involved in the decoding operation. 
The resulting graph can be successively simplified by making use of two contraction identities, each of which reduces the number of channel nodes by one. 
The final simplified graph will be a CQ channel with a single-qubit output, for which the Helstrom measurement is a simple projection operation onto the eigenbasis of $\sigma_x$.

\paragraph{Equality node contraction}
Let $\alpha,\beta\in(0,\pi)$.
Consider the CQ channel described by the factor graph
  \begin{equation}
  \label{eq:equalitynodeFFG}
  \tikzsetnextfilename{EqualityNodeFFG}
  \begin{tikzpicture}[baseline]
    \pgfdeclarelayer{bg}
    \pgfsetlayers{bg,main}
    \node[squarenode] (eq) {\cequal};
    \node[quadnode] (W1) [above right=0.0cm and 0.4cm of eq] {$W[\alpha]$};
    \node[quadnode] (W2) [below right=0.0cm and 0.4cm of eq] {$W[\beta]$};
    \begin{pgfonlayer}{bg}
      \draw (eq.west) to node[above] {$X$} ++(-0.5cm,0);
      \draw (eq.north) |- (W1.west);
      \draw (eq.south) |- (W2.west);
      \draw[style=double] (W1.east) to ++(+0.3cm,0);
      \draw[style=double] (W2.east) to ++(+0.3cm,0);
    \end{pgfonlayer}
  \end{tikzpicture}
  \end{equation}
that produces two-qubit pure quantum states of the form 
  \begin{equation}
  \label{eq:equalityconvolutionoutput}
    \left[ W[\alpha]\ostar W[\beta] \right](x) :=\qstate{x}{\alpha}\qstateconj{x}{\alpha} \otimes \qstate{x}{\beta}\qstateconj{x}{\beta}
  \end{equation}
 given the input bit $x\in\{0,1\}$.

Applying the unitary $U_\ostar(\alpha,\beta)$ from \eqref{eq:equalitynodeU} as per BPQM yields 
\begin{align}\label{eq:Ustar_identity}
   U_{\ostar}(\alpha,\beta)\cdot \qstate{x}{\alpha}\otimes\qstate{x}{\beta} = \qstate{x}{\alpha\ostar\beta}\otimes\ket{0}\,,
\end{align}
using the $\ostar$ angle function from \eqref{eq:ostarangle}. 
Importantly, the first output qubit, the data qubit, is the output of a pure state channel $W$ for appropriate choice of angle, namely $\alpha\ostar\beta$. 
That is, the information about the input $x$ has been compressed into a single qubit, in accordance with the fact that the two possible pure output states in \eqref{eq:equalityconvolutionoutput} only span a two-dimensional space.
This is made especially clear in the factor graph contraction identity depicted in \cref{fig:contraction_identity_equality}. 
  By linearity, the contraction identity can be extended to the case that the channels $W[\alpha]$ and $W[\beta]$ depend on additional classical information which determines their angle parameters. 
  The extended contraction identity is depicted in \cref{fig:contraction_identity_equality_generalized}. 
  \begin{figure}[h]
   \begin{subfigure}[b]{0.49\textwidth}
   \tikzsetnextfilename{contraction_identity_equality}
     \begin{tikzpicture}
    \pgfdeclarelayer{bg}
    \pgfsetlayers{bg,main}

    \node[squarenode] (eq) {\cequal};
    \node[quadnode] (W1) [above right=0.2cm and 0.2cm of eq] {$W[\alpha]$};
    \node[quadnode] (W2) [below right=0.2cm and 0.2cm of eq] {$W[\beta]$};
    \node[quadnode, minimum height=2cm] (U) [right=1.5cm of eq] {\scalebox{0.7}{$U_{\ostar}(\alpha,\beta)$}};

    \draw (eq.west) -- node[above] {$X$} ++(-0.5cm,0);
    \draw (eq.north) to[out=90, in=180] (W1.west);
    \draw (eq.south) to[out=-90, in=180] (W2.west);
    % We have to put the double dashed lines in a background layer, otherwise they look ugly
    \begin{pgfonlayer}{bg}
      \draw [style=double] (W1.east) to ++(1.9cm,0) node[above] {$Q_1$};
      \draw [style=double] (W2.east) to ++(1.9cm,0) node[below] {$Q_2$};
    \end{pgfonlayer}

    % equivalence sign
    \node (equiv) [right=0.4cm of U] {$\cong$};
    
    % right-hand side:
    \node[quadnode] (Wrhs) [above right=0.15cm and 0.5cm of equiv] {$W[\alpha\ostar\beta]$};
    \node[quadnode] (zerorhs) [below=0.8cm of Wrhs] {$\ket{0}\bra{0}$};
    \draw (Wrhs.west) -- node[above] {$X$} ++(-0.5cm,0);
    \begin{pgfonlayer}{bg}
      \draw [style=double] (Wrhs) to node[above] {$Q_1$} ++(1.4cm,0);
      \draw [style=double] (zerorhs) to node[above] {$Q_2$} ++(1.4cm,0);
    \end{pgfonlayer}
  \end{tikzpicture}
\caption{}
  \label{fig:contraction_identity_equality}
  \end{subfigure}
  \begin{subfigure}[b]{0.49\textwidth}
  \tikzsetnextfilename{contraction_identity_equality_generalized}
\begin{tikzpicture}
    \pgfdeclarelayer{bg}
    \pgfsetlayers{bg,main}

    \node[squarenode] (eq) {\cequal};
    \node[quadnode] (W1) [above right=0.2cm and 0.2cm of eq] {$W[\alpha_i]$};
    \node[quadnode] (W2) [below right=0.2cm and 0.2cm of eq] {$W[\beta_j]$};
    \node[quadnode, minimum height=2cm] (U) [right=1.5cm of eq] {\scalebox{0.7}{$U_{\ostar}(\alpha_{\tilde{i}},\beta_{\tilde{j}})$}};
    \node[squarenode] (eqi) [above=0.5cm of W1] {\cequal};
    \node[squarenode] (eqj) [below=0.5cm of W2] {\cequal};
    \node[squarenode] (phasei) [right=1.0cm of eqi] {$\bullet$};
    \node[squarenode] (phasej) [right=1.0cm of eqj] {$\bullet$};

    \draw (eq.west) -- node[above] {$X$} ++(-0.5cm,0);
    \draw (eq.north) to[out=90, in=180] (W1.west);
    \draw (eq.south) to[out=-90, in=180] (W2.west);
    % We have to put the double dashed lines in a background layer, otherwise they look ugly
    \begin{pgfonlayer}{bg}
      \draw [style=double] (W1.east) to ++(1.9cm,0) node[above] {$Q_1$};
      \draw [style=double] (W2.east) to ++(1.9cm,0) node[below] {$Q_2$};
      \draw (W1.north) to (eqi.south);
      \draw (W2.south) to (eqj.north);
      \draw (eqi.west) to node[above] {$I$} ++(-0.5cm,0);
      \draw (eqj.west) to node[above] {$J$} ++(-0.5cm,0);
      \draw [double] (phasei.south) to ++(0,-1.0cm);
      \draw [double] (phasej.north) to ++(0,1.0cm);
      \draw [double] (eqi.east) to ++(2.0cm,0) node[above] {$\tilde{I}$};
      \draw [double] (eqj.east) to ++(2.0cm,0) node[below] {$\tilde{J}$};
    \end{pgfonlayer}

    % equivalence sign
    \node (equiv) [right=0.4cm of U] {$\cong$};
    
    % right-hand side:
    \node[quadnode] (Wrhs) [below right=0.0cm and 0.5cm of equiv] {$W[\alpha_i\ostar\beta_j]$};
    \node[quadnode] (zerorhs) [below=0.3cm of Wrhs] {$\ket{0}\bra{0}$};
    \node[squarenode] (eqirhs) [above left=1.0cm and 0.2cm of Wrhs.north] {\cequal};
    \node[squarenode] (eqjrhs) [above right=0.4cm and 0.2cm of Wrhs.north] {\cequal};
    \draw (Wrhs.west) -- node[above] {$X$} ++(-0.5cm,0);
    \begin{pgfonlayer}{bg}
      \draw (eqirhs) to ++(-1.0cm,0) node[above] {$I$};
      \draw (eqjrhs) to ++(-1.8cm,0) node[above] {$J$};
      \draw (eqirhs) to ++(0,-1.2cm);
      \draw (eqjrhs) to ++(0,-0.7cm);
      \draw [style=double](eqirhs) to ++(1.8cm,0) node[above] {$\tilde{I}$};
      \draw [style=double](eqjrhs) to ++(1.0cm,0) node[above] {$\tilde{J}$};
      \draw [style=double] (Wrhs) to node[above] {$Q_1$} ++(1.4cm,0);
      \draw [style=double] (zerorhs) to node[above right=0cm and -0.1cm] {$Q_2$} ++(1.4cm,0);
    \end{pgfonlayer}
  \end{tikzpicture}
  \caption{}
  \label{fig:contraction_identity_equality_generalized}
  \end{subfigure}
  \caption{(a) Graphical depiction of equality node contraction identity using Forney-style factor graphs. The equivalence $\cong$ is to be understood as equality of the exterior functions of the two factor graphs. (b) represents a generalizations of this identity, where the channel parameters are classically conditioned on some $n_I$-bit and $n_J$-bit systems $I$ and $J$ for some $n_I,n_J\geq 0$. $\{\alpha_i\}_{i=0,\dots,2^{n_I}-1}$ and $\{\beta_j\}_{j=0,\dots,2^{n_J}-1}$ are some sets of angles between $0$ and $\pi$.}
  \label{fig:contraction_identities_equality}
  \end{figure}
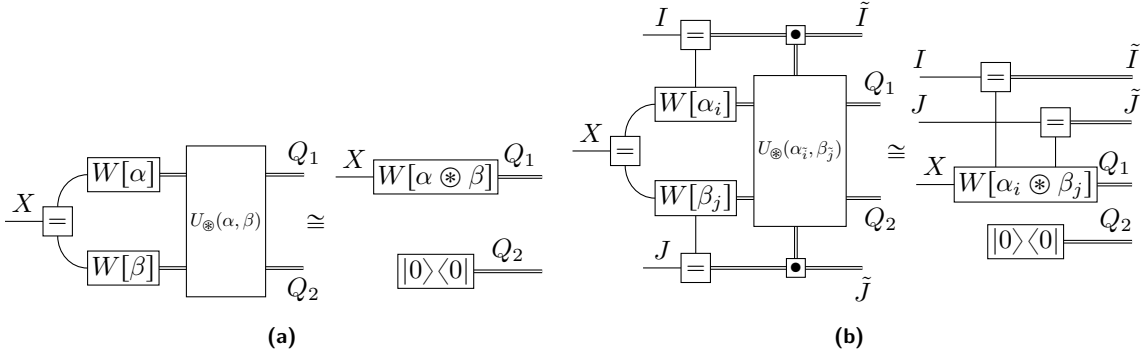

\paragraph{Check node contraction}
Let $\alpha,\beta\in(0,\pi)$.
Consider the CQ channel described by the factor graph
  \begin{equation}
  \label{eq:checknodeffg}
  \tikzsetnextfilename{CheckNodeFFG}
  \begin{tikzpicture}[baseline]
    \pgfdeclarelayer{bg}
    \pgfsetlayers{bg,main}
    \node[squarenode] (check) {$+$};
    \node[quadnode] (W1) [above right=0.0cm and 0.4cm of check] {$W[\alpha]$};
    \node[quadnode] (W2) [below right=0.0cm and 0.4cm of check] {$W[\beta]$};
    \begin{pgfonlayer}{bg}
      \draw (check.west) to node[above] {$X$} ++(-0.5cm,0);
      \draw (check.north) |- (W1.west);
      \draw (check.south) |- (W2.west);
      \draw[style=double] (W1.east) to ++(+0.3cm,0);
      \draw[style=double] (W2.east) to ++(+0.3cm,0);
    \end{pgfonlayer}
  \end{tikzpicture}
  \end{equation}
which, given the input $x\in\{0,1\}$, produces mixed quantum states of the form
  \begin{align}
    \left[ W[\alpha]\boxstar W[\beta] \right](x) :=\, &
    \frac{1}{2}\qstate{x}{\alpha}\qstateconj{x}{\alpha} \otimes \qstate{0}{\beta}\qstateconj{0}{\beta} \nonumber\\&+ 
    \frac{1}{2}\qstate{1-x}{\alpha}\qstateconj{1-x}{\alpha} \otimes \qstate{1}{\beta}\qstateconj{1}{\beta}\,.
  \end{align}
Applying a \cnot{} gate on the two involved qubits results in the state 
  \begin{equation}
  \label{eq:checknodeactionmixed}
    \cnot{}\cdot [W[\alpha]\boxstar W[\beta]](x) \cdot \cnot{}^{\dagger} = \sum\limits_{l=0,1}{p_l\qstate{x}{\alpha\boxstar_l\beta}\qstateconj{x}{\alpha\boxstar_l\beta}\otimes \ket{l}\bra{l}}\,,
  \end{equation}
  where $p_{0/1} := \frac{1}{2}(1\pm \cos\alpha\cos\beta)$ and using $\alpha\boxstar_{l}\beta$ from \eqref{eq:boxstarangle}. 
Again the data qubit of the output is the output of a pure state channel. 
Now, however, there are two possible values for the angle parameter of the channel, and the information about the particular case is stored (classically) in the ancilla qubit. 
In this case the information about the input is not compressed to a single qubit, but rather to a CQ state of two qubits. 
The corresponding factor graph contraction identity is depicted in \cref{fig:contraction_identity_check}. 
It can also be extended by linearity to the case that the parameters of the channels $W[\alpha]$ and $W[\beta]$ are determined by further random variables, as depicted in \cref{fig:contraction_identity_check_generalized}.
\begin{figure}
  \begin{subfigure}[b]{0.49\textwidth}
  \tikzsetnextfilename{contraction_identity_check}
\begin{tikzpicture}
    \pgfdeclarelayer{bg}
    \pgfsetlayers{bg,main}

    \node[squarenode] (plus) {$+$};
    \node[quadnode] (W1) [above right=0.2cm and 0.2cm of plus] {$W[\alpha]$};
    \node[quadnode] (W2) [below right=0.2cm and 0.2cm of plus] {$W[\beta]$};
    \node[quadnode, minimum height=1.8cm, minimum width=0.5cm] (CNOT) [right=1.5cm of plus] {};
    \node[phase] (CNOTcontrol) [right=0.52cm of W1] {};
    \node[circlewc] (CNOTtarget) [below=0.98cm of CNOTcontrol] {};

    \draw (plus.west) -- node[above] {$X$} ++(-0.5cm,0);
    \draw (plus.north) to[out=90, in=180] (W1.west);
    \draw (plus.south) to[out=-90, in=180] (W2.west);
    \draw [thick] (CNOTcontrol.south) to (CNOTtarget.north);
    % We have to put the double dashed lines in a background layer, otherwise they look ugly
    \begin{pgfonlayer}{bg}
      \draw [style=double] (W1.east) to ++(1.2cm,0) node[above] {$Q_1$};
      \draw [style=double] (W2.east) to ++(1.2cm,0) node[below] {$Q_2$};
    \end{pgfonlayer}

    % equivalence sign
    \node (equiv) [right=2.3cm of plus] {$\cong$};
    
    % right-hand side:
    \node[quadnode] (Wrhs) [above right=0.5cm and 0.5cm of equiv] {$W[\alpha\boxstar_l\beta]$};
    \node[quadnode] (Pkrhs) [below right=1.0cm and 0.0cm of Wrhs.south] {$P_L$};
    \node[squarenode] (eq3rhs) [above=0.45cm of Pkrhs] {\cequal};

    \draw (Wrhs.west) -- node[above] {$X$} ++(-0.5cm,0);
    \begin{pgfonlayer}{bg}
      \draw (Pkrhs) to node[right] {$L$} (eq3rhs);
      \draw (eq3rhs) to ++(0,0.5cm);
      \draw [style=double] (eq3rhs) to ++(1.3cm,0) node[below] {$Q_2$};
      \draw [style=double] (Wrhs.east) to ++(0.7cm,0) node[above] {$Q_1$};
    \end{pgfonlayer}
  \end{tikzpicture}
  \caption{}
  \label{fig:contraction_identity_check}
  \end{subfigure}
  \begin{subfigure}[b]{0.49\textwidth}
  \tikzsetnextfilename{contraction_identity_check_generalized}
  \begin{tikzpicture}
    \pgfdeclarelayer{bg}
    \pgfsetlayers{bg,main}

    \node[squarenode] (plus) {$+$};
    \node[quadnode] (W1) [above right=0.2cm and 0.2cm of plus] {$W[\alpha_i]$};
    \node[quadnode] (W2) [below right=0.2cm and 0.2cm of plus] {$W[\beta_j]$};
    \node[quadnode, minimum height=1.8cm, minimum width=0.5cm] (CNOT) [right=1.5cm of plus] {};
    \node[phase] (CNOTcontrol) [right=0.41cm of W1] {};
    \node[circlewc] (CNOTtarget) [below=0.98cm of CNOTcontrol] {};

    \draw (plus.west) -- node[above] {$X$} ++(-0.5cm,0);
    \draw (plus.north) to[out=90, in=180] (W1.west);
    \draw (plus.south) to[out=-90, in=180] (W2.west);
    \draw [thick] (CNOTcontrol.south) to (CNOTtarget.north);
    % We have to put the double dashed lines in a background layer, otherwise they look ugly
    \begin{pgfonlayer}{bg}
      \draw [style=double] (W1.east) to ++(1.2cm,0) node[above] {$Q_1$};
      \draw [style=double] (W2.east) to ++(1.2cm,0) node[below] {$Q_2$};
      \draw (W1.north) to node[above left=-0.1cm and 0] {$I$} ++(0,0.6cm);
      \draw (W2.south) to node[below left=-0.1cm and 0] {$J$} ++(0,-0.6cm);
    \end{pgfonlayer}

    % equivalence sign
    \node (equiv) [right=2.3cm of plus] {$\cong$};
    
    % right-hand side:
    \node[quadnode] (Wrhs) [above right=0.5cm and 0.5cm of equiv] {$W[\alpha_i\boxstar_l\beta_j]$};
    \node[quadnode] (Pkrhs) [below right=1.0cm and 0.4cm of Wrhs.south] {$P_L$};
    \node[squarenode] (eq1rhs) [left=0.9cm of Pkrhs] {\cequal};
    \node[squarenode] (eq2rhs) [above left=0.15cm and 0.3cm of Pkrhs] {\cequal};
    \node[squarenode] (eq3rhs) [above=0.45cm of Pkrhs] {\cequal};

    \draw (Wrhs.west) -- node[above] {$X$} ++(-0.5cm,0);
    \begin{pgfonlayer}{bg}
      \draw (eq1rhs) to ++(0,1.2cm);
      \draw (eq1rhs) to node[below left=-0.16cm and 0] {$I$} ++(0,-0.65cm);
      \draw (eq2rhs) to ++(0,0.7cm);
      \draw (eq2rhs) to node[below left=0.1cm and 0] {$J$} ++(0,-1.2cm);
      \draw (eq1rhs) to (Pkrhs);
      \draw (eq2rhs) to (Pkrhs);
      \draw (Pkrhs) to node[right] {$L$} (eq3rhs);
      \draw (eq3rhs) to ++(0,0.5cm);
      \draw [style=double] (eq3rhs) to ++(1.0cm,0) node[below] {$Q_2$};
      \draw [style=double] (Wrhs.east) to ++(0.7cm,0) node[above] {$Q_1$};
    \end{pgfonlayer}
  \end{tikzpicture}
  \caption{}
  \label{fig:contraction_identity_check_generalized}
  \end{subfigure}
  \caption{(a) Graphical depiction of check node contraction identity using Forney-style factor graphs. The equivalence $\cong$ is to be understood as equality of the exterior functions of the two factor graphs. (b) represents a generalizations of this identity, where the channel parameters are classically conditioned on some $n_I$-bit and $n_J$-bit systems $I$ and $J$ for some $n_I,n_J\geq 0$. $\{\alpha_i\}_{i=0,\dots,2^{n_I}-1}$ and $\{\beta_j\}_{j=0,\dots,2^{n_J}-1}$ are some sets of angles between $0$ and $\pi$. In (a) $L$ denotes a 1-bit system which is distributed as $P_L(0)=(1+\cos\alpha\cos\beta)/2, P_L(1)=(1-\cos\alpha\cos\beta)/2$. The corresponding 1-bit system $L$ in (b) is distributed as $P_L(0)=(1+\cos\alpha_i\cos\beta_j)/2, P_L(1)=(1-\cos\alpha_i\cos\beta_j)/2$.}
  \label{fig:contraction_identities_check}
  \end{figure}

\bigskip

Return now to the factor graph describing the effective channel from $X_r$ to $Y_1,\dots,Y_n$ followed by the operations of $V_r$. 
Since the former graph is a binary tree, the contraction identities depicted in \Cref{fig:contraction_identity_equality,fig:contraction_identity_check} can be used to simplify the full factor graph at the channel nodes. 
Simplifying a given channel pair reduces the number of channel nodes in the graph by one and converts the remaining channel node to a classically conditioned channel node. 
The extended identities of \Cref{fig:contraction_identity_equality_generalized,fig:contraction_identity_check_generalized} can then be used to further simplify these channel nodes in an iterative fashion.

Iteratively simplifying the factor graph in this manner $s$ times to remove $s$ channel nodes results in a factor graph of the form depicted in~\cref{fig:contraction_intermediate_state}. 
After $n-1$ steps, the final simplified factor graph describing the CQ channel from $X_r$ to the output of BPQM is therefore of the following form
\begin{equation}
\label{eq:simplifiedFFG}
\tikzsetnextfilename{SimplifiedFFG}
  \begin{tikzpicture}[baseline=-20pt]
    \pgfdeclarelayer{bg}
    \pgfsetlayers{bg,main}
    \node[quadnode] (W) {$W[\varphi_i]$};
    \node[squarenode] (eq) [below=0.4cm of W] {\cequal};
    \node[quadnode] (P) [left=0.4cm of eq] {$P_I$};
    \node[quadnode] (zero) [below=0.4cm of eq] {$(\ket{0}\bra{0})^{\otimes n-k}$};
    \begin{pgfonlayer}{bg}
      \draw (W.west) to node[above] {$X_r$} ++(-0.5cm,0);
      \draw[style=double] (W) to ++(+1.5cm,0) node[above] {$D$};
      \draw[style=double] (eq) to ++(+1.5cm,0) node[above] {$A$};
      \draw[style=double] (zero) to ++(+1.5cm,0) node[above] {$Z$};
      \draw (W.south) to node[right] {$I$} (eq.north);
      \draw (eq.west) to (P.east);
    \end{pgfonlayer}
  \end{tikzpicture}\,\,,
\end{equation}
where $I$ is a $(k-1)$-bit random variable, $P_I$ is some distribution defined on $I$, $(\varphi_0,\dots,\varphi_{2^{k-1}-1})$ is some set of angles $\in (0,\pi)$, $Z$ is the system of $n-k$ zero qubits produced by equality node operations, $A$ is the system of $k-1$ ancilla qubits produced by check node operations and $D$ is the data qubit produced by the root node.
We illustrate this contraction process on the 5-bit example in~\cref{fig:5bitcode_contraction}. 
A step-by-step version of this example which includes the intermediate steps of the contraction is provided in~\cref{app:step_by_step}.

\begin{figure}
  \centering
    \tikzsetnextfilename{IntermediateStateFFG}
  \begin{tikzpicture}
    \pgfdeclarelayer{bg}
    \pgfsetlayers{bg,main}
    \node[cloud, draw,cloud puffs=10,cloud puff arc=120, aspect=2, inner ysep=1em, text width=3cm] (cloud) {tree with $n-1-s$ `$=$' and `$+$' nodes};
    \node[quadnode] (W1) [above right=0.2cm and 1.5cm of cloud] {$W[\varphi_{i_1}]$};
    \node[squarenode] (c1) [below=0.5cm of W1] {\cequal};
    \node[quadnode] (P1) [left=0.3cm of c1] {$P_{I_1}$} ;
    \node[quadnode] (Wlast) [below right=0.0cm and 1.5cm of cloud] {$W[\varphi_{i_{n-s}}]$};
    \node[squarenode] (clast) [below=0.5cm of Wlast] {\cequal};
    \node[quadnode] (Plast) [left=0.3cm of clast] {$P_{I_{n-s}}$} ;
    \node[quadnode] (zero) [below right=1.8cm and 0.5cm of cloud] {$(\ket{0}\bra{0})^{\otimes e}$} ;
    \node (dots) [right=1.0cm of cloud] {$\vdots$};

    \draw (cloud.west) to node[above] {$X_r$} ++(-1.0cm,0);
    \draw (cloud) to (W1);
    \draw (cloud) to (Wlast);
    \draw (P1) to (c1);
    \draw (c1) to node[right] {$I_1$} (W1);
    \draw (Plast) to (clast);
    \draw (clast) to node[right] {$I_{n-s}$} (Wlast);
    \begin{pgfonlayer}{bg}
      \draw[double] (W1) to ++(1.5cm,0);
      \draw[double] (Wlast) to ++(1.5cm,0);
      \draw[double] (c1) to ++(1.5cm,0);
      \draw[double] (clast) to ++(1.5cm,0);
      \draw[double] (zero) to ++(2.5cm,0);
    \end{pgfonlayer}
  \end{tikzpicture}
  \caption{Intermediate state after $s$ contraction steps have been performed, where $s\in \{0,\dots,n-1\}$. Here $I_j$ is a $n_j$-bit system for $j=1,\dots,n-s$, for some $n_j\geq 0$. The $P_{I_j}$ are distributions defined over the system $I_j$. We denote by $e$ the number contraction steps that involved equality nodes, and therefore $s-e$ is the number of contraction steps that involved check nodes. It must hold that $\sum_j n_j = s-e$.}
  \label{fig:contraction_intermediate_state}
\end{figure}

\begin{figure}
  \centering
  \begin{adjustbox}{width=\textwidth}
  \tikzsetnextfilename{5bitFFGContraction}
  \begin{tikzpicture}
    \pgfdeclarelayer{bg}
    \pgfsetlayers{bg,main}

    \node[squarenode] (eq1) {\cequal};
    \node[squarenode] (eq2) [above right=0.2cm and 0.3cm of eq1] {\cequal};
    \node[squarenode] (ch1) [above right=0.05cm and 0.3cm of eq2] {$+$};
    \node[squarenode] (ch2) [below right=0.05cm and 0.3cm of eq2] {$+$};
    \node[quadnode] (W5) [above right=-0.1cm and 0.3cm of ch1] {$W[\theta_5]$};
    \node[quadnode] (W3) [below=0.15cm of W5] {$W[\theta_3]$};
    \node[quadnode] (W4) [below=0.15cm of W3] {$W[\theta_4]$};
    \node[quadnode] (W2) [below=0.15cm of W4] {$W[\theta_2]$};
    \node[quadnode] (W1) [below=0.15cm of W2] {$W[\theta_1]$};
    \node[quadnode, minimum width=0.4cm, minimum height=1.0cm] (CNOT1) [right=1.43cm of ch1] {};
    \node[phase] (CNOT1control) [right=0.19cm of W5] {};
    \node[circlewc] (CNOT1target) [below=0.33cm of CNOT1control] {};
    \node[quadnode, minimum width=0.4cm, minimum height=1.0cm] (CNOT2) [below=0.20cm of CNOT1] {};
    \node[phase] (CNOT2control) [right=0.19cm of W4] {};
    \node[circlewc] (CNOT2target) [below=0.33cm of CNOT2control] {};
    \node[quadnode] (U1) [right=2.1cm of ch1, minimum height=1.0cm] {\scalebox{0.7}{$U_{\ostar}(\theta_5\boxstar_i\theta_3,\theta_4\boxstar_j\theta_2)$}};
    \node[quadnode] (U2) [right=0.32cm of U1, minimum height=1.0cm] {\scalebox{0.7}{$U_{\ostar}((\theta_5\boxstar_i\theta_3)\ostar(\theta_4\boxstar_j\theta_2),\theta_1)$}};
    \node[squarenode] (control1i) [right=1.5cm of W4] {$\bullet$};
    \node[squarenode] (control1j) [right=2.5cm of W2] {$\bullet$};
    \node[squarenode] (control2i) [right=4.5cm of W4] {$\bullet$};
    \node[squarenode] (control2j) [right=5.5cm of W2] {$\bullet$};

    \draw[thick] (CNOT1control.south) to (CNOT1target.north);
    \draw[thick] (CNOT2control.south) to (CNOT2target.north);
    \draw (eq1.east) to[out=0, in=180] (eq2.west);
    \draw (eq2.east) to[out=0, in=180] (ch1.west);
    \draw (eq2.east) to[out=0, in=180] (ch2.west);
    \draw (ch1.east) to[out=0, in=180] (W5.west);
    \draw (ch1.east) to[out=0, in=180] (W3.west);
    \draw (ch2.east) to[out=0, in=180] (W4.west);
    \draw (ch2.east) to[out=0, in=180] (W2.west);
    \draw (eq1.east) to[out=0, in=180] (W1.west);
    \draw (eq1.west) -- node[above] {$X_1$} ++(-0.40cm,0);
    
    % We have to put the double dashed lines in a background layer, otherwise they look ugly
    \begin{pgfonlayer}{bg}
      % note: distance between lanes is 0.6cm
      \draw [style=double] (W5.east) to ++(7.2cm,0) node[above] {$Q_1$};
      \draw [style=double] (W3.east) to ++(0.6cm,0) to ++(0.1cm,-0.6cm) to ++(6.5cm,0) node[above] {$Q_3$};
      \draw [style=double] (W4.east) to ++(0.6cm,0) to ++(0.1cm,0.6cm) to ++(2.55cm,0) to ++(0.2cm,-1.8cm) to ++(3.75cm,0) node[above] {$Q_5$};
      \draw [style=double] (W2.east) to ++(7.2cm,0) node[above] {$Q_4$};
      \draw [style=double] (W1.east) to ++(3.25cm,0) to ++(0.2cm,1.8cm) to ++(3.75cm,0) node[above] {$Q_2$};
      \draw [style=double] (control1i.north) to node[below right=-0.15cm and 0.1cm] {$I$} ++(0,0.3cm);
      \draw [style=double] (control1j.north) to node[below right=-0.4cm and 0.0cm] {$J$} ++(0,1.0cm);
      \draw [style=double] (control2i.north) to node[below right=-0.15cm and 0.1cm] {$I$} ++(0,0.3cm);
      \draw [style=double] (control2j.north) to node[below right=-0.4cm and 0.0cm] {$J$} ++(0,1.0cm);
    \end{pgfonlayer}
    
    % equivalence sign
    \node (equiv) [above right=0cm and 10.1cm of eq1] {$\cong$};

    % right-hand side
    \node[quadnode] (Wrhs) [above right=0.5cm and 0.3cm of equiv] {\scalebox{0.7}{$W[(\theta_5\boxstar_i\theta_3)\ostar(\theta_4\boxstar_j\theta_2)\ostar\theta_1]$}};
    \node[squarenode] (eqirhs) [below left=0.3cm and 0.2cm of Wrhs.south] {\cequal};
    \node[squarenode] (eqjrhs) [below right=0.8cm and 0.2cm of Wrhs.south] {\cequal};
    \node[quadnode] (Pirhs) [left=0.4cm of eqirhs] {$P_{I}$};
    \node[quadnode] (Pjrhs) [left=1.15cm of eqjrhs] {$P_{J}$};
    \node[quadnode] (zero1rhs) [below=1.3cm of Wrhs.south] {$\ket{0}\bra{0}$};
    \node[quadnode] (zero2rhs) [below=1.8cm of Wrhs.south] {$\ket{0}\bra{0}$};

    \draw (Wrhs.west) to node[above] {$X_1$} ++(-0.5cm,0);
    \draw (Pirhs) to (eqirhs);
    \draw (eqirhs) to node[right] {$I$} +(0,0.5cm);
    \draw (Pjrhs) to (eqjrhs);
    \draw (eqjrhs) to node[above right] {$J$} ++(0,1.0cm);
    \begin{pgfonlayer}{bg}
      \draw [style=double] (Wrhs) to ++(2.0cm,0) node[above] {$Q_1$};
      \draw [style=double] (eqirhs) to ++(2.4cm,0) node[above] {$Q_3$};
      \draw [style=double] (eqjrhs) to ++(1.6cm,0) node[above] {$Q_4$};
      \draw [style=double] (zero1rhs) to ++(2.0cm,0) node[above] {$Q_2$};
      \draw [style=double] (zero2rhs) to ++(2.0cm,0) node[above] {$Q_5$};
    \end{pgfonlayer}

  \end{tikzpicture}\end{adjustbox}
  \caption{Contraction of the Forney-style factor graph representing the effect of the BPQM algorithm on the 5-bit code. Both factor graphs depict the quantum state of the system after executing $V_1$ on the channel output. By using the contraction identities from~\cref{fig:contraction_identities_equality,fig:contraction_identities_check}, this state can be characterized with the simpler factor graph on the right-hand side, which allows for a simple identification of the Helstrom measurement. $P_I$ and $P_J$ are identical distributions of a single-bit system defined by $P_I(0)=P_J(0)=(1+\cos^2\theta)/2$ and $P_I(1)=P_J(1)=(1-\cos^2\theta)/2$. A step-by-step version that illustrates the intermediate steps of this contraction is provided in~\cref{app:step_by_step}.}
  \label{fig:5bitcode_contraction}
\end{figure}
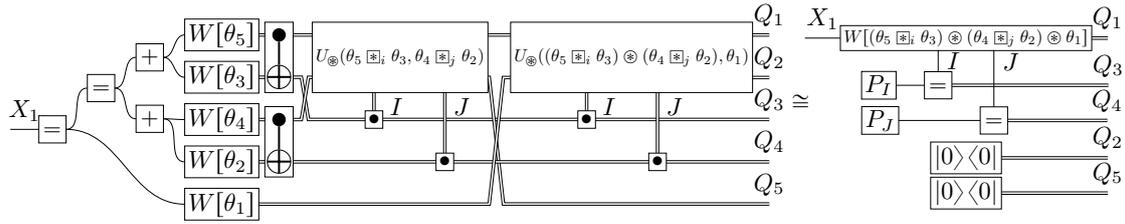

The final step in the single bit decoding procedure for the codeword bit $X_r$ is to measure the remaining data qubit in the $\ket{\pm}$ basis. 
It is easy to see that measurement in this basis is optimal for distinguishing the outputs of $W[\theta]$ for all values of $\theta$, as the two states are symmetrically arranged about the $z$-axis in the Bloch sphere no matter the angle between them. 
Therefore, a $\ket{\pm}$ measurement is also optimal for channels whose angle parameter is set by additional inputs, as in \eqref{eq:simplifiedFFG}. 
Since all operations in $V_r$ are unitary and thus reversible, BPQM decoding of $X_r$ must be optimal, i.e., it must realize the Helstrom measurement.

\section{Block optimality of BPQM}\label{sec:optimality} 
Now we turn to the blockwise performance of the BPQM decoder. 
Consider again some $(n,k)$ binary linear code $\mathcal{C}$ that exhibits a tree Tanner graph.
\begin{theorem}\label{thm:optimality}
The BPQM algorithm performs block optimal decoding.
\end{theorem}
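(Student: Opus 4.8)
The plan is to deduce block optimality from the structure of the optimal block decoder together with the bit optimality already proved in \cref{sec:contraction}. Write $\ket{\psi_{\vec x}}=\bigotimes_{i=1}^n \qstate{x_i}{\theta_i}$ for the channel output on the codeword $\vec x\in\mathcal C$ (taking $X_1,\dots,X_k$ to be an information set), and let $\Pi$ be the projector onto their span. Since $\qstate{0}{\theta},\qstate{1}{\theta}$ form a basis of $\mathbb C^2$ for every $\theta\in(0,\pi)$, their $n$-fold tensor products form a product basis, so the $2^k$ codeword states are linearly independent and $\dim\mathrm{range}(\Pi)=2^k$. The ensemble is group covariant under the abelian group $\{Z^{\vec x}:\vec x\in\mathcal C\}$, since $Z^{\vec x}\ket{\psi_{\vec 0}}=\ket{\psi_{\vec x}}$; by the cited results on optimality and orthogonality of the PGM for such ensembles, the PGM $\{\Pi_{\vec x}\}_{\vec x\in\mathcal C}$ is an optimal block decoder and consists of mutually orthogonal rank-one projectors with $\sum_{\vec x}\Pi_{\vec x}=\Pi$.

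Out of these I would build the bitwise projectors $\Lambda_{r,j}:=\sum_{\vec x\in\mathcal C:\,x_r=j}\Pi_{\vec x}$. Orthogonality of the $\Pi_{\vec x}$ makes each $\Lambda_{r,j}$ a projector with $\Lambda_{r,0}+\Lambda_{r,1}=\Pi$, makes all the $\Lambda_{r,j}$ mutually commute, and — because $X_1,\dots,X_k$ is an information set — gives $\prod_{r=1}^k\Lambda_{r,j_r}=\Pi_{\vec x(\vec j)}$ for the unique codeword $\vec x(\vec j)$ with information bits $\vec j$. Let $F_{r,0}:=V_r^\dagger(\ket{+}\!\bra{+}_D\otimes\id)V_r$ and $F_{r,1}:=V_r^\dagger(\ket{-}\!\bra{-}_D\otimes\id)V_r$ be the POVM elements of BPQM's single-bit decoder for $X_r$. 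The crux is the claim that each $F_{r,j}$ leaves $\mathrm{range}(\Pi)$ invariant and restricts there to $\Lambda_{r,j}$. Granting this, the theorem follows quickly: the Kraus operator of the full sequential decoder for an outcome string $\vec j$ simplifies, up to a leading unitary, to $F_{k,j_k}\cdots F_{1,j_1}$, so on a codeword input $\ket{\psi_{\vec x'}}\in\mathrm{range}(\Pi)$ this product acts as the product of the commuting $\Lambda$'s, i.e.\ as $\Pi_{\vec x(\vec j)}$; hence the block success probability is $2^{-k}\sum_{\vec x'}\bra{\psi_{\vec x'}}\Pi_{\vec x'}\ket{\psi_{\vec x'}}$, the value of the optimal decoder. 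It also follows, as claimed in \cref{sec:optimality}, that the performance is independent of the choice and order of the decoded bits.

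To prove the crux I would combine the bit optimality of \cref{sec:contraction} with the PGM structure. From the averaged contraction result \eqref{eq:simplifiedFFG}: (i) $V_r$ maps $\mathrm{range}(\Pi)$ onto $\mathcal H_D\otimes\mathcal H_A\otimes\ket{0}^{\otimes(n-k)}$ (both have dimension $2^k$), a subspace preserved by $\ket{\pm}\!\bra{\pm}_D\otimes\id$, so $F_{r,j}$ preserves $\mathrm{range}(\Pi)$; (ii) the difference of the two averaged bit-$r$ output states equals $\sum_i P_I(i)\sin\varphi_i\,(\sigma_x)_D\otimes\ket{i}\!\bra{i}_A\otimes\ket{0}\!\bra{0}^{\otimes(n-k)}$, which has no kernel on $\mathrm{range}(\Pi)$ since every $P_I(i)>0$ and $\varphi_i\in(0,\pi)$. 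Non-degeneracy makes the optimal two-outcome measurement on $\mathrm{range}(\Pi)$ unique; since $\{F_{r,0},F_{r,1}\}$ is optimal (\cref{sec:contraction}) and $\{\Lambda_{r,0},\Lambda_{r,1}\}$ is also optimal — it is precisely the PGM of the binary group-covariant ensemble of the two averaged bit-$r$ states, with symmetry $Z^{\vec g}$ for any codeword $\vec g$ with $g_r=1$ — the two must agree on $\mathrm{range}(\Pi)$, which is the crux. The paper's route instead invokes \cref{lem:purestate_evolution}, which describes $V_r\ket{\psi_{\vec x}}$ explicitly — a superposition over check-node branches in which $D$ ends in $\qstate{x_r}{\varphi}$ for a branch-dependent, $\vec x$-independent angle $\varphi$, the $Z$-qubits reset to $\ket{0}$, and the $A$-qubits record the branch — from which the pull-back of the $\ket{\pm}$-measurement can be computed directly. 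I expect this identification to be the main obstacle: one must show that BPQM performs not merely some bit-optimal measurement but precisely the coarse-graining $\Lambda_{r,j}$ of the global optimal decoder, controlling the non-uniqueness of optimal bit measurements so that the sequential composition reconstructs $\{\Pi_{\vec x}\}$ exactly.
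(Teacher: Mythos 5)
Your route is genuinely different from the paper's: rather than computing $V_r\ket{\Psi_{\vec x}}$ explicitly (\cref{lem:purestate_evolution}) and reading off the PGM elements in the rotated basis, you try to pin down $F_{r,j}$ on $\mathrm{range}(\Pi)$ by a uniqueness argument — the bit-$r$ Helstrom measurement is unique because $\bar\rho_0-\bar\rho_1$ has trivial kernel on the code subspace, and two measurements that are both optimal must therefore coincide there. Several pieces of this are sound and follow from the averaged contraction results of \cref{sec:contraction} alone: $V_r$ maps $\mathrm{range}(\Pi)$ onto $\mathcal H_D\otimes\mathcal H_A\otimes\ket{0^{n-k}}_Z$ (all branch probabilities are strictly positive and all branch angles stay in $(0,\pi)$, so the support really is the full $2^k$-dimensional product), the difference operator is $\sum_{\vec j}p_{\vec j}\sin\theta_{\vec j}\,(\sigma_x)_D\otimes\ketbra{\vec j}_A\otimes\ketbra{0^{n-k}}_Z$ and hence non-degenerate, and the composition of the commuting $\Lambda$'s into $\Pi_{\vec x(\vec j)}$ is exactly how the paper concludes as well.

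The genuine gap is the assertion that $\{\Lambda_{r,0},\Lambda_{r,1}\}$ is an optimal bit-$r$ measurement "because it is the PGM of the binary group-covariant ensemble" $\{\bar\rho_0,\bar\rho_1\}$. \cref{lem_optimality_pgm} and the results it cites guarantee optimality of the PGM only for ensembles of \emph{pure} geometrically uniform states; the averaged states $\bar\rho_j=2^{-(k-1)}\sum_{\vec x:x_r=j}\ketbra{\Psi_{\vec x}}$ are mixed, and for binary symmetric mixed ensembles the square-root measurement is generically \emph{not} the Helstrom measurement. (Take a single qubit with $\rho_0=\tfrac12(\id+a\sigma_z+b\sigma_x)$ and $\rho_1=\sigma_x\rho_0\sigma_x$: the SRM yields POVM elements $\tfrac12(\id\pm c\,\sigma_z)$ with $c=a/\sqrt{1-b^2}<1$ whenever the states are not pure, strictly below the Helstrom value.) The paper's own BSC example at the end of \cref{sec:optimality} is a warning that coarse-graining a block-optimal decoder need not produce a bit-optimal one, so this step cannot be waved through. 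The claim is in fact true here, but only because of the special structure in the $V_r$-rotated frame: $V_r\rho V_r^\dagger$ is block-diagonal in $A$ with each $D$-block diagonal in the computational basis while $V_r(\bar\rho_0-\bar\rho_1)V_r^\dagger$ is proportional to $\sigma_x$ on each block, so conjugation by $\rho^{-1/2}$ preserves the sign structure and $\Lambda_{r,0}-\Lambda_{r,1}=\tfrac12\rho^{-1/2}(\bar\rho_0-\bar\rho_1)\rho^{-1/2}$ has the same positive eigenspace as $\bar\rho_0-\bar\rho_1$. Establishing that is essentially the computation the paper performs in \cref{eq:basistrafo_rho,eq:basistrafo_psi,eq_pgm_element} — i.e., the very step you hoped to sidestep. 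With that computation supplied (it does only need the averaged contraction data, not all of \cref{lem:purestate_evolution}), your argument closes; without it, the central identification $F_{r,j}|_{\mathrm{range}(\Pi)}=\Lambda_{r,j}$ is unsupported.
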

The first step in the proof of this statement is the fact that optimal decoding is realized by the pretty-good measurement (PGM)~\cite{belavkin_1975,hausladen_pretty_1994}. 
Therefore it will suffice to show that BPQM realizes a measurement equivalent to the PGM.

Recall that the PGM for a set of pure states $\mathcal{S}=\{\ket{\phi_1},\dots,\ket{\phi_m}\}$ with associated probabilities $p_i,i=1,\dots,m$ is given by the POVM with elements rank-1 positive operators $E_i:=p_i\rho^{-1/2}\ket{\phi_i}\bra{\phi_i}\rho^{-1/2}$, where $\rho=\sum_{i=1}^m p_i \ketbra{\phi_i}$, and the matrix inverse is taken on its support. 
When $\sum_{i=1}^m E_i< \id$ the POVM formally requires an additional operator $E_{m+1}=\id-\sum_{i=1}^m E_i$, though note that the probability of observing $E_{m+1}$ is zero for all states in $\mathcal S$. 
Under certain conditions, the PGM is the optimal measurement for determining which state is actually present, averaged over the prior probabilities $p_i$. 
Relevant here is geometric uniformity of the states.
\begin{lemma}[Optimality of the PGM~\cite{ban_optimum_1997,sasaki_quantum_1998,eldar_2001}]\label{lem_optimality_pgm}
  If the $\{\ket{\phi_i}\}_i$ are equally probable, linearly independent and form a geometrically uniform state set,~i.e., $\{\ket{\phi_i}\}_i=\{U\ket{\phi} | U\in\mathcal{G}\}$ for some state $\ket{\phi}$ and some finite Abelian group $\mathcal{G}$, then the PGM elements are orthogonal and the PGM distinguishes the states $\ket{\phi_i}$ with the highest possible average probability of success. 
% Theorem 3
\end{lemma}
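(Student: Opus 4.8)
The plan is to establish the three assertions — geometric uniformity of the PGM, orthogonality of its elements, and minimum-error optimality — by combining the group symmetry with the standard necessary-and-sufficient conditions for an optimal measurement. First I would record the symmetry of the average state. Writing $\{\ket{\phi_i}\}_i=\{U\ket\phi:U\in\mathcal G\}$ as $m=\abs{\mathcal G}$ distinct equiprobable states (the general case reducing to this by passing to $\mathcal G/\mathrm{Stab}(\ket\phi)$), the average $\rho=\tfrac1m\sum_{U\in\mathcal G}U\ketbra{\phi}U^\dagger$ is $\mathcal G$-invariant, i.e.\ $V\rho V^\dagger=\rho$ for every $V\in\mathcal G$, since $V\mathcal G=\mathcal G$. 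Hence $\rho^{-1/2}$ (taken on $\mathrm{supp}(\rho)$ and extended by $0$) commutes with every $U\in\mathcal G$, so the PGM elements $E_i=\tfrac1m\rho^{-1/2}\ketbra{\phi_i}\rho^{-1/2}$ obey $E_i=U_iE_0U_i^\dagger$ with $E_0=\tfrac1m\rho^{-1/2}\ketbra{\phi}\rho^{-1/2}$; so the PGM is again geometrically uniform with the same group.

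Next I would establish orthogonality. Put $\ket{\mu_i}:=\tfrac{1}{\sqrt m}\rho^{-1/2}\ket{\phi_i}$, so that $E_i=\ketbra{\mu_i}$ and $\sum_iE_i=\rho^{-1/2}\bigl(\tfrac1m\sum_i\ketbra{\phi_i}\bigr)\rho^{-1/2}=\rho^{-1/2}\rho\,\rho^{-1/2}=\Pi$, the projector onto $\mathrm{supp}(\rho)$. Linear independence of $\{\ket{\phi_i}\}$ forces $\mathrm{rank}(\rho)=m$, so $\Pi$ has rank $m$, and the $m$ vectors $\ket{\mu_i}$ span this $m$-dimensional space with $\sum_i\ketbra{\mu_i}=\Pi$. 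Collecting them as the columns of an $m\times m$ matrix $M$ in an orthonormal basis of $\mathrm{supp}(\rho)$ gives $MM^\dagger=\id$, and a square matrix with $MM^\dagger=\id$ is unitary, so $M^\dagger M=\id$, i.e.\ $\braket{\mu_i|\mu_j}=\delta_{ij}$: the $E_i$ are mutually orthogonal rank-one projectors (and $E_{m+1}:=\id-\Pi$, needed only if the $\ket{\phi_i}$ do not span the full space, is orthogonal to all of them and never observed on $\mathcal S$).

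For optimality I would invoke the well-known necessary and sufficient (Holevo--Yuen--Kennedy--Lax) conditions for a minimum-error measurement~\cite{helstrom_1976}: a POVM $\{\Pi_i\}$ maximises the average success probability for $\{(\tfrac1m,\ketbra{\phi_i})\}_i$ iff $\Upsilon:=\tfrac1m\sum_i\ketbra{\phi_i}\Pi_i$ is Hermitian and $\Upsilon\geq\tfrac1m\ketbra{\phi_j}$ for every $j$. Substituting $\Pi_i=E_i$ and using $[\rho^{-1/2},U_i]=0$ together with $\braket{\phi_i|\rho^{-1/2}|\phi_i}=\braket{\phi|\rho^{-1/2}|\phi}=:c>0$, the sum telescopes:
\begin{equation}
\Upsilon=\frac{c}{m^2}\sum_i\ketbra{\phi_i}\rho^{-1/2}=\frac{c}{m}\,\rho^{1/2}\,,
\end{equation}
which is Hermitian. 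By the symmetries $\rho^{1/2}=U_j\rho^{1/2}U_j^\dagger$ and $\ketbra{\phi_j}=U_j\ketbra{\phi}U_j^\dagger$, the inequality $\tfrac cm\rho^{1/2}\geq\tfrac1m\ketbra{\phi_j}$ is equivalent to $c\,\rho^{1/2}\geq\ketbra{\phi}$. Since $\ket\phi=\rho^{1/2}(\sqrt m\,\ket{\mu_0})$ and $\braket{\mu_0|\rho^{1/2}|\mu_0}=c/m$, Cauchy--Schwarz in the inner product induced by $\rho^{1/2}$ gives, for any $\ket\psi$ (it suffices to take $\ket\psi\in\mathrm{supp}(\rho)$, since on $\ker\rho$ both sides vanish),
\begin{equation}
\abs{\braket{\psi|\phi}}^2=m\,\abs{\braket{\rho^{1/4}\psi|\rho^{1/4}\mu_0}}^2\leq m\braket{\psi|\rho^{1/2}|\psi}\braket{\mu_0|\rho^{1/2}|\mu_0}=c\braket{\psi|\rho^{1/2}|\psi}\,,
\end{equation}
which is precisely the required operator inequality. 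Hence the PGM is optimal.

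The hardest part will be this optimality half: recalling and correctly applying the HYKL conditions, simplifying $\Upsilon$ with due care about the support restriction of $\rho^{-1/2}$, and recognising that the residual operator inequality is exactly a Cauchy--Schwarz estimate; the group-theoretic bookkeeping is routine but must be stated cleanly. Once the lemma is available, the remaining work toward block optimality is to verify that the codeword output states $\bigotimes_j\qstate{x_j}{\theta_j}$, $\vec x\in\mathcal C$, are equiprobable, linearly independent, and geometrically uniform under the abelian group $\{\bigotimes_jZ^{x_j}:\vec x\in\mathcal C\}$ (linear independence following from the tree structure of the Tanner graph, which makes the relevant Gram determinant a product of nonvanishing factors $1\pm\cos\theta_j$), to form the mutually commuting bitwise projectors $\Lambda_{r,j}=\sum_{\vec x:\,x_r=j}\Pi_{\vec x}$, and to show, via the pure-state evolution analysis, that the $r$-th single-bit BPQM measurement realises $\{\Lambda_{r,0},\Lambda_{r,1}\}$.
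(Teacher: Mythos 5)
Your proof is correct, but note that the paper itself does not prove this lemma at all: it is stated as an imported result with citations to Ban \emph{et al.}, Sasaki \emph{et al.}, and Eldar--Forney, and the text only uses its conclusion (orthogonality and optimality of the PGM for equiprobable, linearly independent, geometrically uniform pure states). What you have written is a self-contained reconstruction of the standard argument behind those references: the $\mathcal{G}$-invariance of $\rho$ makes $\rho^{-1/2}$ commute with the group, which gives covariance of the PGM; the identity $\sum_i E_i=\Pi_{\mathrm{supp}(\rho)}$ together with $\mathrm{rank}(\rho)=m$ forces the $m$ rank-one elements to be an orthonormal projective decomposition (your square-matrix $MM^\dagger=\id\Rightarrow M^\dagger M=\id$ step is clean); and optimality follows from the Yuen--Kennedy--Lax/Holevo conditions, where covariance collapses $\Upsilon$ to $\tfrac{c}{m}\rho^{1/2}$ and the residual operator inequality $c\,\rho^{1/2}\geq\ketbra{\phi}$ is exactly a Cauchy--Schwarz bound in the $\rho^{1/2}$-weighted inner product. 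I checked each of these steps and they go through, including the support restrictions. The trade-off is the obvious one: the paper buys brevity by citation, while your version makes the mechanism visible — in particular it makes explicit that orthogonality needs only linear independence plus the counting $\mathrm{rank}(\rho)=m$, and that optimality needs only covariance, not Abelianness of $\mathcal{G}$ per se (the Abelian structure is what the paper later exploits to get the product form of the projectors, not what drives this lemma).

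One small remark on your closing paragraph: linear independence of the codeword states $\bigotimes_j\ket{Q(x_j,\theta_j)}$ does not rely on the tree structure of the Tanner graph. Distinct tensor products drawn coordinatewise from the two linearly independent vectors $\ket{Q(0,\theta_j)},\ket{Q(1,\theta_j)}$ are automatically linearly independent (this is how the paper argues it); the tree structure is needed only for BPQM to realize the bitwise projectors, not for the hypotheses of this lemma to hold.
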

To apply the statement to our setting, denote by $\ket{\Psi_{\vec{x}}}$ the joint pure state output of the quantum channels when the codeword $\vec{x}\in\mathcal{C}$ is transmitted, i.e.,
\begin{equation}
  \ket{\Psi_{\vec{x}}} := \qstate{x_1}{\theta_1}\otimes\dots\otimes\qstate{x_n}{\theta_n} \, .
\end{equation}
Then $\mathcal S$ is the set of $\ket{\Psi_{\vec{x}}}$, of cardinality $m=2^k$, and the prior probability of each state is simply $1/m$.  
\begin{lemma}\label{lem_gus}
  $\mathcal{S}=\{\ket{\Psi_{\vec{x}}}\}_{\vec{x}}$ is a geometrically uniform state set.
\end{lemma}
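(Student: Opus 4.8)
The plan is to exhibit the group and the base state explicitly. The key observation is that the single-qubit channel output obeys $\qstate{1}{\theta} = Z\qstate{0}{\theta}$, where $Z=\mathrm{diag}(1,-1)$ is the Pauli-$Z$ operator, since flipping the bit value merely changes the sign of the $\ket1$ amplitude. Consequently, for any codeword $\vec x\in\mathcal C$,
\begin{equation}
  \ket{\Psi_{\vec x}} = \left(Z^{x_1}\otimes\dots\otimes Z^{x_n}\right)\ket{\Psi_{\vec 0}}\,,
\end{equation}
where $\vec 0$ denotes the all-zeros string, which is a codeword because $\mathcal C$ is linear. I would therefore take $\ket\phi := \ket{\Psi_{\vec 0}}$ and $\mathcal G := \{\, Z^{x_1}\otimes\dots\otimes Z^{x_n} : \vec x\in\mathcal C \,\}$.

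It then remains to check that $\mathcal G$ is a finite Abelian group and that its orbit on $\ket\phi$ is precisely $\mathcal S$. Writing $Z_{\vec x}:=Z^{x_1}\otimes\dots\otimes Z^{x_n}$, the map $\vec x\mapsto Z_{\vec x}$ is a group homomorphism from $(\mathbb F_2^n,+)$ into the unitaries, because $Z^2=\id$ gives $Z_{\vec x}Z_{\vec x'}=Z_{\vec x+\vec x'}$. Restricting to the subgroup $\mathcal C\le\mathbb F_2^n$ shows $\mathcal G$ is closed under multiplication, contains $\id=Z_{\vec 0}$, and contains inverses ($Z_{\vec x}^{-1}=Z_{\vec x}=Z_{-\vec x}$). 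It is Abelian since the $Z_{\vec x}$ are all diagonal, and finite since $|\mathcal C|=2^k$. Finally $\{U\ket\phi : U\in\mathcal G\} = \{Z_{\vec x}\ket{\Psi_{\vec 0}} : \vec x\in\mathcal C\} = \{\ket{\Psi_{\vec x}} : \vec x\in\mathcal C\} = \mathcal S$, which is exactly geometric uniformity.

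There is essentially no hard step here; the only mild care needed is bookkeeping — noting that $\vec 0\in\mathcal C$ so the base state is itself a member of $\mathcal S$, and that $Z^2=\id$ is what makes $\vec x\mapsto Z_{\vec x}$ a genuine homomorphism on the additive group $\mathbb F_2^n$ rather than merely a parametrized family. One can additionally remark that, since $\theta_j\in(0,\pi)$ forces $\qstate{0}{\theta_j}$ to have nonzero overlap with both $\ket0$ and $\ket1$, the homomorphism $\vec x\mapsto Z_{\vec x}$ is in fact injective on $\mathcal C$ and the states $\ket{\Psi_{\vec x}}$ are pairwise distinct; hence $|\mathcal G|=|\mathcal S|=2^k$, an observation that later feeds into verifying the linear independence hypothesis required to invoke \cref{lem_optimality_pgm}.
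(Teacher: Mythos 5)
Your proof is correct and follows essentially the same route as the paper: both write $\ket{\Psi_{\vec x}}=\sigma_Z^{x_1}\otimes\dots\otimes\sigma_Z^{x_n}\ket{\Psi_{\vec 0}}$ and observe that these tensor-product Pauli-$Z$ operators indexed by codewords form a finite Abelian group because $\mathcal{C}$ is an additive group. Your extra remarks on injectivity and distinctness are fine but not needed for the lemma itself.
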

\begin{proof}
  This follows rather directly from the realization that we can write $\ket{\Psi_{\vec{x}}}=\sigma_Z^{\vec{x}}\ket{\Psi_{\vec{0}}}$ for any codeword $\vec{x}\in\mathcal{C}$ where $\vec{0}\in\mathcal{C}$ is the all-zero codeword and $\sigma_Z^{\vec{x}}:=\sigma_Z^{x_1}\otimes \sigma_Z^{x_2}\otimes \dots \otimes \sigma_Z^{x_n}$. Since the codewords $\vec{x}$ form an Abelian group under addition, this implies that the unitaries $\{\sigma_Z^{\vec{x}} | \vec{x}\in\mathcal{C}\}$ form an Abelian group under multiplication.
\end{proof}
Since $\qstate{0}{\theta}$ and $\qstate{1}{\theta}$ are linearly independent, the set $\{\ket{\Psi_{\vec{x}}}\}_{\vec{x}}$ consisting of tensor products of these two vectors must also be linearly independent.
The PGM POVM elements therefore consist of the projectors onto the states $\ket{f_{\vec{x}}} := \frac{1}{\sqrt{2^k}}\rho^{-1/2}\ket{\Psi_{\vec{x}}}$ for $\rho:=\sum_{\vec x}\frac1{2^k} \ketbra{\Psi_{\vec x}}$. 

Before we show that BPQM realizes a measurement equivalent to the PGM, we need to get a better understanding of how the single codeword bit decoding operation acts on the channel outputs.
\begin{lemma}\label{lem:purestate_evolution}
Consider a codeword $\vec{x}\in\mathcal{C}$ and the unitary $V_r$ that describes the action of the BPQM node operations corresponding to some MPG $G$ of $\mathcal{C}$ with respect to $X_r$ for $r\in\{1,\dots,k\}$. 
Then 
\begin{equation}
\label{eq:compressed_state}
V_r\ket{\Psi_{\vec{x}}}=\sum\limits_{\vec{j}\in\mathbb Z_2^{k-1}} c_{\vec{j}}(x_1,\dots,x_k) \sqrt{p_{\vec{j}}}\qstate{x_r}{\theta_{\vec{j}}}_D\otimes \ket{\vec{j}}_A\otimes \ket{0^{n-k}}_Z\,,
\end{equation}
where the probabilities and angles $\{(p_{\vec{j}}, \theta_{\vec{j}}) : \vec{j}\in\{0,1\}^{k-1}\}$ correspond exactly to the angle and probability entries of the branch list of the root node of $G$.
The $c_{\vec{j}}$ are certain functions $\{0,1\}^k\rightarrow \{+1,-1\}$ which fulfill the ``orthogonality'' property
\begin{equation}
\label{eq:orthogonality}
    \sum\limits_{\tilde{x}_r} c_{\vec{j}_1}(x_1,\dots,x_k)c_{\vec{j}_2}(x_1,\dots,x_k) = 2^{k-1}\delta_{\vec{j}_1,\vec{j}_2} \quad \forall x_r\in\{0,1\},\forall \vec{j}_1,\vec{j}_2\in\{0,1\}^{k-1}
  \end{equation}
where $\tilde{x}_r$ denotes all free variables except $x_r$, i.e., $x_1,\dots,x_{r-1},x_{r+1},\dots,x_k$\end{lemma}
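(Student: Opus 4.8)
The plan is to prove the statement by structural induction over the MPG $G$, describing the state of the message qubit on every edge together with all ancilla and zero qubits produced so far. Fix a codeword $\vec x\in\mathcal C$. For a directed edge $e$ of $G$ with source node $N$, let $x^{(e)}\in\{0,1\}$ be the value that $\vec x$ assigns to the variable carried by $e$; this is well defined because on the subtree below $e$ the leaf bits are the relevant components of $\vec x$ and the check/equality constraints then fix every internal edge bit. I claim that, after BPQM has processed the whole subtree below $e$, the data qubit $D_e$ on $e$, the ancillas $A_e$ produced at the check nodes below $e$, and the zero qubits $Z_e$ produced at the equality nodes below $e$ are jointly in the state
\[
\sum_{\vec j} c^{(e)}_{\vec j}(\vec x)\sqrt{p^{(e)}_{\vec j}}\,\qstate{x^{(e)}}{\varphi^{(e)}_{\vec j}}_{D_e}\otimes\ket{\vec j}_{A_e}\otimes\ket{0\cdots 0}_{Z_e},
\]
where the triples $(\vec j,\varphi^{(e)}_{\vec j},p^{(e)}_{\vec j})$ are exactly the branch list of $N$, with $\vec j$ ordered to match the check node list of $N$ so that the component $j_t$ of $\vec j$ indexed by a check node $t$ is the ancilla qubit produced at $t$, and the signs are the explicit product
\[
c^{(e)}_{\vec j}(\vec x)=(-1)^{\sum_t j_t\,v_t(\vec x)},
\]
the sum over the check nodes $t$ below $e$, with $v_t(\vec x)$ the value $\vec x$ assigns to the edge feeding the designated second predecessor of $t$. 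Specializing $e$ to the outgoing edge of the root gives \cref{eq:compressed_state}.

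For the base case, a leaf $\theta_j$ emits $\qstate{x_j}{\theta_j}$, matching the claim with empty check list, $\vec j=\varnothing$, $p=1$, $c\equiv 1$, and leaf branch list $(\varnothing,\theta_j,1)$. At an \emph{equality node} receiving states of the above form with branch lists $(s_i,\varphi_i,p_i)_i$ and $(s'_\ell,\varphi'_\ell,p'_\ell)_\ell$: the equality constraint forces both incoming edge bits to equal $x^{(e)}$, so on the branch where the control ancillas read $\ket{s_i}\ket{s'_\ell}$ the data pair is $\qstate{x^{(e)}}{\varphi_i}\otimes\qstate{x^{(e)}}{\varphi'_\ell}$, which by \cref{eq:Ustar_identity} is sent by $U_\ostar(\varphi_i,\varphi'_\ell)$ to $\qstate{x^{(e)}}{\varphi_i\ostar\varphi'_\ell}\otimes\ket 0$; re-indexing by $\vec j=s_i\concat s'_\ell$ reproduces the equality-node compilation rule ($\ostar$-combined angles, multiplied probabilities, no new sign, one extra zero qubit). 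At a \emph{check node} $t$ with incoming states of the same form, an elementary computation of the $\cnot$ from the first (control) onto the second (target) data qubit gives
\[
\cnot\,\qstate{u}{\varphi_i}\otimes\qstate{v}{\varphi'_\ell}=\sum_{l\in\{0,1\}}(-1)^{lv}\sqrt{p_{\boxstar}(\varphi_i,\varphi'_\ell,l)}\,\qstate{u+v}{\varphi_i\boxstar_l\varphi'_\ell}\otimes\ket l,
\]
with $u,v$ the incoming edge bits and $u+v=x^{(e)}$ by the check constraint; re-indexing by $\vec j=l\concat s_i\concat s'_\ell$ reproduces the check-node compilation rule and multiplies the running sign by $(-1)^{lv}=(-1)^{j_t v_t}$, where $v_t=v$ is the second predecessor's edge bit. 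This is the only place a sign appears, so the product formula for $c^{(e)}_{\vec j}$ follows by induction. Iterating up to the root (an equality node, hence covered) yields \cref{eq:compressed_state}; since the root sits on the $X_r$ edge, its outgoing edge carries $X_r$, so $x^{(e)}=x_r$ there, and the $(p_{\vec j},\theta_{\vec j})$ are the root's branch list entries.

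It remains to prove \cref{eq:orthogonality}. From the product formula, $c_{\vec j_1}(\vec x)c_{\vec j_2}(\vec x)=(-1)^{\sum_t(j_{1,t}+j_{2,t})v_t(\vec x)}$. The key structural fact is that the $\mathbb{F}_2$-linear map $L\colon\mathbb{F}_2^k\to\mathbb{F}_2^k$, $\vec x\mapsto(x_r,(v_t(\vec x))_t)$, is a bijection. Indeed, one can reconstruct every edge bit of the MPG from the data $(x_r,(v_t)_t)$ by working from the root downwards: the root and every equality node force all their incident edge bits equal, and at a check node $t$ whose outgoing edge bit $b$ is already known the incoming bits are $v_t$ (second predecessor) and $b+v_t$ (first predecessor); hence all edge bits, in particular all leaf bits and thus $\vec x$, are determined, so $L$ is injective and therefore bijective (source and target both having $2^k$ elements). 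Consequently, for fixed $x_r$, as $\tilde x_r$ ranges over $\mathbb{F}_2^{k-1}$ the tuple $(v_t(\vec x))_t$ ranges bijectively over $\mathbb{F}_2^{k-1}$, so
\[
\sum_{\tilde x_r}c_{\vec j_1}(\vec x)c_{\vec j_2}(\vec x)=\sum_{\vec v\in\mathbb{F}_2^{k-1}}(-1)^{(\vec j_1+\vec j_2)\cdot\vec v}=2^{k-1}\delta_{\vec j_1,\vec j_2},
\]
by the standard $\mathbb{F}_2$ character sum together with the identification of $\vec j$ with the tuple $(j_t)_t$ of check-node ancilla values. This is \cref{eq:orthogonality}.

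The main obstacle is bookkeeping rather than a single hard estimate: one must pin down once and for all the ordering of the ancilla register so that it matches the check node list, so that a single index $\vec j$ labels both the branch list and the ancilla basis state throughout the recursion; and one must track that a sign is produced only at check nodes and only through the second predecessor's edge bit $v_t$, which hinges on the elementary $\cnot$ identity above (with its $(-1)^{lv}$ factor) and on respecting the dot convention fixing which predecessor is ``first''. Once that is in place, the orthogonality claim rests entirely on the clean observation that $L$ is invertible, proved by the downward edge-bit reconstruction.
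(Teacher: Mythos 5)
Your proposal is correct and follows essentially the same route as the paper: the same edge-by-edge induction (using the $U_{\ostar}$ identity at equality nodes and the $\cnot$ identity with its $(-1)^{lv}$ phase at check nodes) establishes \cref{eq:compressed_state}, and your orthogonality argument rests on the same structural fact as the paper's, namely that $x_r$ together with the second-predecessor edge bits of the check nodes determine all edge bits by a root-to-leaves reconstruction. The only (cosmetic) difference is packaging: you write the sign explicitly as $c_{\vec j}(\vec x)=(-1)^{\sum_t j_t v_t(\vec x)}$ and finish with one character sum over the bijective image of the slice $x_r=\mathrm{const}$, whereas the paper reaches the same conclusion by inductively identifying the subspace of phase vectors $H_e$ and noting it excludes the root's edge vector.
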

Here $D$ denotes the qubit to be measured after $V_r$, $A$ denotes the $k-1$ ancilla qubits produced by check node operations, and $Z$ denotes the $n-k$ zero qubits arising from equality node operations. 
The lemma uses the notation $\ket{\vec{l}}:=\ket{l_1}\otimes\dots\otimes\ket{l_m}$ for $\vec{l}\in\{0,1\}^m$ where each ket in the tensor product is a computational basis state of a qubit.
The symbol $0^m\in\{0,1\}^m$ denotes the all-zero vector.
Due to its length, we delegate the proof to~\cref{app:proof_optimality}.

The lemma can be thought of as a generalization of the contraction process described in~\cref{sec:contraction}. 
Both describe the state of the system after the BPQM node operations have been applied on the channel output, but from different perspectives. 
In~\cref{lem:purestate_evolution} we assume that we know the full codeword $\vec{x}$ that was transmitted, whereas in the contraction argument we assume all codeword bits except $X_r$ to be uniformly random. 
This difference makes it impossible to make the same contraction-style argument for~\cref{lem:purestate_evolution}, and instead a more careful approach has to be used in the proof. 
A central ingredient is the form of the output of a check node operation, the pure state version of \eqref{eq:checknodeactionmixed}: 
\begin{equation}
\label{eq:purestate_checknode}
  \cnot{}\cdot \qstate{y_1}{\alpha}\otimes\qstate{y_2}{\beta} = 
  %\sqrt{\frac{1+\cos(\alpha)\cos(\beta)}{2}}\qstate{y_1\oplus y_2}{\alpha\boxstar_0\beta}\ket{0} + (-1)^{y_2}\sqrt{\frac{1-\cos(\alpha)\cos(\beta)}{2}}\qstate{y_1\oplus y_2}{\alpha\boxstar_1\beta}\ket{1}\\
  \sum_{j=0}^1 (-1)^{jy_2}\sqrt{\frac{1+(-1)^j\cos(\alpha)\cos(\beta)}{2}}\qstate{y_1\oplus y_2}{\alpha\boxstar_j\beta}\otimes \ket{j}\,,
  \end{equation}
where $y_1,y_2\in\{0,1\}$, $\alpha,\beta\in (0,\pi)$ and $\oplus$ denotes addition modulo 2.
The proof proceeds by iteratively applying this identity (which is derived in \cref{app:proof_optimality}) and \eqref{eq:Ustar_identity} to $\ket{\Psi_{\vec{x}}}$. 

Observe that \Cref{lem:purestate_evolution} allows us to retrieve the factor graph contraction result by marginalizing over all inputs except $X_r$. 
Consider the state
\begin{equation}
 \tau_r:=V_r\left(\frac{1}{2^{k-1}}\sum\limits_{\tilde{x}_r} \ket{\Psi_{\vec{x}(x_1,\dots,x_k)}}\bra{\Psi_{\vec{x}(x_1,\dots,x_k)}} \right) V_r^{\dagger}\,,
\end{equation}
where $\vec{x}(x_1,\dots,x_k)$ is the (unique) extension of the tuple $(x_1,\dots,x_k)$ to a codeword in $\mathcal{C}$. 
Using~\cref{lem:purestate_evolution} we obtain 
\begin{equation}\resizebox{1.0 \textwidth}{!}{$
  \tau_r = \frac{1}{2^{k-1}} \sum\limits_{\vec{j}_1,\vec{j}_2\in\mathbb Z_2^{k-1}} \sum\limits_{\tilde{x}_r} c_{\vec{j}_1}(x_1,\dots,x_k) c_{\vec{j}_2}(x_1,\dots,x_k)  \sqrt{p_{\vec{j}_1}p_{\vec{j}_2}} \qstate{x_r}{\theta_{\vec{j}_1}}\qstateconj{x_r}{\theta_{\vec{j}_2}}_D 
  \otimes \ket{\vec{j}_1}\bra{\vec{j}_2}_A\otimes \ket{0^{n-k}}\bra{0^{n-k}}_Z\,.
$}\end{equation}
The off-diagonal $\vec{j}_1\neq\vec{j}_2$ terms cancel due to the orthogonality property of the $c_{\vec{j}}$ terms, leaving 
\begin{equation}
  \tau_r= \sum\limits_{\vec{j}\in\mathbb Z_2^{k-1}} p_{\vec{j}} \qstate{x_r}{\theta_{\vec{j}}}\qstateconj{x_r}{\theta_{\vec{j}}}_D\otimes \ket{\vec{j}}\bra{\vec{j}}_A\otimes \ket{0^{n-k}}\bra{0^{n-k}}_Z\,,
\end{equation}
which corresponds exactly to the fully-contracted factor graph encountered in~\cref{sec:contraction}. 

We now have all ingredients necessary to prove~\cref{thm:optimality}. 
The main idea is to consider the $\ket{f_{\vec{x}}}$ under the basis transformation defined by the unitary $V_r$. 
The above argument implies that for any $r\in\{1,\dots,k\}$ we can write
\begin{equation}\label{eq:basistrafo_rho}
  \rho = V_r^{\dagger} \cdot \sum\limits_{\vec{j}\in\mathbb Z_2^{k-1}} p_{\vec{j}} \left( \sum\limits_{z=0,1}\frac{1}{2}\qstate{z}{\theta_{\vec{j}}}\qstateconj{z}{\theta_{\vec{j}}}_D\right) \otimes \ket{\vec{j}}\bra{\vec{j}}_A \otimes \ket{0^{n-k}}\bra{0^{n-k}}_Z \cdot V_r \, .
\end{equation}
Furthermore, by~\cref{lem:purestate_evolution} we have
\begin{equation}\label{eq:basistrafo_psi}
    \ket{\Psi_{\vec{x}}} = V_r^{\dagger} \sum\limits_{\vec{j}'\in\mathbb Z_2^{k-1}} c_{\vec{j}'}(x_1,\dots,x_k) \sqrt{p_{\vec{j}'}} \qstate{x_r}{\theta_{\vec{j'}}}_D\otimes\ket{\vec{j}'}_A\otimes\ket{0^{n-k}}_Z \, .
\end{equation}
Then, by making use of 
\begin{equation}
  \sum\limits_{z}{\tfrac{1}{2}\qstate{z}{\theta_{\vec{j}}}\qstateconj{z}{\theta_{\vec{j}}}} = \cos^2\tfrac{\theta_{\vec{j}}}{2}\ket{0}\bra{0} + \sin^2\tfrac{\theta_{\vec{j}}}{2}\ket{1}\bra{1}\,,
\end{equation}
and inserting~\cref{eq:basistrafo_rho,eq:basistrafo_psi} into $\ket{f_{\vec{x}}}=\frac{1}{\sqrt{2^k}}\rho^{-1/2}\ket{\Psi_{\vec{x}}}$, we thus obtain 
\begin{equation}\label{eq_pgm_element}
  \ket{f_{\vec{x}}} 
  = V_r^{\dagger} \cdot \frac{1}{\sqrt{2}}(\ket{0}_D+(-1)^{x_r}\ket{1}_D)\otimes \Big( \frac{1}{\sqrt{2^{k-1}}}\sum\limits_{\vec{j}\in \mathbb Z_2^{k-1}} c_{\vec{j}}(x_1,\dots,x_k)\ket{\vec{j}}_A \Big)\otimes \ket{0^{n-k}}_Z \, .
\end{equation}

Now consider the space $\mathcal{H}$ spanned by the vectors $\{\ket{\Psi_{\vec{x}}}\}_{\vec{x}}$, which by the form of the PGM is exactly the space spanned by the orthonormal basis $\{\ket{f_{\vec{x}}}\}_{\vec{x}}$. 
By~\cref{eq_pgm_element} it holds that on $\mathcal{H}$ the sequence of $V_r$, the projection on the qubit $D$ described by $H\ket{m_r}\bra{m_r}H$ for $m_r\in\{0,1\}$, and $V_r^{\dagger}$ acts identically to the projector
\begin{equation}\label{eq:def_pgm_projector}
  \Pi_{x_r=m_r}:=\sum\limits_{\vec{x}\in\mathcal{C} : x_r=m_r}\ket{f_{\vec{x}}}\bra{f_{\vec{x}}} \, .
\end{equation}
Since
\begin{equation}
  \ket{f_{\vec{m}}}\bra{f_{\vec{m}}} = \prod\limits_{i=1}^k{\Pi_{x_i=m_i}}
\end{equation}
due to the orthonormality of the PGM, this implies that BPQM realizes a measurement equivalent to the PGM when acting any state $\ket{\Psi_{\vec{x}}}$ for  $\vec{x}\in\mathcal{C}$.

\bigskip
Let us also comment that while optimal decoding can be performed bitwise in the classical case, it  is not necessarily the same as optimal bitwise decoding. 
Given an optimal decoder, which is a deterministic function $\vec{y}\mapsto \vec{x}'(\vec{y})$ of the observed output, the blockwise-optimal bitwise output for the $r$th codeword bit is simply the $r$th bit of $\vec{x}'(\vec{y})$. 
However, bitwise optimal decoding does not always yield the same result. 
Take the BSC with crossover probability $p$ and the five-bit code from \cref{fig:5bitTanner} as an example.\footnote{This example is due to H.~D.~Pfister (personal communication).}
%Here the issue is somewhat clouded by the fact that the optimal decoder is not unique. 
For the output $\vec{y}=00011$, the optimal decoder uniquely returns the codeword $10011$ for all $p\in [0,\tfrac12]$, so the estimate $X_1'$ for the first bit is $X'_1=1$. 
The optimal bitwise decoder, however, is only concerned with the joint distribution of $X_1$ and $Y_1,\dots Y_5$, i.e., the probability of all codewords with $X_1=1$ versus those with $X_1=0$. 
In this case the ratio of these probabilities is $(1 - 2 p + 2 p^2)^2/4 (1 - p)^3 p$, which implies that for $p\gtrapprox 0.228$ the estimate will be $X'_1=0$. 
The reason that bitwise-optimal decoding and blockwise-optimal decoding coincide in the considered quantum setting can be traced back to the particular structure of the blockwise-optimal decoding (i.e.~it is a rank-1 projective measurement) implied by~\cref{lem_optimality_pgm,lem_gus}.

\section{Message-passing BPQM}\label{sec:message_passing}
The goal of this section is to introduce a message-passing algorithm called \emph{message-passing BPQM} which can approximate BPQM to arbitrary precision and which does not require a compilation step and whose quantum circuit implementation only scales polynomially with the blocklength.  
In BPQM the check nodes store ancilla qubits which indicate the angle arguments needed to perform subsequent equality node operations. 
The idea of the message-passing algorithm is to circumvent this issue by passing the angle information itself along with the data qubits.
Appendix~\ref{app:relation_to_bp} discusses how this same issue plays out in the case of classical BP.

\subsection{Decoding a single codeword bit}
\label{sec:message_passing_singlebit}
\subsubsection{Hybrid classical-quantum algorithm}
Consider first the simplified case of decoding only a single codeword bit $X_r$ for $r\in\{1,\dots,n\}$. 
It then suffices to measure the ancilla qubits as they are created at check nodes, as it will not be necessary to rewind the entire decoding operation. 
The measurement result can then be used to update the angle information. 

In more detail, the hybrid classical-quantum message-passing algorithm operates on the MPG for $X_r$ as follows.  
Messages are passed from the leaves inward into the tree until they reach the root. 
Each message consists of one qubit and one real number in the range $(0,\pi)$. 
The initial messages transmitted by the $i$th leaf nodes consist of the corresponding channel output qubit $Y_i$ as well as the corresponding channel parameter $\theta_i$. 
As in BPQM, the final qubit output by the root is measured in the $\ket{\pm}$ basis, yielding the decoder's estimate $\hat X_r$ of the value of $X_r$. 

The algorithm is fully specified by explaining how equality and check nodes process their two received messages into an output message to be sent up the tree. 
Denote the first and second data qubits of the two input messages of a node operation by $Q_1$ and $Q_2$, respectively, and denote the angle part of both input messages as $\varphi_1,\varphi_2$. 
The equality node operation consists of performing the unitary $U_{\ostar}(\varphi_1,\varphi_2)$ on the system $Q_1Q_2$. 
The output message then consists of the qubit $Q_1$ as well as the angle $\varphi_1\ostar\varphi_2$.
The check node operation consists of performing a $\cnot{}$ gate from $Q_1$ to $Q_2$ and then measuring the qubit $Q_2$ in the computational basis. 
Denote the measurement outcome by $l\in\{0,1\}$. 
The output message of the check node contains the qubit $Q_1$ as well as the angle $\varphi_1\boxstar_l\varphi_2$.
Note that in contrary to BPQM, here the node operations do not only consist of quantum gates, but also contain some classical computation.

\subsubsection{Optimality of bitwise decoding}
Note that in this hybrid quantum-classical algorithm, every message passed over some edge $e$ will be of the form $(\qstate{z}{\varphi},\varphi)$ form some $z\in\{0,1\}$ and $\varphi\in (0,\pi)$. 
Put differently, we know the angle of all data qubits passed around the MPG, but not the value of $z$. 
This can be verified with a simple inductive argument. 
It clearly holds for the initial messages sent by the leaves. 
Suppose $(\qstate{z_1}{\varphi_1},\varphi_1)$ and $(\qstate{z_2}{\varphi_2},\varphi_2)$ are the messages input to either a check or an equality node. 
At check nodes the required form is a consequence of \cref{eq:purestate_checknode}. 
At equality nodes it follows from the MPG that the state is such that $z_1=z_2$,\footnote{This follows directly from~\cref{lem:stronger44} in the proof of~\cref{lem:purestate_evolution}.} so the desired consequence follows directly from~\cref{eq:Ustar_identity}.
Thus, the hybrid classical-quantum message-passing algorithm implements the Helstrom measurement for determining the codeword bit.

\subsection{Decoding the complete codeword}
To construct a useful codeword decoder by sequentially decoding the individual codeword bits, it suffices to implement the above algorithm coherently. 
In particular, the angle information can be stored in a quantum register and the measurement of the final qubits (which produce the estimates for the decoded bits) can be replaced by a \cnot{} gate with a target ancilla qubit initialized in the state $\ket{0}$. 
Then the now-quantum angle register can be updated at check and equality nodes. 

Nominally this approach requires an infinite-dimensional quantum angle register to allow for a deterministic differentiation between any possible $\theta\in (0,\pi)$.
A natural solution to this problem is to use a discretized representation of the angle using $B$ qubits.
In order to simplify the analysis at a later point, we choose to represent an angle $\theta$ by its cosine $c:=\cos\theta\in (-1,1)$. 
Since $\theta$ lies in $(0,\pi)$ both representations contain the same information.  
More precisely, we restrict ourselves to only represent angles with their cosine being $c\in \mathcal{A}_B:=\{ -1 + 2\frac{1+k}{2^B+1} | k\in\{0,1,\dots,2^B-1\} \}$ with spacing 
\begin{equation}
\label{eq:deltadef}
\delta:=\frac{2}{2^B+1}
\end{equation}
between the individual representable values. 
Mathematically, this means that we choose the angle part of the message to be a quantum system living in a $2^B$-dimensional Hilbert space with some orthonormal basis denoted by $\{\anglestate{c} | c\in\mathcal{A}_B\}$. 

The goal of this section is to formalize the algorithm precisely and to show that the resulting decoder approaches the ideal performance of BPQM as $B$ goes to infinity. 
First we formally introduce the fully coherent message-passing BPQM algorithm that allows for decoding of the complete codeword. 
Then we will analyze and bound the effect of the discretization errors and determine their effects on the performance of the decoder.

\subsubsection{Quantum message-passing algorithm}
\label{sec:quantummessagepassing}

The decoding of a single message bit $X_r$ can be regarded as an algorithm on an MPG of the code with respect to $X_r$. 
Messages are passed from the leaves inward into the tree until they reach the root. 
Each message consists of a data qubit and a $B$-qubit register. 
The latter represents an angle cosine in $\mathcal{A}_B$. 
The data qubit of the final message generated by the root is measured in the $\ket{\pm}$ basis, producing the decoder estimate for the corresponding message bit. 
All message processing operations save the final measurement, are unitary and can therefore be undone after the value of $X_r$ has been estimated. 
This procedure is repeated for all codeword bits to be decoded, just as in BPQM. %in the same fashion as described in~\cref{sec:completedecoding}.

The message-passing algorithm is fully specified by describing how equality and check node operations take two messages, process them with some unitary operation, and produce an output message. 
This is visually depicted with the quantum circuits in~\cref{fig:message_passing_rules}. 
The two data qubits are denoted by $D_1,D_2$ and the angle registers are denoted by $C_1,C_2$ in the representation. The output message contains the data qubit $D'$ and the angle register $C'$. 
The gates 
\begin{equation}
  c_1,c_2 \mapsto \quant(c_1\cdot c_2)
  \text{ and }
  c_1,c_2\mapsto \quant\left(\frac{c_1 + (-1)^l c_2}{1 + (-1)^l c_1c_2}\right)
\end{equation}
represent quantum circuits that reversibly compute the corresponding classical functions $\mathcal{A}_B\times\mathcal{A}_B\rightarrow\mathcal{A}_B$ in the computational basis, where 
$\quant:(-1,1)\rightarrow\mathcal{A}_B$ denotes the rounding function 
\begin{equation}
  \quant(c) = \argmin_{\tilde{c}\in \mathcal{A}_B} |c-\tilde{c}| \, .
\end{equation}
Since it is understood how to implement these functions on a classical computer, we can simply translate these efficient classical circuits to an efficient quantum circuit. 
The quantum circuit will generally require $a_e$, respecitvely $a_c$, additional ``scratch'' qubits initialized to $\ket0$ for the realization of the computation as a reversible circuit. 
By uncomputing the classical circuit, we can ensure that these scratch qubits return to $\ket 0$  after execution of the node operation~\cite{bennett_1973}.
This is essential in allowing us to reuse the qubits at a later point without losing coherence of our state.

Notice that each node produces quantum information in $C_1,C_2$ and $D_2$ that it does not pass on to the next node. 
These registers are stored at each node for unwinding the decoding operations at a later point.

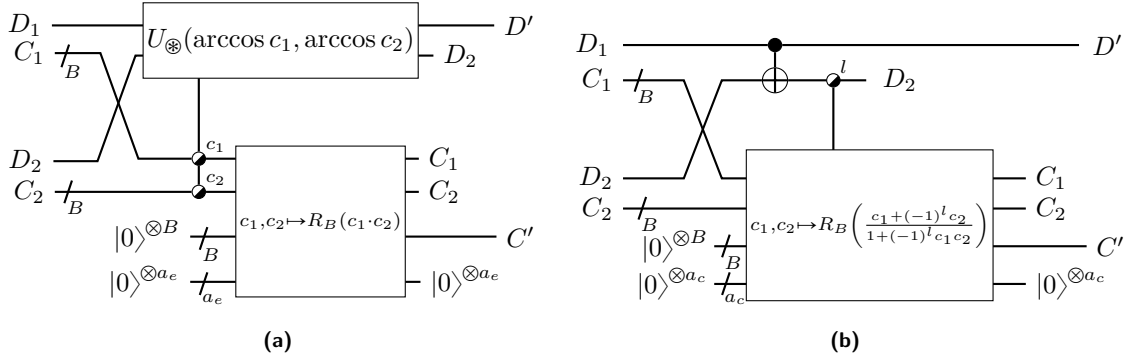
\begin{figure}
  \centering
  \begin{subfigure}[b]{0.49\textwidth}
    \tikzsetnextfilename{MPEqualityOperation}
  \begin{tikzpicture}
      \pgfdeclarelayer{bg}

    \pgfsetlayers{bg,main}
    \node[quadnode, minimum height=1.0cm] (U)  {$U_{\ostar}(\arccos c_1,\arccos c_2)$};
    \node[ucontrolled] (ctrl1) [below left=1cm and -0.8cm of U] {};
    \node[ucontrolled] (ctrl2) [below=0.25cm of ctrl1] {};
    \node (ctrllabel1) [above right=-0.1cm and -0.1cm of ctrl1] {$\scriptstyle c_1$};
    \node (ctrllabel1) [above right=-0.1cm and -0.1cm of ctrl2] {$\scriptstyle c_2$};
    \node[quadnode, minimum height=2.0cm] (classical) [below right=0.9cm and -2.4cm of U] {$\scriptstyle c_1,c_2\mapsto \quant(c_1\cdot c_2)$};

    \begin{pgfonlayer}{bg}
      \draw[thick] (U.west)+(0,0.2cm) to node[left=0.6cm] {$D_1$} ++(-1.2cm,0.2cm);
      \draw[thick] (U.west)+(0,-0.2cm) to ++(-0.1cm,-0.2cm) to ++(-0.5cm,-1.4cm) to node[left=0.3cm] {$D_2$} ++(-0.58cm,0);
      \draw[thick] (U.east)+(0,0.2cm) to node[right=0.5cm] {$D'$} ++(1.0cm, 0.2cm);
      \draw[thick] (U.east)+(0,-0.2cm) to node[right=0.1cm] {$D_2$} ++(0.2cm, -0.2cm);
      \draw[thick] (ctrl1.west) to ++(-0.7cm,0) to ++(-0.5cm,1.4cm) to node[left=0.3cm] {$C_1$} ++(-0.6cm,0);
      \draw[thick] (ctrl2.west) to node[left=0.9cm] {$C_2$} ++(-1.8cm,0);
      \draw[thick] (ctrl1.east) to node[right=1.4cm] {$C_1$} ++(2.8cm,0);
      \draw[thick] (ctrl2.east) to node[right=1.4cm] {$C_2$} ++(2.8cm,0);
      \draw[thick] (ctrl1.north) to ++(0,1.0cm);
      \draw[thick] (ctrl2.north) to (ctrl1.south);
      \draw[thick] (classical.west)+(0,-0.2cm) to node[left=0.3cm] {$\ket{0}^{\otimes B}$} ++(-0.6cm,-0.2cm);
      \draw[thick] (classical.west)+(0,-0.8cm) to node[left=0.3cm] {$\ket{0}^{\otimes a_e}$} ++(-0.6cm,-0.8cm);
      \draw[thick] (classical.east)+(0,-0.2cm) to node[right=0.6cm] {$C'$} ++(1.2cm,-0.2cm);
      \draw[thick] (classical.east)+(0,-0.8cm) to node[right=0.1cm] {$\ket{0}^{\otimes a_e}$} ++(0.2cm,-0.8cm);

      % lines indicating number of qubits
      \node (linemarker1) [below left=0.08cm and 0.9cm of U.west] {};
      \node (linemarker2) [left=1.5cm of ctrl2] {};
      \node (linemarker3) [below left=0.1cm and 0.3cm of classical.west] {};
      \node (linemarker4) [below left=0.7cm and 0.3cm of classical.west] {};
      \draw[thick] (linemarker1)+(-0.05cm,-0.15cm) to node[below right=0 and -0.15cm] {$\scriptstyle B$} ++(0.05cm,0.15cm);
      \draw[thick] (linemarker2)+(-0.05cm,-0.15cm) to node[below right=0 and -0.15cm] {$\scriptstyle B$} ++(0.05cm,0.15cm);
      \draw[thick] (linemarker3)+(-0.05cm,-0.15cm) to node[below right=0 and -0.15cm] {$\scriptstyle B$} ++(0.05cm,0.15cm);
      \draw[thick] (linemarker4)+(-0.05cm,-0.15cm) to node[below right=0 and -0.15cm] {$\scriptstyle a_e$} ++(0.05cm,0.15cm);
    \end{pgfonlayer}
  \end{tikzpicture}
  \caption{}
  \end{subfigure}
  \begin{subfigure}[b]{0.49\textwidth}
  \tikzsetnextfilename{MPCheckOperation}
  \begin{tikzpicture}
      \pgfdeclarelayer{bg}

    \pgfsetlayers{bg,main}
    \node[phase] (CNOTctrl) {};
    \node[circlewc] (CNOTtarg) [below=0.2cm of CNOTctrl] {};
    \node[ucontrolled] (ctrl) [right=0.5cm of CNOTtarg] {};
    \node (ctrllabel) [above right=-0.1cm and -0.1cm of ctrl] {$\scriptstyle l$};
    \node[quadnode, minimum height=2.0cm] (classical) [below right=0.8cm and -0.5cm of CNOTtarg] {$\scriptstyle c_1,c_2\mapsto \quant\left(\frac{c_1+(-1)^lc_2}{1+(-1)^lc_1c_2}\right)$};
    
    \begin{pgfonlayer}{bg}
      \draw[thick] (CNOTctrl) to node[left=1.0cm] {$D_1$} ++(-2.0cm,0);
      \draw[thick] (CNOTctrl) to node[right=2.0cm] {$D'$} ++(4.0cm,0);
      \draw[thick] (CNOTtarg) to ++(-0.7cm,0) to ++(-0.5cm,-1.3cm) to node[left=0.4cm] {$D_2$} ++(-0.8cm,0);
      \draw[thick] (CNOTtarg) to node[right=0.6cm] {$D_2$} ++(1.2cm,0);
      \draw[thick] (CNOTtarg)+(-2.0cm,0) to node[left=0.4cm] {$C_1$} ++(-1.2cm,0) to ++(0.5cm,-1.3cm) to node[right=2.0cm] {$C_1$} ++(4.0cm,0);
      \draw[thick] (CNOTtarg)+(-2.0cm,-1.7cm) to node[left=0.65cm] {$C_2$} ++(-0.7cm,-1.7cm) to node[right=2.0cm] {$C_2$} ++(4.0cm,0);
      \draw[thick] (CNOTtarg)+(-0.8cm,-2.2cm) to node[left=0.5cm] {$\ket{0}^{\otimes B}$} ++(0.3cm,-2.2cm) to node[right=1.9cm] {$C'$} ++(3.8cm,0);
      \draw[thick] (CNOTtarg)+(-0.8cm,-2.7cm) to node[left=0.5cm] {$\ket{0}^{\otimes a_c}$} ++(0.3cm,-2.7cm) to node[right=1.5cm] {$\ket{0}^{\otimes a_c}$} ++(3.0cm,0);
      \draw[thick] (CNOTctrl.south) to (CNOTtarg.north);
      \draw[thick] (ctrl.south) to ++(0,-1.5cm);

      % lines indicating number of qubits
      \node (linemarker1) [left=1.5cm of CNOTtarg] {};
      \node (linemarker2) [below left=1.45cm and 1.5cm of CNOTtarg] {};
      \node (linemarker3) [below left=1.95cm and 0.4cm of CNOTtarg] {};
      \node (linemarker4) [below left=2.45cm and 0.4cm of CNOTtarg] {};
      \draw[thick] (linemarker1)+(-0.05cm,-0.15cm) to node[below right=0 and -0.15cm] {$\scriptstyle B$} ++(0.05cm,0.15cm);
      \draw[thick] (linemarker2)+(-0.05cm,-0.15cm) to node[below right=0 and -0.15cm] {$\scriptstyle B$} ++(0.05cm,0.15cm);
      \draw[thick] (linemarker3)+(-0.05cm,-0.15cm) to node[below right=0 and -0.15cm] {$\scriptstyle B$} ++(0.05cm,0.15cm);
      \draw[thick] (linemarker4)+(-0.05cm,-0.15cm) to node[below right=0 and -0.15cm] {$\scriptstyle a_c$} ++(0.05cm,0.15cm);

    \end{pgfonlayer}
  \end{tikzpicture}
  \caption{}
  \end{subfigure}
  \tikzexternaldisable
  \caption{Quantum circuit representation of message-passing processing operations for equality and check node operations. Both types of node operations take two messages $(D_1,C_1)$, $(D_2,C_2)$ and produce an output message $(D',C')$. The symbol {\protect\tikz\protect \node [ucontrolled] (c) {}; } depicts on which qubits a gate is uniformly controlled.}
  \tikzexternalenable
  \label{fig:message_passing_rules}
\end{figure}

It remains to specify the implementation of the uniformly-controlled $U_{\ostar}$ gate depicted in~\cref{fig:message_passing_rules}, as the naive implementation would again require a quantum circuit size exponential in $B$. 
As shown in~\cref{fig:uc_ustar_identity}, the $U_{\ostar}(\varphi_1,\varphi_2)$ gate can be decomposed into \cnot{} gates and single-qubit $R_y$ rotations. 
The angles of these rotations are given by
\begin{align}\label{eq:alpha}
    \alpha = -\arccos(a_+) - \arccos(b_+) \pmod{2\pi}\,, \\ \beta = -\arccos(a_+) + \arccos(b_+) \pmod{2\pi} \, ,
\end{align}
where, as before,
\begin{align}\label{eq:beta}
    a_{\pm}:=\frac{1}{\sqrt{2}}\frac{\cos(\frac{\arccos c_1-\arccos c_2}{2})\pm \cos(\frac{\arccos c_1+\arccos c_2}{2})}{\sqrt{1+c_1c_2}}, \\ 
    b_{\pm}:=\frac{1}{\sqrt{2}}\frac{\sin(\frac{\arccos c_1+\arccos c_2}{2})\pm \sin(\frac{\arccos c_1-\arccos c_2}{2})}{\sqrt{1-c_1c_2}} \, 
\end{align}
and $c_1:=\cos\varphi_1,c_2:=\cos\varphi_2$. 
For convenience, we choose $\alpha,\beta$ to both lie in the range $[0,2\pi]$. 
We delegate the (short) proof of this identity to~\cref{app:ustar_identity}. 

Now only the single-qubit $R_y$ gates need to be controlled by the angle registers. 
This is easily done by first computing $\alpha$ and $\beta$, storing the results in two further $B$-qubit registers, and then sequentially performing $B$ controlled-$R_y$ rotations, as depicted in~\cref{fig:uc_ustar_simplification}. 
More specifically, first apply a quantum circuit implementing the classical functions $c_1,c_2\mapsto \quantalphabeta{B}{\alpha(c_1,c_2)}$ and $c_1,c_2\mapsto \quantalphabeta{B}{\beta(c_1,c_2)}$, where $\quantalphabeta{B}:[0,2\pi]\rightarrow \tilde{A}_B:=\{ 2\pi\frac{k}{2^B-1} | k=0,\dots,2^B-1\}$ is the rounding function $\quantalphabeta{B}{\varphi} := \argmin_{\tilde{\varphi}\in \tilde{\mathcal{A}}_B} |\varphi - \tilde{\varphi}|$, and store the result in a two ancilla $B$-qubit registers. 
Here again we require a certain number of scratch qubits to realize the classical computation, which we denote by $a_{\text{ang}}$. 
The $R_y$ rotations can now be uniformly controlled by these ancilla registers.

In contrary to a general uniformly-controlled gate, the two remaining unformly-controlled $R_y$ rotations can be realized efficiently, as described in the following.
Consider a $R_y$ rotation gate on the qubit $T$ that is uniformly controlled by the angle $\varphi$ stored digitally in the $B$ qubits $C_1,\dots,C_B$, i.e., the state $\ket{b_1}_{C_1}\otimes\dots\otimes\ket{b_B}_{C_B}$ for $b_i\in\{0,1\}$ represents the angle $\varphi=2\pi\sum_{i=1}^n b_i2^{-i}$. This uniformly-controlled gate can be realized with $B$ controlled $R_y$ gates (i.e.~two-qubit gates) where the $i$-th rotation is targeted on the qubit $T$ and controlled by the qubit $C_i$ and has the angle $2\pi2^{-i}$.

As we only use a finite number of qubits to represent the angles $\alpha$ and $\beta$, this implementation of a uniformly-controlled $U_{\ostar}$ gate exhibits discretization errors, which can in principle be made arbitrarily small by letting $B$ go to infinity.

\tikzexternaldisable
\begin{figure}
  \centering
  \begin{subfigure}[b]{1.0\textwidth}
    \begin{center}
    \resizebox{0.7\textwidth}{!}{
      \begin{quantikz}
        & \gate[wires=2]{U_{\ostar}(\arccos c_1,\arccos c_2)} & \qw \\
        & & \qw
      \end{quantikz}
      $\cong$
      \begin{quantikz}
        & \targ{}   &  \qw                          & \ctrl{1} & \qw         & \ctrl{1} & \qw \\
        & \ctrl{-1} &  \gate{R_y(\alpha)} & \targ{}  & \gate{R_y(\beta)} & \targ{}  & \qw
      \end{quantikz}
      }
    \end{center}
    \caption{}
    \label{fig:uc_ustar_identity}
  \end{subfigure}
  \newline
  \begin{subfigure}[b]{1.0\textwidth}
    \begin{center}
      \resizebox{0.9\textwidth}{!}{
      \begin{quantikz}
        & \gate[wires=2]{U_{\ostar}(\arccos c_1,\arccos c_2)} & \qw \\
        &                                                     & \qw \\
        & \uctrl{c_1}\vqw{-1}                                 & \qw \\
        & \uctrl{c_2}\vqw{-1}                                 & \qw
      \end{quantikz}
      $\approx$
\begin{quantikz}
& \qw                  & \targ{}                                              & \qw                    & \ctrl{1} & \qw                   & \ctrl{1}                                                  & \qw & \qw \\ [-0.4cm]
& \qw                  & \ctrl{-1}                                                     & \gate{R_y(\alpha)}      & \targ{}  & \gate{R_y(\beta)}      & \targ{}                                                   & \qw & \qw \\ [-0.4cm]
& \ket{0}^{\otimes B}       & \gate[wires=5]{\substack{\text{compute}\\\alpha,\beta}} & \uctrl{\alpha}\vqw{-1} & \qw      & \qw                   & \gate[wires=5]{\substack{\text{uncompute}\\\alpha,\beta}} & \qw & \ket{0}^{\otimes B}  \\ [-0.6cm]
& \ket{0}^{\otimes B}       &                                                         & \qw                    & \qw      & \uctrl{\beta}\vqw{-2} & \qw                                                       & \qw & \ket{0}^{\otimes B} \\ [-0.2cm]
& \qw                       &                                                         & \qw                    & \qw      & \qw                   & \qw                                                       & \qw & \qw \\
& \qw                       &                                                         & \qw                    & \qw      & \qw                   & \qw                                                       & \qw & \qw \\ [-0.3cm]
& \ket{0}^{\otimes a_{ang}} &                                                         & \qw                    & \qw      & \qw                   & \qw                                                       & \qw & \ket{0}^{\otimes a_{ang}}
\end{quantikz}
    }\end{center}
    \caption{}
    \label{fig:uc_ustar_simplification}
  \end{subfigure}
  \caption{(a) Decomposition of equality node unitary into single-qubit and \cnot{} gates. The angles of rotation $\alpha$ and $\beta$ are introduced in the main text in~\cref{eq:alpha,eq:beta}. (b) Efficient realization of a uniformly-controlled equality node unitary. The $\approx$ symbol illustrates that the right-hand side is not perfectly equal to the left-hand side, since the angles $\alpha,\beta$ are only computed to an accuracy of $B$ bits. However, this discretization error can be made arbitrarily small.}
  \label{fig:uc_ustar}
\end{figure}
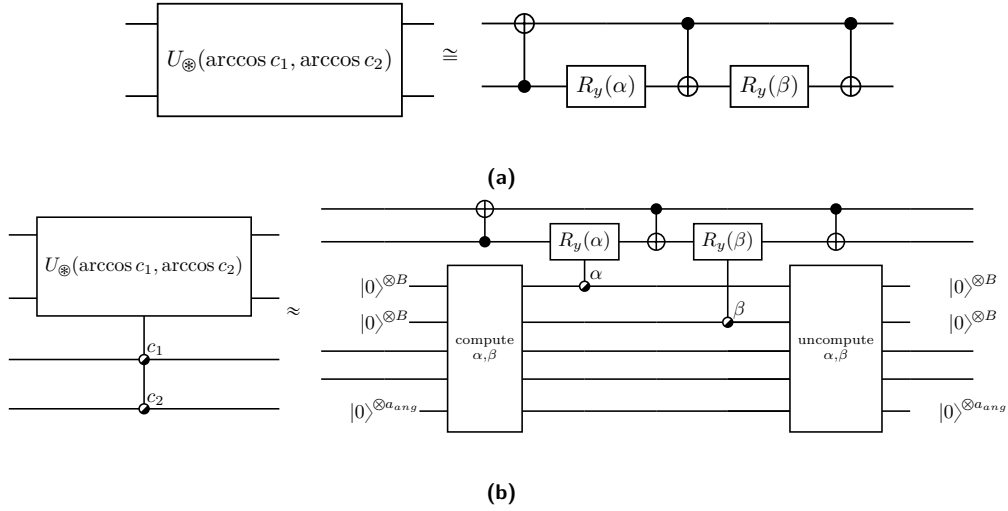
\tikzexternalenable

\subsubsection{Performance of the algorithm}
For the rest of this section, we focus on bounding the scaling of the discretization error in $B$ so as to determine how large $B$ has to be chosen to reach a desired decoding accuracy. 
By~\cref{lem:purestate_evolution}, the state after performing the BPQM unitary $V_r$ according to some MPG $G$ w.r.t $X_r$ on the state $\qstate{x_1}{\theta_1}\otimes\dots\otimes\qstate{x_n}{\theta_n}$ (but before performing the projective measurement in the $\ket{\pm}$ basis) for some $\vec{x}\in\mathcal{C}$ can be written in the form
\begin{align}\label{eq:original_state}
  \sum\limits_{\vec{j}\in\{0,1\}^{k-1}}{ \gamma_{\vec{j}} \qstate{x_r}{\theta_{\vec{j}}}_D \otimes \ket{0^{n-k}}_Z \otimes \ket{\vec{j}}_A }\,,
\end{align}
where the angles $\theta_{\vec{j}}\in (0,\pi)$ and squares of the amplitudes $\gamma_{\vec{j}}\in\mathbb{R}$ correspond precisely to the angles and probabilities in the branch list of the root node of $G$.
Here again, we denote by $D$ the output qubit on which the measurement is to be performed, $Z$ the system of ancilla qubits produced by equality node operations and $A$ the system of qubits produced by check node operations. Similarly, we now want to determine how the final state after the execution of message-passing BPQM on the identical MPG and input state looks like:
\begin{lemma}\label{lem:discretization_error}
  After executing the message-passing variant of BPQM according to the MPG $G$ w.r.t. $X_r$ on the state $\qstate{x_1}{\theta_1}\otimes\dots\otimes\qstate{x_n}{\theta_n}$ where $\vec{x}\in\mathcal{C}$, one obtains a state of the form
  \begin{align}\label{eq:messpass_state}
    \sum\limits_{\vec{j}\in\{0,1\}^{k-1}}{ \tilde{\gamma}_{\vec{j}} \ket{\tilde{\varphi}_{\vec{j}}}_{DZ} \otimes \ket{\vec{j}}_A \otimes \anglestate{\tilde{c}_{\vec{j}}}_C } \otimes \ket{\tilde{s}_{\vec{j}}}_S
  \end{align}
  for some $(n-k+1)$-qubit states $\ket{\tilde{\varphi}_{\vec{j}}}$, $\tilde{\gamma}_{\vec{j}}\in\mathbb{R}$, $\tilde{c}_{\vec{j}}\in\mathcal{A}_B$ and some $(n-1)2B$-qubit states $\ket{\tilde{s}_{\vec{j}}}$ such that
  \begin{enumerate}
    \item $\sum\limits_{\vec{j}\in\{0,1\}^{k-1}} \gamma_{\vec{j}}^2 | \cos\theta_{\vec{j}} - \tilde{c}_{\vec{j}} | \leq (2^{n+1}-3)\delta$
    \item \label{item:2}$\sum\limits_{\vec{j}\in\{0,1\}^{k-1}} | \gamma_{\vec{j}}^2 - \tilde{\gamma}_{\vec{j}}^2 | \leq 2^{n+\tfrac12}\pi\sqrt{\delta}\frac{1}{3}26^n$
    \item \label{item:3}$\sum\limits_{\vec{j}\in\{0,1\}^{k-1}} \gamma_{\vec{j}}^2 \norm{ \qstate{x_r}{\theta_{\vec{j}}}\otimes\ket{0^{n-k}} - \ket{\tilde{\varphi}_{\vec{j}}} } \leq 2^{n+\tfrac12}\pi\sqrt{\delta}26^n$
  \end{enumerate}
\end{lemma}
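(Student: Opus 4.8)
The proof proceeds by tracking the state through the MPG node-by-node, from the leaves to the root, and comparing at each edge the true BPQM message with the message-passing BPQM message. The natural object to induct on is: over each edge $e$ of the MPG, the branch list of $e$ gives a decomposition of the true (sub)state as $\sum_{\vec{j}} \gamma^{(e)}_{\vec{j}} \qstate{\cdot}{\theta^{(e)}_{\vec{j}}}_{D}\otimes\ket{0}^{\otimes(\cdot)}_Z\otimes\ket{\vec{j}}_A$ with $(\gamma^{(e)}_{\vec{j}})^2$ and $\cos\theta^{(e)}_{\vec{j}}$ given by the branch-list entries (this is exactly \cref{lem:purestate_evolution} restricted to the subtree below $e$), while the message-passing execution produces $\sum_{\vec{j}} \tilde\gamma^{(e)}_{\vec{j}} \ket{\tilde\varphi^{(e)}_{\vec{j}}}_{DZ}\otimes\ket{\vec{j}}_A\otimes\anglestate{\tilde c^{(e)}_{\vec{j}}}_C\otimes\ket{\tilde s^{(e)}_{\vec{j}}}_S$. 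I would carry three error quantities as the inductive hypothesis, namely the three quantities in items 1--3 but with the sums taken over the branch list of $e$ rather than of the root, and show how each grows under an equality-node operation and under a check-node operation.

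\textbf{The base case and the recursion.} At a leaf the true and discretized messages differ only by the rounding $\theta_i \mapsto \tilde c_i = \quant(\cos\theta_i)$, so all three quantities start at $0$, $0$, and $\le \delta/2$ respectively (only item~1 is nonzero). For the inductive step I would analyze each node type separately. At a \emph{check node} the true operation sends the pair $(\alpha,\beta)$ of branch-list angles to the two new branches $\alpha\boxstar_0\beta,\alpha\boxstar_1\beta$ with probability weights multiplied by $p_\boxstar(\alpha,\beta,l)$, as in \cref{eq:purestate_checknode}; the message-passing operation does the same but with $\alpha,\beta$ replaced by $\arccos\tilde c_1,\arccos\tilde c_2$ and the output cosine further rounded by $\quant$. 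So I must bound (a) the error in the new cosine, which picks up the Lipschitz constant of $(c_1,c_2)\mapsto (c_1\pm c_2)/(1\pm c_1 c_2)$ plus a fresh $\delta/2$ rounding error, (b) the error in the new weight $\tilde\gamma^2$, which picks up the Lipschitz behaviour of $p_\boxstar$, and (c) the error in the data-qubit state $\ket{\tilde\varphi}$, which picks up how $\qstate{x_r}{\varphi}$ moves when $\varphi$ moves. At an \emph{equality node} the true operation uses $U_\ostar(\alpha,\beta)$ and sets the new angle to $\alpha\ostar\beta=\arccos(\cos\alpha\cos\beta)$; the message-passing operation applies $U_\ostar(\arccos\tilde c_1,\arccos\tilde c_2)$ — but \emph{further} discretized through the $\alpha,\beta$-rotation-angle rounding $\quantalphabeta{B}$ of \cref{fig:uc_ustar_simplification} — and sets the new cosine to $\quant(\tilde c_1\tilde c_2)$. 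Here the key subtlety flagged in the text arises: when $U_\ostar$ is applied with wrong parameters the output data qubit is \emph{no longer} of the pure form $\qstate{\cdot}{\varphi}$ and no longer exhibits the Pauli-$Z$ symmetry, which is why the lemma must carry a general $(n-k+1)$-qubit state $\ket{\tilde\varphi_{\vec{j}}}_{DZ}$ rather than $\qstate{x_r}{\theta_{\vec{j}}}\otimes\ket{0^{n-k}}$; I would bound $\|U_\ostar(\alpha,\beta)-U_\ostar(\alpha',\beta')\|$ in operator norm via the explicit entries \cref{eq:apmdef,eq:bpmdef} and the rotation-angle formulas \cref{eq:alpha,eq:beta}, plus the extra $\quantalphabeta{B}$ rounding of the rotation angles, and feed this into the triangle inequality for item~3 and into a perturbation bound for the amplitudes for item~2.

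\textbf{Where the constants come from and the main obstacle.} The factor $2^{n}$-type prefactors in all three bounds come from the branching: a check node roughly doubles the number of branches in the sum, so summing $\gamma^2$-weighted per-branch errors and then propagating through $O(n)$ further nodes multiplies the accumulated error by a constant ($\le 2$, resp.\ $\le 13$ or $\le 26$ after also accounting for the Lipschitz constants) at each of up to $n-1$ steps, giving the $2^{n+O(1)}$, $\tfrac13 26^n$, and $26^n$ factors. The $\sqrt{\delta}$ in items~2 and~3 versus $\delta$ in item~1 is the real technical heart: the functions one must control — most visibly $b_\pm$ in \cref{eq:bpmdef}, which has a $\sqrt{1-c_1c_2}$ in the denominator, and the rotation angles $\alpha,\beta$ through $\arccos$ near $\pm 1$ — are \emph{not} Lipschitz; near the bad points $a_\ostar b \to 0$ or $\pi$ one only gets Hölder-$\tfrac12$ continuity. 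So the main obstacle is proving a clean quantitative modulus-of-continuity estimate of the form $\|U_\ostar(\alpha,\beta)-U_\ostar(\alpha',\beta')\| \le C\big(\sqrt{|\cos\alpha-\cos\alpha'|}+\sqrt{|\cos\beta-\cos\beta'|}\big)$ (and an analogous Hölder bound for the $p_\boxstar$ weight), being careful that the $\arccos$-induced non-uniformity does not blow up the constant, and then bookkeeping these square-root errors correctly through the induction — in particular noting that once an error of size $\sqrt{\delta}$ has entered the amplitudes it propagates Lipschitz-ly (not with another square root) through the remaining nodes, so that the final exponent of $\delta$ stays at $\tfrac12$ rather than degrading further. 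I would isolate the non-uniform-continuity estimate as a separate technical lemma, prove the three-part invariant by induction on the MPG with that lemma as a black box, and finally specialize to the root edge to recover the stated bounds.
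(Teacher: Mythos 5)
Your proposal matches the paper's proof essentially step for step: the same edge-indexed induction from leaves to root carrying the three probability-averaged error quantities as the invariant, the same split into check-node and equality-node cases, the H\"older-$\tfrac12$ continuity of $\arccos$ as the source of the $\sqrt{\delta}$ in items 2 and 3 (with the correct observation that these errors then propagate only linearly), and the same geometric accumulation of per-node constants into the $2^{n}$ and $26^{n}$ prefactors. The one place your sketch would need adjustment is the check-node weight error: because the incoming data registers are general perturbed states $\ket{\tilde{\varphi}}$ rather than $\qstate{z}{\tilde{\alpha}}$, the actual branch amplitudes $\tilde{\kappa}$ are norms of projected \cnot{} outputs and must be compared to $\kappa$ via the incoming \emph{state} error (your item 3), not via Lipschitz continuity of $p_{\boxstar}$ in the tracked angles --- and, relatedly, the paper sidesteps a full operator-norm H\"older bound on $U_{\ostar}$ differences by evaluating the unitary discrepancy only on the relevant product states via \cref{eq:Ustar_identity}, which keeps the constants manageable.
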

Note that the state is defined on a larger system than before: besides $D$, $Z$, and $A$, there are new systems $C$ and $S$ present in this expression. We denote by $C$ the $B$-qubit register that contains the angle-part of the message generated by the root node of $G$. Furthermore, $S$ denotes the system of all qubits produced in the node operations which are not passed on to the next node and are not part of $Z$ or $A$. By consulting~\cref{fig:message_passing_rules}, one can quickly see that there are exactly two such $B$-qubit registers (denoted $C_1$ and $C_2$ in the figure) per node operation. The precise states $\ket{\tilde{s}_{\vec{j}}}$ will not be of any importance, but we cannot discard these qubits, as they are required to undo the node operations at a later point.

Note that $\ket{\tilde{\varphi}_{\vec{j}}}$ describes the joint state of the output data qubit as well as the ancilla qubits produced by equality nodes operations. So, ideally, if discretization errors are small, then we should hope to find $\ket{\tilde{\varphi}_{\vec{j}}}\approx\qstate{x_r}{\theta_{\vec{j}}}\otimes \ket{0^{n-k}}$. Similarly, the $B$-qubit system $C$ should well represent the angle of the data qubits, i.e., $\tilde{c}_{\vec{j}}\approx\cos\theta_{\vec{j}}$. \Cref{lem:discretization_error} tells us how the discretization errors propagate through the message-passing algorithm. Due to the length and the technical nature, we delegate the proof to~\cref{app:proof_discretization_error}. We merely note here that the central idea of the lemma is that we do not directly make a statement about the errors $|\cos\theta_{\vec{j}}-\tilde{c}_{\vec{j}}|$ and $\norm{ \qstate{x_r}{\theta_{\vec{j}}}\ket{0^{n-k}} - \ket{\tilde{\varphi}_{\vec{j}}} }$ in each branch of the wave function, but rather make some sort of probability-averaged statement. This is necessary, because the classical computation of the function
\begin{equation}
  c_1,c_2\mapsto \frac{c_1+(-1)^lc_2}{1+(-1)^lc_1c_2}
\end{equation}
is numerically unstable in the regime where $(-1)^lc_1c_2$ approaches $-1$, as the right-hand side is not uniformly continuous in that neighborhood, so some branches of the wave function might experience large errors caused by the discretization. Fortunately, using some algebraic tricks we can show that in the probability-averaged\footnote{The probability average is taken over the possible ancilla states $\vec{j}\in\{0,1\}^{k-1}$ with probabilities $\gamma_{\vec{j}}^2$.} examination of the problem as formulated in~\cref{lem:discretization_error}, the errors remain small, so only branches with small probability amplitudes suffer from strong numerical errors.

\Cref{lem:discretization_error} can be applied to make a statement about the decoding performance of message-passing BPQM and how it compares to the ideal decoder, which is realized by BPQM. 
\begin{theorem}\label{thm:error_probab}
  Denote by $p^{(mp)}$ and $p^{(ideal)}$ the probabilities of sequentially decoding all the codeword bits $X_1,\dots,X_k$ correctly using the message-passing BPQM and BPQM algorithms. One has 
  \begin{equation}
    p^{(ideal)} - p^{(mp)} \leq \frac{2^{9/4}\sqrt{\pi}}{\sqrt{3}}n 2^{n\cdot \left(\frac{3}{2}+\frac{1}{2}\frac{\log 26}{\log 2}\right) - \frac{1}{4}B} \, .
  \end{equation}
\end{theorem}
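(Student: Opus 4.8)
The plan is to combine \cref{lem:discretization_error}, which controls the per-bit discretization error, with a telescoping argument over the $k$ sequential bit-decoding rounds. Conditioned on the transmitted codeword $\vec x$, the ideal BPQM codeword decoder is the instrument $\mathcal{N}^{(ideal)}=\mathcal{N}^{(ideal)}_k\circ\cdots\circ\mathcal{N}^{(ideal)}_1$, where $\mathcal{N}^{(ideal)}_r$ applies $V_r$, measures the data qubit $D$ in the $\ket{\pm}$ basis while recording the outcome $\hat X_r$ in a classical register, and applies $V_r^\dagger$; message-passing BPQM is the analogous $\mathcal{N}^{(mp)}$ built from the $\tilde V_r$. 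By \cref{sec:contraction,sec:optimality}, on the codeword span $\mathcal H:=\mathrm{span}\{\ket{\Psi_{\vec y}}\}$ each $\mathcal{N}^{(ideal)}_r$ acts as an honest orthogonal projective measurement (the $\Pi_{x_r=m_r}$ of \cref{eq_pgm_element}), so $p^{(ideal)}=\tfrac1{2^k}\sum_{\vec x}|\langle f_{\vec x}|\Psi_{\vec x}\rangle|^2$ and, crucially, the ideal intermediate states remain in $\mathcal H$ tensored with $\ket 0$ on the ancillas used only by message-passing BPQM. Using that trace distance is nonincreasing under the channels $\mathcal{N}^{(ideal)}_{>r}$ and that decoding $\vec x$ correctly is the event ``classical register $=\vec x_{1..k}$'', one obtains
\begin{equation}
p^{(ideal)}-p^{(mp)}\;\le\;\frac1{2^k}\sum_{\vec x}\;\sum_{r=1}^{k}\;\norm{\bigl(\mathcal{N}^{(ideal)}_r-\mathcal{N}^{(mp)}_r\bigr)\bigl(\xi^{(\vec x)}_{r-1}\bigr)}_1 \, ,
\end{equation}
where $\xi^{(\vec x)}_{r-1}=(\mathcal{N}^{(mp)}_{r-1}\circ\cdots\circ\mathcal{N}^{(mp)}_1)(\ketbra{\Psi_{\vec x}}\otimes\ketbra{0})$ is the subnormalized state entering round $r$. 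The factor $k\le n$ in the claimed bound is already visible; it remains to bound each summand by $\sqrt\delta\,2^{\Theta(n)}$.

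For the per-round bound I would first treat a clean input $\ket{\Psi_{\vec z}}\otimes\ket 0\in\mathcal H\otimes\ket0$ and invoke \cref{lem:discretization_error}. Applying $\tilde V_r$ resp.\ $V_r$ and comparing the states branch-by-branch in the check-ancilla register $A$ — which stores the branch label $\vec j$ \emph{orthonormally}, so all cross terms vanish — the squared distance between the post-$\tilde V_r$ and post-$V_r$ states collapses to a branch-probability-weighted sum $\sum_{\vec j}\tilde\gamma_{\vec j}^2\,\epsilon_{\vec j}^2$ of squared errors $\epsilon_{\vec j}:=\norm{\qstate{x_r}{\theta_{\vec j}}\otimes\ket{0^{n-k}}-\ket{\tilde\varphi_{\vec j}}}$, plus the amplitude error $\sum_{\vec j}|\gamma_{\vec j}^2-\tilde\gamma_{\vec j}^2|$. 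Since $\epsilon_{\vec j}\le 2$ one has $\epsilon_{\vec j}^2\le 2\epsilon_{\vec j}$, so this is at most twice the $\gamma_{\vec j}^2$-weighted \emph{first-power} sum bounded in the second and third items of \cref{lem:discretization_error}. This is precisely why a square root appears in the final bound ($\tfrac12 B\to\tfrac14 B$): the lemma's estimate $2^{n+1/2}\pi\sqrt\delta\,26^n$ becomes a genuine state-vector distance of order $\sqrt{\delta}\,2^{\Theta(n)}$. The same estimate controls the change in the ``bit-$r$-correct'' probability, and since the $D$-measurement and the rewinds $V_r^\dagger,\tilde V_r^\dagger$ are trace-distance contractive, it controls $\norm{(\mathcal{N}^{(ideal)}_r-\mathcal{N}^{(mp)}_r)(\cdot)}_1$ on clean inputs.

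The main obstacle is that $\xi^{(\vec x)}_{r-1}$ is not a clean product state $\ket{\Psi_{\vec z}}$ but the output of $r-1$ rounds of decode-and-rewind: for the ideal branch it is a superposition of $\ket{f_{\vec y}}$'s (equivalently of $\ket{\Psi_{\vec y}}$'s) lying in $\mathcal H\otimes\ket0$, and for message-passing BPQM it is within an accumulated error — again $\sqrt\delta\,2^{\Theta(n)}$ per round, to be tracked inductively using that $\tilde V_r^\dagger$ nearly uncomputes the $C$ and $S$ registers — of such a state. One therefore expands $\xi^{(\vec x)}_{r-1}$ in the linearly independent but non-orthonormal family $\{\ket{\Psi_{\vec y}}\}$ (or $\{\ket{f_{\vec y}}\}$), applies the clean-input bound to each of the $\le 2^k\le 2^n$ terms, and recombines with the triangle and Cauchy--Schwarz inequalities. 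The non-orthonormality forces an extra factor governed by the (inverse) Gram matrix $\langle\Psi_{\vec x}|\Psi_{\vec y}\rangle=\prod_{i:x_i\neq y_i}\cos\theta_i$, which is bounded crudely by $2^{\Theta(n)}$; together with the $26^n$ and $2^n$ already in \cref{lem:discretization_error}, the averaging $\tfrac1{2^k}\sum_{\vec x}(\cdot)=1$, and the summation over $k\le n$ rounds, these assemble into $\tfrac{2^{9/4}\sqrt\pi}{\sqrt3}\,n\,2^{n(\frac32+\frac12\frac{\log 26}{\log 2})-\frac14 B}$ after using $\sqrt\delta<\sqrt2\,2^{-B/2}$ and collecting the numerical constants (the $2$ from $\epsilon^2\le 2\epsilon$, the $\tfrac13$ from the second item, etc.). The delicate points are thus (i) showing the message-passing intermediate states stay $\sqrt\delta\,2^{\Theta(n)}$-close to $\mathcal H\otimes\ket0$, and (ii) book-keeping the compounding exponential-in-$n$ factors so that they land exactly at the stated exponent.
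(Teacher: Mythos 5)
Your overall architecture --- telescope/induct over the $k$ decoding rounds, convert the probability-averaged first-power bounds of \cref{lem:discretization_error} into a genuine state-vector distance of order $\sqrt{\delta}\,2^{\Theta(n)}$ (this is indeed exactly where $\tfrac12 B$ becomes $\tfrac14 B$), argue that the ideal intermediate states stay in $\mathcal H\otimes\ket{\vec 0}$, and expand the intermediate states over codewords so that the clean-input bound can be applied term by term and recombined with Cauchy--Schwarz --- coincides with the paper's. The paper packages it slightly differently: it introduces an ``extended BPQM'' unitary $\vext{r}$ acting on the same enlarged register as $\vmp{r}$ (so that the two evolutions can be subtracted at all, a point you gloss over), and rather than trace distances of instruments it tracks the subnormalized vectors $\ketext{\phi_j}$, $\ketmp{\phi_j}$ obtained by conditioning on the first $j$ bits being decoded correctly, proving by induction that $\norm{\ketext{\phi_j}-\ketmp{\phi_j}}$ grows linearly in $j$ with a per-round cost $2^{k/2+1}\cdot\frac{2^{5/4}\sqrt{\pi}}{\sqrt 3}\,2^{(1+\frac12\log_2 26)\,n-B/4}$. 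These are largely cosmetic differences.

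The genuine gap is in your treatment of the non-orthonormal expansion. You propose to expand $\xi^{(\vec x)}_{r-1}$ in $\{\ket{\Psi_{\vec y}}\}$ and absorb the non-orthonormality into ``an extra factor governed by the inverse Gram matrix, bounded crudely by $2^{\Theta(n)}$''. That bound is false: the Gram matrix $\braket{\Psi_{\vec x}|\Psi_{\vec y}}=\prod_{i:x_i\neq y_i}\cos\theta_i$ becomes singular as the $\theta_i\to 0$ (its smallest eigenvalue is $\min_{\norm{a}=1}\norm{\sum_{\vec y}a_{\vec y}\ket{\Psi_{\vec y}}}^2$, which tends to $0$ as the channel outputs become parallel), so the $\ell^2$ --- and hence $\ell^1$ --- norm of the coefficient vector of a \emph{generic} unit vector of $\mathcal H$ is not uniformly bounded in the channel parameters. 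Followed literally, your route yields a bound that blows up as $\theta\to 0$ and cannot reproduce the $\theta$-independent statement of the theorem. The paper avoids this by exploiting that the ideal intermediate state is not generic: it is $\prod_{i\le j}\Pi_{x_i=m_i}\ket{\Psi_{\vec x}}$, i.e.\ the image of a single clean codeword state under the orthogonal PGM projectors of \cref{eq:def_pgm_projector}, for which one can take expansion coefficients satisfying $\sum_{\vec y}|a_{\vec y}|^2\le 1$, so Cauchy--Schwarz gives $\sum_{\vec y}|a_{\vec y}|\le 2^{k/2}$ with no Gram-matrix penalty. Some such structural input is indispensable for the bookkeeping in your point (ii) to close at the stated exponent.
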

This immediately implies that we can choose $B=\mathcal{O}(n+\log\frac{1}{\epsilon})$ to guarantee that the probability of successful decoding deviates by at most $\epsilon$ from the ideal value.
Due to its verbosity, we delegate the proof to~\cref{app:proof_complete_codeword}.
The main idea of the proof is to use~\cref{lem:discretization_error} iteratively to capture the error caused by the correponding message-passing BPQM operations, as well as the rewinding thereof.
In order for~\cref{lem:discretization_error} to be applicable, it is of central importance to argue that the intermediate states obtained after decoding a certain number of codeword bits are close to a state in the subspace spanned by the PGM basis elements.
This is required, since~\cref{lem:discretization_error} only makes a statement about how the action of BPQM and message-passing BPQM differ on states of the form $\qstate{x_1}{\theta_1}\otimes\dots\otimes\qstate{x_n}{\theta_n}$ and not on general states.

\subsection{Circuit complexity}\label{sec:complexity}
In this section we discuss the quantum circuit complexity of message-passing BPQM. In the case where we decode only a single codeword bit as described in~\cref{sec:message_passing_singlebit}, one requires exactly $n$ qubits, one for each channel output. 
The circuit depth scales with the number $n-1$ of node operations, and each node operation can be implemented in constant depth. 
The calculations done by the classical co-processor, which keeps track of the involved angles, are considered negligible.

The analysis is a bit more involved for the codeword decoder. 
The number of node operations now scales as $(n-1)k=\mathcal{O}(n^2)$ as we have to decode $k$ codeword bits. The complexity of every individual node operation depends on $B$ as well as the precise implementation of the involved classical functions. For classical computers the best known algorithm to evaluate these functions is based on the arithmetic-geometric mean operation~\cite{brent_1976} and when used in conjunction with the Harvey-Hoeven algorithm~\cite{harvey_2021} that allows for multiplication of $B$-bit numbers in $\mathcal{O}(B\log B)$ time, the time complexity of the classical operation would be $\mathcal{O}(B\log^2B)$.\footnote{It should be noted that the mentioned algorithms are only optimal when considering asymptotic runtimes for very large $B$. In practice, different algorithm would likely be more efficient when considering realistic values of $B$.}

It is not clear if and how these classical algorithm can be realized in a quantum circuit of the same asymptotic complexity. 
The current research on implementing transcendental functions, like trigonometric functions, on quantum computers is sparse~\cite{cao_2013,bhaskar_2016,haner_2018,wang_2020}. 
Wang \emph{et al.}\ recently proposed an algorithm that can realize our desired classical computation using $\mathcal{O}(B^3)$ operations and $\mathcal{O}(B^2)$ qubits~\cite{wang_2020}.
Using this result, message-passing BPQM exhibits a circuit depth that scales as $\mathcal{O}(nk B^3)$.

We require $(1+B)n$ qubits to store the $n$ initial messages and each of the $(n-1)$ node operations consumes a $B$-qubit register, as seen in~\cref{fig:message_passing_rules}. Next to these $(1+B)n+B(n-1)$ qubits, we also require $\max\{a_e,a_c,a_{ang}\}$ additional qubits as scratch register to realize the classical functions involved in every node operation, which scales as $\mathcal{O}(B^2)$ using the result from Wang \emph{et al.}. These qubits can be reused at every node. Therefore, the message-passing algorithm exhibits a circuit width that scales as $\mathcal{O}(B^2 + nB)=\mathcal{O}(n,B^2)$.

Again, we emphasize that the exact asympotic scaling in $B$ strongly depends on the realization of the classical computation, so no large importance should be attributed to the exact polynomial scaling factors used here. It should also be noted that the circuit depth could potentially scale more favorably for some certain families of codes. For instance, if the MPG is close to being a balanced binary tree, then the node operations can be performed in parallel. If we additionally assume a model of computation in which the classical scalar operations take constant time (or equivalently we just take $B$ to be fixed), then our circuit depth instead asymptotically scales as $\mathcal{O}(k\log n)$.

If instead we want to consider a model of computation that takes into account that $B$ has to be chosen larger to reach certain accuracies, we can directly apply~\cref{thm:error_probab}:
If we want to achieve a fixed decoding error $\epsilon$, we require $B=\mathcal{O}(n+\log\frac{1}{\epsilon})$.
Taking that into account, the resulting circuit complexities can be summarized as follows:
\begin{theorem}
\label{thm:depthwidth}
Consider the message-passing BPQM algorithm applied to a code of length $n$ such that the probability of successful decoding differs by at most $\epsilon$ from the ideal value.
The circuit depth and width satisfy
\begin{equation}
\mathrm{depth}=\mathcal{O}(n^5, \log^3\tfrac{1}{\epsilon}), \qquad \mathrm{width}=\mathcal{O}(n^2, \log^2\tfrac{1}{\epsilon}) \, .
\end{equation}
\end{theorem}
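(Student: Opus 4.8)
The plan is to read off~\cref{thm:depthwidth} as a corollary of the per-operation cost estimates developed earlier in this section together with the bound on the register size $B$ supplied by~\cref{thm:error_probab}.

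First I would collect the two structural facts already in hand. Decoding the complete codeword amounts to running the single-bit decoder, and its inverse, for $k\le n$ codeword bits, and each single-bit run consists of $n-1$ node operations; hence the circuit is built from $\mathcal{O}(n^2)$ node operations. Each node operation, as displayed in~\cref{fig:message_passing_rules,fig:uc_ustar}, decomposes into a constant number of one- and two-qubit gates (including $B$-controlled $R_y$ rotations, of depth $\mathcal{O}(B)$) together with reversible circuits computing the relevant classical scalar functions on $B$-bit registers, namely the $\ostar$ and $\boxstar_l$ angle updates and the rotation angles $\alpha,\beta$ of~\cref{eq:alpha,eq:beta}. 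Using the reversible implementation of the necessary transcendental functions due to Wang \emph{et al.}~\cite{wang_2020}, each such classical subroutine has depth $\mathcal{O}(B^3)$ and requires $\mathcal{O}(B^2)$ scratch qubits, which are reset and reused at every node. Consequently the overall circuit depth is $\mathcal{O}(n^2B^3)$, while the width is $\mathcal{O}(nB+B^2)$: $(1+B)n$ qubits hold the $n$ initial messages, one $B$-qubit angle register is permanently stored per node, and $\mathcal{O}(B^2)$ reusable scratch qubits suffice for the classical subroutines.

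Next I would invoke~\cref{thm:error_probab}, which shows that choosing $B=\mathcal{O}(n+\log\tfrac1\epsilon)$ makes the success probability of message-passing BPQM differ from the ideal BPQM value by at most $\epsilon$. Substituting this into $\mathcal{O}(n^2B^3)$ and $\mathcal{O}(nB+B^2)$, expanding the powers of $n+\log\tfrac1\epsilon$ via $(a+b)^m\le 2^m(a^m+b^m)$ for $a,b\ge 0$ and absorbing the mixed terms, gives $\mathrm{depth}=\mathcal{O}(n^5,\log^3\tfrac1\epsilon)$ and $\mathrm{width}=\mathcal{O}(n^2,\log^2\tfrac1\epsilon)$, as claimed.

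I do not anticipate a genuine obstacle here: once~\cref{thm:error_probab} and the Wang \emph{et al.} bound are granted, the statement is pure bookkeeping. The least robust ingredient — and the one I would flag explicitly in the proof — is the $\mathcal{O}(B^3)$ depth and $\mathcal{O}(B^2)$ ancilla count attributed to evaluating the transcendental functions reversibly; this is exactly what pins down the exponents $5$ and $2$, so they should be understood as artifacts of the chosen arithmetic subroutine rather than fundamental, and any faster quantum routine for the relevant trigonometric and arithmetic operations would immediately improve them.
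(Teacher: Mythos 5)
Your proposal is correct and follows essentially the same route as the paper: count the $\mathcal{O}(nk)=\mathcal{O}(n^2)$ node operations, charge each one $\mathcal{O}(B^3)$ depth and $\mathcal{O}(B^2)$ reusable scratch qubits via the Wang \emph{et al.} reversible-arithmetic construction, tally the width as $\mathcal{O}(nB+B^2)$, and substitute $B=\mathcal{O}(n+\log\tfrac1\epsilon)$ from~\cref{thm:error_probab}. Your closing caveat — that the exponents are artifacts of the chosen arithmetic subroutine — matches the paper's own remark to the same effect.
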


\subsection{Numerical results}\label{sec:mp_numerical_results}
We present some numerical results on how the number of qubits $B$ used to represent the angle-part of the messages impacts the decoder performance.
As an example, we consider a (17,11)  code for which the decoding of $X_1$ can be realized with the MPG depicted in~\cref{fig:17bitcode_mpg}.
This code was chosen because it exhibits multiple subsequent equality and check node operations. 
\Cref{fig:17bitcode_numerical_results} depicts the suboptimaility $\epsilon$ of the decoder, i.e., how much the probability of successfully decoding $X_1$ differs compared to the ideal Helstrom decoder, when applied on the all-zero codeword.
As we increase $B$, the suboptimality is seen to decrease exponentially quickly, which is in accord with the prediction from~\cref{thm:error_probab}.
However, the observed error is many orders of magnitude smaller than the values suggested by the bound in~\cref{thm:error_probab}.
The bound should mostly be considered as an analytic proof about the asymptotic scaling and does not accurately reflect the real number of qubits $B$ required to reach a certain decoding accuracy.

Note that message-passing BPQM involves more than $n(B+1)$ qubits, so a brute-force simulation of the circuit is infeasible for interesting values of $B$. Due to the structure of the algorithm, a more sophisticated simulation procedure can significantly reduce the memory requirements. For more details about the chosen linear code and the simulation procedure, we refer to~\cref{app:numerical_discretization_errors}.

\tikzset{dot/.style={fill,circle,inner sep=0.8pt}}
\tikzset{square/.style = {shape=regular polygon,regular polygon sides=4, text=black, draw=black, fill=white,inner sep=0.5mm, outer sep=0.0mm,text height=1.5ex,
    text depth=.25ex}}

\makeatletter
\tikzset{nomorepostaction/.code=\let\tikz@postactions\pgfutil@empty}
\makeatother
\tikzsetnextfilename{17bitMPG}
\begin{figure}
  \centering
  \begin{subfigure}[b]{0.60\textwidth}
  \resizebox{\textwidth}{!}{
  \begin{tikzpicture}
\def\xunit{0.875}
\def\yfirst{.875}
\def\ysecond{2*\yfirst-.1}
\def\ythird{3*\yfirst-.3}
\def\yfourth{4*\yfirst-.5}
\def\yfifth{5*\yfirst-.7}
\foreach \x in {0,1,2,...,16}{
  \node (n\x) at (\xunit*\x,0) {};
 % \node [below=0.1cm of n\x] {\x};
}
%\node[quadnode] (W17) at (n16) {$\theta_{17}$};
% \node[quadnode] (W1) at (n0) {$\theta_1$};
% \node[quadnode] (W2) at (n1) {$\theta_2$};
% \node[quadnode] (W3) at (n2) {$\theta_3$};
% \node[quadnode] (W4) at (n3) {$\theta_4$};
% \node[quadnode] (W5) at (n5) {$\theta_5$};
% \node[quadnode] (W6) at (n7) {$\theta_6$};
% \node[quadnode] (W7) at (n9) {$\theta_7$};
% \node[quadnode] (W8) at (n10) {$\theta_8$};
% \node[quadnode] (W9) at (n11) {$\theta_9$};
% \node[quadnode] (W10) at (n13) {$\theta_{10}$};
% \node[quadnode] (W11) at (n15) {$\theta_{11}$};
% \node[quadnode] (W12) at (n4) {$\theta_{12}$};
% \node[quadnode] (W13) at (n6) {$\theta_{13}$};
% \node[quadnode] (W14) at (n8) {$\theta_{14}$};
% \node[quadnode] (W15) at (n12) {$\theta_{15}$};
% \node[quadnode] (W16) at (n14) {$\theta_{16}$};
% 
\node[quadnode] (W1) at (n0) {$\theta_1$};
\node[quadnode] (W2) at (n1) {$\theta_2$};
\node[quadnode] (W3) at (n2) {$\theta_3$};
\node[quadnode] (W4) at (n3) {$\theta_4$};
\node[quadnode] (W5) at (n5) {$\theta_6$};
\node[quadnode] (W6) at (n7) {$\theta_8$};
\node[quadnode] (W7) at (n9) {$\theta_{10}$};
\node[quadnode] (W8) at (n10) {$\theta_{11}$};
\node[quadnode] (W9) at (n11) {$\theta_{12}$};
\node[quadnode] (W10) at (n13) {$\theta_{14}$};
\node[quadnode] (W11) at (n15) {$\theta_{16}$};
\node[quadnode] (W12) at (n4) {$\theta_{5}$};
\node[quadnode] (W13) at (n6) {$\theta_{7}$};
\node[quadnode] (W14) at (n8) {$\theta_{9}$};
\node[quadnode] (W15) at (n12) {$\theta_{13}$};
\node[quadnode] (W16) at (n14) {$\theta_{15}$};
\node[quadnode] (W17) at (n16) {$\theta_{17}$};

\foreach \y in {1,2,...,8}{
  \node[squarenode] (c\y) at (2*\y*\xunit-0.5*\xunit,\yfirst) {+};
  \node[dot] at (c\y.west) {};
}

\foreach \y in {1,2,3,4}{
  \node[squarenode] (e\y)  at (4*\y*\xunit-1.5*\xunit,\ysecond) {\cequal};
  \node[dot] at (e\y.west) {};
}

\foreach \y in {9,10}{
  \node[squarenode] (c\y)  at (8*\y*\xunit-67.5*\xunit,\ythird) {$+$};
  \node[dot] at (c\y.west) {};
}

\node[squarenode] (e5) at (8.5*\xunit,\yfourth) {\cequal};
\node[dot] at (e5.west) {};

\node[squarenode] (e6) at (4.25*\xunit,\yfifth) {\cequal};
\node[dot] at (e6.west) {};

% https://tex.stackexchange.com/a/5354
    \begin{scope}[
      every path/.style={postaction={nomorepostaction,decoration={markings,mark=at position 0.4 with {\arrow[>=stealth]{>}}},decorate}}
      ]
    \draw (W2) |- (c1);
     \draw (W3) |- (c1);
      \draw (W4) |- (c2);
      \draw (W12) |- (c2);
      \draw (W5) |- (c3);
      \draw (W13) |- (c3);
      \draw (W6) |- (c4);
      \draw (W14) |- (c4);
      \draw (W7) |- (c5);
      \draw (W8) |- (c5);
      \draw (W9) |- (c6);
      \draw (W15) |- (c6);
      \draw (W10) |- (c7);
      \draw (W16) |- (c7);
      \draw (W11) |- (c8);
      \draw (W17) |- (c8);
      
      \draw (c1) |- (e1);
      \draw (c2) |- (e1);
      \draw (c3) |- (e2);
      \draw (c4) |- (e2);
      \draw (c5) |- (e3);
      \draw (c6) |- (e3);
      \draw (c7) |- (e4);
      \draw (c8) |- (e4);
      
      \draw (e1) |- (c9);
      \draw (e2) |- (c9);
      \draw (e3) |- (c10);
      \draw (e4) |- (c10);
      
      \draw (c9) |- (e5);
      \draw (c10) |- (e5);
      
      \draw (W1) |- (e6);
      \draw (e5) |- (e6);
     
    \end{scope}
  \end{tikzpicture}
   }
  \caption{}
  \label{fig:17bitcode_mpg}
  \end{subfigure}
  \begin{subfigure}[b]{0.38\textwidth}
  \includegraphics[width=1.0\textwidth]{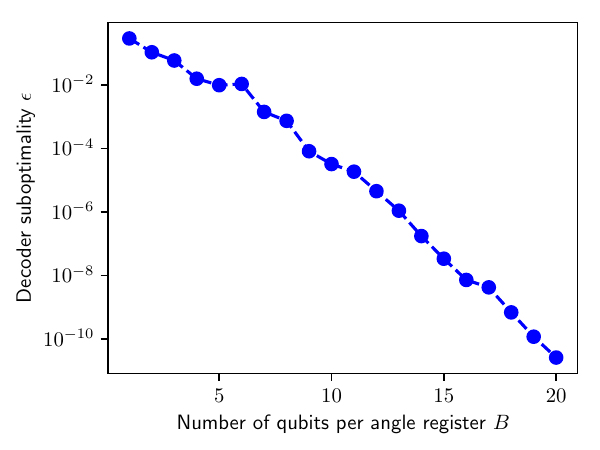}
  \caption{}
  \label{fig:17bitcode_numerical_results}
  \end{subfigure}
  \caption{(a) MPG of a (17,11) code w.r.t.\ $X_1$. The labels of the edges are not depicted. (b) Decoder suboptimality $\epsilon$ of the (17,11) code obtained through numerical simulations. The size $B$ of the angle registers is varied. The codeword to be decoded is the all-zero codeword.}
\end{figure}

\section{BPQM on non-tree graphs}\label{sec:nontree_graphs}
Classical BP is commonly formulated as a message-passing algorithm on the Tanner graph, where each variable and check node receives messages, processes them and transmits the output to its neighbors. 
When the Tanner graph in question is a tree, there is a single natural way to define this algorithm: every node waits for all inputs to be received, and only then it generates the output message. 
This results into a deterministic and optimal marginalization. 

When the Tanner graph contains cycles, however, it is no longer possible to wait for all inputs to be obtained before generating an output message.  
For this reason, one has to carefully define a schedule according to which the message processing of the nodes operates. 
For the purposes of this work, we will follow the convention from~\cite{richardson_2008}: Message passing is performed in a sequence of rounds, where each round starts with the check nodes processing their inputs and sending their outputs to the variable nodes, upon which the variable nodes process their inputs and output the result among all their edges. 
Before the first round is executed, all variable nodes simply send the messages received from the channel output to their edges.

Consider a Tanner graph containing cycles on which classical BP is executed for $h$ rounds in order to decode the codeword bit $X_r$. 
The output of the algorithm is obtained from local node operations, so we can unroll the Tanner graph into a tree of depth $h$ that exactly represents this computation. 
We refer to this as the \emph{depth-$h$ computation tree of $X_r$}. 
Consider as an example the (8,4) linear code with Tanner graph depicted in \cref{fig:8bitcode_tanner_graph}. 
The associated computation trees of $X_1$ for $h=1,2,3$ are depicted in~\cref{fig:8bitcode_h1,fig:8bitcode_h2,fig:8bitcode_h3}.
The depth-$h$ computation tree exactly represents the computations required to obtain the output of BP after $h$ time steps. 
% 
% parity-check matrix
% \begin{equation}
%   H = 
%   \begin{pmatrix}
%     1 & 1 & 0 & 0 & 1 & 0 & 0 & 0 \\
%     0 & 1 & 1 & 0 & 0 & 1 & 0 & 0 \\
%     0 & 0 & 1 & 1 & 0 & 0 & 1 & 0 \\
%     1 & 0 & 0 & 1 & 0 & 0 & 0 & 1
%   \end{pmatrix}\,,
% \end{equation}
%whose Tanner graph is depicted in~\cref{fig:8bitcode_tanner_graph}. 

\begin{figure}
  \centering
  \begin{subfigure}[b]{0.22\textwidth}
  \begin{adjustbox}{width=\textwidth}
  \tikzsetnextfilename{8bitTannerGraph}
  \begin{tikzpicture}
    \node[squarenode] (CNW) {$+$};
    \node[roundnode] (X1) [right=0.5cm of CNW] {$X_1$};
    \node[squarenode] (CNE) [right=0.5cm of X1] {$+$};
    \node[roundnode] (X4) [below=0.5cm of CNE] {$X_4$};
    \node[squarenode] (CSE) [below=0.5cm of X4] {$+$};
    \node[roundnode] (X2) [below=0.5cm of CNW] {$X_2$};
    \node[squarenode] (CSW) [below=0.5cm of X2] {$+$};
    \node[roundnode] (X3) [right=0.5cm of CSW] {$X_3$};
    \node[roundnode] (X5) [above left=0.3cm and 0.3cm of CNW] {$X_5$};
    \node[roundnode] (X6) [below left=0.3cm and 0.3cm of CSW] {$X_6$};
    \node[roundnode] (X7) [below right=0.3cm and 0.3cm of CSE] {$X_7$};
    \node[roundnode] (X8) [above right=0.3cm and 0.3cm of CNE] {$X_8$};
    \draw (CNW) to (X1);
    \draw (X1) to (CNE);
    \draw (CNE) to (X4);
    \draw (X4) to (CSE);
    \draw (CSE) to (X3);
    \draw (X3) to (CSW);
    \draw (CSW) to (X2);
    \draw (X2) to (CNW);
    \draw (CNW) to (X5);
    \draw (CSW) to (X6);
    \draw (CSE) to (X7);
    \draw (CNE) to (X8);
  \end{tikzpicture}
  \end{adjustbox}
  \caption{}
  \label{fig:8bitcode_tanner_graph}
  \end{subfigure}
  \hspace{2mm}
  \begin{subfigure}[b]{0.22\textwidth}
  \begin{adjustbox}{width=\textwidth}
  \tikzsetnextfilename{8bitcode_h1}
  \begin{tikzpicture}
    \node[roundnode] (X1) {$X_1$};
    \node[squarenode] (C1) [below left=0.3cm and 0.3cm of X1] {$+$};
    \node[squarenode] (C2) [below right=0.3cm and 0.3cm of X1] {$+$};
    \node[roundnode] (X5) [left=0.5cm of C1] {$X_5$};
    \node[roundnode] (X2) [below=0.5cm of C1] {$X_2$};
    \node[roundnode] (X8) [right=0.5cm of C2] {$X_8$};
    \node[roundnode] (X4) [below=0.5cm of C2] {$X_4$};
    \draw (X1) to (C1);
    \draw (X1) to (C2);
    \draw (C1) to (X5);
    \draw (C1) to (X2);
    \draw (C2) to (X4);
    \draw (C2) to (X8);
  \end{tikzpicture}
  \end{adjustbox}
  \caption{}
  \label{fig:8bitcode_h1}
  \end{subfigure}
  \hspace{2mm}
  \begin{subfigure}[b]{0.22\textwidth}
  \begin{adjustbox}{width=\textwidth}
  \tikzsetnextfilename{8bitcode_h2}
  \begin{tikzpicture}
    \node[roundnode] (X1) {$X_1$};
    \node[squarenode] (C1) [below left=0.3cm and 0.3cm of X1] {$+$};
    \node[squarenode] (C2) [below right=0.3cm and 0.3cm of X1] {$+$};
    \node[roundnode] (X5) [left=0.5cm of C1] {$X_5$};
    \node[roundnode] (X2) [below=0.5cm of C1] {$X_2$};
    \node[roundnode] (X8) [right=0.5cm of C2] {$X_8$};
    \node[roundnode] (X4) [below=0.5cm of C2] {$X_4$};
    \node[squarenode] (C3) [below=0.5cm of X2] {$+$};
    \node[squarenode] (C4) [below=0.5cm of X4] {$+$};
    \node[roundnode] (X6) [left=0.5cm of C3] {$X_6$};
    \node[roundnode] (X3_1) [below=0.5cm of C3] {$X_3$};
    \node[roundnode] (X3_2) [below=0.5cm of C4] {$X_3$};
    \node[roundnode] (X7) [right=0.5cm of C4] {$X_7$};
    \draw (X1) to (C1);
    \draw (X1) to (C2);
    \draw (C1) to (X5);
    \draw (C1) to (X2);
    \draw (C2) to (X4);
    \draw (C2) to (X8);
    \draw (X2) to (C3);
    \draw (X4) to (C4);
    \draw (C3) to (X6);
    \draw (C3) to (X3_1);
    \draw (C4) to (X3_2);
    \draw (C4) to (X7);
  \end{tikzpicture}
  \end{adjustbox}
  \caption{}
  \label{fig:8bitcode_h2}
  \end{subfigure}
  \hspace{2mm}
  \begin{subfigure}[b]{0.22\textwidth}
  \begin{adjustbox}{width=\textwidth}
    \tikzsetnextfilename{8bitComputationalGraph}
  \begin{tikzpicture}
    \node[roundnode] (X1) {$X_1$};
    \node[squarenode] (C1) [below left=0.3cm and 0.3cm of X1] {$+$};
    \node[squarenode] (C2) [below right=0.3cm and 0.3cm of X1] {$+$};
    \node[roundnode] (X5) [left=0.5cm of C1] {$X_5$};
    \node[roundnode] (X2_1) [below=0.5cm of C1] {$X_2$};
    \node[roundnode] (X8) [right=0.5cm of C2] {$X_8$};
    \node[roundnode] (X4_1) [below=0.5cm of C2] {$X_4$};
    \node[squarenode] (C3) [below=0.5cm of X2_1] {$+$};
    \node[squarenode] (C4) [below=0.5cm of X4_1] {$+$};
    \node[roundnode] (X6_1) [left=0.5cm of C3] {$X_6$};
    \node[roundnode] (X3_1) [below=0.5cm of C3] {$X_3$};
    \node[roundnode] (X3_2) [below=0.5cm of C4] {$X_3$};
    \node[roundnode] (X7_1) [right=0.5cm of C4] {$X_7$};
    \node[squarenode] (C5) [below=0.5cm of X3_1] {$+$};
    \node[squarenode] (C6) [below=0.5cm of X3_2] {$+$};
    \node[roundnode] (X7_2) [left=0.5cm of C5] {$X_7$};
    \node[roundnode] (X4_2) [below=0.5cm of C5] {$X_4$};
    \node[roundnode] (X2_2) [below=0.5cm of C6] {$X_2$};
    \node[roundnode] (X6_2) [right=0.5cm of C6] {$X_6$};
    \draw (X1) to (C1);
    \draw (X1) to (C2);
    \draw (C1) to (X5);
    \draw (C1) to (X2_1);
    \draw (C2) to (X4_1);
    \draw (C2) to (X8);
    \draw (X2_1) to (C3);
    \draw (X4_1) to (C4);
    \draw (C3) to (X6_1);
    \draw (C3) to (X3_1);
    \draw (C4) to (X3_2);
    \draw (C4) to (X7_1);
    \draw (X3_1) to (C5);
    \draw (X3_2) to (C6);
    \draw (C5) to (X7_2);
    \draw (C5) to (X4_2);
    \draw (C6) to (X2_2);
    \draw (C6) to (X6_2);
  \end{tikzpicture}
  \end{adjustbox}
  \caption{}
  \label{fig:8bitcode_h3}
  \end{subfigure}

  \caption{Tanner graph of the (8,4) code $\mathcal C$ and associated $X_1$ computation trees for $h=1,2,3$.}
  \label{fig:8bitcode_computationalgraph}
\end{figure}
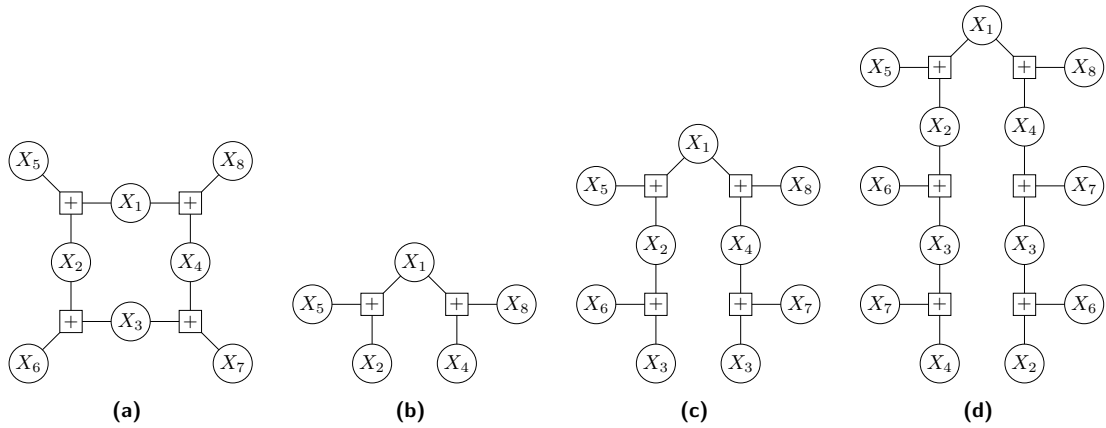

But it can also be interpreted in a different manner: As discussed in the introduction and depicted in \cref{fig:unrollingcomputational}, by associating a unique variable to each node we can consider the unrolled depth-$h$ graph to be the Tanner graph of an ($n',k'$) code, which we denote by $\mathcal{C}'$. 
In this sense, performing classical BP on the original non-tree factor graph of $\mathcal{C}$ for $h$ time steps is equivalent to performing classical BP on the tree factor graph of $\mathcal{C}'$. 
Note that for this purpose, certain bits of the noisy channel output must be duplicated, since they appear at multiple locations. 
For example, in the $h=3$ case of the 8-bit code, the variables $X_2,X_3,X_4,X_6,X_7$ all appear twice in the computation tree, so the corresponding channel outputs must be appropriately duplicated before decoding.

We translate this approach to the setting of BPQM by making use of approximate cloning both in the transformation of the actual channel output to something suitable for a $\mathcal C'$ decoder and in fixing the channel parameter inputs, i.e. angles, to the BPQM algorithm.
Let us now formalize the ideas outlined above. 
Consider the task of decoding the $n$-qubit channel output $\ket{\Psi_{\vec{x}}} := \qstate{x_1}{\theta_1}\otimes\dots\otimes\qstate{x_n}{\theta_n}$ for input $\vec x$ a codeword of an $(n,k)$ binary linear code $\mathcal{C}$. 
This can be done by choosing some new $(n',k')$ binary linear code $\mathcal{C}'$ as well as a map $\xi:\{1,\dots,n'\}\rightarrow \{1,\dots,n\}$ such that 
\begin{align}\begin{split}
  \lambda : \,\,&\mathbb{F}_2^n \rightarrow \mathbb{F}_2^{n'} \\
  &(x_1,\dots,x_n) \mapsto (x_{\xi (1)},\dots,x_{\xi (n')})
\end{split}\end{align}
maps codewords in $\mathcal C$ to codewords in $\mathcal C'$ by re-arranging, duplicating, and possibly even deleting codeword bits. 
%$\xi$ must be chosen such that $\lambda$ maps codewords in $\mathcal{C}$ to codewords in $\mathcal{C}'$. 
The map $\xi$ partially specifies how to transform the $n$-qubit channel output $\ket{\psi_{\vec{x}}}$ to an $n'$-qubit state $\rho'$. %, which we will treat as a channel output of the code $\mathcal{C}'$. 
Define $n_i:=|\{i'\in\{1,\dots,n'\} | \xi(i') = i\}|$ for $i\in\{1,\dots,n\}$. 
If $n_j=1$, then there exists exactly one $j'\in\{1,\dots,n'\}$ such that $\xi(j')=j$ and the $j'$th qubit of $\rho'$ is simply the $j$-th qubit of $\ket{\psi}$. 
If $n_j=m$ for some $m>1$, then there exist $m$ distinct indices $j_1',\dots,j_m'\in \{1,\dots,n'\}$ such that $\xi(j_l')=j$.
Then the $j$-th qubit of $\ket{\psi}$ has to be approximately cloned by an appropriate operation, resulting in $n_j$ approximate clones. 
The qubits $j_1',\dots,j_m'$ of $\rho'$ are then taken to be these clones. 
Choosing the cloning operation then fully specifies the transformation of $\ket{\psi_{\vec x}}$ to $\rho'_{\vec x}$. 
Finally, it remains to specify a choice of angle parameters $\theta'_j$ for $j=1\dots n'$, for the BPQM algorithm.\footnote{Recall that the input of the BPQM algorithm is not only a set of qubits $Y_1,\dots,Y_{n'}$ but also a set of angles $\theta'_1,\dots,\theta'_n$ that represent channel parameters.}
$\mathcal{C}'$ and $\xi$ are fully specified from the Tanner graph by the computational unrolling strategy for a given number of time steps $h$. 
There is however still freedom in the choice of approximate cloning operation as well as the channel parameters $\theta'_j$.

A very natural (but not necessarily optimal) choice of approximate cloner for $n_j=2$ is to take the adjoint equality node unitary $U_{\ostar}(\theta',\theta')^{\dagger}$ for $\theta' = \arccos\sqrt{\cos\theta}$, which maps the state $\ket{\pm \theta}\ket{0}$ to $\ket{\pm \theta'}\ket{\pm \theta'}$. 
We denote this strategy as \emph{ENU cloner} where ENU stands for equality node unitary. 
The main advantage of the ENU cloner is that it produces a product state with two qubits with known angle $\theta'$. 
It is therefore naturally clear how to choose the associated ${\theta'_j}$.
Note that the ENU cloner can also be easily generalized for $n_j\geq 3$ by mapping the state $\ket{\pm\theta}\ket{0}^{\otimes n}$ to $\ket{\pm \theta'}^{\otimes n}$ where $\theta' = \arccos(\cos(\theta)^{1/n_j})$.
One can quickly check, that this generalized cloner can be realized by a sequence of $n_j$ two-qubit unitaries $U_{\ostar}$ with appropriate angles. 

In~\cref{sec:optimal_cloner} we will consider a more refined choice for an approximate cloner.

Just as in standard BPQM, it is necessary to reverse all the decoding operations for a given codeword bit, including any cloning operations but excluding the final measurement, before proceeding to decode the next codeword bit. 
Note that this reversibility of the cloning does not imply that the cloning process must necessarily be unitary. 
By implementing a general approximate cloner through its Stinespring dilation~\cite{watrous_theory_2018}, we can revert its action at a later point by keeping track of the produced ancilla states.

\subsection{Numerical results}\label{sec:cloning_numerics}
We now present  numerical results on decoding the 8-bit code introduced above using BPQM. 
We generate the tree factor graph used for decoding by computationally unrolling around the target codeword bit for $h$ rounds. 
In cases where the unrolling is deep enough such that certain variables appear multiple times, we use the ENU to approximately clone the corresponding qubits. 

\Cref{fig:8bitcode_numerics_x1,fig:8bitcode_numerics_x5} depict the probability of successful decoding of the bits $X_1$ and $X_5$, respectively, as a function of the channel parameter $\theta$, identical for all channels. \
\Cref{fig:8bitcode_numerics_codeword} depicts the probability of successfully decoding the complete codeword. 
Included for comparison are the optimal classical bit-MAP and block-MAP (codeword) decoders, applied after performing the optimal Helstrom measurement for the CQ channel on each quantum channel output individually. 
The performance of the optimal quantum decoders can be obtained using the formulation as a semidefinite program (see e.g. \cite[\S3.1]{watrous_theory_2018}). 
In the bit-MAP case it is the Helstrom measurement of the corresponding channel from input codeword bit to quantum outputs, whereas in the codeword case it is realized by the PGM.

\begin{figure}
  \centering
  \begin{subfigure}[b]{0.32\textwidth}
     \includegraphics[width=1.0\textwidth]{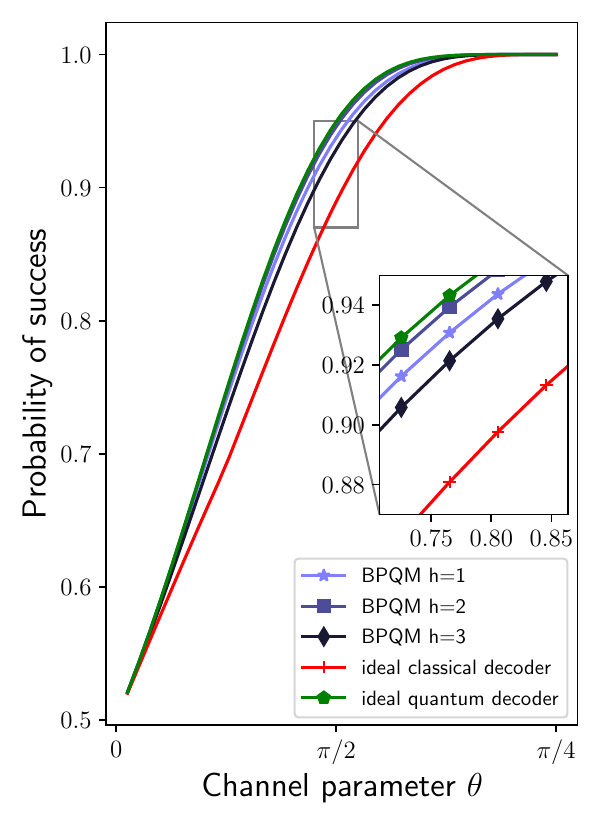}
     \caption{}
     \label{fig:8bitcode_numerics_x1}
  \end{subfigure}
  \begin{subfigure}[b]{0.32\textwidth}
     \includegraphics[width=1.0\textwidth]{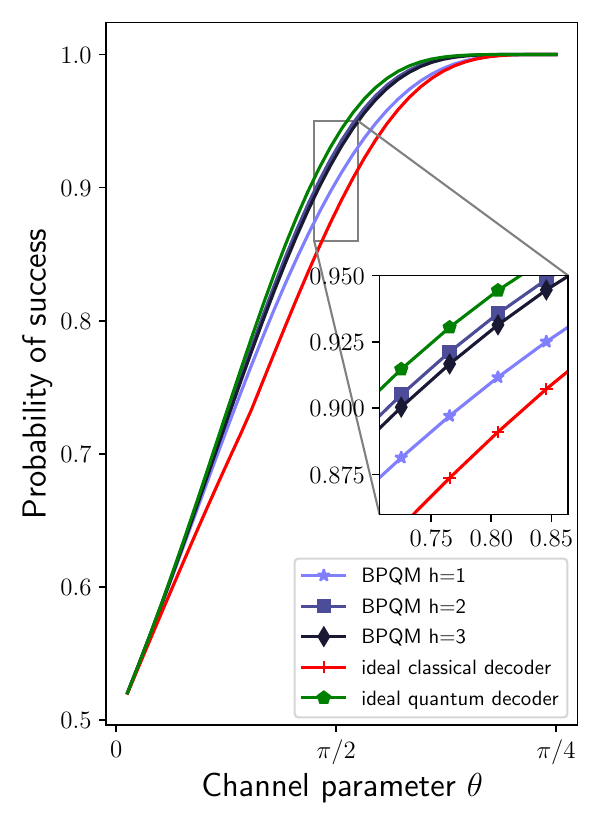}
     \caption{}
     \label{fig:8bitcode_numerics_x5}
  \end{subfigure}
  \begin{subfigure}[b]{0.32\textwidth}
     \includegraphics[width=1.0\textwidth]{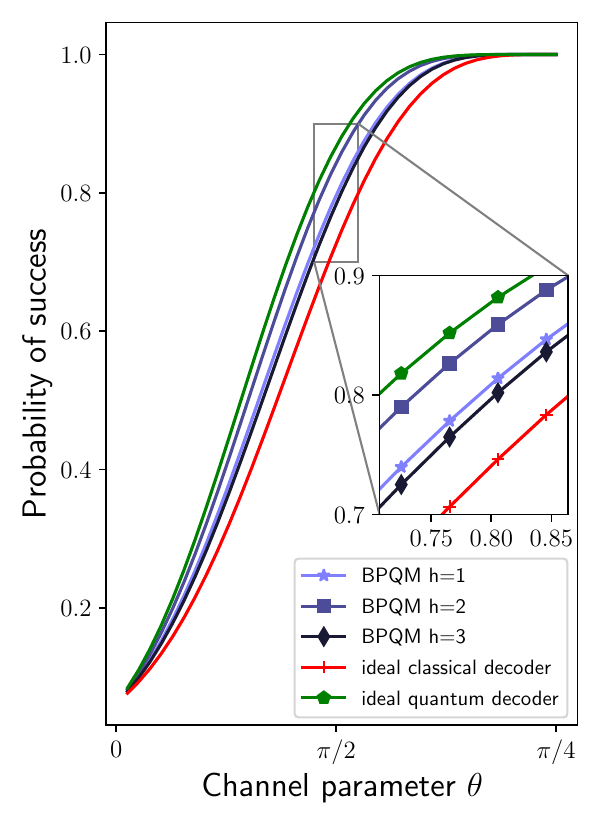}
     \caption{}
     \label{fig:8bitcode_numerics_codeword}
  \end{subfigure}
  \caption{Numerical results from decoding $X_1$ (a), $X_5$ (b) and the complete codeword (c) in the $8$-bit code depicted in~\cref{fig:8bitcode_tanner_graph}.}
  \label{fig:8bitcode_numerics}
\end{figure}

The numerics reveal that our approach can significantly outperform the best possible classical decoder in any of the analyzed cases. Furthermore we see that it is not beneficial to make the number of rounds $h$ as large as possible---at some point the benefits of considering more rounds is smaller than the drawback incurred by having to do more approximate cloning. 
This is a strongly different characteristic compared to the classical belief propagation, where  increasing the number of rounds generally improves the quality of the result.

One might wonder whether cloning is really necessary to achieve the best possible performance. 
For instance, it may be possible to simply employ BPQM defined for a spanning tree of the original Tanner graph. 
In~\cref{app:alternative_decoders} we investigate such alternative decoders and argue that it is unlikely that any such decoder can reach the same performance as the ENU+BPQM decoder.

\subsection{Optimal approximate cloner}\label{sec:optimal_cloner}
While the ENU cloner is conceptually very simple, there is a priori no reason why it should be the best choice of an approximate cloner. 
In this section we restrict ourselves to the simplest case in which the cloning operation produces just two approximate copies of the input. 
Bruß \emph{et al.}~\cite{bruss_1998} characterized the operation which maps $\qstate{0}{\theta}\ket{0}$ and $\qstate{1}{\theta}\ket{0}$ to some two-qubit states $\ket{\phi_0}$ and $ \ket{\phi_1}$, respectively, which maximizes the average global (squared) fidelity 
\begin{equation}
  \tfrac{1}{2}\left( |\qstateconj{0}{\theta}\qstateconj{0}{\theta}\cdot \ket{\phi_0}|^2 + |\qstateconj{1}{\theta}\qstateconj{1}{\theta}\cdot \ket{\phi_1}|^2  \right) \, .
\end{equation}
It turns out that the optimal cloner unitarily embeds the two input states into the span of the ideal cloned states. 
A particularly simple implementation is actually just given by $U_{\ostar}(\theta,\theta)^\dagger$, i.e.\ the ENU cloner for the ``wrong'' angle.  
% \begin{equation}
% \label{eq:optimalcloningU}
% U^{\text{opt}}=\frac1{\sqrt2}\begin{pmatrix}
% %a & 0 & 0 & -b\\ 0 & 1 & 1 & 0\\ 0 & -1 & 1 & 0\\ b & 0 & 0 & a
% a_+ & -a_- & 0 & 0\\ 0 & 0 & 1 & 1\\ 0 & 0 & 1 & -1\\ a_- & a_+ & 0 & 0
% \end{pmatrix}\,,
% \end{equation}
% for $a_\pm=\frac{1\pm f}{\sqrt{1+f^2}}$ and $f$ the fidelity (overlap) of the input states (compare with the transpose of $U_{\ostar}$ given in \cref{eq:equalitynodeU}). 
% For completeness, we give a quantum circuit decomposition of $U^{\text{opt}}$ in \Cref{app:ustar_identity}.

% In fact they show that this optimal cloner is realized by the unitary operation $(R_y(-\pi /2)\otimes R_y(-\pi /2))\cdot U\cdot (R_y(\pi/2)\otimes \id)$ where the unitary $U$ acts as
% \begin{align*}
%   U\ket{00} &= a\ket{00} + b\ket{01} + b\ket{10} + c\ket{11}\,,\\
%   U\ket{10} &= c\ket{00} + b\ket{01} + b\ket{10} + a\ket{11}\,,
% \end{align*}
% for
% \begin{align*}
%   a &= \frac{1}{2\cos 2\tilde{\theta}}\left[ \cos\tilde\theta \left( P + Q\cos 2\tilde{\theta} \right) - \sin\tilde{\theta}(P-Q\cos 2\tilde{\theta}) \right]\\
%   b &= \frac{1}{2\cos 2\tilde{\theta}} P \sin 2\tilde{\theta} (\cos\tilde{\theta} - \sin\tilde{\theta})\\
%   c &= \frac{1}{2\cos 2\tilde{\theta}}\left[ \cos\tilde\theta \left( P - Q\cos 2\tilde{\theta} \right) - \sin\tilde{\theta}(P+Q\cos 2\tilde{\theta}) \right]\\
%   P &= \frac{1}{2}\sqrt{\frac{1+\sin 2\tilde{\theta}}{1+\sin^2 2\tilde{\theta}}} \, ,\\
%   Q &= \frac{1}{2}\sqrt{\frac{1-\sin 2\tilde{\theta}}{1-\sin^2 2\tilde{\theta}}} \, ,\\
%   \tilde{\theta} &= \frac{\pi}{4} - \frac{1}{2}\theta\,.
% \end{align*}

When using this unitary to clone the channel outputs, it is not immediately clear what angle parameter to assign to the associated channels in the BPQM algorithm. 
We address this problem numerically. 
Consider decoding $X_1$ in the 8-bit code using $h=3$ for a fixed channel parameter $\theta$. 
Several of the qubits have to be cloned to produce two approximate clones. 
Numerical simulation of the decoding success probability is shown in~\cref{fig:optimal_cloner} for a range of channel parameters $\theta'$ used in the BPQM algorithm for the approximate clones. 
Interestingly, the optimal decoder can outperform the ENU cloner if $\theta'$ is chosen suitably. Furthermore, it would seem that the optimal value of $\theta'$ angle is very close to the value of $\arccos\sqrt{\cos \theta}$, the value used by the ENU cloner.

\begin{figure}
  \centering
  \includegraphics[width=0.8\textwidth]{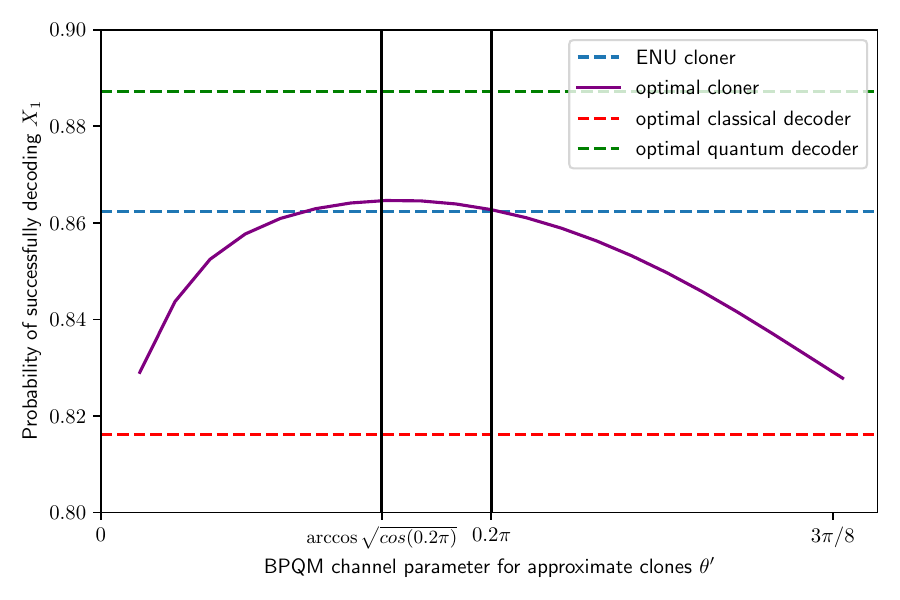}
  \caption{Decoding performance of optimal cloner on the 8-bit code with number of rounds $h=3$. The angle $\theta'$ is varied. The channel parameter is fixed at $\theta=0.2\pi$.}
  \label{fig:optimal_cloner}
\end{figure}

\section{Future directions}\label{sec:open_questions}
In this section we explore remaining open questions as well as possible avenues for future research directions related to the BPQM algorithm.
We should note that our focus here is on theoretical aspects of the algorithm, and we do not address the also interesting issues of hardware implementations. 
For more discussion of such issues, including the prospects for surpassing the performance of decoders which first measure the output qubits individually, see~\cite{rengaswamy_2020}.

\subsection{Potential improvements and extensions}
An immediate question is how well BPQM fares when decoding LDPC codes, the usual application of BP decoding in the classical case. 
It might even be possible to show that LDPC codes paired with BPQM decoding can achieve the Holevo capacity, particularly the ``spatially-coupled'' variant which achieves the classical capacity~\cite{kudekar_2013}. 
A central question to be answered for this is how much the approximate cloning deteriorates the quality of the decoder for these families of codes.

Another aspect to be analyzed when choosing appropriate codes is the speed of the decoder. 
Assuming a fixed accuracy $B$, we have shown in~\cref{sec:complexity} that the decoding circuit depth scales as $\mathcal{O}(kn)$ at worst for an arbitrary $(n,k)$ binary linear code. 
However, if the topology of the MPG allows us to parallelize some of the node operations, the circuit depth could be reduced up to $\mathcal{O}(k\log n)$ in the best case. 
It would therefore be interesting to study which families of codes would allow such decoding in quasilinear time.

There is also another potential optimization to reduce the quantum circuit size, which we haven't explored in this work.
In some cases it might be possible that certain operations in subsequent circuit parts cancel out.
For instance, the final node operations in \smash{$V_j^{\dagger}$} might cancel with the first node operations in \smash{$V_{j+1}$} for some $j\in\{1,\dots,k-1\}$.
To make use of this optimization, the order in which we decode the codeword bits suddenly becomes relevant.
It would be interesting to study in future work what families of codes could benefit most from this optimization.

It should also be noted that the analysis of the discretization errors in this work includes many crude bounds. It cannot be excluded that a more refined analysis and/or some tweaks to the message-passing algorithm itself might exhibit a more favorable scaling, for instance the required $B$ to reach a fixed accuracy $\epsilon$ might scale sublinearly in $n$.

Generally, given some code $\mathcal{C}$ with non-tree Tanner graph, there is considerable freedom in how exactly one decodes that code with BPQM. 
As explained in~\cref{sec:nontree_graphs}, one has to make some choice of $(n',k')$ code $\mathcal C'$ with tree Tanner graph, a map $\xi:\{1,\dots,n'\}\rightarrow\{1,\dots,n\}$ that determines how to map codewords from $\mathcal{C}$ to $\mathcal{C'}$, some choice of channel parameters $\theta'_j$ as well as an approximate cloning procedure. 
It is not clear what the best choice is, and in fact the optimal strategy is likely to vary strongly depending on the considered code. 
For instance, computational unrolling is unlikely to yield good results when the Tanner graph contains many small cycles. 

Furthermore, it should be noted that our framework is unlikely to be the most general approach to deal with Tanner graphs that contain cycles.
The approximate cloning procedure is executed purely before the message-passing on the MPG generated by $\mathcal{C}'$.
However, it is plausible that the approximate cloning could also be performed interleaved with the message-passing operations, e.g., the output of some node operation could be cloned as to be passed to two successors insted of just one.
In this case, our picture with the MPG would not be applicable anymore, since the data flow caould not be described by a tree.

Finally, it would be very interesting to extend BPQM to more general settings. 
One possibility would be to study general (non-pure) CQ channels. 
In that case, we expect BPQM to generally lose its block-optimality, since the noisy channel outputs generally do not form a geometrically uniform set (which was a crucial ingredient in~\cref{sec:optimality}). 
This would not be too surprising, as classical BP generally also does not perform block-optimal decoding, but only bitwise optimal decoding (assuming a code with tree Tanner graph).\footnote{Note that in the classical setting, the block-optimal solution could instead be obtained using the max-product or min-sum algorithm.}
As a further step, it would also be interesting to see if BPQM could even be generalized to decode quantum information transmitted over some form of a quantum channel.

\subsection{MPGs not from the Tanner graph}
The BPQM decoding algorithm is constructed using a tree MPG describing the codewords of the code, which we have shown can be constructed from a tree Tanner graph. 
However, it is not necessary to have a tree Tanner graph to have a tree MPG representation. 
For instance, the (8,5) code associated with the 8 channel leaves $\theta_2$ through $\theta_9$ in \cref{fig:17bitcode_mpg} has three parity-checks, leading to 168 possible parity-check matrices, none of which has a tree Tanner graph. 

Moreover, MPGs need not come from the Tanner graph. 
An important instance comes from successive cancellation decoding of Reed-Muller or polar codes~\cite{arikan_channel_2009}, as mentioned in \cite{renes_2017}. 
Let us describe how to apply BPQM in this case more precisely. 
Successive cancellation decoding proceeds by sequential decoding the individual bits of the message (not codeword) as input to the encoding circuit. 
For each message bit, the encoding circuit itself defines a ``synthesized'' channel from that bit to all outputs of the noisy channel, as well as all preceding message inputs. 
Again the goal is to implement the Helstrom measurement to distinguish the two possible channel outputs. 
In the case of Reed-Muller and polar codes, the synthetic channels have a perfect binary tree structure consisting of repeated channel convolution of two different types, which we might call ``better'' and ``worse''. 
These are very closely related to the check and variable node channel convolutions depicted in \cref{eq:checknodeffg,eq:equalitynodeFFG}. 

Indeed, the worse channel convolution is precisely the check convolution of \cref{eq:checknodeffg}. 
The better convolution is ultimately equivalent to the equality convolution for symmetric channels such as the pure state output channel under consideration here.
The better channel convolution of two channels $W_1$ and $W_2$ is $W'(x)=\sum_{y\in \mathbb Z_2} \ketbra{y}\otimes W_1(x)\otimes W_2(x\oplus y)$. 
When $W_2$ is symmetric in the sense that $W_2(x\oplus y)=U_y W_2(x) U_y^\dagger$ for an appropriate unitary $U_y$, then a control-$U_y$ operation from the classical output $\ketbra y$ transforms the output to $\tfrac12\id\otimes [W_1\ostar W_2](x)$, the equality node convolution. 

Due to the structure of the encoding circuit for Reed-Muller and polar codes, the additional classical outputs $\ketbra y$ arising from the worse convolution are precisely the preceding input message bits of the synthesized channels. 
The output of the synthesized channels corresponding to the preceding already-decoded input messages can therefore be decoupled from the outputs of the noisy channel, at which point the decoding task is precisely to perform the Helstrom measurement to distinguish the two outputs of a channel described by a perfect tree MPG. Hence BPQM can be used for this purpose.  

For a general CQ channel, an efficient implementation of the successive cancellation decoder is not known~\cite{wilde_polar_2013,renes_polar_2014}. 
Guha and Wilde examine the pure state case considered here in \cite{guha_polar_2012}, but also do not construct an efficient decoder. 
In \cite{renes_2017} it is claimed that BPQM leads to an efficient decoding algorithm for this case, but as we have seen above, this is more correctly a statement about the message-passing approximation of BPQM instead. 
Nevertheless, since the synthetic channels ``polarize'', i.e., become reliable in the large blocklength limit, and BPQM implements the optimal Helstrom measurement, the argument about the total error in decoding the entire message presented in \cite{renes_2017} goes through, ensuring that the resulting BPQM successive cancellation decoder achieves capacity for the pure state channel.

In light of the block optimality of BPQM, it would be interesting to investigate whether BPQM gives an optimal decoder for polar or Reed-Muller codes and pure state channels. 
However, one major difference to the MPGs generated from the Tanner graph is that each message bit requires a completely different MPG, not one obtained by moving the degree-two root node to the appropriate location on a fixed graph. 
Another case worth examination is if BPQM can be applied to decoding of turbo codes using the associated state FFG~\cite[Chapter 6]{richardson_2008}. 

\bigskip
\noindent{\bfseries Acknowledgments}\\
This work was supported by the Swiss National Science Foundation, through the National Center of Competence in Research ``Quantum Science and Technology'' (QSIT) and through grant number CRSII5\_186364.
The authors are grateful to Narayanan Rengaswamy, Henry D.\ Pfister, and Andru Gheorghiu for useful discussions.

\appendix

\section{Example of factor graph contraction on 5-bit code}\label{app:step_by_step}
In this section we illustrate step-by-step the contraction process described in~\cref{fig:5bitcode_contraction} on the 5-bit code to decode the codeword bit $X_1$.
The red dashed box indicates where the contraction takes place at each intermediate step.
\begin{center}
\tikzsetnextfilename{ForneyExample1}
\begin{tikzpicture}
\pgfdeclarelayer{bg}
\pgfsetlayers{bg,main}

\node[squarenode] (eq1) {\cequal};
\node[squarenode] (eq2) [above right=0.2cm and 0.3cm of eq1] {\cequal};
\node[squarenode] (ch1) [above right=0.05cm and 0.3cm of eq2] {$+$};
\node[squarenode] (ch2) [below right=0.05cm and 0.3cm of eq2] {$+$};
\node[quadnode] (W5) [above right=-0.1cm and 0.3cm of ch1] {$W[\theta_5]$};
\node[quadnode] (W3) [below=0.15cm of W5] {$W[\theta_3]$};
\node[quadnode] (W4) [below=0.15cm of W3] {$W[\theta_4]$};
\node[quadnode] (W2) [below=0.15cm of W4] {$W[\theta_2]$};
\node[quadnode] (W1) [below=0.15cm of W2] {$W[\theta_1]$};
\node[quadnode, minimum width=0.5cm, minimum height=1.0cm] (CNOT1) [right=1.45cm of ch1] {};
\node[phase] (CNOT1control) [right=0.29cm of W5] {};
\node[circlewc] (CNOT1target) [below=0.33cm of CNOT1control] {};
\node[quadnode, minimum width=0.5cm, minimum height=1.0cm] (CNOT2) [below=0.2cm of CNOT1] {};
\node[phase] (CNOT2control) [right=0.29cm of W4] {};
\node[circlewc] (CNOT2target) [below=0.33cm of CNOT2control] {};
\node[quadnode] (U1) [right=2.4cm of ch1, minimum height=1.0cm] {\scalebox{0.7}{$U_{\ostar}(\theta_5\boxstar_i\theta_3,\theta_4\boxstar_j\theta_2)$}};
\node[quadnode] (U2) [right=0.4cm of U1, minimum height=1.0cm] {\scalebox{0.7}{$U_{\ostar}((\theta_5\boxstar_i\theta_3)\ostar(\theta_4\boxstar_j\theta_2),\theta_1)$}};
\node[squarenode] (control1i) [right=1.5cm of W4] {$\bullet$};
\node[squarenode] (control1j) [right=2.5cm of W2] {$\bullet$};
\node[squarenode] (control2i) [right=4.5cm of W4] {$\bullet$};
\node[squarenode] (control2j) [right=5.5cm of W2] {$\bullet$};

\draw[thick] (CNOT1control.south) to (CNOT1target.north);
\draw[thick] (CNOT2control.south) to (CNOT2target.north);
\draw (eq1.east) to[out=0, in=180] (eq2.west);
\draw (eq2.east) to[out=0, in=180] (ch1.west);
\draw (eq2.east) to[out=0, in=180] (ch2.west);
\draw (ch1.east) to[out=0, in=180] (W5.west);
\draw (ch1.east) to[out=0, in=180] (W3.west);
\draw (ch2.east) to[out=0, in=180] (W4.west);
\draw (ch2.east) to[out=0, in=180] (W2.west);
\draw (eq1.east) to[out=0, in=180] (W1.west);
\draw (eq1.west) -- node[above] {$X_1$} ++(-0.5cm,0);

\draw[draw=red,dashed] (ch1.south west)+(-0.1,-0.4) rectangle ++(2.4,0.8);

% We have to put the double dashed lines in a background layer, otherwise they look ugly
\begin{pgfonlayer}{bg}
  % note: distance between lanes is 0.6cm
  \draw [style=double] (W5.east) to ++(7.8cm,0) node[above] {$Q_1$};
  \draw [style=double] (W3.east) to ++(0.8cm,0) to ++(0.2cm,-0.6cm) to ++(6.8cm,0) node[above] {$Q_3$};
  \draw [style=double] (W4.east) to ++(0.8cm,0) to ++(0.2cm,0.6cm) to ++(2.55cm,0) to ++(0.3cm,-1.8cm) to ++(3.95cm,0) node[above] {$Q_5$};
  \draw [style=double] (W2.east) to ++(7.8cm,0) node[above] {$Q_4$};
  \draw [style=double] (W1.east) to ++(3.55cm,0) to ++(0.3cm,1.8cm) to ++(3.95cm,0) node[above] {$Q_2$};
  \draw [style=double] (control1i.north) to node[below right=-0.15cm and 0.1cm] {$I$} ++(0,0.3cm);
  \draw [style=double] (control1j.north) to node[below right=-0.4cm and 0.0cm] {$J$} ++(0,1.0cm);
  \draw [style=double] (control2i.north) to node[below right=-0.15cm and 0.1cm] {$I$} ++(0,0.3cm);
  \draw [style=double] (control2j.north) to node[below right=-0.4cm and 0.0cm] {$J$} ++(0,1.0cm);
\end{pgfonlayer}
\end{tikzpicture}
\end{center}

\begin{center}
\tikzsetnextfilename{ForneyExample2}
\begin{tikzpicture}
\pgfdeclarelayer{bg}
\pgfsetlayers{bg,main}

\node[squarenode] (eq1) {\cequal};
\node (equiv) [left=1cm of eq1] {$\cong$};
\node[squarenode] (eq2) [above right=0.2cm and 0.3cm of eq1] {\cequal};
\node[squarenode,draw=none,text opacity=0] (ch1) [above right=0.05cm and 0.3cm of eq2] {$+$};
\node[squarenode] (ch2) [below right=0.05cm and 0.3cm of eq2] {$+$};
\node[quadnode] (W5) [above right=-0.1cm and 0.3cm of ch1] {$W[\theta_5\boxstar_i\theta_3]$};
\node[quadnode,draw=none,opacity=0,text opacity=0] (W3) [below=0.15cm of W5] {$W[\theta_3]$};
\node[quadnode] (W4) [below left=0.15cm and -0.5cm of W3] {$W[\theta_4]$};
\node[quadnode] (W2) [below=0.15cm of W4] {$W[\theta_2]$};
\node[quadnode] (W1) [below=0.15cm of W2] {$W[\theta_1]$};
\node[quadnode, minimum width=0.5cm, minimum height=1.0cm] (CNOT2) [below right=-0.4cm and 0.2cm of W4] {};
\node[phase] (CNOT2control) [right=0.35cm of W4] {};
\node[circlewc] (CNOT2target) [below=0.33cm of CNOT2control] {};
\node[quadnode] (U1) [right=2.4cm of ch1, minimum height=1.0cm] {\scalebox{0.7}{$U_{\ostar}(\theta_5\boxstar_i\theta_3,\theta_4\boxstar_j\theta_2)$}};
\node[quadnode] (U2) [right=0.4cm of U1, minimum height=1.0cm] {\scalebox{0.7}{$U_{\ostar}((\theta_5\boxstar_i\theta_3)\ostar(\theta_4\boxstar_j\theta_2),\theta_1)$}};
\node[squarenode] (control1i) [right=1.5cm of W4] {$\bullet$};
\node[squarenode] (control1j) [right=2.5cm of W2] {$\bullet$};
\node[squarenode] (control2i) [right=4.5cm of W4] {$\bullet$};
\node[squarenode] (control2j) [right=5.5cm of W2] {$\bullet$};
\node[squarenode] (eqi) [below=0.2cm of W5] {\cequal};
\node[quadnode] (Pi) [left=0.2cm of eqi] {$P_{I}$};

\draw[thick] (CNOT2control.south) to (CNOT2target.north);
\draw (eq1.east) to[out=0, in=180] (eq2.west);
\draw (eq2.east) to[out=0, in=180] (ch2.west);
\draw (eq2.east) to[out=0, in=180] (W5.west);
\draw (ch2.east) to[out=0, in=180] (W4.west);
\draw (ch2.east) to[out=0, in=180] (W2.west);
\draw (eq1.east) to[out=0, in=180] (W1.west);
\draw (eq1.west) -- node[above] {$X_1$} ++(-0.5cm,0);
\draw (Pi) to (eqi);
\draw (eqi) to (W5);

\draw[draw=red,dashed] (ch2.south west)+(-0.1,-0.7) rectangle ++(2.4,0.5);

% We have to put the double dashed lines in a background layer, otherwise they look ugly
\begin{pgfonlayer}{bg}
  % note: distance between lanes is 0.6cm
  \draw [style=double] (W5.east) to ++(7.0cm,0) node[above] {$Q_1$};
  \draw [style=double] (eqi.east) to ++(0.6cm,0) to ++(0.2cm,-0.6cm) to ++(7.0cm,0) node[above] {$Q_3$};
  \draw [style=double] (W4.east) to ++(0.8cm,0) to ++(0.2cm,0.6cm) to ++(2.65cm,0) to ++(0.3cm,-1.8cm) to ++(4.05cm,0) node[above] {$Q_5$};
  \draw [style=double] (W2.east) to ++(8.0cm,0) node[above] {$Q_4$};
  \draw [style=double] (W1.east) to ++(3.65cm,0) to ++(0.3cm,1.8cm) to ++(4.05cm,0) node[above] {$Q_2$};
  \draw [style=double] (control1i.north) to node[below right=-0.15cm and 0.1cm] {$I$} ++(0,0.3cm);
  \draw [style=double] (control1j.north) to node[below right=-0.4cm and 0.0cm] {$J$} ++(0,1.0cm);
  \draw [style=double] (control2i.north) to node[below right=-0.15cm and 0.1cm] {$I$} ++(0,0.3cm);
  \draw [style=double] (control2j.north) to node[below right=-0.4cm and 0.0cm] {$J$} ++(0,1.0cm);
\end{pgfonlayer}
\end{tikzpicture}
\end{center}

\begin{center}
\tikzsetnextfilename{ForneyExample3}
\begin{tikzpicture}
\pgfdeclarelayer{bg}
\pgfsetlayers{bg,main}

\node[squarenode] (eq1) {\cequal};
\node (equiv) [left=1cm of eq1] {$\cong$};
\node[squarenode] (eq2) [above right=0.2cm and 0.3cm of eq1] {\cequal};
\node[squarenode,draw=none,text opacity=0] (ch1) [above right=0.05cm and 0.3cm of eq2] {$+$};
\node[squarenode,draw=none,text opacity=0] (ch2) [below right=0.05cm and 0.3cm of eq2] {$+$};
\node[quadnode] (W5) [above right=-0.1cm and 0.3cm of ch1] {$W[\theta_5\boxstar_i\theta_3]$};
\node[quadnode,draw=none,opacity=0,text opacity=0] (W3) [below=0.15cm of W5] {$W[\theta_5]$};
\node[quadnode] (W4) [below=0.755cm of W5] {$W[\theta_4\boxstar_j\theta_2]$};
\node[quadnode,draw=none,opacity=0,text opacity=0] (W2) [below=0.15cm of W4] {$W[\theta_2]$};
\node[quadnode] (W1) [below=0.15cm of W2] {$W[\theta_1]$};
\node[quadnode] (U1) [right=2.7cm of ch1, minimum height=1.0cm] {\scalebox{0.7}{$U_{\ostar}(\theta_5\boxstar_i\theta_3,\theta_4\boxstar_j\theta_2)$}};
\node[quadnode] (U2) [right=0.4cm of U1, minimum height=1.0cm] {\scalebox{0.7}{$U_{\ostar}((\theta_5\boxstar_i\theta_3)\ostar(\theta_4\boxstar_j\theta_2),\theta_1)$}};
\node[squarenode] (control1i) [right=1.0cm of W4] {$\bullet$};
\node[squarenode] (control1j) [right=2.0cm of W2] {$\bullet$};
\node[squarenode] (control2i) [right=4.0cm of W4] {$\bullet$};
\node[squarenode] (control2j) [right=5.0cm of W2] {$\bullet$};
\node[squarenode] (eqi) [below=0.2cm of W5] {\cequal};
\node[quadnode] (Pi) [left=0.2cm of eqi] {$P_{I}$};
\node[squarenode] (eqj) [below=0.2cm of W4] {\cequal};
\node[quadnode] (Pj) [left=0.2cm of eqj] {$P_{J}$};

\draw (eq1.east) to[out=0, in=180] (eq2.west);
\draw (eq2.east) to[out=0, in=180] (W4.west);
\draw (eq2.east) to[out=0, in=180] (W5.west);
\draw (eq1.east) to[out=0, in=180] (W1.west);
\draw (eq1.west) -- node[above] {$X_1$} ++(-0.5cm,0);
\draw (Pi) to (eqi);
\draw (eqi) to (W5);
\draw (Pj) to (eqj);
\draw (eqj) to (W4);

\draw[draw=red,dashed] (eq2.south west)+(-0.1,-1.2) rectangle ++(6.3,1.25);

% We have to put the double dashed lines in a background layer, otherwise they look ugly
\begin{pgfonlayer}{bg}
  % note: distance between lanes is 0.6cm
  \draw [style=double] (W5.east) to ++(7.3cm,0) node[above] {$Q_1$};
  \draw [style=double] (eqi.east) to ++(0.9cm,0) to ++(0.2cm,-0.6cm) to ++(7.0cm,0) node[above] {$Q_3$};
  \draw [style=double] (W4.east) to ++(0.1cm,0) to ++(0.2cm,0.6cm) to ++(2.65cm,0) to ++(0.3cm,-1.8cm) to ++(4.05cm,0) node[above] {$Q_5$};
  \draw [style=double] (eqj.east) to ++(8.1cm,0) node[above] {$Q_4$};
  \draw [style=double] (W1.east) to ++(3.4cm,0) to ++(0.3cm,1.8cm) to ++(4.05cm,0) node[above] {$Q_2$};
  \draw [style=double] (control1i.north) to node[below right=-0.15cm and 0.1cm] {$I$} ++(0,0.3cm);
  \draw [style=double] (control1j.north) to node[below right=-0.4cm and 0.0cm] {$J$} ++(0,1.0cm);
  \draw [style=double] (control2i.north) to node[below right=-0.15cm and 0.1cm] {$I$} ++(0,0.3cm);
  \draw [style=double] (control2j.north) to node[below right=-0.4cm and 0.0cm] {$J$} ++(0,1.0cm);
\end{pgfonlayer}
\end{tikzpicture}
\end{center}

\begin{center}
\tikzsetnextfilename{ForneyExample4}
\begin{tikzpicture}
  \pgfdeclarelayer{bg}
  \pgfsetlayers{bg,main}
    \node[squarenode] (eq1) {\cequal};
    \node (equiv) [left=1cm of eq1] {$\cong$};
    \node[quadnode] (W5) [above right=0.5cm and 1.0cm of eq1] {$W[(\theta_5\boxstar_i\theta_3)\ostar(\theta_4\boxstar_j\theta_2)]$};
    \node[squarenode] (eqirhs) [below left=0.5cm and 0.2cm of W5.south] {\cequal};
    \node[squarenode] (eqjrhs) [below right=1.0cm and 0.2cm of W5.south] {\cequal};
    \node[quadnode] (Pirhs) [left=0.4cm of eqirhs] {$P_{I}$};
    \node[quadnode] (Pjrhs) [left=1.15cm of eqjrhs] {$P_{J}$};
    \node[quadnode] (W1) [below=1.4cm of W5.south] {$W[\theta_1]$};
    \node[quadnode] (zero) [below=1.9cm of W5.south] {$\ket{0}\bra{0}$};
    \node[quadnode] (U2) [below right=-0.45cm and 0.4cm of W5, minimum height=0.8cm] {\scalebox{0.7}{$U_{\ostar}((\theta_5\boxstar_i\theta_3)\ostar(\theta_4\boxstar_j\theta_2),\theta_1)$}};
    \node[squarenode] (control2i) [right=4.0cm of Pirhs] {$\bullet$};
    \node[squarenode] (control2j) [right=5.0cm of Pjrhs] {$\bullet$};

    \draw (eq1.west) to node[above] {$X_1$} ++(-0.5cm,0);
    \draw (eq1.east) to[out=0, in=180] (W5.west);
    \draw (eq1.east) to[out=0, in=180] (W1.west);
    \draw (Pirhs) to (eqirhs);
    \draw (Pjrhs) to (eqjrhs);
    
    \draw[draw=red,dashed] (eq1.south west)+(-0.05,-1.0) rectangle ++(9.5,1.45);
    
    \begin{pgfonlayer}{bg}
      \draw (eqirhs) to node[right] {$I$} +(0,0.7cm);
      \draw (eqjrhs) to node[above right] {$J$} ++(0,1.2cm);
      \draw [style=double] (W5) to ++(6.3cm,0) node[above] {$Q_1$};
      \draw [style=double] (eqirhs) to ++(6.7cm,0) node[above] {$Q_3$};
      \draw [style=double] (eqjrhs) to ++(5.9cm,0) node[above] {$Q_4$};
      \draw [style=double] (W1) to ++(1.8cm,0) to ++(0.2cm,1.4cm) to ++(4.3cm,0) node[above] {$Q_2$};
      \draw [style=double] (zero) to ++(6.3cm,0) node[above] {$Q_5$};
      \draw [style=double] (control2i.north) to node[below right=-0.15cm and 0.1cm] {$I$} ++(0,0.3cm);
      \draw [style=double] (control2j.north) to node[below right=-0.3cm and 0.0cm] {$J$} ++(0,1.0cm);
    \end{pgfonlayer}

\end{tikzpicture}
\end{center}

\begin{center}
\tikzsetnextfilename{ForneyExample5}
\begin{tikzpicture}
  \pgfdeclarelayer{bg}
  \pgfsetlayers{bg,main}
    \node[quadnode] (Wrhs) {$W[(\theta_5\boxstar_i\theta_3)\ostar(\theta_4\boxstar_j\theta_2)\ostar\theta_1]$};
    \node[squarenode] (eqirhs) [below left=0.3cm and 0.2cm of Wrhs.south] {\cequal};
    \node[squarenode] (eqjrhs) [below right=0.8cm and 0.2cm of Wrhs.south] {\cequal};
    \node[quadnode] (Pirhs) [left=0.4cm of eqirhs] {$P_{I}$};
    \node[quadnode] (Pjrhs) [left=1.15cm of eqjrhs] {$P_{J}$};
    \node[quadnode] (zero1rhs) [below=1.3cm of Wrhs.south] {$\ket{0}\bra{0}$};
    \node[quadnode] (zero2rhs) [below=1.8cm of Wrhs.south] {$\ket{0}\bra{0}$};
    \node (equiv) [left=1cm of Pjrhs] {$\cong$};

    \draw (Wrhs.west) to node[above] {$X_1$} ++(-0.5cm,0);
    \draw (Pirhs) to (eqirhs);
    \draw (eqirhs) to node[right] {$I$} +(0,0.5cm);
    \draw (Pjrhs) to (eqjrhs);
    \draw (eqjrhs) to node[above right] {$J$} ++(0,1.0cm);
    \begin{pgfonlayer}{bg}
      \draw [style=double] (Wrhs) to ++(3.0cm,0) node[above] {$Q_1$};
      \draw [style=double] (eqirhs) to ++(3.4cm,0) node[above] {$Q_3$};
      \draw [style=double] (eqjrhs) to ++(2.6cm,0) node[above] {$Q_4$};
      \draw [style=double] (zero1rhs) to ++(3.0cm,0) node[above] {$Q_2$};
      \draw [style=double] (zero2rhs) to ++(3.0cm,0) node[above] {$Q_5$};
    \end{pgfonlayer}

\end{tikzpicture}
\end{center}

\section{Proof of\texorpdfstring{~\cref{lem:purestate_evolution}}{ Lemma 4.4}}
\label{app:proof_optimality}
We first show~\cref{eq:compressed_state}, by giving the form of the state at all the intermediate steps in the algorithm.  
Then we show~\cref{eq:orthogonality}. 

For this proof, we have to again consider the factor graph from which the MPG was originally derived.
Every edge $e$ in the factor graph corresponds to some binary random variable, which is uniquely determined when given the values of $X_1,\dots,X_k$.
Since the factor graph only contains equality and check nodes, the random variable in question must be of the form $g_{e,1}X_1 + g_{e,2}X_2 + \dots + g_{e,k}X_k$ for some $\vec{g}_e\in\mathbb{F}_2^k$.
Since the value $\vec{d}:=(x_1,\dots,x_k)$ of $(X_1,\dots,X_k)$ is given in the setting of~\cref{lem:purestate_evolution}, the value $z_e$ of the random variable associated to $e$ is known to be $z_e=g_{e_1}x_1+\dots+g_{e_k}x_k$.

Furthremore, to slightly simplify the notation for this proof, we extend the MPG by adding a half-edge to the root, with the direction pointing away from the root.
The final data qubit produced by the root node is then said to be passed over that half-edge.

\begin{lemma}
\label{lem:stronger44}
Consider an MPG $G$ for the binary linear code $\mathcal C$ with respect to the codeword bit $X_r$, for $r\in 1,\dots,k$.
Then the joint state of the qubit passed by BPQM over any edge $e$ together with all ancilla qubits produced by check and variable nodes preceding $e$ can be written as
\begin{align}\label{eq:purestate_evol_intermediate}
  \ket{\Xi(e)}=\sum\limits_{\vec{j}\in\{0,1\}^{k_e-1}} c_{e,\vec{j}}(\vec{d}) \sqrt{p_{e,\vec{j}}} \qstate{z_e}{\theta_{e,\vec{j}}}_{D_e} \otimes \ket{\vec{j}}_{A_e} \otimes \ket{0^{n_e-k_e}}_{Z_e}\,,
\end{align}
where $n_e$ and $k_e-1$ denote the number of channel and check nodes preceding $e$, respectively, $c_{e,\vec{j}}(\vec{d})$ are some maps $\mathbb{F}_2^k\rightarrow\{+1,-1\}$, and the probabilities and angles $\{(\theta_{e,\vec{j}},p_{e,\vec{j}}) | \vec{j}\in\{0,1\}^{k_e-1}\}$ correspond exactly to the probability and angle entries of the branch list of the node directly preceding $e$ in $G$.
Furthermore,  $D_e$ denotes the qubit system transmitted over the edge $e$, $A_e$ denotes all ancilla qubits produced by check nodes preceding $e$, and $Z_e$ denotes all ancilla qubits produced by equality nodes preceding $e$.
\end{lemma}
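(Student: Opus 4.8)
The plan is to prove \cref{lem:stronger44} by structural induction over the MPG, processing edges from the leaves (channel nodes) toward the root; the inductive invariant is exactly the claimed form \eqref{eq:purestate_evol_intermediate} of the joint state $\ket{\Xi(e)}$ on each directed edge $e$ together with all ancillas produced by the check and equality nodes preceding $e$. For the base case, let $e$ be the edge leaving a leaf $\theta_j$. By construction BPQM transmits the channel output $Y_j$ over $e$, which for input codeword $\vec x$ is in the pure state $\qstate{z_e}{\theta_j}$ with $z_e=x_j=\vec g_e\cdot\vec d$; no check or equality node precedes $e$, so $A_e$ and $Z_e$ are empty, $k_e=1$, $n_e=1$, and the branch list of the leaf is the single triple $(\varnothing,\theta_j,1)$. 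Hence \eqref{eq:purestate_evol_intermediate} holds with $c_{e,\varnothing}\equiv 1$ and $p_{e,\varnothing}=1$.

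For the inductive step, let $e$ leave a node $v$ with incoming edges $e_1$ (the first predecessor, carrying the dot) and $e_2$ (the second). Apply the induction hypothesis to $\ket{\Xi(e_1)}$ and $\ket{\Xi(e_2)}$, with summation indices $\vec j_1,\vec j_2$; since $e_1$ and $e_2$ lie in disjoint subtrees the registers $A_{e_1},Z_{e_1}$ and $A_{e_2},Z_{e_2}$ are disjoint, so the pre-node state is the product $\ket{\Xi(e_1)}\otimes\ket{\Xi(e_2)}$, expanded as a double sum over $(\vec j_1,\vec j_2)$. If $v$ is a \emph{check node}, it applies a \cnot{} with control the $e_1$-qubit and target the $e_2$-qubit; the underlying factor graph enforces $\vec g_e=\vec g_{e_1}+\vec g_{e_2}$ over $\mathbb{F}_2$, so $z_e=z_{e_1}\oplus z_{e_2}$ for every $\vec d$. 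In each branch, \eqref{eq:purestate_checknode} with $\alpha=\theta_{e_1,\vec j_1}$, $\beta=\theta_{e_2,\vec j_2}$, $y_1=z_{e_1}$, $y_2=z_{e_2}$ introduces a further sum over $l\in\{0,1\}$ whose term carries angle $\theta_{e_1,\vec j_1}\boxstar_l\theta_{e_2,\vec j_2}$, squared amplitude $p_{e_1,\vec j_1}p_{e_2,\vec j_2}\,p_{\boxstar}(\theta_{e_1,\vec j_1},\theta_{e_2,\vec j_2},l)$, a fresh ancilla qubit $\ket{l}$, and a sign $(-1)^{l z_{e_2}}$. Relabelling $\vec j:=l\concat\vec j_1\concat\vec j_2$ reproduces the check-node branch list, with $c_{e,\vec j}(\vec d):=(-1)^{l z_{e_2}}c_{e_1,\vec j_1}(\vec d)c_{e_2,\vec j_2}(\vec d)$, which is $\{+1,-1\}$-valued since $z_{e_2}$ is an $\mathbb{F}_2$-linear functional of $\vec d$.

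If instead $v$ is an \emph{equality node}, the factor graph forces $\vec g_{e_1}=\vec g_{e_2}=\vec g_e$, so in \emph{every} branch both incoming data qubits carry the same bit $z_e$. The node applies the uniformly-controlled gate \eqref{equalitynodeunitary}, controlled on $A_{e_1}A_{e_2}=A_e$; using that the branch-list labels of each predecessor exhaust the relevant $\mathbb{Z}_2$-power (the redundancy noted after \eqref{eq:boxstarangle}), the ancilla configuration $\ket{\vec j_1}\ket{\vec j_2}$ selects precisely $U_{\ostar}(\theta_{e_1,\vec j_1},\theta_{e_2,\vec j_2})$ on the data pair. Since that pair is in the state $\qstate{z_e}{\theta_{e_1,\vec j_1}}\otimes\qstate{z_e}{\theta_{e_2,\vec j_2}}$, identity \eqref{eq:Ustar_identity} collapses it to $\qstate{z_e}{\theta_{e_1,\vec j_1}\ostar\theta_{e_2,\vec j_2}}\otimes\ket{0}$, the $\ket{0}$ joining $Z_e$. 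Setting $\vec j:=\vec j_1\concat\vec j_2$ matches the equality-node branch list, with $c_{e,\vec j}:=c_{e_1,\vec j_1}c_{e_2,\vec j_2}$. A routine count of register sizes ($k_e-1$ check ancillas, $n_e$ leaves, and $n_e-k_e$ zero qubits) confirms consistency in both cases. Applying the completed induction to the half-edge leaving the root---whose preceding node is the root node, which precedes all $n$ channel and all $k-1$ check nodes, and whose edge variable is $X_r$---yields $V_r\ket{\Psi_{\vec x}}=\ket{\Xi(\text{root half-edge})}$, which is exactly \eqref{eq:compressed_state}; the orthogonality property \eqref{eq:orthogonality} is then proved separately by induction on the explicit $c_{e,\vec j}$ just obtained.

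The main obstacle is the equality-node step. One must be sure that (i) within each branch of the superposition the two incoming data qubits really carry equal bit values, so that the single-input identity \eqref{eq:Ustar_identity}---which holds only for matching inputs---applies branchwise, and (ii) the uniform control over $A_{e_1}A_{e_2}$ genuinely applies $U_{\ostar}$ with the pair of angles dictated by that branch's ancilla pattern, with no cross-contamination between branches. Both rely on the linear-algebraic structure of the Forney factor graph underlying the MPG: every edge variable is a fixed $\mathbb{F}_2$-linear form in $(X_1,\dots,X_k)$, equality nodes identify these forms, and the branch-list bookkeeping guarantees the ancilla labels enumerate all bit strings of the correct length. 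Assembling these facts cleanly, and likewise tracking the signs $c_{e,\vec j}$ through both node types so as to make \eqref{eq:orthogonality} fall out afterwards, is the crux of the argument.
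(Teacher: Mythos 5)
Your proposal is correct and follows essentially the same route as the paper's proof: induction from the leaves toward the root, with the base case at the channel nodes, the check-node step handled via the \cnot{} identity \eqref{eq:purestate_checknode} (yielding the extra branch index $l$ and the sign $(-1)^{l z_{e_2}}$), and the equality-node step handled via \eqref{eq:Ustar_identity} after invoking the constraint $z_{e_1}=z_{e_2}$ enforced by the underlying factor graph. The only cosmetic difference is that you make the $\mathbb{F}_2$-linear forms $\vec g_e$ explicit already in this part of the argument, whereas the paper defers that machinery to the subsequent orthogonality proof.
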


By choosing $e$ to be the half-edge that we added to the MPG,~\cref{lem:stronger44} directly implies \cref{eq:compressed_state}. 
\begin{proof}
The proof is by induction.
The claim holds immediately for edges $e$ connected to leaf nodes in $G$, since the qubit in question is simply in the state $\qstate{z_e}{\theta}$. 
Therefore we only need to show that if the claim holds for two input edges $e,e'$ to a check or equality node, then it also holds for the output edge $\hat e$.
% \begin{center}
% \begin{tikzpicture}
%   \pgfdeclarelayer{bg}
%   \pgfsetlayers{bg,main}

%   \node[quadnode] (eq) {$=$ or $+$};
%   \begin{pgfonlayer}{bg}
%   \begin{scope}[decoration={markings,mark=at position 0.5 with {\arrow{>}}}]
%     \draw[postaction={decorate}] (eq)+(-0.5cm,-0.5cm) to node[left=0.1cm] {$e_1$} ++(-0.1cm,-0.1cm);
%     \draw[postaction={decorate}] (eq)+(0.5cm,-0.5cm) to node[right=0.1cm] {$e_2$} ++(0.1cm,-0.1cm);
%     \draw[postaction={decorate}] (eq) to ++(0,0.6cm) node[right] {$e_3$};
%   \end{scope}
%   \end{pgfonlayer}
% \end{tikzpicture}
% \end{center}
Now we assume the states of edges $e$ and $e'$ incident to a check or equality node are $\ket{\Xi(e)}$ and $\ket{\Xi(e')}$, respectively. 
We deal with the two cases separately. 
% By the induction hypothesis, for $i=1,2$ the joint state of the qubit passed over the edge $e_i$ together with all ancilla qubits produced by preceding equality/check nodes can be written as
% \begin{equation}
%   \sum\limits_{\vec{j}_i\in\{0,1\}^{k_{e_i}-1}} c_{i,\vec{j}_i}(\vec{d}) \sqrt{p_{i,\vec{j}_i}} \qstate{z_{e_i}}{\theta_{i,\vec{j}_i}}_{D_{e_i}} \otimes \ket{\vec{j}_i}_{A_{e_i}} \otimes \ket{0^{n_{e_i}-k_{e_i}}}_{Z_{e_i}}\,.
% \end{equation}
% \begin{equation}
%   \sum\limits_{\vec{j}\in\{0,1\}^{k_{e_i}-1}} c_{\vec{j}_i}(\vec{d}) \sqrt{p_{i,\vec{j}_i}} \qstate{z_{e_i}}{\theta_{i,\vec{j}_i}}_{D_{e_i}} \otimes \ket{\vec{j}_i}_{A_{e_i}} \otimes \ket{0^{n_{e_i}-k_{e_i}}}_{Z_{e_i}}\,.
% \end{equation}
% We differentiate between the equality node and check node case.
\paragraph{Equality node}
Due to the equality constraint, one has $z_{e}=z_{e'}=:z$. 
As described in \cref{equalitynodeunitary}, BPQM applies the unitary
  \begin{equation}
    \sum\limits_{\vec{j},\vec{j}'} U_{\ostar}(\theta_{e,\vec{j}},\theta_{e',\vec{j}'})_{D_{e},D_{e'}}\otimes \ket{\vec{j}}\bra{\vec{j}}_{A_{e}} \otimes \ket{\vec{j}'}\bra{\vec{j}'}_{A_{e'}}
  \end{equation}
to $\ket{\Xi(e)}\otimes \ket{\Xi(e')}$. 
By~\eqref{eq:Ustar_identity}, the resulting state is
\begin{equation}
\label{eq:equalityphasesmultiply}
\sum\limits_{\vec{j},\vec{j}'} c_{e,\vec{j}}(\vec{d})c_{e',\vec{j}'}(\vec{d}) \sqrt{p_{e,\vec{j}}p_{e',\vec{j}'}} \qstate{z}{\theta_{e,\vec{j}}\ostar\theta_{e',\vec{j}'}}_{D_{\hat e}} \otimes \left( \ket{\vec{j}}\ket{\vec{j}'} \right)_{A_{\hat e}} \otimes \ket{0^{n_{\hat e}-k_{\hat e}}}_{Z_{\hat e}}\,,
\end{equation}
which is again of the form of~\eqref{eq:purestate_evol_intermediate} (because $c_{e,\vec{j}}\cdot c_{e',\vec{j}'}$ is again also a map $\mathbb{F}_2^k\rightarrow\{+1,-1\}$ and because the probabilites and angles are combined according to the same rules as in the construction of the branch list entries in~\cref{sec:single_bit_decoding}). 
Note that here we have $n_{\hat e}=n_{e}+n_{e'}$ and $k_{\hat e}=k_{e}+k_{e'}-1$.
  
 \paragraph{Check node} 
 BPQM applies the gate $\cnot_{D_{e}D_{e'}}$ to $\ket{\Xi(e)}\otimes \ket{\Xi(e')}$. 
 By~\cref{eq:purestate_checknode}, the resulting state is
\begin{align}
\label{eq:checkphasessomething}
\sum\limits_{\vec{j},\vec{j}',l\in\{0,1\}} c_{e,\vec{j}}(\vec{d})c_{e',\vec{j}'}(\vec{d})(-1)^{l\cdot z_{e'}} \sqrt{p_{e,\vec{j}}p_{e',\vec{j}'}}\sqrt{\frac{1+(-1)^l\cos\theta_{e,\vec{j}} \cos\theta_{e',\vec{j}'}}{2}} \qstate{z}{\theta_{e,\vec{j}}\boxstar_l\theta_{e',\vec{j}'}}_{D_{\hat e}} \nonumber\\
\otimes \left( \ket{\vec{j}_i}\ket{\vec{j}'}\ket{l} \right)_{A_{\hat e}} \otimes \ket{0^{n_{\hat e}-k_{\hat e}}}_{Z_{\hat e}}
\end{align}
  which is again of the form of~\cref{eq:purestate_evol_intermediate}, by the same argument as the equality node above. Note that here we have $n_{\hat e}=n_{e}+n_{e'}$ and $k_{\hat e}=k_{e}+k_{e'}$.
It remains to show \cref{eq:purestate_checknode}; it is directly implied by the following identites, which can be derived from standard trigonometric identities:
\begin{align}
\qstate{x}{\alpha\boxstar_i\beta} &=
\cos\left(\frac{\alpha\boxstar_i\beta}{2}\right)\ket{0} + (-1)^x\sin\left(\frac{\alpha\boxstar_i\beta}{2}\right)\ket{1}\\
&=\frac1{\sqrt2}\frac{1}{\sqrt{1+(-1)^i\cos\alpha\cos\beta}}\sum_{j=0}^1(-1)^{xj}\sqrt{1+(-1)^j\cos\alpha}\sqrt{1+(-1)^{i+j}\cos\beta}\ket{j}\,,
\end{align}
\begin{align}
\label{eq:phasesource}
2\,\cnot{}\qstate{y_1}{\alpha}&\qstate{y_2}{\beta}=\nonumber\\
&\sqrt{1+\cos\alpha}\sqrt{1+\cos\beta}\ket{00} + (-1)^{y_1+y_2}\sqrt{1-\cos\alpha}\sqrt{1-\cos\beta}\ket{10}  \nonumber\\
    &+(-1)^{y_2}\left(\sqrt{1+\cos\alpha}\sqrt{1-\cos\beta}\ket{01} + (-1)^{y_1+y_2}\sqrt{1-\cos\alpha}\sqrt{1+\cos\beta}\ket{11}\right)\,.
\end{align}
\end{proof}

  % \begin{align*}\begin{split}
  %   \qstate{x}{\alpha\boxstar_0\beta} = \cos\left(\frac{\alpha\boxstar_0\beta}{2}\right)\ket{0} + (-1)^x\sin\left(\frac{\alpha\boxstar_0\beta}{2}\right)\ket{1} \\
  %   = \frac{1}{\sqrt{2}}\frac{1}{\sqrt{1+\cos\alpha\cos\beta}}\left( \sqrt{1+\cos\alpha}\sqrt{1+\cos\beta}\ket{0} + (-1)^x \sqrt{1-\cos\alpha}\sqrt{1-\cos\beta}\ket{1} \right)
  % \end{split}\end{align*}
  % \begin{align*}\begin{split}
  %   \qstate{x}{\alpha\boxstar_1\beta} = \cos\left(\frac{\alpha\boxstar_1\beta}{2}\right)\ket{0} + (-1)^x\sin\left(\frac{\alpha\boxstar_1\beta}{2}\right)\ket{1} \\
  %   = \frac{1}{\sqrt{2}}\frac{1}{\sqrt{1-\cos\alpha\cos\beta}}\left( \sqrt{1+\cos\alpha}\sqrt{1-\cos\beta}\ket{0} + (-1)^x \sqrt{1-\cos\alpha}\sqrt{1+\cos\beta}\ket{1} \right)
  % \end{split}\end{align*}
  % \begin{align*}\begin{split}
  %   CNOT\qstate{y_1}{\alpha}\otimes\qstate{y_2}{\beta} = \\
  %   \cos\frac{\alpha}{2}\cos\frac{\beta}{2}\ket{00} + (-1)^{y_2}\cos\frac{\alpha}{2}\sin\frac{\beta}{2}\ket{01} + (-1)^{y_1+y_2}\sin\frac{\alpha}{2}\sin\frac{\beta}{2}\ket{10} + (-1)^{y_1}\sin\frac{\alpha}{2}\cos\frac{\beta}{2}\ket{11} = \\
  %   \frac{1}{2}\sqrt{1+\cos\alpha}\sqrt{1+\cos\beta}\ket{00} + (-1)^{y_2}\frac{1}{2}\sqrt{1+\cos\alpha}\sqrt{1-\cos\beta}\ket{01} + \\
  %   (-1)^{y_1+y_2}\frac{1}{2}\sqrt{1-\cos\alpha}\sqrt{1-\cos\beta}\ket{10} + (-1)^{y_1}\frac{1}{2}\sqrt{1-\cos\alpha}\sqrt{1+\cos\beta}\ket{11}
  % \end{split}\end{align*}

Now we turn to proving \eqref{eq:orthogonality}. 
Let $E$ be the subset of MPG edges defined as follows: $E$ contains the half-edge at the root. Furthermore, for every check node in the MPG, $E$ contains the second edge leading into the node (i.e.~the edge that is not denoted by a dot in the MPG).
\begin{lemma}
  $\{\vec{g}_e | e\in E\}$ spans the space $\mathbb{F}_2^k$.
\end{lemma}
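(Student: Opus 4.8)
The claim is that the set of linear functionals $\{\vec{g}_e : e \in E\}$, where each $\vec{g}_e \in \mathbb{F}_2^k$ records the random variable on edge $e$ as $z_e = \sum_{i} g_{e,i} x_i$, spans all of $\mathbb{F}_2^k$. Since $E$ has cardinality $1 + n_c = 1 + (k-1) = k$ (one half-edge at the root plus one ``second edge'' per check node, using that there are exactly $k-1$ check nodes as established in \cref{sec:mpgdef}), it suffices to prove linear independence of $\{\vec{g}_e : e \in E\}$; spanning then follows by a dimension count. So the plan is to show that the $k$ vectors $\vec{g}_e$ for $e \in E$ are linearly independent over $\mathbb{F}_2$.

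The natural approach is induction on the structure of the MPG, processing subtrees from the leaves toward the root. For each node $v$ of the MPG, let $T_v$ be the subtree hanging below the output edge of $v$ (including that output edge), let $E_v := E \cap T_v$, and let $S_v \subseteq \mathbb{F}_2^k$ be the linear span of $\{\vec{g}_e : e \in E_v\}$ together with the span of $\{\vec{g}_{e_{\mathrm{out}}(v)}\}$ where $e_{\mathrm{out}}(v)$ is the output edge of $v$. I would prove by induction that (a) the vectors $\{\vec{g}_e : e \in E_v\}$ are linearly independent, and (b) $\vec{g}_{e_{\mathrm{out}}(v)} \in \mathrm{span}\{\vec{g}_e : e \in E_v\}$, so that in fact $S_v = \mathrm{span}\{\vec{g}_e : e \in E_v\}$ has dimension exactly $|E_v|$. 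The base case (leaves) is immediate: a leaf for codeword bit $X_j$ has $\vec{g} = \vec{e}_j$ a standard basis vector, $E_v$ is empty below it, and $\vec{e}_j$ trivially spans itself — actually one must be slightly careful that $E_v$ for a leaf's output edge is empty and the output functional is $\vec e_j$, so treat the leaf as the base of the induction with the understanding that its output edge carries a nonzero functional. For the inductive step at an equality node with input edges $e, e'$ and output edge $\hat e$: the equality constraint gives $z_e = z_{e'} = z_{\hat e}$, hence $\vec{g}_{\hat e} = \vec{g}_e = \vec{g}_{e'}$; moreover $E_{\hat e} = E_e \cup E_{e'}$ and no new $E$-edge is added at an equality node. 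Here the key point to verify is that $\mathrm{span}\{\vec g_f : f \in E_e\}$ and $\mathrm{span}\{\vec g_f : f\in E_{e'}\}$ are \emph{complementary} subspaces — this is where the tree structure enters, since the two subtrees involve disjoint sets of channel leaves and the corresponding linear constraints decouple. For the inductive step at a check node with inputs $e$ (first, dotted) and $e'$ (second), output $\hat e$: the parity constraint gives $\vec{g}_{\hat e} = \vec{g}_e + \vec{g}_{e'}$, and $e' \in E$ is newly added, so $E_{\hat e} = E_e \cup E_{e'} \cup \{e'\}$; one must show $\vec{g}_{e'} \notin \mathrm{span}(\{\vec g_f : f \in E_e\} \cup \{\vec g_f : f\in E_{e'}\})$, which again follows from the disjointness-of-leaves / decoupling property, noting that $\vec g_{e'}$ is itself one of the functionals already in $E_{e'}$ together with the output — wait, rather one shows the new functionals remain independent because adding $e'$ exactly compensates for the fact that at a check node $k_{\hat e} = k_e + k_{e'}$ rather than $k_e + k_{e'} - 1$.

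The main obstacle I anticipate is making precise and rigorous the ``decoupling'' claim — that the linear functionals arising in two sibling subtrees of a tree MPG are ``independent'' in the sense that a relation among the $\vec g$'s of one subtree cannot be cancelled by the other. The cleanest way to handle this is probably \emph{not} to argue about subtrees abstractly but to invoke directly the degrees-of-freedom / rank computation already carried out in \cref{sec:mpgdef}: there it is shown that the $2n-2$ edge variables of the MPG (viewed as a Forney factor graph without channel nodes) satisfy a system of linearly independent constraints of total rank $2n - 2 - k$, so the solution space is exactly $k$-dimensional and parametrized bijectively by $(x_1,\dots,x_k)$. Consequently the map $e \mapsto \vec g_e \in \mathbb{F}_2^k$ is well-defined, and the assertion to prove is that a specific size-$k$ family of these functionals is a basis. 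An alternative, possibly slicker route: exhibit, for each $j \in \{1,\dots,k\}$, a codeword-bit assignment that is detected by the chosen edges — or simply argue by contradiction that if $\{\vec g_e : e\in E\}$ failed to span, there would be a nonzero $\vec v \in \mathbb{F}_2^k$ with $\langle \vec v, \vec g_e\rangle = 0$ for all $e \in E$, i.e.\ a nonzero valid assignment of codeword bits $(x_1,\dots,x_k) = \vec v$ making $z_e = 0$ on every $e \in E$; then propagate this ``all-zero on $E$'' condition through the tree from the root downward (at the root $z = 0$; at each equality node all three edges equal; at each check node the dotted input equals the output since the second input is in $E$ hence zero) to conclude $z_e = 0$ on \emph{every} edge, in particular on all leaves, forcing $\vec v = 0$, a contradiction. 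This second argument is the one I would actually write up, as it is short and uses only the tree structure plus the definition of $E$; the induction sketched above is essentially its unrolling and can be dropped.
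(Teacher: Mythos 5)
Your preferred (second) argument is correct and is essentially the paper's own proof read in the dual: the paper traverses the MPG from the root toward the leaves maintaining the invariant that the span of all visited functionals equals the span of the $E$-edges encountered so far, while you propagate the annihilation condition $\vec v\cdot\vec g_e=0$ along exactly the same traversal, using the same local rules at equality nodes ($\vec g_{e_1}=\vec g_{e_2}=\vec g_{e_3}$) and check nodes ($\vec g_{e_1}=\vec g_{e_2}+\vec g_{e_3}$ with the second input lying in $E$). Both arguments terminate at the leaves, where the functionals for $X_1,\dots,X_k$ are the standard basis vectors, so the bottom-up induction with its ``decoupling'' difficulty is indeed unnecessary and rightly dropped.
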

Since $E$ contains precisely $k$ elements, the $\{\vec{g}_e | e\in E\}$ thus form a basis of $\mathbb{F}_2^k$.
Put differently, this lemma states that the random variables associated to the edges in $E$ are independent and generate all other variables in the factor graph.
\begin{proof}
We use an inductive approach, where we visit all edges from the MPG, starting from the half-edge and working towards the leaves. During this traversal, we gradually build up a list $\tilde{E}$ of edges, which at the end of the traversal will end up being exactly $E$. At any point during the traversal, the following property will hold:

\textit{The subspace spanned by $\{\vec{g}_e | \text{The edge } e \text{ has already been visited}\}$ is exactly equal to the subspace spanned by $\{\vec{g}_e|e\in\tilde{E}\}$.}

When starting the traversal at the half-edge, we choose $\tilde{E}$ to simply contain the half-edge itself, such that this property trivially holds. Every time we come across an equality node
\begin{equation}
\tikzsetnextfilename{inoutequality}
\begin{tikzpicture}
[ 
    baseline,arrow/.style={postaction={decoration={markings,mark=at position 0.4 with {\arrow[>=stealth]{>}}},decorate}}
] 
  \pgfdeclarelayer{bg}
  \pgfsetlayers{bg,main}

  \node[squarenode] (eq) {\cequal};
  \node (SW) [below left=0.6cm and 0.2cm of eq] {};
  \node (SE) [below right=0.6cm and 0.2cm of eq] {};
  \begin{pgfonlayer}{bg}
    \draw[arrow] (SW) |- node[left] {$e_1$} (eq.west);
    \draw[arrow] (SE) |- node[right] {$e_2$} (eq.east);
    \draw[arrow] (eq.north) to ++(0,0.8cm) node[right] {$e_3$};
  \end{pgfonlayer}
\end{tikzpicture}
\end{equation}
one has by definition $\vec{g}_{e_3}=\vec{g}_{e_1}=\vec{g}_{e_2}$, so we do not need to extend $\tilde{E}$ in order for the desired property to hold. However, when we traverse a check node
\begin{equation}
\tikzsetnextfilename{inoutcheck}
\begin{tikzpicture}
[ 
    baseline, arrow/.style={postaction={decoration={markings,mark=at position 0.4 with {\arrow[>=stealth]{>}}},decorate}}
] 
  \pgfdeclarelayer{bg}
  \pgfsetlayers{bg,main}

  \node[squarenode] (eq) {$+$};
  \node (SW) [below left=0.6cm and 0.2cm of eq] {};
  \node (SE) [below right=0.6cm and 0.2cm of eq] {};
  \node[fill,circle,inner sep=0.8pt] at (eq.west) {};
  \begin{pgfonlayer}{bg}
    \draw[arrow] (SW) |- node[left] {$e_1$} (eq.west);
    \draw[arrow] (SE) |- node[right] {$e_2$} (eq.east);
    \draw[arrow] (eq.north) to ++(0,0.8cm) node[right] {$e_3$};
  \end{pgfonlayer}
\end{tikzpicture}
\end{equation}
we now have to insert $e_2$ to the list $\tilde{E}$.
Since $\vec{g}_{e_1}=\vec{g}_{e_2} + \vec{g}_{e_3}$, the desired property now holds again.
By construction, at the end of the traversal the list $\tilde{E}$ will exactly be equal to $E$.
\end{proof}

Notice in~\cref{eq:purestate_checknode} that the phases in $\ket{\Xi(e)}$ arise from check node operations are determined by the edge variable of the target edge as well as the branch variable. 
These simply accumulate as the algorithm proceeds, and therefore $c_{e,\vec j}(\vec d)$ takes the form $c_{e,\vec{j}}(\vec{d})=(-1)^{\vec{d}\cdot\vec{h}_{e,\vec{j}}}$ for some vector $\vec{h}_{e,\vec{j}}\in\mathbb{F}_2^k$.
Define $H_e:=\{\vec{h}_{e,\vec{j}} | \vec{j}\in\mathbb Z_2^{k_e-1}\}$, with $H_e=\{\vec 0\}$ for $k_e=1$. 
Then we have
\begin{lemma}
For each edge $e$, $H_e$ is the subspace of $\mathbb{F}_2^k$ spanned by $\{\vec{g}_{e'} | e'\in E$ and $e'$ precedes $e\}$.
\end{lemma}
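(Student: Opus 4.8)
The plan is to prove the claim by structural induction on the MPG, traversing its edges from the leaves toward the (extended) root half-edge, in exact parallel with the induction used to prove \cref{lem:stronger44}. The induction will simultaneously show that each $H_e$ is an $\mathbb{F}_2$-linear subspace and not merely a subset, which is automatic once the inductive formula is in hand, since over $\mathbb{F}_2$ the sumset of two subspaces (and of a subspace with a one-dimensional subspace) is again a subspace. Throughout one uses that, by the remark preceding the statement, the phase functions have the form $c_{e,\vec j}(\vec d)=(-1)^{\vec d\cdot\vec h_{e,\vec j}}$, so that $H_e=\{\vec h_{e,\vec j}\}$, and that the variable carried over an edge $e$ equals $\vec d\cdot\vec g_e$.

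The base case is a leaf edge $e$: then $k_e=1$, the only index $\vec j$ is the empty string, and by the base case of \cref{lem:stronger44} the state carries no phase, so $c_{e,\varnothing}\equiv 1$, $\vec h_{e,\varnothing}=\vec 0$, and $H_e=\{\vec 0\}$; since no edge of $E$ precedes a leaf edge, the right-hand side is the span of the empty set, also $\{\vec 0\}$. For the inductive step I would take a node with incoming edges $e_1$ (the dotted, ``first'' predecessor) and $e_2$, outgoing edge $e_3$, and assume the claim for $e_1$ and $e_2$. Because the MPG is a tree, the edges upstream of $e_1$ and the edges upstream of $e_2$ are disjoint, and each edge feeds into a unique node; hence the edges of $E$ that precede $e_3$ are the disjoint union of those preceding $e_1$, those preceding $e_2$, and—only when the node is a check node—the edge $e_2$ itself ($e_2$ lies in $E$ precisely as the ``second edge leading into'' this check node; $e_1$ never lies in $E$ because its unique head node is this one, where it is either the first edge of a check node or an input to an equality node; and at an equality node neither $e_1$ nor $e_2$ is added). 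Using the induction hypothesis, $\mathrm{span}\{\vec g_{e'}:e'\in E,\ e'\text{ precedes }e_3\}$ therefore equals $H_{e_1}+H_{e_2}$ at an equality node and $H_{e_1}+H_{e_2}+\mathrm{span}(\vec g_{e_2})$ at a check node.

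It then remains to identify $H_{e_3}$ with the same space. At an equality node, reading the amplitudes off \eqref{eq:equalityphasesmultiply} gives $c_{e_3,(\vec j,\vec j')}=c_{e_1,\vec j}\,c_{e_2,\vec j'}$, hence $\vec h_{e_3,(\vec j,\vec j')}=\vec h_{e_1,\vec j}+\vec h_{e_2,\vec j'}$ and $H_{e_3}=H_{e_1}+H_{e_2}$. At a check node, \eqref{eq:checkphasessomething} together with $z_{e_2}=\vec d\cdot\vec g_{e_2}$ gives $c_{e_3,(\vec j,\vec j',l)}=c_{e_1,\vec j}\,c_{e_2,\vec j'}\,(-1)^{l\,\vec d\cdot\vec g_{e_2}}$, hence $\vec h_{e_3,(\vec j,\vec j',l)}=\vec h_{e_1,\vec j}+\vec h_{e_2,\vec j'}+l\,\vec g_{e_2}$ and $H_{e_3}=H_{e_1}+H_{e_2}+\mathrm{span}(\vec g_{e_2})$. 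In both cases $H_{e_3}$ is a subspace equal to the displayed span, which closes the induction; the trivial degree-two root equality node merely copies the qubit and phases of its single input and adds no edge to $E$, so it is covered by the equality-node argument, and applying the statement to the root half-edge yields the lemma.

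The argument is mostly bookkeeping, and the one place that needs genuine care is the interface between the algebra and the graph combinatorics at a check node: one must check that the single new generator appearing in $H_{e_3}$ is exactly $\vec g_{e_2}$ with $e_2$ the \emph{second}, non-dotted incoming edge—matching both the definition of $E$ and the identification of the $\cnot$ target in \eqref{eq:purestate_checknode}—and that this generator is not already accounted for among the generators coming from the subtrees above $e_1$ and $e_2$. The latter is precisely where the tree structure (a unique head node for each edge, hence disjoint upstream subtrees) is used, and getting the ``first versus second edge'' convention consistent across \eqref{eq:checkphasessomething}, the definition of $E$, and the dot markings in the MPG is the main thing that could go wrong.
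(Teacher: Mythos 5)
Your proposal is correct and follows essentially the same route as the paper: induction from the leaves to the root, with $H_{e_3}=H_{e_1}+H_{e_2}$ at equality nodes read off from \cref{eq:equalityphasesmultiply} and $H_{e_3}=H_{e_1}+H_{e_2}+\mathrm{span}(\vec g_{e_2})$ at check nodes read off from \cref{eq:checkphasessomething}, then matching these sums against the (disjoint) sets of $E$-edges preceding $e_3$. The extra care you take with the tree-disjointness of the upstream subtrees, the exclusion of $e_1$ from $E$, and the degree-two root node only makes explicit what the paper leaves implicit.
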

Since the $\vec g_e$ are a basis, the span of $\{\vec{g}_{e'} | e'\in E$ and $e'$ precedes $e\}$ has dimension $k_e-1$, and therefore $H_e$ must contain $2^{k_e-1}$ distinct elements.
For the half-edge at the root, call it $e_r$, $\vec g_{e_r}$ is the standard basis vector with a single 1 in the $r$th entry. 
Notice that $H_{e_r}$ does not contain $\vec g_{e_r}$. 
\begin{proof}
As usual, the proof is by induction. 
At leaf edges $H_e=\{\vec 0\}$. 

Suppose $e$ and $e'$ are the first and second input edges to an equality node with output edge $\hat e$. 
By \cref{eq:equalityphasesmultiply} the phases multiply and therefore the associated $\vec{h}$ vectors add:
\begin{equation}
  H_{\hat e} = \{\vec{h} + \vec{h}' | \vec{h}\in H_{e}, \vec{h}'\in H_{e'}\} \, .
\end{equation}
This is the subspace spanned by $\{\vec{g}_{\bar e} | \bar e\in E$ and $\bar e$ precedes $e$ or $e'\}$, which is exactly $\{\vec{g}_{e} | e\in E$ and $e$ precedes $\hat e\}$.

Similarly, for $e$ and $e'$ incident on a check node, \cref{eq:checkphasessomething} implies
\begin{equation}
  H_{\hat e} = \{\vec{h} + \vec{h}' + l\cdot \vec{g}_{e'} | \vec{h}\in H_{e}, \vec{h}'\in H_{e'}, l\in \mathbb Z_2\} \, .
\end{equation}
Since $e'$ is an element of $E$, $H_{\hat e}$ is the space spanned by $\{\vec{g}_{e'}\}\cup \{\vec{g}_{\bar e} | \bar e\in E$ precedes $e \} \cup \{\vec{g}_{\bar e} | \bar e\in E$ precedes $e' \}$. 
Again this is precisely $\{\vec{g}_{e} | e\in E$ and $e$ precedes $\hat e\}$.
\end{proof}

For the root node $e_r$, just write $c_{\vec j}$ for $c_{e_r,\vec{j}}$, and similarly for $\vec h_{\vec j}$. 
Then 
\begin{equation}
  \sum\limits_{\tilde{x}_r} c_{\vec{j}}(x_1,\dots,x_k)c_{\vec{j}'}(x_1,\dots,x_k) = \sum\limits_{\tilde{x}_r} (-1)^{\vec{d}\cdot (\vec{h}_{\vec{j}} + \vec{h}_{\vec{j}'})} \, .
\end{equation}
The only way to avoid the summation resulting in zero is for $\vec h_{\vec j}+\vec h_{\vec j'}$ to be zero everywhere except possibly in the $r$th position, i.e., $\vec h_{\vec j}+\vec h_{\vec j'}=l\vec{g_{e_r}}$ for $l\in\{0,1\}$. 
Since the $\vec h_{\vec j}$ are all distinct, the $l=0$ option is only possible when $\vec j=\vec j'$. 
The $l=1$ option is impossible for any $\vec j$ and $\vec j'$ since $H_{e_r}$ does not contain $\vec g_{e_r}$. 
Therefore \eqref{eq:orthogonality} holds.

\section{Relation between BPQM and classical BP}\label{app:relation_to_bp}
While the BPQM algorithm and classical BP both allow for decoding classical information by passing around certain messages in a graph, at first glance this might seem to be the only similarity between the two algorithms.
In this section we show that this is not the case.
More precisely, we show that classical belief propagation decoding on the binary symmetric channel (BSC) can be expressed in a language very similar to BPQM, which clearly illustrates how BPQM can be considered a quantum analog of BP.

Consider the task of decoding a single codeword bit $X_r,r\in\{1,\dots,n\}$ given the noisy channel outputs.
Classical BP decoding can be described as a message-passing algorithm on the Tanner graph with added nodes for the channel output, as depicted in~\cref{fig:5bitcode_tanner_channelnodes}.
Alternatively, BP can also be expressed as a message-passing algorithm on the Forney-style representation of that same graph~\cite{loeliger_2004}.
One can expand this factor graph as described in~\cref{fig:degree_reduction} such that it only contains equality and check nodes of degree $3$.
This is possible as only the marginalization of the random variable $X_r$ is of interest.
When the resulting graph is a tree, we can regard BP as operating on the the same MPG as BPQM.
The messages that BP passes across the edges of the MPG are real numbers $l$ representing log-likelihood ratios.
The initial messages generated by the channel nodes are the log-likelihood ratios $l=\log\frac{P_{Y|X}(y|0)}{P_{Y|X}(y|1)}$ where $P_{Y|X}$ describes the BSC and $y$ is the observed channel output.
Every equality node operation consists of summing the two incoming messages $l_1,l_2$,
\begin{equation}
  l_1,l_2 \mapsto l_1+l_2\,.
\end{equation}
and every check node operation consists of the following operation:
\begin{equation}
  l_1,l_2 \mapsto 2\tanh^{-1}\left( \tanh\frac{l_1}{2} \tanh\frac{l_2}{2} \right)\,.
\end{equation}
If the final message $l$ generated at the root is greater than zero, then the decoder output is $0$. If $l$ is smaller than zero, then the decoder output is $1$.

To recognize the analogy between BP and BPQM, we separate the messages $l\in\mathbb{R}$ passed over the edge $e$ in BP into two parts $l=(-1)^bc$, where $b\in\{0,1\}$ and $c=|l|$.
Intuitively, one can think of $b$ being an estimate for the value of the random variable corresponding of the edge $e$, i.e., it represents the most likely value that this random variable has given the information from all preceding nodes in the MPG.
On the other hand, $c$ can be intuitively thought of as the reliability, i.e., how likely it is that the estimate $b$ is correct.
%The reason why this somewhat artificial reformulation of the messages is interesting will become clear further below.
In fact, $b$ turns out to be the classical analog of the data qubit passed over the edge $e$ in BPQM and $c$ plays the role analogous to the angle information in BPQM.
So in message-passing BPQM, the data and angle parts of the messages exactly correspond to $b$ and $c$ in the classical case.
In the language of $b$ and $c$ the equailty node operations of BP become
\begin{align}\label{eq:classical_bp_eqnode}
  (b_1,c_1), (b_2,c_2) \mapsto \Big(\Theta\big((-1)^{b_1}c_1 + (-1)^{b_2}c_2 \big), |(-1)^{b_1}c_1 + (-1)^{b_2}c_2| \Big)
\end{align}
where $\Theta(x)$ denotes the Heaviside function and and the check node operations become
\begin{align}\label{eq:classical_bp_checknode}
  (b_1,c_1), (b_2,c_2) \mapsto (b_1\oplus b_2, 2\tanh^{-1}\left( \tanh\frac{c_1}{2} \tanh\frac{c_2}{2} \right))
\end{align}
where $\oplus$ denotes addition modulo $2$. 
Note that the $\oplus$ operation is reminiscent to the \cnot{} operation used in BPQM.

The full similarity between BP and BPQM is unveiled by considering the effect of the BP node operations upon the state of the channel outputs.
A contraction argument analogous to the one introduced in~\cref{sec:contraction} then allows the resulting factor graph to be simplified to a form with a single channel node.
More precisely, the channel from $X_r$ to $Y_1,\dots,Y_n$ is now a classical channel, as the channel outputs $Y_1,\dots,Y_n$ are now classical systems.
The factor graph representing the channel from $X_1$ to $Y_1,\dots,Y_n$ for the 5-bit code is depicted in~\cref{fig:5bitcode_classical_fg}.
Here BSC$[c]$ denotes the BSC with crossover probability corresponding to the reliability $c$, i.e., such that the log-likelihood ratio corresponding to the two possible outputs are $l=\pm c$.

By consulting the relations in~\cref{eq:classical_bp_eqnode,eq:classical_bp_checknode}, one can devise a set of two contraction identities depicted in~\cref{fig:classical_contraction_identities}.
Then, by applying the corresponding node operations, one can iteratively simplify the factor graph as in BPQM.
Note that the equality node produces two output bits: The first bit $D$, obtained as in~\cref{eq:classical_bp_eqnode}, is the data bit.
The second is the ancilla bit $A$ which is simply defined as the parity $b_1\oplus b_2$ of the two input bits $b_1,b_2$.
Analogously to the equality node in BPQM which requires information about the qubit angles, the equality node operation in BP requires information about the reliabilities $c_1,c_2$ of the two input bits, and therefore it must be controlled by ancilla bits produced by previous node operations.
But there is an important difference here: These ancilla bits are not produced by preceding check nodes, as in BPQM, but rather by preceding equality nodes.
Of course, instead of conditioning equality nodes on all previous equality node ancillas, one can instead keep track of the reliability $c$ on-the-fly to obtain a true message-passing algorithm.
This is completely analogous to the on-the-fly bookkeeping of the angle in message-passing BPQM in~\cref{sec:message_passing}, and in fact this is how BP is actually run in practice.

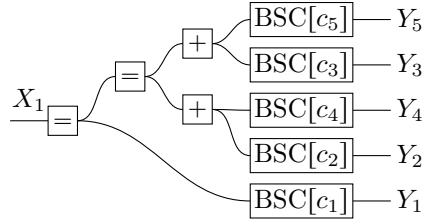
\begin{figure}
  \centering
  \tikzsetnextfilename{5bitForneyClassicalOutput}
  \begin{tikzpicture}
    \pgfdeclarelayer{bg}
    \pgfsetlayers{bg,main}

    \node[squarenode] (eq1) {\cequal};
    \node[squarenode] (eq2) [above right=0.2cm and 0.5cm of eq1] {\cequal};
    \node[squarenode] (ch1) [above right=0.05cm and 0.5cm of eq2] {$+$};
    \node[squarenode] (ch2) [below right=0.05cm and 0.5cm of eq2] {$+$};
	%\node[quadnode] (norm) [above left=0.6cm and 0cm of eq1] {$\frac{1}{4}$};
    \node[quadnode] (W5) [above right=-0.1cm and 0.5cm of ch1] {BSC$[c_5]$};
    \node[quadnode] (W3) [below=0.15cm of W5] {BSC$[c_3]$};
    \node[quadnode] (W4) [below=0.15cm of W3] {BSC$[c_4]$};
    \node[quadnode] (W2) [below=0.15cm of W4] {BSC$[c_2]$};
    \node[quadnode] (W1) [below=0.15cm of W2] {BSC$[c_1]$};

    \draw (eq1.east) to[out=0, in=180] (eq2.west);
    \draw (eq2.east) to[out=0, in=180] (ch1.west);
    \draw (eq2.east) to[out=0, in=180] (ch2.west);
    \draw (ch1.east) to[out=0, in=180] (W5.west);
    \draw (ch1.east) to[out=0, in=180] (W3.west);
    \draw (ch2.east) to[out=0, in=180] (W4.west);
    \draw (ch2.east) to[out=0, in=180] (W2.west);
    \draw (eq1.east) to[out=0, in=180] (W1.west);
    
    \draw (eq1.west) -- node[above] {$X_1$} ++(-0.5cm,0);
    % We have to put the double dashed lines in a background layer, otherwise they look ugly
    \begin{pgfonlayer}{bg}
      \draw (W5.east) to node[right=0.2cm] {$Y_5$} ++(0.5cm,0);
      \draw (W3.east) to node[right=0.2cm] {$Y_3$} ++(0.5cm,0);
      \draw (W4.east) to node[right=0.2cm] {$Y_4$} ++(0.5cm,0);
      \draw (W2.east) to node[right=0.2cm] {$Y_2$} ++(0.5cm,0);
      \draw (W1.east) to node[right=0.2cm] {$Y_1$} ++(0.5cm,0);
    \end{pgfonlayer}
  \end{tikzpicture}
  \caption{Factor graph describing the classical channel from $X_1$ to the channel outputs $Y_1,Y_2,Y_3,Y_4,Y_5$ for the 5-bit code.}
  \label{fig:5bitcode_classical_fg}
\end{figure}

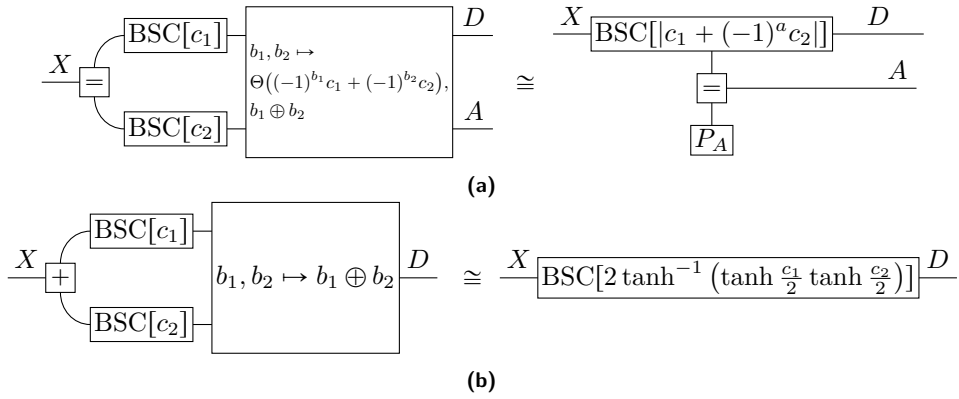
\begin{figure}
  \centering
  \begin{subfigure}[b]{\textwidth}
  \centering
      \tikzsetnextfilename{ClassicalEqualityNodeIdentity}
  \begin{tikzpicture}
    \pgfdeclarelayer{bg}
    \pgfsetlayers{bg,main}

    \node[squarenode] (eq) {\cequal};
    \node[quadnode] (W1) [above right=0.2cm and 0.2cm of eq] {BSC$[c_1]$};
    \node[quadnode] (W2) [below right=0.2cm and 0.2cm of eq] {BSC$[c_2]$};
    \node[quadnode, minimum height=2cm] (U) [right=1.8cm of eq] {\scalebox{0.7}{$\begin{aligned}& b_1,b_2 \mapsto \\ & \Theta\big((-1)^{b_1}c_1 + (-1)^{b_2}c_2 \big), \\ & b_1\oplus b_2 \end{aligned}$}};

    \draw (eq.west) -- node[above] {$X$} ++(-0.5cm,0);
    \draw (eq.north) to[out=90, in=180] (W1.west);
    \draw (eq.south) to[out=-90, in=180] (W2.west);
    % We have to put the double dashed lines in a background layer, otherwise they look ugly
    \begin{pgfonlayer}{bg}
      \draw (W1.east) to ++(3.0cm,0) to node[above] {$D$} ++(0.5cm,0);
      \draw (W2.east) to ++(3.0cm,0) to node[above] {$A$} ++(0.5cm,0);
    \end{pgfonlayer}

    % equivalence sign
    \node (equiv) [right=0.7cm of U] {$\cong$};
    
    % right-hand side:
    \node[quadnode] (Wrhs) [above right=0.2cm and 0.6cm of equiv] {BSC$[|c_1 + (-1)^{a}c_2|]$};
    \node[squarenode] (eqrhs) [below=0.3cm of Wrhs.south] {\cequal};
    \node[quadnode] (PA) [below=0.3cm of eqrhs.south] {$P_A$};
    \draw (Wrhs.west) -- node[above] {$X$} ++(-0.5cm,0);
    \draw (Wrhs.east) -- node[above] {$D$} ++(1.2cm,0);
    \draw (Wrhs.south) -- (eqrhs.north);
    \draw (eqrhs.south) -- (PA.north);
    \draw (eqrhs.east) to ++(2.0cm,0) to node[above] {$A$} ++(0.5cm,0);
  \end{tikzpicture}
  \caption{}
  \end{subfigure}
  \begin{subfigure}[b]{\textwidth}
    \tikzsetnextfilename{ClassicalCheckNodeIdentity}
    \centering
  \begin{tikzpicture}
    \pgfdeclarelayer{bg}
    \pgfsetlayers{bg,main}

    \node[squarenode] (eq) {$+$};
    \node[quadnode] (W1) [above right=0.2cm and 0.2cm of eq] {BSC$[c_1]$};
    \node[quadnode] (W2) [below right=0.2cm and 0.2cm of eq] {BSC$[c_2]$};
    \node[quadnode, minimum height=2cm] (U) [right=1.8cm of eq] {$b_1,b_2\mapsto b_1\oplus b_2$};

    \draw (eq.west) -- node[above] {$X$} ++(-0.5cm,0);
    \draw (eq.north) to[out=90, in=180] (W1.west);
    \draw (eq.south) to[out=-90, in=180] (W2.west);
    \draw (U.east) to node[above] {$D$} ++(0.5cm,0);
    % We have to put the double dashed lines in a background layer, otherwise they look ugly
    \begin{pgfonlayer}{bg}
      \draw (W1.east) to ++(0.5cm,0);
      \draw (W2.east) to ++(0.5cm,0);
    \end{pgfonlayer}

    % equivalence sign
    \node (equiv) [right=0.7cm of U] {$\cong$};
    
    % right-hand side:
    \node[quadnode] (Wrhs) [right=0.6cm of equiv] {BSC$[2\tanh^{-1}\left( \tanh\frac{c_1}{2} \tanh\frac{c_2}{2}\right) ]$};
    \draw (Wrhs.west) -- node[above] {$X$} ++(-0.5cm,0);
    \draw (Wrhs.east) -- node[above] {$D$} ++(0.5cm,0);
  \end{tikzpicture}
  \caption{}
  \end{subfigure}
  \caption{Graphical depiction of equality node (a) and check node (b) contraction identities. These identities are the classical counterpart to those depicted in~\cref{fig:contraction_identities_equality,fig:contraction_identities_check}. The second output bit $A$ of the large box in figure (a) is necessary in order to make the computation reversible and thus provide information about the reliability to subsequent node operations (analogous to the check node ancilla $Q_2$ in~\cref{fig:contraction_identities_check}). The probability distribution $P_A$ is defined as $P_A(0)=1 / (\exp(c_1) + \exp(c_2) + \exp(c_1+c_2))$ and $P_A(1)=1-P_A(0)$.}
  \label{fig:classical_contraction_identities}
\end{figure}

\section{Gate decomposition of equality node unitary}\label{app:ustar_identity}
In this appendix, we derive the quantum circuit identity depicted in~\cref{fig:uc_ustar_identity}. Recall, that the matrix representation of the variable node unitary in \cref{eq:equalitynodeU} is given by
\begin{equation}
  U_{\ostar}(\theta,\varphi) = 
  \begin{pmatrix}
    a_+ & 0   & 0   & a_- \\
    -a_- & 0   & 0   &a_+ \\
    0   & b_- & b_+ & 0   \\
    0   & b_+ &-b_- & 0 
  \end{pmatrix}\,,
\end{equation}
for $a_\pm$ and $b_\pm$ functions of $\theta$ and $\varphi$ as in \cref{eq:apmdef} and \cref{eq:bpmdef}. 
Applying a \cnot{} gate controlled by the second qubit (which we denote as \notc) produces the block diagonal form 
\begin{equation}
  U_{\ostar}(\theta,\varphi) = 
  \begin{pmatrix}
    a_+ & a_- & 0   & 0   \\
    -a_- &a_+ & 0   & 0   \\
    0   & 0   & b_+ & b_- \\
    0   & 0   &-b_- & b_+
  \end{pmatrix}
  \cdot \notc{}
  = \left( \ket{0}\bra{0}\otimes A + \ket{1}\bra{1}\otimes B \right) \cdot \notc{}
\end{equation}
where $A:=\begin{pmatrix}a_+&a_-\\-a_-&a_+\end{pmatrix}$ and $B:=\begin{pmatrix}b_+&b_-\\-b_-&b_+\end{pmatrix}$. 
The blocks $A$ and $B$ are both $R_y$ rotations. 
Denoting $R_y(\phi)=e^{-\tfrac12i\phi\sigma_y}$, we have $A=R_y(\phi_0)$ for $\phi_0 := -2\arccos{(a_+)}$ and $B=R_y(\phi_1)$ for $\phi_1 :=-2\arccos(b_+)$. 
Expressed as a circuit, this is 
\tikzexternaldisable
\begin{equation}
  \begin{quantikz}
    & \gate[wires=2]{U_{\ostar}(\theta,\varphi)} & \qw \\
    & & \qw
  \end{quantikz}
  \cong
  \begin{quantikz}
    & \targ{}   &    \uctrl{i}\vqw{1}  & \qw \\
    & \ctrl{-1} &  \gate{R_y(\phi_i)} & \qw
  \end{quantikz}\,.
\end{equation}
The desired identity then immediately follows from
\begin{equation}
  \begin{quantikz}
    & \uctrl{i}\vqw{1}  & \qw \\
    & \gate{R_y(\phi_i)} & \qw
  \end{quantikz}
  \cong
  \begin{quantikz}
    & \qw                                & \ctrl{1} & \qw                                & \ctrl{1} & \qw \\
    & \gate{R_y(\frac{\phi_0+\phi_1}{2})} & \targ{}  & \gate{R_y(\frac{\phi_0-\phi_1}{2})} & \targ{}  & \qw 
  \end{quantikz}\,.
\end{equation}
\tikzexternalenable

% A similar but actually simpler decomposition can be given for $U^{\text{opt}}$ in \cref{eq:optimalcloningU}. 
% Like $U_\ostar^\dagger$, it is a uniformly controlled $R_y$ gate followed by \notc. 
% However, the two angles are $\phi_0=2\arccos\left(\frac{1+f}{\sqrt2 \sqrt{1+f^2}}\right)$ and $\phi_1=\frac\pi 2$.

\section{Proof of \texorpdfstring{\cref{lem:discretization_error}}{Lemma 5.1}}
\label{app:proof_discretization_error}
Similarly to the proof of~\cref{lem:purestate_evolution}, we prove a more general statement about the intermediate states of the message-passing BPQM algorithm. This will directly imply the desired theorem.

Consider the BPQM algorithm executed on the MPG $G$ w.r.t.\ the codeword bit $X_r$. 
As shown in~\cref{lem:stronger44}, the joint state of the qubit passed over some edge $e$ together with all ancilla qubits produced by nodes preceding $e$ can be written as
\begin{equation}\label{eq:imediatestate_1}
  \sum\limits_{\vec{j}\in\{0,1\}^{k_e-1}}{ \gamma_{e,\vec{j}} \qstate{z_e}{\theta_{e,\vec{j}}}_{D_e} \otimes \ket{0^{n_e-k_e}}_{Z_e} \otimes \ket{\vec{j}}_{A_e} }\,,
\end{equation}
for some $\gamma_{e,\vec{j}}\in\mathbb{R}, z_e\in\{0,1\}, \theta_{e,\vec{j}}\in (0,\pi)$.
Here $n_e$ denotes the number of channel nodes preceding $e$ and $k_e-1$ denotes the number of check nodes preceding $e$.
The system $D_e$ describes the data qubit passed over the edge $e$, $Z_e$ contains all ancilla qubits produced by equality nodes preceding $e$ and $A_e$ contains all ancilla qubits produced by check nodes preceding $e$.

In the course of the proof, we will inductively show the following statement for message-passing BPQM: The joint state of the message sent over the edge $e$ together with all qubits stored in preceding check and variable nodes can be written in the form
\begin{equation}\label{eq:imediatestate_2}
  \sum\limits_{\vec{j}\in\{0,1\}^{k_e-1}}{ \tilde{\gamma}_{e,\vec{j}} \ket{\tilde{\varphi}_{e,\vec{j}}}_{D_eZ_e} \otimes \ket{\vec{j}}_{A_e} \otimes \anglestate{\tilde{c}_{e,\vec{j}}}_{C_e} } \otimes \ket{\tilde{s}_{e,\vec{j}}}_{S_e}
\end{equation}
for some $\tilde{\gamma}_{e,\vec{j}}\in\mathbb{R}$, $(n_e-k_e+1)$-qubit states $\ket{\tilde{\varphi}_{e,\vec{j}}}$, $\tilde{c}_{e,\vec{j}}\in\mathcal{A}_B$ and $(n_e-1)2B$-qubit states $\ket{\tilde{s}_{e,\vec{j}}}$. The message passed over the edge $e$ consists of the data part $D_e$ as well as the angle part $C_e$. 
We denote by $S_e$ the system of the remaining qubits produced by nodes preceeding $e$ and that were not passed to the subsequent node.

Note that the statevectors in~\cref{eq:imediatestate_1,eq:imediatestate_2} are properly normalized, implying that $\sum_{\vec{j}}\gamma_{e,\vec{j}}^2 = \sum_{\vec{j}}\tilde{\gamma}_{e,\vec{j}}^2=1$.

Furthermore, this intermediate state for any edge $e$ fulfills the following properties:
\begin{enumerate}
  \item $\sum\limits_{\vec{j}} \gamma_{e,\vec{j}}^2 | \cos\theta_{e,\vec{j}} - \tilde{c}_{e,\vec{j}} | \leq (2^{n_e+1}-3)\delta$,
  \item $\sum\limits_{\vec{j}} | \gamma_{e,\vec{j}}^2 - \tilde{\gamma}_{e,\vec{j}}^2 | \leq 2^{n+1/2}\pi\sqrt{\delta}\frac{1}{3}26^{n_e}$,
  \item $\sum\limits_{\vec{j}} \gamma_{e,\vec{j}}^2 \norm{ \qstate{z_e}{\theta_{e,\vec{j}}}\otimes\ket{0^{n_e-k_e}} - \ket{\tilde{\varphi}_{e,\vec{j}}} } \leq 2^{n+1/2}\pi\sqrt{\delta}26^{n_e}$.
\end{enumerate}

We prove our statement in an inductive manner. It is trivially true for the edges $e$ coming out of a leaf node. So it suffices to show the following statement: If two edges $e_1,e_2$ which lead into the same check or equality node fulfill the desired properties
\begin{equation}
\tikzsetnextfilename{equalityorcheck}
\begin{tikzpicture}[baseline]
  \pgfdeclarelayer{bg}
  \pgfsetlayers{bg,main}

  \node[quadnode] (eq) {$=$ or $+$};
  \begin{pgfonlayer}{bg}
  \begin{scope}[decoration={markings,mark=at position 0.5 with {\arrow{>}}}]
    \draw[postaction={decorate}] (eq)+(-0.5cm,-0.5cm) to node[left=0.1cm] {$e_1$} ++(-0.1cm,-0.1cm);
    \draw[postaction={decorate}] (eq)+(0.5cm,-0.5cm) to node[right=0.1cm] {$e_2$} ++(0.1cm,-0.1cm);
    \draw[postaction={decorate}] (eq) to ++(0,0.6cm) node[right] {$e_3$};
  \end{scope}
  \end{pgfonlayer}
\end{tikzpicture}
\end{equation}
then the output edge $e_3$ also fulfills the desired properties. 
For this purpose, we deal separately with the two cases where the node is an equality or check node. In each case we will prove that the intermediate state at $e_3$ is of the correct form and that it fulfills the three desired properties.

For simplicity of notation, let us write the joint state of the data sent over the edges $e_1,e_2$ and the corresponding ancillas produced by preceding nodes as
\begin{align}
  \left( \sum\limits_{\vec{j}_1\in\{0,1\}^{k_1-1}}{ \gamma_{1,\vec{j}_1} \qstate{z_1}{\theta_{1,\vec{j}_1}}_{D_1}  \ket{0^{n_1-k_1}}_{Z_1}  \ket{\vec{j}_1}_{A_1} } \right) \otimes \nonumber \\
  \left( \sum\limits_{\vec{j}_2\in\{0,1\}^{k_2-1}}{ \gamma_{2,\vec{j}_2} \qstate{z_2}{\theta_{2,\vec{j}_2}}_{D_2}  \ket{0^{n_2-k_2}}_{Z_2}  \ket{\vec{j}_2}_{A_2} } \right)
\end{align}
for BPQM and as
\begin{align}
  \left( \sum\limits_{\vec{j}_1\in\{0,1\}^{k_1-1}}{ \tilde{\gamma}_{1,\vec{j}_1} \ket{\tilde{\varphi}_{1,\vec{j}_1}}_{D_1Z_1} \otimes \ket{\vec{j}_1}_{A_1} \otimes \anglestate{\tilde{c}_{1,\vec{j}_1}}_{C_1} } \otimes \ket{\tilde{s}_{1,\vec{j}_1}}_{S_1} \right) \otimes \nonumber \\
  \left( \sum\limits_{\vec{j}_2\in\{0,1\}^{k_2-1}}{ \tilde{\gamma}_{2,\vec{j}_2} \ket{\tilde{\varphi}_{2,\vec{j}_2}}_{D_2Z_2} \otimes \ket{\vec{j}_2}_{A_2} \otimes \anglestate{\tilde{c}_{2,\vec{j}_2}}_{C_2} } \otimes \ket{\tilde{s}_{2,\vec{j}_2}}_{S_2} \right)
\end{align}
for message-passing BPQM.

\subsection{Check node} 
In the original BPQM description, the check node merely consists of a $\cnot_{D_1D_2}$ gate on the qubits $D_1$ and $D_2$, therefore by using~\cref{eq:purestate_checknode} the resulting state is given by
\begin{equation}\label{eq:check_node_after_state_original}
  \sum\limits_{\vec{j}_1,\vec{j}_2,l\in\{0,1\}}{ \gamma_{1,\vec{j}_1}\gamma_{2,\vec{j}_2}\kappa_{\vec{j}_1,\vec{j}_2,l} \qstate{z_1\oplus z_2}{\theta_{1,\vec{j}_1}\boxstar_l\theta_{2,\vec{j}_2}}_{D_3} \otimes \ket{0^{n_3-k_3}}_{Z_3} \otimes \left(\ket{\vec{j}_1}\ket{\vec{j}_2}\ket{l}\right)_{A_3} } 
\end{equation}
where $\kappa_{\vec{j}_1,\vec{j}_2,l}:=(-1)^{l\cdot z_2}\tfrac1{\sqrt2}\sqrt{1+(-1)^l\cos(\theta_{1,\vec{j}_1})\cos(\theta_{2,\vec{j}_2})}$. Here we use $n_3:=n_1+n_2$ and $k_3:=k_1+k_2$ and introduced the new systems $D_3=D_1$, $Z_3=Z_1Z_2$ and $A_3=A_1A_2D_2$.

In a similar fashion, we now describe how the input state evolves in the message-passing picture. 
First we apply the $\cnot_{D_1D_2}$ gate. We write the effect on the state $\ket{\varphi_{1,\vec{j}_1}}_{D_1Z_1}\otimes\ket{\varphi_{2,\vec{j}_2}}_{D_2Z_2}$ as
\begin{equation}
  \cnot_{D_1D_2}\ket{\tilde{\varphi}_{1,\vec{j}_1}}_{D_1Z_1}\otimes\ket{\tilde{\varphi}_{2,\vec{j}_2}}_{D_2Z_2} = \tilde{\kappa}_{\vec{j}_1,\vec{j}_2,0}\ket{\xi_{\vec{j}_1,\vec{j}_2,0}}_{D_1Z_1Z_2}\ket{0}_{D_2} + \tilde{\kappa}_{\vec{j}_1,\vec{j}_2,1}\ket{\xi_{\vec{j}_1,\vec{j}_2,1}}_{D_1Z_1Z_2}\ket{1}_{D_2}\,,
\end{equation}
for
\begin{equation}\label{eq:def_kappa}
  \tilde{\kappa}_{\vec{j}_1,\vec{j}_2,l}:=(-1)^{l\cdot z_2}\norm{P_l\cdot \cnot_{D_1D_2}\cdot \ket{\tilde{\varphi}_{1,\vec{j}_1}}_{D_1Z_1}\otimes\ket{\tilde{\varphi}_{2,\vec{j}_2}}_{D_2Z_2}} \in\mathbb{R}
\end{equation}
and
\begin{equation}
  \ket{\xi_{\vec{j}_1,\vec{j}_2,l}}:=\frac{1}{\tilde{\kappa}_{\vec{j}_1,\vec{j}_2,l}}P_l\cdot \cnot_{D_1D_2}\cdot \ket{\tilde{\varphi}_{1,\vec{j}_1}}_{D_1Z_1}\otimes\ket{\tilde{\varphi}_{2,\vec{j}_2}}_{D_2Z_2}\,,
\end{equation}
where $P_l$ is the isometry $\id_{D_1Z_1Z_2} \otimes\bra{l}_{D_2}$ giving the amplitude in the subspace where the target qubit of the \cnot{} gate is in the state $\ket{l}$ for $l\in\{0,1\}$. Considering the classical computation of the output angle, the resulting final state will thus be
\begin{align}
  \sum\limits_{\vec{j}_1,\vec{j}_2,l} \tilde{\gamma}_{1,\vec{j}_1}\tilde{\gamma}_{2,\vec{j}_2}\tilde{\kappa}_{\vec{j}_1,\vec{j}_2,l} \ket{\xi_{\vec{j}_1,\vec{j}_2,l}}_{D_3Z_3} \otimes (\ket{\vec{j}_1}\ket{\vec{j}_2}\ket{l})_{A_3} \otimes \anglestate{\quant\left(\frac{\tilde{c}_{1,\vec{j}_1}+(-1)^l\tilde{c}_{2,\vec{j}_2}}{1+(-1)^l\tilde{c}_{1,\vec{j}_1}\tilde{c}_{2,\vec{j}_2}}\right)}_{C_3} \nonumber \\
  \otimes \left( \ket{\tilde{s}_{1,\vec{j}_1}}\ket{\tilde{s}_{2,\vec{j}_2}}\anglestate{\tilde{c}_{1,\vec{j}_1}} \anglestate{\tilde{c}_{2,\vec{j}_2}} \right)_{S_3}
\end{align}
where $S_3$ contains $S_1$, $S_2$ and the two $B$-qubit registers consumed by the node operation (in accordance with~\cref{fig:message_passing_rules}).

Before we prove the three statements of the lemma, we show that the summed square difference between $\kappa_{\vec{j}_1,\vec{j}_2,l}$ and $\tilde{\kappa}_{\vec{j}_1,\vec{j}_2,l}$ can be bounded:
\begin{align}
  &\sum\limits_l{ \abs{\kappa_{\vec{j}_1,\vec{j}_2,l}^2 - \tilde{\kappa}_{\vec{j}_1,\vec{j}_2,l}^2} }\nonumber\\
  &\leq \sum\limits_l{ 2\abs{\abs{\kappa_{\vec{j}_1,\vec{j}_2,l}} - \abs{\tilde{\kappa}_{\vec{j}_1,\vec{j}_2,l}}} }\\
%\resizebox{\textwidth}{!}{$
  &= 2\sum\limits_l{\abs{ \norm{P_l\cdot \cnot{}\cdot \ket{\tilde{\varphi}_{1,\vec{j}_1}}\otimes\ket{\tilde{\varphi}_{2,\vec{j}_2}}} - \norm{P_l\cdot \cnot{}\cdot \qstate{z_1}{\theta_{1,\vec{j}_1}}  \ket{0^{n_1-k_1}}\otimes\qstate{z_2}{\theta_{2,\vec{j}_2}}  \ket{0^{n_2-k_2}} }}}\\
%$}
  &\leq 2\sum\limits_l{\norm{ P_l\cdot \cnot{}\cdot \left( \ket{\tilde{\varphi}_{1,\vec{j}_1}}\otimes\ket{\tilde{\varphi}_{2,\vec{j}_2}} - \qstate{z_1}{\theta_{1,\vec{j}_1}} \otimes \ket{0^{n_1-k_1}}\otimes\qstate{z_2}{\theta_{2,\vec{j}_2}} \otimes \ket{0^{n_2-k_2}} \right) }}\\
  &\leq 4 \norm{ \ket{\tilde{\varphi}_{1,\vec{j}_1}}\otimes\ket{\tilde{\varphi}_{2,\vec{j}_2}} - \qstate{z_1}{\theta_{1,\vec{j}_1}} \otimes \ket{0^{n_1-k_1}}\otimes\qstate{z_2}{\theta_{2,\vec{j}_2}} \otimes \ket{0^{n_2-k_2}} }\\
  &\leq 4 \norm{ \ket{\tilde{\varphi}_{1,\vec{j}_1}} - \qstate{z_1}{\theta_{1,\vec{j}_1}} \otimes \ket{0^{n_1-k_1}} } + 4 \norm{ \ket{\tilde{\varphi}_{2,\vec{j}_2}} - \qstate{z_2}{\theta_{2,\vec{j}_2}} \otimes \ket{0^{n_2-k_2}} }\,,
\end{align}
where the first inequality follows from $|\kappa_{\vec{j}_1,\vec{j}_2,l}|,|\tilde{\kappa}_{\vec{j}_1,\vec{j}_2,l}|\leq 1$ and the Lipschitz continuity of the function $x\mapsto x^2$ on the interval $[-1,1]$ as well as the fact that $\kappa_{\vec{j}_1,\vec{j}_2,l}$ and $\tilde{\kappa}_{\vec{j}_1,\vec{j}_2,l}$ have the identical sign.
The second inequality follows from the reverse triangle inequality, the third inequality follows because the operator norm of $P_l\cdot \cnot$ is $1$, and the last inequality follows from the triangle inequality and the fact that 
\begin{align}
  \lVert \ket{f_1} \otimes \ket{g_1} - \ket{f_2} \otimes \ket{g_2}\rVert &= \lVert \ket{f_1} \otimes \ket{g_1} - \ket{f_2} \otimes (\ket{g_1} + (\ket{g_2}-\ket{g_1})) \rVert \nonumber \\
  &= \lVert (\ket{f_1}-\ket{f_2}) \otimes \ket{g_1} + \ket{f_2} \otimes (\ket{g_2}-\ket{g_1}) \rVert \nonumber \\
  &\leq \lVert \ket{f_1}-\ket{f_2}\rVert \lVert \ket{g_1}\rVert + \lVert \ket{f_2}\rVert \lVert \ket{g_1}-\ket{g_2}\rVert \nonumber \\
  &= \lVert \ket{f_1}-\ket{f_2}\rVert + \lVert \ket{g_1}-\ket{g_2}\rVert
\end{align}
for any normalized vectors $\ket{f_1},\ket{f_2},\ket{g_1}\ket{g_2}$.
We now prove the three statements of the lemma.

\subsubsection{Property 1}

\begin{align}
  &\sum\limits_{\vec{j}_1,\vec{j}_2,l} \gamma_{1,\vec{j}_1}^2\gamma_{2,\vec{j}_2}^2\kappa_{\vec{j}_1,\vec{j}_2,l}^2 \cdot \left| \cos(\theta_{1,\vec{j}_1}\boxstar_l\theta_{2,\vec{j}_2}) - \quant\left(\frac{\tilde{c}_{1,\vec{j}_1} + (-1)^l\tilde{c}_{2,\vec{j}_2}}{1 + (-1)^l\tilde{c}_{1,\vec{j}_1}\tilde{c}_{2,\vec{j}_2}}\right) \right|\\
  &\leq \delta + \sum\limits_{\vec{j}_1,\vec{j}_2,l} \gamma_{1,\vec{j}_1}^2\gamma_{2,\vec{j}_2}^2\kappa_{\vec{j}_1,\vec{j}_2,l}^2 \cdot \left| \cos(\theta_{1,\vec{j}_1}\boxstar_l\theta_{2,\vec{j}_2}) - \frac{\tilde{c}_{1,\vec{j}_1} + (-1)^l\tilde{c}_{2,\vec{j}_2}}{1 + (-1)^l\tilde{c}_{1,\vec{j}_1}\tilde{c}_{2,\vec{j}_2}} \right|\\
  &= \delta + \sum\limits_{\vec{j}_1,\vec{j}_2,l} \gamma_{1,\vec{j}_1}^2\gamma_{2,\vec{j}_2}^2 \cdot \left| \frac{\left(\cos\theta_{1,\vec{j}_1} + (-1)^l\cos\theta_{2,\vec{j}_2} \right)}{2} - \kappa_{\vec{j}_1,\vec{j}_2,l}^2\frac{\tilde{c}_{1,\vec{j}_1} + (-1)^l\tilde{c}_{2,\vec{j}_2}}{1 + (-1)^l\tilde{c}_{1,\vec{j}_1}\tilde{c}_{2,\vec{j}_2}} \right|
\end{align}
    \begin{align}
      \leq \delta + \sum\limits_{\vec{j}_1,\vec{j}_2,l} \gamma_{1,\vec{j}_1}^2\gamma_{2,\vec{j}_2}^2 \cdot | \frac{\left(\cos\theta_{1,\vec{j}_1} + (-1)^l\cos\theta_{2,\vec{j}_2} \right)}{2} - \frac{1+(-1)^l\tilde{c}_{1,\vec{j}_1}\tilde{c}_{2,\vec{j}_2}}{2}\frac{\tilde{c}_{1,\vec{j}_1} + (-1)^l\tilde{c}_{2,\vec{j}_2}}{1 + (-1)^l\tilde{c}_{1,\vec{j}_1}\tilde{c}_{2,\vec{j}_2}} | + \nonumber \\
       \sum\limits_{\vec{j}_1,\vec{j}_2,l} \gamma_{1,\vec{j}_1}^2\gamma_{2,\vec{j}_2}^2 \cdot | \frac{1+(-1)^l\tilde{c}_{1,\vec{j}_1}\tilde{c}_{2,\vec{j}_2}}{2}\frac{\tilde{c}_{1,\vec{j}_1} + (-1)^l\tilde{c}_{2,\vec{j}_2}}{1 + (-1)^l\tilde{c}_{1,\vec{j}_1}\tilde{c}_{2,\vec{j}_2}} - \kappa_{\vec{j}_1,\vec{j}_2,l}^2\frac{\tilde{c}_{1,\vec{j}_1} + (-1)^l\tilde{c}_{2,\vec{j}_2}}{1 + (-1)^l\tilde{c}_{1,\vec{j}_1}\tilde{c}_{2,\vec{j}_2}} |
    \end{align}
    \begin{align}
      \leq \delta + \frac{1}{2} \sum\limits_{\vec{j}_1,\vec{j}_2,l} \gamma_{1,\vec{j}_1}^2\gamma_{2,\vec{j}_2}^2 | \left(\cos\theta_{1,\vec{j}_1} + (-1)^l\cos\theta_{2,\vec{j}_2} \right) - \left( \tilde{c}_{1,\vec{j}_1} + (-1)^l\tilde{c}_{2,\vec{j}_2}\right) | + \nonumber \\
      \frac{1}{2}\sum\limits_{\vec{j}_1,\vec{j}_2,l} \gamma_{1,\vec{j}_1}^2\gamma_{2,\vec{j}_2}^2 |\cos\theta_{1,\vec{j}_1}\cos\theta_{2,\vec{j}_2} - \tilde{c}_{1,\vec{j}_1}\tilde{c}_{2,\vec{j}_2}|
    \end{align}
    Here we used that $\abs{\frac{\tilde{c}_{1,\vec{j}_1}+(-1)^l\tilde{c}_{2,\vec{j}_2}}{1+(-1)^l\tilde{c}_{1,\vec{j}_1}\tilde{c}_{2,\vec{j}_2}}}\leq 1$.
    \begin{align}
      \leq \delta + 
      \sum\limits_{\vec{j}_1,\vec{j}_2} \gamma_{1,\vec{j}_1}^2\gamma_{2,\vec{j}_2}^2 | \cos\theta_{1,\vec{j}_1} - \tilde{c}_{1,\vec{j}_1} | +
      \sum\limits_{\vec{j}_1,\vec{j}_2} \gamma_{1,\vec{j}_1}^2\gamma_{2,\vec{j}_2}^2 | \cos\theta_{2,\vec{j}_2} - \tilde{c}_{2,\vec{j}_2} | + \nonumber \\
      \sum\limits_{\vec{j}_1,\vec{j}_2} \gamma_{1,\vec{j}_1}^2\gamma_{2,\vec{j}_2}^2 | \cos\theta_{1,\vec{j}_1} - \tilde{c}_{1,\vec{j}_1} | +
      \sum\limits_{\vec{j}_1,\vec{j}_2} \gamma_{1,\vec{j}_1}^2\gamma_{2,\vec{j}_2}^2 | \cos\theta_{2,\vec{j}_2} - \tilde{c}_{2,\vec{j}_2} |
    \end{align}
    \begin{align}
      \leq \delta + 2\left( (2^{n_1+1}-3) + (2^{n_2+1}-3) \right)\delta = \delta \left( 1 + 4(2^{n_1}+2^{n-n_1}) - 12 \right)
    \end{align}
    By using that $2^{n_1}+2^{n_3-n_1}$ for $n_1\in \{1,2,\dots,n_3-1\}$ is maximized by $n_1=n_3-1$ we obtain that this term is bounded above by 
    \begin{equation}
       \delta \left( 1 + 2^{n_3+1} + 8 - 12 \right) = \delta \left( 2^{n_3+1} - 3 \right)\,.
    \end{equation}
    Note that the above derivation of property 1 is precisely the reason why~\cref{lem:discretization_error} is formulated in a probability-averaged manner instead of bounding the discretisation errors in every branch of the wavefunction separately.

  \subsubsection{Property 2}
\begin{align}
  \sum\limits_{\vec{j}_1,\vec{j}_2,l}&{ \left| \gamma_{1,\vec{j}_1}^2\gamma_{2,\vec{j}_2}^2\kappa_{\vec{j}_1,\vec{j}_2,l}^2 - \tilde{\gamma}_{1,\vec{j}_1}^2\tilde{\gamma}_{2,\vec{j}_2}^2\tilde{\kappa}_{\vec{j}_1,\vec{j}_2,l}^2 \right| }\nonumber\\
  &= \sum\limits_{\vec{j}_1,\vec{j}_2,l}  \left| \left(\gamma_{1,\vec{j}_1}^2-\tilde{\gamma}_{1,\vec{j}_1}^2\right)\gamma_{2,\vec{j}_2}^2\tilde{\kappa}_{\vec{j}_1,\vec{j}_2,l}^2 + \left(\gamma_{2,\vec{j}_2}^2-\tilde{\gamma}_{2,\vec{j}_2}^2\right)\tilde{\gamma}_{1,\vec{j}_1}^2\tilde{\kappa}_{\vec{j}_1,\vec{j}_2,l}^2 + \left(\kappa_{\vec{j}_1,\vec{j}_2,l}^2-\tilde{\kappa}_{\vec{j}_1,\vec{j}_2,l}^2\right)\gamma_{1,\vec{j}_1}^2\gamma_{2,\vec{j}_2}^2 \right| \\
  &\leq \sum\limits_{\vec{j}_1,\vec{j}_2,l}{  \left|\gamma_{1,\vec{j}_1}^2-\tilde{\gamma}_{1,\vec{j}_1}^2\right|\gamma_{2,\vec{j}_2}^2\tilde{\kappa}_{\vec{j}_1,\vec{j}_2,l}^2 + \left|\gamma_{2,\vec{j}_2}^2-\tilde{\gamma}_{2,\vec{j}_2}^2\right|\tilde{\gamma}_{1,\vec{j}_1}^2\tilde{\kappa}_{\vec{j}_1,\vec{j}_2,l}^2 + \left|\kappa_{\vec{j}_1,\vec{j}_2,l}^2-\tilde{\kappa}_{\vec{j}_1,\vec{j}_2,l}^2\right|\gamma_{1,\vec{j}_1}^2\gamma_{2,\vec{j}_2}^2 }\\
  &= \sum\limits_{\vec{j}_1}{  \left|\gamma_{1,\vec{j}_1}^2-\tilde{\gamma}_{1,\vec{j}_1}^2\right| }
  + \sum\limits_{\vec{j}_2}{  \left|\gamma_{2,\vec{j}_2}^2-\tilde{\gamma}_{2,\vec{j}_2}^2\right| }
  + \sum\limits_{\vec{j}_1,\vec{j}_2,l}{  \left|\kappa_{\vec{j}_1,\vec{j}_2,l}^2-\tilde{\kappa}_{\vec{j}_1,\vec{j}_2,l}^2\right| \gamma_{1,\vec{j}_1}^2\gamma_{2,\vec{j}_2}^2 }\\
  &\leq 2^{n+1/2}\pi\sqrt{\delta}\frac{1}{3}26^{n_1}  +  2^{n+1/2}\pi\sqrt{\delta}\frac{1}{3}26^{n_2}  + 
  4\sum\limits_{\vec{j}_1}{\gamma_{1,\vec{j}_1}^2\norm{ \ket{\tilde{\varphi}_{1,\vec{j}_1}} - \qstate{z_1}{\theta_{1,\vec{j}_1}} \otimes \ket{\vec{j}_1} }} \nonumber\\
  &\phantom{\leq}+4\sum\limits_{\vec{j}_2}{\gamma_{2,\vec{j}_2}^2\norm{ \ket{\tilde{\varphi}_{2,\vec{j}_2}} - \qstate{z_2}{\theta_{2,\vec{j}_2}} \otimes \ket{\vec{j}_2} }}\,,
\end{align}
where we used the bound on the difference between $\kappa_{\vec{j}_1,\vec{j}_2,l}^2$ and $\tilde{\kappa}_{\vec{j}_1,\vec{j}_2,l}^2$ shown previously. 
We thus obtain the upper bound 
\begin{align}
  &2^{n+1/2}\pi\sqrt{\delta}\frac{1}{3} \left( 26^{n_1} + 26^{n_2} + 12\cdot 26^{n_1} + 12\cdot 26^{n_2} \right)\nonumber\\
  &= 2^{n+1/2}\pi\sqrt{\delta}\frac{1}{3} \left( 13\cdot 26^{n_1} + 13\cdot 26^{n_2} \right)\\
  &\leq 2^{n+1/2}\pi\sqrt{\delta}\frac{1}{3} \left( 13\cdot 26^{n-1} + 13\cdot 26^{n-1} \right)\\
  & = 2^{n+1/2}\pi\sqrt{\delta}\tfrac{1}{3} 26^n\,.
\end{align}

\subsubsection{Property 3}
\begin{align}
  &\sum\limits_{\vec{j}_1,\vec{j}_2,l}{ \gamma_{1,\vec{j}_1}^2\gamma_{2,\vec{j}_2}^2\kappa_{\vec{j}_1,\vec{j}_2,l}^2 \norm{ \qstate{z_1\oplus z_2}{\theta_{1,\vec{j}_1}\boxstar_l\theta_{2,\vec{j}_2}} \otimes \ket{0^{n_1-k_1}}\ket{0^{n_2k-_2}} - \ket{\xi_{\vec{j}_1,\vec{j}_2,l}}} } \nonumber\\
  &= \sum\limits_{\vec{j}_1,\vec{j}_2,l}{ \norm{
  \gamma_{1,\vec{j}_1}^2\gamma_{2,\vec{j}_2}^2\kappa_{\vec{j}_1,\vec{j}_2,l}^2  \qstate{z_1\oplus z_2}{\theta_{1,\vec{j}_1}\boxstar_l\theta_{2,\vec{j}_2}} \otimes \ket{0^{n_1-k_1}}\ket{0^{n_2-k_2}}
  - \gamma_{1,\vec{j}_1}^2\gamma_{2,\vec{j}_2}^2\kappa_{\vec{j}_1,\vec{j}_2,l}^2 \ket{\xi_{\vec{j}_1,\vec{j}_2,l}}} } \\
\label{eq:proof51_prop13_eq1}
  &\leq \sum\limits_{\vec{j}_1,\vec{j}_2,l} \norm{
  \gamma_{1,\vec{j}_1}^2\gamma_{2,\vec{j}_2}^2\kappa_{\vec{j}_1,\vec{j}_2,l}^2  \qstate{z_1\oplus z_2}{\theta_{1,\vec{j}_1}\boxstar_l\theta_{2,\vec{j}_2}} \otimes \ket{0^{n_1-k_1}}\ket{0^{n_2-k_2}}
  - \tilde{\gamma}_{1,\vec{j}_1}^2\tilde{\gamma}_{2,\vec{j}_2}^2\tilde{\kappa}_{\vec{j}_1,\vec{j}_2,l}^2 \ket{\xi_{\vec{j}_1,\vec{j}_2,l}} } \nonumber \\
  &\phantom{\leq}+ \sum\limits_{\vec{j}_1,\vec{j}_2,l} | \gamma_{1,\vec{j}_1}^2\gamma_{2,\vec{j}_2}^2\kappa_{\vec{j}_1,\vec{j}_2,l}^2 - \tilde{\gamma}_{1,\vec{j}_1}^2\tilde{\gamma}_{2,\vec{j}_2}^2\tilde{\kappa}_{\vec{j}_1,\vec{j}_2,l}^2 |\,.
\end{align}
The second sum in~\cref{eq:proof51_prop13_eq1} is upper bounded by $2^{n+1/2}\pi\sqrt{\delta}(\frac{13}{3}26^{n_1}+\frac{13}{3}26^{n_2})$ by point 2. 
For the first term we use
\begin{align}
  &\kappa_{\vec{j}_1,\vec{j}_2,l}
  \qstate{z_1\oplus z_2}{\theta_{1,\vec{j}_1}\boxstar_l\theta_{2,\vec{j}_2}} \otimes \ket{0^{n_1-k_1}}\ket{0^{n_2-k_2}} \nonumber \\
  &=P_l\cdot \cnot\cdot
  \qstate{z_1}{\theta_{1,\vec{j}_1}} \otimes \ket{0^{n_1-k_1}}\otimes\qstate{z_2}{\theta_{2,\vec{j}_2}} \otimes \ket{0^{n_2-k_2}}
\end{align}
and
\begin{equation}
  \tilde{\kappa}_{\vec{j}_1,\vec{j}_2,l}
  \ket{\xi_{\vec{j}_1,\vec{j}_2,l}} = P_l\cdot \cnot\cdot 
  \ket{\tilde{\varphi}_{1,\vec{j}_1}}\ket{\tilde{\varphi}_{2,\vec{j}_2}}
\end{equation}
to obtain the upper bound 
\begin{align}
  \sum\limits_{\vec{j}_1,\vec{j}_2,l} \norm{ 
  \gamma_{1,\vec{j}_1}^2\gamma_{2,\vec{j}_2}^2\kappa_{\vec{j}_1,\vec{j}_2,l} \qstate{z_1}{\theta_{1,\vec{j}_1}} \ket{0^{n_1-k_1}}\otimes\qstate{z_2}{\theta_{2,\vec{j}_2}} \ket{0^{n_2-k_2}}
  - \tilde{\gamma}_{1,\vec{j}_1}^2\tilde{\gamma}_{2,\vec{j}_2}^2\tilde{\kappa}_{\vec{j}_1,\vec{j}_2,l} \ket{\tilde{\varphi}_{1,\vec{j}_1}}\ket{\tilde{\varphi}_{2,\vec{j}_2}}
  } 
\end{align}
because the operator norm of $P_l\cnot$ is $1$. 
By the triangle inquality, this can be further upper bounded by 
\begin{align}
  &\sum\limits_{\vec{j}_1,\vec{j}_2,l} |\tilde{\kappa}_{\vec{j}_1,\vec{j}_2,l}| \norm{ 
  \gamma_{1,\vec{j}_1}^2\gamma_{2,\vec{j}_2}^2 \qstate{z_1}{\theta_{1,\vec{j}_1}} \otimes \ket{0^{n_1-k_1}}\otimes\qstate{z_2}{\theta_{2,\vec{j}_2}} \otimes \ket{0^{n_2-k_2}}
  - \tilde{\gamma}_{1,\vec{j}_1}^2\tilde{\gamma}_{2,\vec{j}_2}^2 \ket{\tilde{\varphi}_{1,\vec{j}_1}}\ket{\tilde{\varphi}_{2,\vec{j}_2}}
  } \nonumber \\
  &\phantom{\leq}+ \sum\limits_{\vec{j}_1,\vec{j}_2,l} \gamma_{1,\vec{j}_1}^2\gamma_{2,\vec{j}_2}^2 \cdot | \kappa_{\vec{j}_1,\vec{j}_2,l} - \tilde{\kappa}_{\vec{j}_1,\vec{j}_2,l} |\\
\label{eq:proof51_prop13_eq2}
  &\leq 2\sum\limits_{\vec{j}_1,\vec{j}_2} \norm{ 
  \gamma_{1,\vec{j}_1}^2\gamma_{2,\vec{j}_2}^2 \qstate{z_1}{\theta_{1,\vec{j}_1}} \otimes \ket{0^{n_1-k_1}}\otimes\qstate{z_2}{\theta_{2,\vec{j}_2}} \otimes \ket{0^{n_2-k_2}}
  - \tilde{\gamma}_{1,\vec{j}_1}^2\tilde{\gamma}_{2,\vec{j}_2}^2 \ket{\tilde{\varphi}_{1,\vec{j}_1}}\ket{\tilde{\varphi}_{2,\vec{j}_2}}
  } \nonumber \\
  &\phantom{\leq}+ 2\sum\limits_{\vec{j}_1,\vec{j}_2} \gamma_{1,\vec{j}_1}^2\gamma_{2,\vec{j}_2}^2 \cdot \left( \norm{ \ket{\tilde{\varphi}_{1,\vec{j}_1}} - \qstate{z_1}{\theta_{1,\vec{j}_1}} \otimes \ket{\vec{j}_1} } + \norm{ \ket{\tilde{\varphi}_{2,\vec{j}_2}} - \qstate{z_2}{\theta_{2,\vec{j}_2}} \otimes \ket{\vec{j}_2} }\,,
\right)
\end{align}
where we used the previously shown bound on the difference
\begin{align}
  ||\kappa_{\vec{j}_1,\vec{j}_2,l}|-|\tilde{\kappa}_{\vec{j}_1,\vec{j}_2,l}|| = |\kappa_{\vec{j}_1,\vec{j}_2,l}-\tilde{\kappa}_{\vec{j}_1,\vec{j}_2,l}| \, .
\end{align}
By the induction step, the second term in~\cref{eq:proof51_prop13_eq2} is less than $2^{n+1/2}\pi\sqrt{\delta}(2\cdot 26^{n_1} + 2\cdot 26^{n_2})$. The first term can be bounded by
\begin{align}
  \leq 2\sum\limits_{\vec{j}_1,\vec{j}_2} \gamma_{1,\vec{j}_1}^2\gamma_{2,\vec{j}_2}^2 
  \norm { \qstate{z_1}{\theta_{1,\vec{j}_1}} \otimes \ket{0^{n_1-k_1}}\otimes\qstate{z_2}{\theta_{2,\vec{j}_2}} \otimes \ket{0^{n_2-k_2}}
  - \ket{\tilde{\varphi}_{1,\vec{j}_1}}\ket{\tilde{\varphi}_{2,\vec{j}_2}}
  } \nonumber \\
  + 2\sum\limits_{\vec{j}_1,\vec{j}_2} | \gamma_{1,\vec{j}_1}^2\gamma_{2,\vec{j}_2}^2 - \tilde{\gamma}_{1,\vec{j}_1}^2\tilde{\gamma}_{2,\vec{j}_2}^2 |
\end{align}
\begin{align}
  \leq 2\sum\limits_{\vec{j}_1} \gamma_{1,\vec{j}_1}^2 \norm{ \qstate{z_1}{\theta_{1,\vec{j}_1}} \otimes \ket{0^{n_1-k_1}} - \ket{\tilde{\varphi}_{1,\vec{j}_1}} }
  +    2\sum\limits_{\vec{j}_2} \gamma_{2,\vec{j}_2}^2 \norm{ \qstate{z_2}{\theta_{2,\vec{j}_2}} \otimes \ket{0^{n_2-k_2}} - \ket{\tilde{\varphi}_{2,\vec{j}_2}} } \nonumber \\
  +    2\sum\limits_{\vec{j}_1} |\gamma_{1,\vec{j}_1}^2 - \tilde{\gamma}_{1,\vec{j}_1}^2|
  +    2\sum\limits_{\vec{j}_2} |\gamma_{2,\vec{j}_2}^2 - \tilde{\gamma}_{2,\vec{j}_2}^2|
\end{align}
\begin{align}
  \leq 2^{n+1/2}\pi\sqrt{\delta} \left( 2\cdot 26^{n_1} + 2\cdot 26^{n_2} + \frac{2}{3}26^{n_1} + \frac{2}{3}26^{n_2} \right)\,.
\end{align}
To summarize, we have found that
\begin{align}
  &\sum\limits_{\vec{j}_1,\vec{j}_2,l}{ \gamma_{1,\vec{j}_1}^2\gamma_{2,\vec{j}_2}^2\kappa_{\vec{j}_1,\vec{j}_2,l}^2 \norm{ \qstate{z_1\oplus z_2}{\theta_{1,\vec{j}_1}\boxstar_l\theta_{2,\vec{j}_2}} \otimes \ket{\vec{j}_1}\ket{\vec{j}_2}\otimes\ket{l} - \ket{\xi_{\vec{j}_1,\vec{j}_2,l}}\otimes\ket{l}} }\\
  &\leq 2^{n+1/2}\pi\sqrt{\delta} \left( \frac{13}{3}26^{n_1} + \frac{13}{3}26^{n_1} + 2\cdot 26^{n_1} + 2\cdot 26^{n_2} + 2\cdot 26^{n_1} + 2\cdot 26^{n_2} + \frac{2}{3}26^{n_1} + \frac{2}{3}26^{n_2}\right)\\
  &\leq 2^{n+1/2}\pi\sqrt{\delta}\left( 9\cdot 26^{n_1} + 9\cdot 26^{n_2} \right)\\
  &\leq 2^{n+1/2}\pi\sqrt{\delta}\left( 13\cdot 26^{n_3-1} + 13\cdot 26^{n_3-1} \right)\\
  &= 2^{n+1/2}\pi\sqrt{\delta}\cdot 26^{n_3} \,.
\end{align}

\subsection{Equality node}
  In the original BPQM description, the equality node consists of a uniformly-controlled $U_{\ostar}$ gate which transforms our state to
\begin{equation}
  \sum\limits_{\vec{j}_1\in\{0,1\}^{n_1-1},\vec{j}_2\in\{0,1\}^{n_2-1}}{ \gamma_{1,\vec{j}_1}\gamma_{2,\vec{j}_2} \qstate{z}{\theta_{1,\vec{j}_1}\ostar\theta_{2,\vec{j}_2}}_{D_3} \otimes \ket{0^{n_3-k_3}}_{Z_3} \otimes \left(\ket{\vec{j}_1}\ket{\vec{j}_2}\right)_{A_3} } \,,
\end{equation}
where $n_3:=n_1+n_2$, $k_3:=k_1+k_2-1$ and $D_3=D_1$, $Z_3=Z_1Z_2$, $A_3=A_1A_2D_2$. Here we used the fact that $z_1=z_2=:z$, which is a direct consequence of claim 1 in the proof of~\cref{lem:purestate_evolution}. In the message-passing variant, the resulting state is 
  \begin{align}
    \sum\limits_{\vec{j}_1\in\{0,1\}^{n_1-1},\vec{j}_2\in\{0,1\}^{n_2-1}}
    \tilde{\gamma}_{1,\vec{j}_1}\tilde{\gamma}_{2,\vec{j}_2}
    \left(V_{\vec{j}_1,\vec{j}_2} \ket{\tilde{\varphi}_{1,\vec{j}_1}}\ket{\tilde{\varphi}_{2,\vec{j}_2}} \right)_{D_3Z_3}
    \otimes \left( \ket{\vec{j}_1}\ket{\vec{j}_2} \right)_{A_3}
    \otimes \ket{Q(\tilde{c}_{1,\vec{j}_1}\cdot\tilde{c}_{2,\vec{j}_2})}_{C_3} \nonumber \\
    \otimes \left( \ket{\tilde{s}_{1,\vec{j}_1}}\ket{\tilde{s}_{2,\vec{j}_2}}\anglestate{\tilde{c}_{1,\vec{j}_1}} \anglestate{\tilde{c}_{2,\vec{j}_2}} \right)_{S_3}
  \end{align}
  where $V_{\vec{j}_1,\vec{j}_2}$ describes the quantum circuit from the right-hand-side of~\cref{fig:uc_ustar_identity} acting on the system $D_1D_2$ with the $R_y$ rotation angles $\quantalphabeta{B}{\alpha(\tilde{\theta}_{1,\vec{j}_1},\tilde{\theta}_{2,\vec{j}_2})}$ and $\quantalphabeta{B}{\beta(\tilde{\theta}_{1,\vec{j}_1},\tilde{\theta}_{2,\vec{j}_2})}$.
  We now show that the three statements of the lemma hold in this case as well.

\subsubsection{Property 1}
  In the first case we have
  \begin{align}
    &\sum\limits_{\vec{j}_1,\vec{j}_2,k} \gamma_{1,\vec{j}_1}^2\gamma_{2,\vec{j}_2}^2 \abs{\cos(\theta_{1,\vec{j}_1}\ostar\theta_{2,\vec{j}_2}) - \quant{\tilde{c}_{1,\vec{j}_1}\cdot \tilde{c}_{2,\vec{j}_2}}}\\
    &\leq \delta + \sum\limits_{\vec{j}_1,\vec{j}_2,k} \gamma_{1,\vec{j}_1}^2\gamma_{2,\vec{j}_2}^2 \abs{\cos\theta_{1,\vec{j}_1} \cos\theta_{2,\vec{j}_2} - \tilde{c}_{1,\vec{j}_1}\cdot \tilde{c}_{2,\vec{j}_2}}\\
    &\leq \delta + \sum\limits_{\vec{j}_1} \gamma_{1,\vec{j}_1}^2 \abs{\cos\theta_{1,\vec{j}_1} - \tilde{c}_{1,\vec{j}_1}} +
    \sum\limits_{\vec{j}_2} \gamma_{2,\vec{j}_2}^2 \abs{\cos\theta_{2,\vec{j}_2} - \tilde{c}_{2,\vec{j}_2} }\\
    &\leq \delta + (2^{n_1+1}-3)\delta + (2^{n_2+1}-3)\delta \\
    &= \delta (2^{n_1+1} + 2^{n_2+1} - 5) \\
    &\leq \delta (2^{n_3} + 2^{n_3} - 3 ) \\
    &= (2^{{n_3}+1} - 3)\delta\,.
  \end{align}

\subsubsection{Property 2}
The second case is straightforward:  
  \begin{align}
    \sum\limits_{\vec{j}_1,\vec{j}_2}{| \gamma_{1,\vec{j}_1}^2\gamma_{2,\vec{j}_2}^2 -  \tilde{\gamma}_{1,\vec{j}_1}^2\tilde{\gamma}_{2,\vec{j}_2}^2 |} 
    &\leq
    \sum\limits_{\vec{j}_1}{| \gamma_{1,\vec{j}_1}^2 -  \tilde{\gamma}_{1,\vec{j}_1}^2 |} +
    \sum\limits_{\vec{j}_2}{| \gamma_{2,\vec{j}_2}^2 -  \tilde{\gamma}_{2,\vec{j}_2}^2 |}\\
  &  \leq 2^{n+1/2}\pi\sqrt{\delta}\tfrac{1}{3}(26^{n_1} + 26^{n_2})\\
&    \leq 2^{n+1/2}\pi\sqrt{\delta}\tfrac{1}{3}(26^{n_3-1} + 26^{n_3-1})\\
  &    \leq 2^{n+1/2}\pi\sqrt{\delta}\tfrac{1}{3}26^{n_3}\,.
  \end{align}

\subsubsection{Property 3}
Finally, for the third statement we use the triangle inequality to bound the error by a sum of three terms, which we then address individually. 
   \begin{align}
     \sum\limits_{\vec{j}_1,\vec{j}_2,k} \gamma_{1,\vec{j}_1}^2\gamma_{2,\vec{j}_2}^2 \norm{ U_{\ostar}(\theta_{1,\vec{j}_1},\theta_{2,\vec{j}_2})\qstate{z}{\theta_{1,\vec{j}_1}}\ket{0^{n_1-k_1}}\otimes\qstate{z}{\theta_{2,\vec{j}_2}}\ket{0^{n_2-k_2}} - V_{\vec{j}_1,\vec{j}_2}\ket{\tilde{\varphi}_{1,\vec{j}_1}}\ket{\tilde{\varphi}_{2,\vec{j}_2}} }
   \end{align}
   \begin{align}\label{eq:proof51_prop23_3errors}
     \leq
     \sum\limits_{\vec{j}_1,\vec{j}_2} \gamma_{1,\vec{j}_1}^2\gamma_{2,\vec{j}_2}^2 \norm{V_{\vec{j}_1,\vec{j}_2}\qstate{z}{\theta_{1,\vec{j}_1}}\ket{0^{n_1-k_1}}\otimes\qstate{z}{\theta_{2,\vec{j}_2}}\ket{0^{n_2-k_2}} - V_{\vec{j}_1,\vec{j}_2}\ket{\tilde{\varphi}_{1,\vec{j}_1}}\ket{\tilde{\varphi}_{2,\vec{j}_2}}} \nonumber \\
     + \sum\limits_{\vec{j}_1,\vec{j}_2} \gamma_{1,\vec{j}_1}^2\gamma_{2,\vec{j}_2}^2 \Vert U_{\ostar}(\theta_{1,\vec{j}_1},\theta_{2,\vec{j}_2})\qstate{z}{\theta_{1,\vec{j}_1}}\ket{0^{n_1-k_1}}\otimes\qstate{z}{\theta_{2,\vec{j}_2}}\ket{0^{n_2-k_2}} \nonumber \\
     - U_{\ostar}(\tilde{\theta}_{1,\vec{j}_1},\tilde{\theta}_{2,\vec{j}_2})\qstate{z}{\theta_{1,\vec{j}_1}}\ket{0^{n_2-k_2}}\otimes\qstate{z}{\theta_{2,\vec{j}_2}}\ket{0^{n_2-k_2}} \Vert + \nonumber \\
     + \sum\limits_{\vec{j}_1,\vec{j}_2} \gamma_{1,\vec{j}_1}^2\gamma_{2,\vec{j}_2}^2 \Vert U_{\ostar}(\tilde{\theta}_{1,\vec{j}_1},\tilde{\theta}_{2,\vec{j}_2})\qstate{z}{\theta_{1,\vec{j}_1}}\ket{0^{n_1-k_1}}\otimes\qstate{z}{\theta_{2,\vec{j}_2}}\ket{0^{n_2-k_2}} \nonumber \\
     - V_{\vec{j}_1,\vec{j}_2}\qstate{z}{\theta_{1,\vec{j}_1}}\ket{0^{n_1-k_1}}\otimes\qstate{z}{\theta_{2,\vec{j}_2}}\ket{0^{n_2-k_2}} \Vert 
    \end{align}
    where we introduced $\tilde{\theta}_{i,\vec{j}_i} := \arccos \tilde{c}_{\theta_i,\vec{j}_i}$. The first term of~\cref{eq:proof51_prop23_3errors} is simply the error of the incoming state. By removing $V_{\vec{j}_1,\vec{j}_2}$ from the expression, we see that it can be bounded by $2^{n+1/2}\pi\sqrt{\delta}(26^{n_1}+26^{n_2})$. 

    The second term of~\cref{eq:proof51_prop23_3errors} is the error caused by the fact that we are executing the wrong equality node unitary, due to the errornous angle stored in the angle register. We make use of following property for any set of angles $\phi_1,\phi_2,\tilde{\phi}_1,\tilde{\phi}_2$ and $w\in\{0,1\}$:
    \begin{align}
      &\norm{U_{\ostar}(\phi_1,\phi_2)\qstate{w}{\phi_1}\qstate{w}{\phi_2} - U_{\ostar}(\tilde{\phi}_1,\tilde{\phi}_2)\qstate{w}{\phi_1}\qstate{w}{\phi_2}}\\
      &\leq \norm{U_{\ostar}(\phi_1,\phi_2)\qstate{w}{\phi_1}\qstate{w}{\phi_2} - U_{\ostar}(\tilde{\phi}_1,\tilde{\phi}_2)\qstate{w}{\tilde{\phi}_1}\qstate{w}{\tilde{\phi}_2}} \nonumber \\
      &+ \norm{U_{\ostar}(\tilde{\phi}_1,\tilde{\phi}_2)\qstate{w}{\tilde{\phi}_1}\qstate{w}{\tilde{\phi}_2} - U_{\ostar}(\tilde{\phi}_1,\tilde{\phi}_2)\qstate{w}{\phi_1}\qstate{w}{\phi_2}}\\
      & \leq \norm{\qstate{w}{\phi_1\ostar\phi_2} - \qstate{w}{\tilde{\phi}_1\ostar\tilde{\phi}_2}} + \norm{\qstate{w}{\phi_1}\qstate{w}{\phi_2} - \qstate{w}{\tilde{\phi_1}}\qstate{w}{\tilde{\phi_2}}}\\
      & \leq \norm{\qstate{w}{\phi_1\ostar\phi_2} - \qstate{w}{\tilde{\phi}_1\ostar\tilde{\phi}_2}} + \norm{\qstate{w}{\phi_1} - \qstate{w}{\tilde{\phi_1}}} + \norm{\qstate{w}{\phi_2} - \qstate{w}{\tilde{\phi_2}}}\\
      &\leq \tfrac{1}{2}\left( |\phi_1\ostar\phi_2 - \tilde{\phi}_1\ostar\tilde{\phi}_2| + |\phi_1-\tilde{\phi}_1| + |\phi_2-\tilde{\phi}_2| \right)\,,
    \end{align}
    where we used $\norm{\qstate{w}{\alpha}-\qstate{w}{\beta}} = 2|\sin\frac{\alpha-\beta}{4}| \leq \frac{|\alpha-\beta|}{2}$. Therefore, the second term can be bounded by
\begin{align}
  &\frac{1}{2}\sum\limits_{\vec{j}_1,\vec{j}_2} \gamma_{1,\vec{j}_1}^2\gamma_{2,\vec{j}_2}^2 \left(
  |\theta_{1,\vec{j}_1}\ostar\theta_{2,\vec{j}_2} - \arccos(\tilde{c}_{1,\vec{j}_1}\cdot\tilde{c}_{2,\vec{j}_2})| + 
  |\theta_{1,\vec{j}_1} - \arccos\tilde{c}_{1,\vec{j}_1}| + |\theta_{2,\vec{j}_2} - \arccos\tilde{c}_{2,\vec{j}_2}|
  \right)\\
  &\leq \frac{\pi}{2\sqrt{2}}\sum\limits_{\vec{j}_1,\vec{j}_2} \gamma_{1,\vec{j}_1}^2\gamma_{2,\vec{j}_2}^2 \left(
  \sqrt{|\cos\theta_{1,\vec{j}_1}\cdot\cos\theta_{2,\vec{j}_2} - \tilde{c}_{1,\vec{j}_1}\cdot\tilde{c}_{2,\vec{j}_2}|} +
  \sqrt{|\cos\theta_{1,\vec{j}_1} - \tilde{c}_{1,\vec{j}_1}|} + \sqrt{|\cos\theta_{2,\vec{j}_2} - \tilde{c}_{2,\vec{j}_2}|}
  \right)\\
  &\leq \frac{\pi}{\sqrt{2}}\sum\limits_{\vec{j}_1,\vec{j}_2} \gamma_{1,\vec{j}_1}^2\gamma_{2,\vec{j}_2}^2 \left(
  \sqrt{|\cos\theta_{1,\vec{j}_1} - \tilde{c}_{1,\vec{j}_1}|} + \sqrt{|\cos\theta_{2,\vec{j}_2} - \tilde{c}_{2,\vec{j}_2}|}
  \right)\\
  &\leq \frac{\pi}{\sqrt{2}}\sum\limits_{\vec{j}_1,\vec{j}_2} \sqrt{\gamma_{1,\vec{j}_1}^2\gamma_{2,\vec{j}_2}^2} \left(
  \sqrt{|\cos\theta_{1,\vec{j}_1} - \tilde{c}_{1,\vec{j}_1}|} + \sqrt{|\cos\theta_{2,\vec{j}_2} - \tilde{c}_{2,\vec{j}_2}|}
  \right)\,,
\end{align}
    which follows from the Hölder-continuity of the $\arccos$ function $|\arccos x-\arccos y|\leq \frac{\pi}{\sqrt{2}}\sqrt{|x-y|}$ as well as the subadditivity of the square root.
    A direct consequence of the Cauchy-Schwarz inequality is the statement $\norm{\vec{z}}_1 \leq \sqrt{M}\norm{\vec{z}}_2$ for $\vec{z}\in\mathbb{R}^M$, which allows us to upper bound the second term by
    \begin{align}
      &2^{n/2}\frac{\pi}{\sqrt{2}}\left (
      \sqrt{\sum\limits_{\vec{j}_1} \gamma_{1,\vec{j}_1}^2 |\cos\theta_{1,\vec{j}_1} - \tilde{c}_{1,\vec{j}_1}|} + 
      \sqrt{\sum\limits_{\vec{j}_2} \gamma_{2,\vec{j}_2}^2 |\cos\theta_{2,\vec{j}_2} - \tilde{c}_{2,\vec{j}_2}|} 
      \right)\\
      &\leq 2^{n/2}\frac{\pi}{\sqrt{2}} \sqrt{\delta} \left( \sqrt{2^{n_1+1}-3} + \sqrt{2^{n_2+1}-3} \right)\,.
    \end{align}

    Finally, the third term of~\cref{eq:proof51_prop23_3errors} entails the fact, that the unitary $V_{\vec{j}_1,\vec{j}_2}$ is an imperfect implementation of the unitary $U_{\ostar}(\tilde{\theta}_{1,\vec{j}_1},\tilde{\theta}_{2,\vec{j}_2})$ due to the discretization error in the computation of the angles $\alpha$ and $\beta$ in the circuit of~\cref{fig:uc_ustar}. 
    Since we know that the discretization error in the computation of $\alpha,\beta$ is bounded by $\pi\delta$, this implies that the $R_y$ gates implement a rotation by an angle that differs by at most $\pi\delta$ from the desired value. 
    By using
    \begin{equation}
      \norm{R_y(x+\Delta)-R_y(x)}_{\text{op}} \leq \frac{|\Delta|}{2}
    \end{equation}
    for the operator norm $\norm{\cdot}_{\text{op}}$ and any real numbers $x,\Delta$, as well as the fact that 
    \begin{equation}
      \norm{U_1U_2\dots U_s - \tilde{U}_1\tilde{U}_2\dots\tilde{U}_s}_{\text{op}} \leq \norm{U_1-\tilde{U}_1}_{\text{op}} + \norm{U_2-\tilde{U}_2}_{\text{op}} + \dots + \norm{U_s-\tilde{U}_s}_{\text{op}}
    \end{equation}
    for any sequences of unitaries $U_i,\tilde{U}_i$ of length $s$, we can upper bound the third term by
    \begin{equation}
       \sum\limits_{\vec{j}_1,\vec{j}_2} \gamma_{1,\vec{j}_1}^2\gamma_{2,\vec{j}_2}^2
      (\tfrac{1}{2}\pi\delta + \tfrac{1}{2}\pi\delta)
      = \pi\delta \, .
    \end{equation}

    To summarize, we have shown that
    \begin{align}
     \sum\limits_{\vec{j}_1,\vec{j}_2,k}&  \gamma_{1,\vec{j}_1}^2\gamma_{2,\vec{j}_2}^2 \norm{ U_{\ostar}(\theta_{1,\vec{j}_1},\theta_{2,\vec{j}_2})\qstate{z}{\theta_{1,\vec{j}_1}}\ket{0^{n_1-k_1}}\otimes\qstate{z}{\theta_{2,\vec{j}_2}}\ket{0^{n_2-k_2}} - V_{\vec{j}_1,\vec{j}_2}\ket{\tilde{\varphi}_{1,\vec{j}_1}}\ket{\tilde{\varphi}_{2,\vec{j}_2}} }\\
     & \leq 2^{n+1/2}\pi\sqrt{\delta}(26^{n_1}+26^{n_2}) + 2^{(n-1)/2}\pi\sqrt{\delta}\left( \sqrt{2^{n_1+1}-3} + \sqrt{2^{n_2+1}-3} \right) + \pi\delta\\
     & \leq 2^{n+1/2}\pi\sqrt{\delta}(26^{n_1}+26^{n_2}) + 2^{(n-1)/2}\pi\sqrt{\delta}\left( \sqrt{2^{n}} + \sqrt{2^{n}} \right) + \pi\sqrt{\delta}\\
     & \leq 2^{n+1/2}\pi\sqrt{\delta} \left(26^{n_3-1}+26^{n_3-1} + 1 + 1\right) \\
     &\leq 2^{n+1/2}\pi\sqrt{\delta} 26^{n_3} \, .
    \end{align}

\section{Proof of \texorpdfstring{\cref{thm:error_probab}}{Theorem 5.2}}
\label{app:proof_complete_codeword}
Directly comparing BPQM with message-passing BPQM is impeded by the somewhat technical difficulty  that the two corresponding unitaries operate on a different number of qubits, since the message-passing BPQM requires additional ancilla $B$-qubit registers. 
For reasons purely related to the analysis of message-passing BPQM we now introduce a third variant of BPQM, which we call \emph{extended BPQM}.
Extended BPQM keeps track of the angles of the involved qubits in $B$-qubit quantum registers as in message-passing BPQM, and therefore it operates on the same number of qubits as message-passing BPQM.
The only difference to the message-passing variant is that the equality node unitaries $U_{\ostar}$ do not make use of the angle registers, but rather condition on the check node ancilla qubits as in  BPQM. 
Since the angle registers never influence the data registers, the decoding is equivalent to BPQM and is therefore optimal. 
We denote the BPQM unitaries to decode the codeword bit $X_r$ by $\vmp{r}$ for the message-passing algorithm and $\vext{r}$ for the extended algorithm.

\Cref{lem:discretization_error} makes a statement about a weighted average of vector norms. We translate this into a statement about the distance between the final states of message-passing BPQM and extended BPQM applied on some state $\ket{\Psi_{\vec{x}}},\vec{x}\in\mathcal{C}$ by using the Cauchy-Schwarz inequality and the triangle inequality:
\begin{align}
	&\norm{ \vext{r}\ket{\Psi_{\vec{x}}}\ket{\vec{0}} - \vmp{r}\ket{\Psi_{\vec{x}}}\ket{\vec{0}} }
  = \norm{\sum\limits_{\vec{j}} \gamma_{\vec{j}}\qstate{z}{\theta_{\vec{j}}}\ket{0^{n-k}}\ket{\vec{j}}\ket{\tilde{\theta}_{\vec{j}}}\ket{\tilde{s}_{\vec{j}}} - \sum\limits_{\vec{j}} \tilde{\gamma}_{\vec{j}}\ket{\tilde{\varphi}_{\vec{j}}}\ket{\vec{j}}\ket{\tilde{\theta}_{\vec{j}}}\ket{\tilde{s}_{\vec{j}}} }\\
&
  \leq \norm{\sum\limits_{\vec{j}} \gamma_{\vec{j}}\qstate{z}{\theta_{\vec{j}}}\ket{0^{n-k}}\ket{\vec{j}}\ket{\tilde{\theta}_{\vec{j}}}\ket{\tilde{s}_{\vec{j}}} - \sum\limits_{\vec{j}} \gamma_{\vec{j}}\ket{\tilde{\varphi}_{\vec{j}}}\ket{\vec{j}}\ket{\tilde{\theta}_{\vec{j}}}\ket{\tilde{s}_{\vec{j}}} } \nonumber \\
  & + \norm{ \sum\limits_{\vec{j}} \gamma_{\vec{j}}\ket{\tilde{\varphi}_{\vec{j}}}\ket{\vec{j}}\ket{\tilde{\theta}_{\vec{j}}}\ket{\tilde{s}_{\vec{j}}} - \sum\limits_{\vec{j}} \tilde{\gamma}_{\vec{j}}\ket{\tilde{\varphi}_{\vec{j}}}\ket{\vec{j}}\ket{\tilde{\theta}_{\vec{j}}}\ket{\tilde{s}_{\vec{j}}} }\\
&  \leq \sum\limits_{\vec{j}} |\gamma_{\vec{j}}| \norm{ \qstate{z}{\theta_{\vec{j}}}\ket{0^{n-k}} - \ket{\tilde{\varphi}_{\vec{j}}} }
  + \sum\limits_{\vec{j}} |\gamma_{\vec{j}} - \tilde{\gamma}_{\vec{j}}|\\
  &  \leq \sqrt{2^n} \sqrt{ \sum\limits_{\vec{j}} \gamma_{\vec{j}}^2 \norm{ \qstate{z}{\theta_{\vec{j}}}\ket{0^{n-k}} - \ket{\tilde{\varphi}_{\vec{j}}} }^2 }
  + \sum\limits_{\vec{j}} \sqrt{|\gamma_{\vec{j}}^2 - \tilde{\gamma}_{\vec{j}}^2|}\\
&  \leq \sqrt{2^n} \sqrt{ \sum\limits_{\vec{j}} \gamma_{\vec{j}}^2 \norm{ \qstate{z}{\theta_{\vec{j}}}\ket{0^{n-k}} - \ket{\tilde{\varphi}_{\vec{j}}} } }
  + \sqrt{2^n}\sqrt{ \sum\limits_{\vec{j}} |\gamma_{\vec{j}}^2 - \tilde{\gamma}_{\vec{j}}^2| }\,,
\end{align}
where $\vec{0}$ captures all the ancilla qubits required to perform the message-passing/extended BPQM variant. 
The fact that $|\gamma_{\vec{j}}-\tilde{\gamma}_{\vec{j}}|\leq \sqrt{|\gamma_{\vec{j}}^2-\tilde{\gamma}_{\vec{j}}^2|}$ follows from the fact that $\gamma_{\vec{j}}$ and $\tilde{\gamma}_{\vec{j}}$ have the same sign, which in turn follows directly from the definition~\cref{eq:def_kappa} of $\tilde{\kappa}_{\vec{j}_1,\vec{j}_2,l}$ in~\cref{lem:discretization_error} that guarantees it to have the same sign as $\kappa_{\vec{j}_1,\vec{j}_2,l}$.
Appealing to \cref{item:2,item:3} in \cref{lem:discretization_error} and using $\delta\leq 2^{-B}$ gives
\begin{align}
 \norm{ \vext{r}\ket{\Psi_{\vec{x}}}\ket{\vec{0}} - \vmp{r}\ket{\Psi_{\vec{x}}}\ket{\vec{0}} }
 % &\leq \sqrt{2^n}2^{n+\tfrac12}\pi\sqrt{\delta}26^n + \sqrt{2^n}2^{n+\tfrac12}\pi\sqrt{\delta}\frac{1}{3}26^n\\ 
  &\leq \frac{2^{5/4}\sqrt{\pi}}{\sqrt{3}} 2^{n\cdot (1+\frac{1}{2}\frac{\log 26}{\log 2}) - \frac{1}{4}B} \nonumber\\
  &=\alpha 2^{\beta n - \gamma B}\label{eq:vextmpbound}\,,
\end{align}
 defining  $\alpha:=\frac{2^{5/4}\sqrt{\pi}}{\sqrt{3}}$, $\beta:=1+\frac{1}{2}\frac{\log 26}{\log 2}$, $\gamma:=1/4$.

Using this result, we can now analyze how message-passing BPQM performs when decoding the complete codeword.
More specifically, we consider that we decode in sequence the codeword bits $X_1,\dots,X_k$.
We want to determine how much lower the probability of decoding all codeword bits correctly is compared to the extended variant of BPQM. For that purpose, we consider the non-normalized post-measurement state conditioned on the $k$ measurement outcomes all being correct.
\footnote{The postulates of quantum mechanics dictate that a post-measurement state (conditioned on some outcome) is obtained by applying a projector on our state and then renormalizing it with the corresponding probability of the measurement outcome. By explicitly not performing this normalization, the terms involved in our analysis remain simpler.}
More precisely, for $P_{j}=(H\ketbra{x_{j}}H\otimes \id)$ the projection operator corresponding to correctly measuring the $j$th codeword bit, we inductively define
\begin{equation}
  \ketmp{\phi_0} := \ket{\Psi_{\vec{x}}} \otimes \ket{\vec{0}}, \, \ketext{\phi_0} := \ketmp{\phi_0}
\end{equation}
\begin{equation}
  \ketmp{\phi_{j}} := (\vmp{j})^{\dagger} P_{j} (\vmp{j}) \ketmp{\phi_{j-1}}
\end{equation}
\begin{equation}
  \ketext{\phi_{j}} := (\vext{j})^{\dagger} P_{j} (\vext{j}) \ketext{\phi_{j-1}}
\end{equation}
such that $\ketmp{\phi_j}$ and $\ketext{\phi_j}$ denote the state of the message-passing algorithm, respectively extended algorithm, after the $j$-th codeword bit has been correctly decoded, where $j=1,\dots,k$.
The probability of correctly decoding the complete codeword is therefore given by $\norm{\ketmp{\phi_k}}$ for the message-passing algorithm, respectively by $\norm{\ketext{\phi_k}}$ for the extended variant of BPQM.
Surely the latter is larger than the former, since the BPQM algorithm is itself optimal. 

Let $\mathcal{H}$ denote the space spanned by the states $\{\ket{\Psi_{\vec{x}}}\otimes\ket{\vec{0}} | \vec{x}\in\mathcal{C}\}$.
As seen in~\cref{sec:optimality}, this is also the space spanned by the orthonormal vectors of the PGM $\{\ket{f_{\vec{x}}}\otimes\ket{\vec{0}} | \vec{x}\in\mathcal{C}\}$.
The central part of the proof is an analysis how the discretization errors propagate throughout the decoding process. 
We capture the result in the following claim:
\begin{claim}
For all $j\in\{0,\dots,k\}$, \emph{$\ketext{\phi_j}\in\mathcal{H}$} and \emph{$\norm{\ketext{\phi_j} - \ketmp{\phi_j}}\leq j\cdot 2^{k/2+1}\alpha2^{\beta n-\gamma B}$}.
\end{claim}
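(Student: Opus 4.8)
The plan is to prove the claim by induction on $j$, tracking two invariants simultaneously: (i) that the extended-BPQM post-measurement state $\ketext{\phi_j}$ stays inside the PGM subspace $\mathcal H$, and (ii) that the message-passing post-measurement state stays close to it. First I would establish the base case $j=0$: both states equal $\ket{\Psi_{\vec x}}\otimes\ket{\vec 0}$, which lies in $\mathcal H$ by definition, and the distance is zero, so the bound holds trivially.

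For the inductive step, assume $\ketext{\phi_{j-1}}\in\mathcal H$ and $\norm{\ketext{\phi_{j-1}}-\ketmp{\phi_{j-1}}}\le (j-1)\cdot 2^{k/2+1}\alpha 2^{\beta n-\gamma B}$. The first order of business is to show $\ketext{\phi_j}=(\vext{j})^\dagger P_j(\vext{j})\ketext{\phi_{j-1}}\in\mathcal H$. This should follow from the analysis in~\cref{sec:optimality}: on $\mathcal H$, the operator $(\vext{j})^\dagger P_j(\vext{j})$ acts as the PGM projector $\Pi_{x_j=m_j}$ of~\cref{eq:def_pgm_projector} (since extended BPQM is equivalent to BPQM on the data registers, with the angle registers decoupled), and $\Pi_{x_j=m_j}$ maps $\mathcal H$ into itself because it is a sum of the orthonormal projectors $\ket{f_{\vec x}}\bra{f_{\vec x}}$ spanning $\mathcal H$. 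Hence $\ketext{\phi_j}\in\mathcal H$, and in fact $\ketext{\phi_j}$ is a (sub-normalized) vector in the span of a subset of the $\ket{f_{\vec x}}\ket{\vec 0}$.

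Next I would bound $\norm{\ketext{\phi_j}-\ketmp{\phi_j}}$ by inserting intermediate terms and using the triangle inequality. Write $\ketext{\phi_j}-\ketmp{\phi_j}$ as a telescoping sum: replace $\vmp{j}$ by $\vext{j}$ one factor at a time, absorbing the fact that $P_j$ is a projection (norm $\le 1$) and $\vext{j},\vmp{j}$ are unitary. Concretely,
\begin{align}
\norm{\ketext{\phi_j}-\ketmp{\phi_j}}
&\le \norm{(\vext{j})^\dagger P_j \vext{j}\big(\ketext{\phi_{j-1}}-\ketmp{\phi_{j-1}}\big)} \nonumber\\
&\quad + \norm{\big((\vext{j})^\dagger P_j \vext{j} - (\vmp{j})^\dagger P_j \vmp{j}\big)\ketmp{\phi_{j-1}}}\,.
\end{align}
The first term is at most $\norm{\ketext{\phi_{j-1}}-\ketmp{\phi_{j-1}}}\le(j-1)2^{k/2+1}\alpha 2^{\beta n-\gamma B}$ by the inductive hypothesis. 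For the second term, I would further split $(\vext{j})^\dagger P_j \vext{j} - (\vmp{j})^\dagger P_j \vmp{j}$ into two pieces by inserting $(\vext{j})^\dagger P_j \vmp{j}$, giving a bound of $2\norm{(\vext{j}-\vmp{j})\ketmp{\phi_{j-1}}}$. The key point is that $\ketmp{\phi_{j-1}}$ is close (within $(j-1)2^{k/2+1}\alpha2^{\beta n-\gamma B}$) to $\ketext{\phi_{j-1}}\in\mathcal H$, so $\norm{(\vext{j}-\vmp{j})\ketmp{\phi_{j-1}}}\le \norm{(\vext{j}-\vmp{j})\ketext{\phi_{j-1}}} + (j-1)2^{k/2+1}\alpha2^{\beta n-\gamma B}$, and the first summand here is controlled by expanding $\ketext{\phi_{j-1}}$ in the orthonormal PGM basis $\{\ket{f_{\vec x}}\ket{\vec 0}\}$: since at most $2^k$ coefficients appear and each has modulus $\le 1$, Cauchy-Schwarz plus \cref{eq:vextmpbound} (which bounds $\norm{(\vext{r}-\vmp{r})\ket{\Psi_{\vec x}}\ket{\vec 0}}$, and hence also acting on each $\ket{f_{\vec x}}$ since these span the same space) yields a factor $2^{k/2}\alpha 2^{\beta n-\gamma B}$. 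Collecting terms and bounding the numerical constants should give exactly the claimed increment $2^{k/2+1}\alpha2^{\beta n-\gamma B}$ per step, completing the induction.

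The main obstacle I anticipate is the bookkeeping around applying \cref{eq:vextmpbound}, which is stated for the specific product states $\ket{\Psi_{\vec x}}$, to a general state of $\mathcal H$ such as $\ketext{\phi_{j-1}}$ or the PGM vectors $\ket{f_{\vec x}}$. The cleanest route is to expand in a fixed basis of $\mathcal H$ — either $\{\ket{\Psi_{\vec x}}\ket{\vec 0}\}$ (linearly independent but not orthonormal) or $\{\ket{f_{\vec x}}\ket{\vec 0}\}$ (orthonormal) — and argue that $\vext{r}-\vmp{r}$ has small norm when restricted to $\mathcal H$; using the orthonormal PGM basis makes the Cauchy-Schwarz step lose only a $\sqrt{2^k}$ factor, which is exactly what appears in the target bound. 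One must be slightly careful that \cref{eq:vextmpbound} was derived for the $\ket{\Psi_{\vec x}}$, but since $\mathrm{span}\{\ket{\Psi_{\vec x}}\}=\mathrm{span}\{\ket{f_{\vec x}}\}=\mathcal H$ and both $\vext{r},\vmp{r}$ are unitary (so the difference is linear), the restriction of $\vext{r}-\vmp{r}$ to $\mathcal H$ is determined by its action on any spanning set; a short linear-algebra argument converts the per-codeword bound into an operator-norm-type bound on $\mathcal H$ up to the $\sqrt{2^k}$ dimension factor. Everything else is routine triangle-inequality and unitary-invariance manipulation.
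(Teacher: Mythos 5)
Your base case, your argument that $\ketext{\phi_j}\in\mathcal{H}$ (via the identification of $(\vext{j})^\dagger P_j\vext{j}$ with the PGM projector $\Pi_{x_j=m_j}$ on $\mathcal{H}$), and your overall strategy of expanding in a spanning set of $\mathcal{H}$ plus Cauchy--Schwarz all match the paper. The problem is in your two-term decomposition of the inductive step. You split off $\norm{\bigl((\vext{j})^\dagger P_j \vext{j} - (\vmp{j})^\dagger P_j \vmp{j}\bigr)\ketmp{\phi_{j-1}}}$ and bound it via $\norm{(\vext{j}-\vmp{j})\ketmp{\phi_{j-1}}}$, then transfer from $\ketmp{\phi_{j-1}}$ back to $\ketext{\phi_{j-1}}\in\mathcal{H}$ by paying the induction hypothesis a second time. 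This makes the recursion multiplicative rather than additive: writing $\epsilon_j:=\norm{\ketext{\phi_j}-\ketmp{\phi_j}}$, your estimate has the form $\epsilon_j\le \epsilon_{j-1} + 2\bigl(\norm{(\vext{j}-\vmp{j})\ketext{\phi_{j-1}}} + c\,\epsilon_{j-1}\bigr)$ with $c\ge 1$ (in fact $c=2$, since $\norm{\vext{j}-\vmp{j}}_{\mathrm{op}}$ can only be bounded by $2$, a factor you also drop), which yields $\epsilon_j = \mathcal{O}(3^j)$ or $\mathcal{O}(5^j)$ rather than the claimed linear growth in $j$. The paper avoids this by inserting the intermediate vectors so that \emph{both} new-error terms act on $\ketext{\phi_j}\in\mathcal{H}$ — namely it bounds $\norm{((\vext{j+1})^\dagger-(\vmp{j+1})^\dagger)P_{j+1}\vext{j+1}\ketext{\phi_j}}$ and $\norm{(\vext{j+1}-\vmp{j+1})\ketext{\phi_j}}$ separately — and the induction hypothesis enters exactly once, with coefficient $1$, as the third term $\norm{(\vmp{j+1})^\dagger P_{j+1}\vmp{j+1}(\ketext{\phi_j}-\ketmp{\phi_j})}$.

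Two further points. First, your reduction of the operator difference to $2\norm{(\vext{j}-\vmp{j})\ketmp{\phi_{j-1}}}$ conflates the adjoint piece $((\vext{j})^\dagger-(\vmp{j})^\dagger)P_j\vmp{j}\ketmp{\phi_{j-1}}$ with $(\vext{j}-\vmp{j})$ acting on $\ketmp{\phi_{j-1}}$; these are different operators applied to different vectors. The paper handles the adjoint piece by noting $P_{j+1}\vext{j+1}\ketext{\phi_j}\in\vext{j+1}\mathcal{H}$, expanding it as $\sum_{\vec x}b_{\vec x}\vext{j+1}\ket{\Psi_{\vec x}}$, and using unitary invariance to recover $\norm{\vmp{j+1}\ket{\Psi_{\vec x}}-\vext{j+1}\ket{\Psi_{\vec x}}}$. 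Second, your suggestion to expand in the orthonormal PGM basis $\{\ket{f_{\vec x}}\}$ and then ``convert'' the per-codeword bound of \cref{eq:vextmpbound} into a bound on $\mathcal{H}$ is not as harmless as you suggest: $\ket{f_{\vec x}}=\tfrac{1}{\sqrt{2^k}}\rho^{-1/2}\ket{\Psi_{\vec x}}$ involves $\rho^{-1/2}$, whose norm blows up when the $\ket{\Psi_{\vec x}}$ become nearly linearly dependent (small $\theta$), so the conversion factor is governed by the Gram matrix of $\{\ket{\Psi_{\vec x}}\}$, not merely by $\sqrt{2^k}$. The paper stays in the $\{\ket{\Psi_{\vec x}}\}$ expansion with $\sum_{\vec x}|a_{\vec x}|^2\le 1$ precisely to keep the Cauchy--Schwarz loss at $\sqrt{2^k}$.
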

The reverse triangle inequality immediately gives
 $ \norm{\ketext{\phi_j}}-\norm{\ketmp{\phi_j}}\leq \norm{\ketext{\phi_j} - \ketmp{\phi_j}}$,
  %\geq | \norm{\ket{\phi_j^e}} - \norm{\ket{\phi_j^h} - \ket{\phi_j^{mp}}} |
  %\geq \norm{\ket{\phi_j^e}} - \norm{\ket{\phi_j^h} - \ket{\phi_j^{mp}}}
which by the claim implies 
\begin{equation}
  p^{(\text{ext})}-p^{(\text{mp})}
  \leq k 2^{k/2+1}\alpha2^{\beta n-\gamma B}
  \leq n 2^{n/2+1}\alpha2^{\beta n-\gamma B}
  = 2n\alpha 2^{(\beta + 1/2)n-\gamma B}
  \, .
\end{equation}

It now only remains to show the validity of the claim. We make an inductive argument in $j$. Obviously for $j=0$ the statement holds, as $\ketext{\phi_0} = \ketmp{\phi_0}=\ket{\Psi_{\vec{x}}}\otimes\ket{\vec{0}}$.
We now show the induction step as follows. 
First,
\begin{align}
\norm{\ketext{\phi_{j+1}} - \ketmp{\phi_{j+1}}}
&=\norm{(\vext{j+1})^\dagger P_{j+1} \vext{j+1}\ketext{\phi_{j}} - (\vmp{j+1})^\dagger P_{j+1} \vmp{j+1}\ketmp{\phi_{j}}}\\
&\leq \norm{(\vext{j+1})^\dagger P_{j+1} \vext{j+1}\ketext{\phi_j} - (\vmp{j+1})^\dagger P_{j+1} \vext{j+1}\ketext{\phi_j}}\nonumber\\
&\quad+\norm{(\vmp{j+1})^\dagger P_{j+1} \vext{j+1}\ketext{\phi_j} - (\vmp{j+1})^\dagger P_{j+1} \vmp{j+1}\ketext{\phi_j}}\nonumber\\
&\quad+\norm{(\vmp{j+1})^\dagger P_{j+1} \vmp{j+1}\ketext{\phi_j} - (\vmp{j+1})^\dagger P_{j+1} \vmp{j+1}\ketmp{\phi_{j}} }\\
&\leq \norm{(\vext{j+1})^\dagger \ket{\varphi_{j+1}} - (\vmp{j+1})^\dagger \ket{\varphi_{j+1}}}\nonumber\\
&\quad+\norm{\vext{j+1}\ketext{\phi_j} - \vmp{j+1}\ketext{\phi_j}}+\norm{\ketext{\phi_j}-\ketmp{\phi_j}} \label{eq:claimf1_split}\,,
\end{align}
where the first inequality is the triangle inequality and the second follows from the fact that a projection cannot increase the norm. 
Additionally, we have defined the vectors
\begin{equation}
  \ket{\varphi_{j+1}} := P_{j+1} (\vext{j+1})\ketext{\phi_{j}}\,.
\end{equation} 

The third term of~\cref{eq:claimf1_split} can be bounded by $j\cdot 2\alpha2^{\beta n-\gamma B}(1+2^{k/2})$ by the induction hypothesis.

By the induction hypothesis, $\ketext{\phi_j}$ must lie in $\mathcal{H}$ and therefore one can find coefficients $a_{\vec{x}}$ such that
\begin{equation}
  \ketext{\phi_j} = \sum\limits_{\vec{x}\in\mathcal{C}} a_{\vec{x}} \ket{\Psi_{\vec{x}}}
\end{equation}
and $\sum_{\vec{x}}|a_{\vec{x}}|^2\leq 1$. By the triangle inequality we have 
\begin{align}
  & \norm{\vext{j+1}\ketext{\phi_j} - \vmp{j+1}\ketext{\phi_j}} \nonumber\\
  &\leq \sum\limits_{\vec{x}\in\mathcal{C}} |a_{\vec{x}}| \norm{\vext{j+1}\ket{\Psi_{\vec{x}}} - \vmp{j+1}\ket{\Psi_{\vec{x}}}} \,.
\end{align}
Bounding the norm in the summation by \cref{eq:vextmpbound} and invoking the Cauchy-Schwarz inequality to bound $\sum_{\vec{x}\in\mathcal{C}} |a_{\vec{x}}|$ by $\sqrt{2^k}$ gives
\begin{equation}
  \norm{\vext{j+1}\ketext{\phi_j} - \vmp{j+1}\ketext{\phi_j}} \leq 2^{k/2} \alpha 2^{\beta n -\gamma B}
\end{equation}

For the first term of~\cref{eq:claimf1_split} we can make a similar argument: $\ket{\varphi_{j+1}}$ can be expanded as
\begin{equation}
  \ket{\varphi_{j+1}} = \sum\limits_{\vec{x}\in\mathcal{C}} b_{\vec{x}} (\vext{j+1})\ket{\Psi_{\vec{x}}}
\end{equation}
for some coefficients $b_{\vec{x}}$. We thus get
\begin{align}
  &\norm{(\vext{j+1})^{\dagger}\ket{\varphi_{j+1}} - (\vmp{j+1})^{\dagger}\ket{\varphi_{j+1}} }\nonumber\\
  &\leq \sum\limits_{\vec{x}\in\mathcal{C}} |b_{\vec{x}}| \norm{(\vext{j+1})^{\dagger}(\vext{j+1}) \ket{\Psi_{\vec{x}}}\otimes\ket{\vec{0}} - (\vmp{j+1})^{\dagger}(\vext{j+1}) \ket{\Psi_{\vec{x}}}\otimes\ket{\vec{0}} }\\
  &= \sum\limits_{\vec{x}\in\mathcal{C}} |b_{\vec{x}}| \norm{(\vmp{j+1}) \ket{\Psi_{\vec{x}}}\otimes\ket{\vec{0}} - (\vext{j+1}) \ket{\Psi_{\vec{x}}}\otimes\ket{\vec{0}} } \\
  &\leq 2^{k/2} \alpha 2^{\beta n -\gamma B} \,.
\end{align}

Summarizing all three terms, we thus obtain
\begin{align}
  \norm{\ketext{\phi_{j+1}} - \ketmp{\phi_{j+1}}}
  & \leq j 2^{k/2 + 1}  \alpha 2^{\beta n-\gamma B} +  2 \cdot 2^{k/2} \alpha 2^{\beta n-\gamma B} \\
  & = (j+1) 2^{k/2 + 1}  \alpha 2^{\beta n-\gamma B} \, .
\end{align}
To complete the proof, we now argue that $\ketext{\phi_{j+1}}\in\mathcal{H}$.
This follows directly from $\ketext{\phi_{j}}\in\mathcal{H}$ and the observation made in~\cref{sec:optimality} that $(\vext{j})^{\dagger} P_{j} (\vext{j})$ realizes the rank-$2^{k-1}$ projector $\Pi_{x_r=m_r}$ introduced in~\cref{eq:def_pgm_projector}.

\section{Numerical simulation of discretization errors}\label{app:numerical_discretization_errors}
Here we consider a $(17,11)$ code, specifically chosen for its MPG structure, which contains multiple subsequent equality and check nodes.
The generator matrix and parity-check matrix in standard form are
\begin{equation}
G=\begin{pmatrix}
 1 & 0 & 0 & 0 & 0 & 0 & 1 & 0 & 1 & 0 & 0 & 0 & 0 & 0 & 1 & 0 & 1 \\
 0 & 1 & 0 & 0 & 1 & 0 & 1 & 0 & 1 & 0 & 0 & 0 & 0 & 0 & 0 & 0 & 0 \\
 0 & 0 & 1 & 0 & 1 & 0 & 1 & 0 & 1 & 0 & 0 & 0 & 0 & 0 & 0 & 0 & 0 \\
 0 & 0 & 0 & 1 & 1 & 0 & 0 & 0 & 0 & 0 & 0 & 0 & 0 & 0 & 0 & 0 & 0 \\
 0 & 0 & 0 & 0 & 0 & 1 & 1 & 0 & 0 & 0 & 0 & 0 & 0 & 0 & 0 & 0 & 0 \\
 0 & 0 & 0 & 0 & 0 & 0 & 0 & 1 & 1 & 0 & 0 & 0 & 0 & 0 & 0 & 0 & 0 \\
 0 & 0 & 0 & 0 & 0 & 0 & 0 & 0 & 0 & 1 & 0 & 0 & 1 & 0 & 1 & 0 & 1 \\
 0 & 0 & 0 & 0 & 0 & 0 & 0 & 0 & 0 & 0 & 1 & 0 & 1 & 0 & 1 & 0 & 1 \\
 0 & 0 & 0 & 0 & 0 & 0 & 0 & 0 & 0 & 0 & 0 & 1 & 1 & 0 & 0 & 0 & 0 \\
 0 & 0 & 0 & 0 & 0 & 0 & 0 & 0 & 0 & 0 & 0 & 0 & 0 & 1 & 1 & 0 & 0 \\
 0 & 0 & 0 & 0 & 0 & 0 & 0 & 0 & 0 & 0 & 0 & 0 & 0 & 0 & 0 & 1 & 1 
\end{pmatrix}\,,
\end{equation}
\begin{equation}
\label{eq:1711paritycheck}
  H = \begin{pmatrix}
 0 & 1 & 1 & 1 & 1 & 0 & 0 & 0 & 0 & 0 & 0 & 0 & 0 & 0 & 0 & 0 & 0 \\
 0 & 0 & 0 & 0 & 0 & 1 & 1 & 1 & 1 & 0 & 0 & 0 & 0 & 0 & 0 & 0 & 0 \\
 0 & 0 & 0 & 0 & 0 & 0 & 0 & 0 & 0 & 1 & 1 & 1 & 1 & 0 & 0 & 0 & 0 \\
 0 & 0 & 0 & 0 & 0 & 0 & 0 & 0 & 0 & 0 & 0 & 0 & 0 & 1 & 1 & 1 & 1 \\
 0 & 1 & 1 & 0 & 0 & 1 & 1 & 0 & 0 & 1 & 1 & 0 & 0 & 1 & 1 & 0 & 0 \\
 1 & 1 & 1 & 0 & 0 & 1 & 1 & 0 & 0 & 0 & 0 & 0 & 0 & 0 & 0 & 0 & 0 \\
  \end{pmatrix}\,.
\end{equation}

Simulating the full quantum circuit of message-passing BPQM would be unfeasible, due to the high number of qubits involved, which is larger than $n(B+1)$.
Thanks to the structure of the algorithm, it is possible to store the intermediate states of the circuit with a more refined approach.
In fact, following the argument in the proof of~\cref{lem:discretization_error}, one can see that the state of the message passed over some edge $e$ can be written as
\begin{equation}\label{eq:numerical_representation_state}
  \sum\limits_{\vec{j}\in\{0,1\}^{k_e-1}} p_{\vec{j}} \rho_{\vec{j},D_e} \otimes \anglestate{c_{\vec{j}}}\anglestateconj{c_{\vec{j}}}_{C_e}
\end{equation}
where $k_e$ is the number of check nodes preceding $e$, $\{p_{\vec{j}} | \vec{j}\}$ is a probability distribution, $\rho_{\vec{j}}$ is a single-qubit density matrix, $c_{\vec{j}}\in\mathcal{A}_B$. Here $D_e$ and $C_e$ denote the data part, respectively the angle part, of the message passed over $e$.
Note that this description is not pure, but rather in terms of a density matrix.
This is necessary, since we implicitly traced out the systems $A_e,Z_e$ and $S_e$.\footnote{$\rho_{\vec{j}}$ can be obtained by tracing out the system $Z$ in the state $\ket{\tilde{\varphi}_e}$ seen in the proof of~\cref{lem:discretization_error}.}

We can store a state of the form in~\cref{eq:numerical_representation_state} as a list of $2^{k_e-1}$ tuples $(p_i,\rho_i,c_i)_{i=0,\dots,2^{k_e-1}-1}$ where the $p_i$ are non-negative and sum up to one, $\rho_i$ are real $2\times 2$ matrices and $c_i\in\mathcal{A}_B$ are angle cosines.
For simplicity, we actually store the $c_i$ as floating-point numbers and always round them to the closest value in $\mathcal{A}_B$.
Since we do not store the full state and only the reduced state on $D_e$ and $C_e$, this representation makes it impossible to undo the action of BPQM, so this trick is only useful to determine the decoding performance of a single codeword bit.

An equality node operation takes two messages
\begin{equation*}
  (p_{1,i},\rho_{1,i},c_{1,i})_{i=0,\dots,2^{k_1-1}-1} \text{ and } (p_{2,j},\rho_{2,j},c_{2,j})_{j=0,\dots,2^{k_2-1}-1} \, .
\end{equation*}
Its output message is denoted by
\begin{equation*}
  (p_{3,ij},\rho_{3,ij},c_{3,ij})_{i=0,\dots,2^{k_1-1}-1,j=0,\dots,2^{k_2-1}-1} \, .
\end{equation*}
$\rho_{3,ij}$ is obtained by applying the unitary $U_{\ostar}$ characterized by the angles $\quantalphabeta{B}{\alpha(c_{1,i},c_{2,j})}$ and $\quantalphabeta{B}{\beta(c_{1,i},c_{2,j})}$ on the state $\rho_{1,i}\otimes\rho_{2,j}$ and then tracing out the ancilla output qubit.
The angle $c_{3,ij}$ is simply given by $\quant(c_{1,i}\cdot c_{2,j})$ and $p_{3,ij}$ by $p_{1,i}p_{2,j}$.

Similarly, the check node operation takes two messages
\begin{align*}
  (p_{1,i},\rho_{1,i},c_{1,i})_{i=0,\dots,2^{k_1-1}-1} \text{ and } (p_{2,j}, \rho_{2,j},c_{2,j})_{j=0,\dots,2^{k_2-1}-1} \, .
\end{align*}
Its output message is denoted by
\begin{align*}
  (p_{3,ijl},\rho_{3,ijl},c_{3,ijl})_{i=0,\dots,2^{k_1-1}-1,j=0,\dots,2^{k_2-1}-1,l=0,1} \, .
\end{align*}
and is defined by
\begin{equation}
  \rho_{3,ijl} = \frac{ (\id\otimes \bra{l})\cdot \cnot{}\cdot (\rho_{1,i}\otimes \rho_{2,j})\cdot \cnot{} \cdot (\id\otimes \ket{l})} {\Tr\left[(\id\otimes \bra{l})\cdot \cnot{}\cdot (\rho_{1,i}\otimes \rho_{2,j})\cdot \cnot{} \cdot (\id\otimes \ket{l}) \right]}\, 
\end{equation}
\begin{equation}
  p_{3,ijl} =  p_{1,i}p_{2,j} \cdot \Tr\left[(\id\otimes \bra{l})\cdot \cnot{}\cdot (\rho_{1,i}\otimes \rho_{2,j})\cdot \cnot{} \cdot (\id\otimes \ket{l}) \right] \, ,
\end{equation}
and
\begin{equation}
  c_{3,ijl} = \quant \left( \frac{c_{1,i} + (-1)^lc_{2,j}}{1 + (-1)^lc_{1,i}c_{2,j}} \right) \, .
\end{equation}

The statistics of measuring the data qubit of the final message $(p_i,\rho_i,c_i)_{i=0,\dots,2^{k-1}-1}$ in the $\ket{\pm}$ basis are given by the probabilities of measuring $\rho_i$ in the $\ket{\pm}$ basis, summed and weighted with the probabilities $p_i$.

\section{Alternative decoders for 8-bit code without cloning}\label{app:alternative_decoders}
In~\cref{sec:cloning_numerics} we investigated the performance of BPQM decoding of $X_1$ when we unroll the computation tree for different number of steps $h$. The result was that the decoder performed best under the choice of $h=2$, in which case the channel output of the bit $X_3$ is approximately cloned. However, this is not a conclusive proof that cloning certain channel outputs is necessary in order to obtain the best possible decoder. One might argue that there could exist some other choice of $\mathcal{C}'$ that does not require the use of approximate cloning and that achieves better result than the $h=2$ case. The goal of this appendix is to argue that this is not the case, at least for the arguably natural candidates for such cloning-free decoders.

We consider certain tree subgraphs of the Tanner graph of the 8-bit code, which each define a code, which in turn has its optimal decoder realized by BPQM. We take these induced decoders as candidates for cloning-free decoders which could potentially outperform the $h=2$ decoder. For example, if one were to choose the subtree consisting of the variable nodes $X_1,X_2,X_4,X_5,X_8$ and the in-between check nodes, then one would retrieve the identical decoder as in the $h=1$ case.

In~\cref{fig:alternative_decoders} we depict the subtrees for three candidate strategies which we deem plausible to possibly outperform the $h=2$ case. Each of them contains more variable nodes than the $h=1$ case---because the more variables the decoder takes into account, the better we expect the result to be. The challenge in choosing these strategies is that we somehow want to include $X_6$ and $X_7$ without having to clone $X_3$. The numerical results of the three strategies are depicted in~\cref{fig:alternative_decoders_performance}.

\begin{figure}[t]
  \centering
  \begin{adjustbox}{max width=0.33\textwidth}
  \tikzsetnextfilename{Strategy1}
  \begin{tikzpicture}
    \node[squarenode] (CNW) {$+$};
    \node[roundnode] (X1) [right=0.5cm of CNW] {$X_1$};
    \node[squarenode] (CNE) [right=0.5cm of X1] {$+$};
    \node[roundnode] (X4) [below=0.5cm of CNE] {$X_4$};
    \node[squarenode] (CSE) [below=0.5cm of X4] {$+$};
    \node[roundnode] (X2) [below=0.5cm of CNW] {$X_2$};
    \node[squarenode] (CSW) [below=0.5cm of X2] {$+$};
    \node[roundnode] (X5) [above left=0.3cm and 0.3cm of CNW] {$X_5$};
    \node[roundnode] (X6) [below left=0.3cm and 0.3cm of CSW] {$X_6$};
    \node[roundnode] (X7) [below right=0.3cm and 0.3cm of CSE] {$X_7$};
    \node[roundnode] (X8) [above right=0.3cm and 0.3cm of CNE] {$X_8$};
    \node (text) [above=0.5cm of X1] {Strategy 1:};
    \draw (CNW) to (X1);
    \draw (X1) to (CNE);
    \draw (CNE) to (X4);
    \draw (X4) to (CSE);
    \draw (CSW) to (X2);
    \draw (X2) to (CNW);
    \draw (CNW) to (X5);
    \draw (CSW) to (X6);
    \draw (CSE) to (X7);
    \draw (CNE) to (X8);
  \end{tikzpicture}
  \end{adjustbox}
  \hspace{5mm}
  \begin{adjustbox}{max width=0.33\textwidth}
    \tikzsetnextfilename{Strategy2}
  \begin{tikzpicture}
    \node[squarenode] (CNW) {$+$};
    \node[roundnode] (X1) [right=0.5cm of CNW] {$X_1$};
    \node[squarenode] (CNE) [right=0.5cm of X1] {$+$};
    \node[roundnode] (X4) [below=0.5cm of CNE] {$X_4$};
    \node[squarenode] (CSE) [below=0.5cm of X4] {$+$};
    \node[roundnode] (X2) [below=0.5cm of CNW] {$X_2$};
    \node[squarenode] (CSW) [below=0.5cm of X2] {$+$};
    \node[roundnode] (X3) [right=0.5cm of CSW] {$X_3$};
    \node[roundnode] (X5) [above left=0.3cm and 0.3cm of CNW] {$X_5$};
    \node[roundnode] (X6) [below left=0.3cm and 0.3cm of CSW] {$X_6$};
    \node[roundnode] (X7) [below right=0.3cm and 0.3cm of CSE] {$X_7$};
    \node[roundnode] (X8) [above right=0.3cm and 0.3cm of CNE] {$X_8$};
    \node (text) [above=0.5cm of X1] {Strategy 2:};
    \draw (CNW) to (X1);
    \draw (X1) to (CNE);
    \draw (CNE) to (X4);
    \draw (X4) to (CSE);
    \draw (X3) to (CSW);
    \draw (CSW) to (X2);
    \draw (X2) to (CNW);
    \draw (CNW) to (X5);
    \draw (CSW) to (X6);
    \draw (CSE) to (X7);
    \draw (CNE) to (X8);
  \end{tikzpicture}
  \end{adjustbox}
  \hspace{5mm}
  \begin{adjustbox}{max width=0.33\textwidth}
    \tikzsetnextfilename{Strategy3}
  \begin{tikzpicture}
    \node[squarenode] (CNW) {$+$};
    \node[roundnode] (X1) [right=0.5cm of CNW] {$X_1$};
    \node[squarenode] (CNE) [right=0.5cm of X1] {$+$};
    \node[roundnode] (X4) [below=0.5cm of CNE] {$X_4$};
    \node[squarenode] (CSE) [below=0.5cm of X4] {$+$};
    \node[roundnode] (X2) [below=0.5cm of CNW] {$X_2$};
    \node[roundnode] (X3) [left=0.5cm of CSE] {$X_3$};
    \node[roundnode] (X5) [above left=0.3cm and 0.3cm of CNW] {$X_5$};
    \node[roundnode] (X7) [below right=0.3cm and 0.3cm of CSE] {$X_7$};
    \node[roundnode] (X8) [above right=0.3cm and 0.3cm of CNE] {$X_8$};
    \node (text) [above=0.5cm of X1] {Strategy 3:};
    \draw (CNW) to (X1);
    \draw (X1) to (CNE);
    \draw (CNE) to (X4);
    \draw (X4) to (CSE);
    \draw (CSE) to (X3);
    \draw (X2) to (CNW);
    \draw (CNW) to (X5);
    \draw (CSE) to (X7);
    \draw (CNE) to (X8);
  \end{tikzpicture}
  \end{adjustbox}
  \caption{Three candidate strategies for cloning-free decoders.}
  \label{fig:alternative_decoders}
\end{figure}
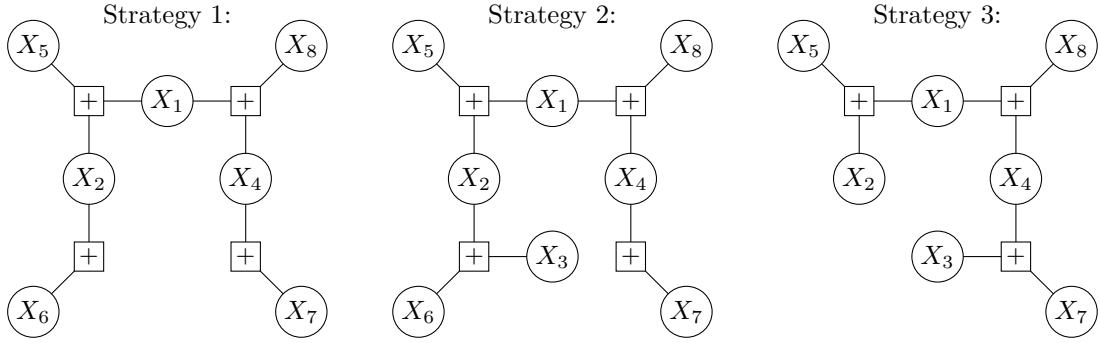

\begin{figure}
  \centering
   \includegraphics[width=0.75\textwidth]{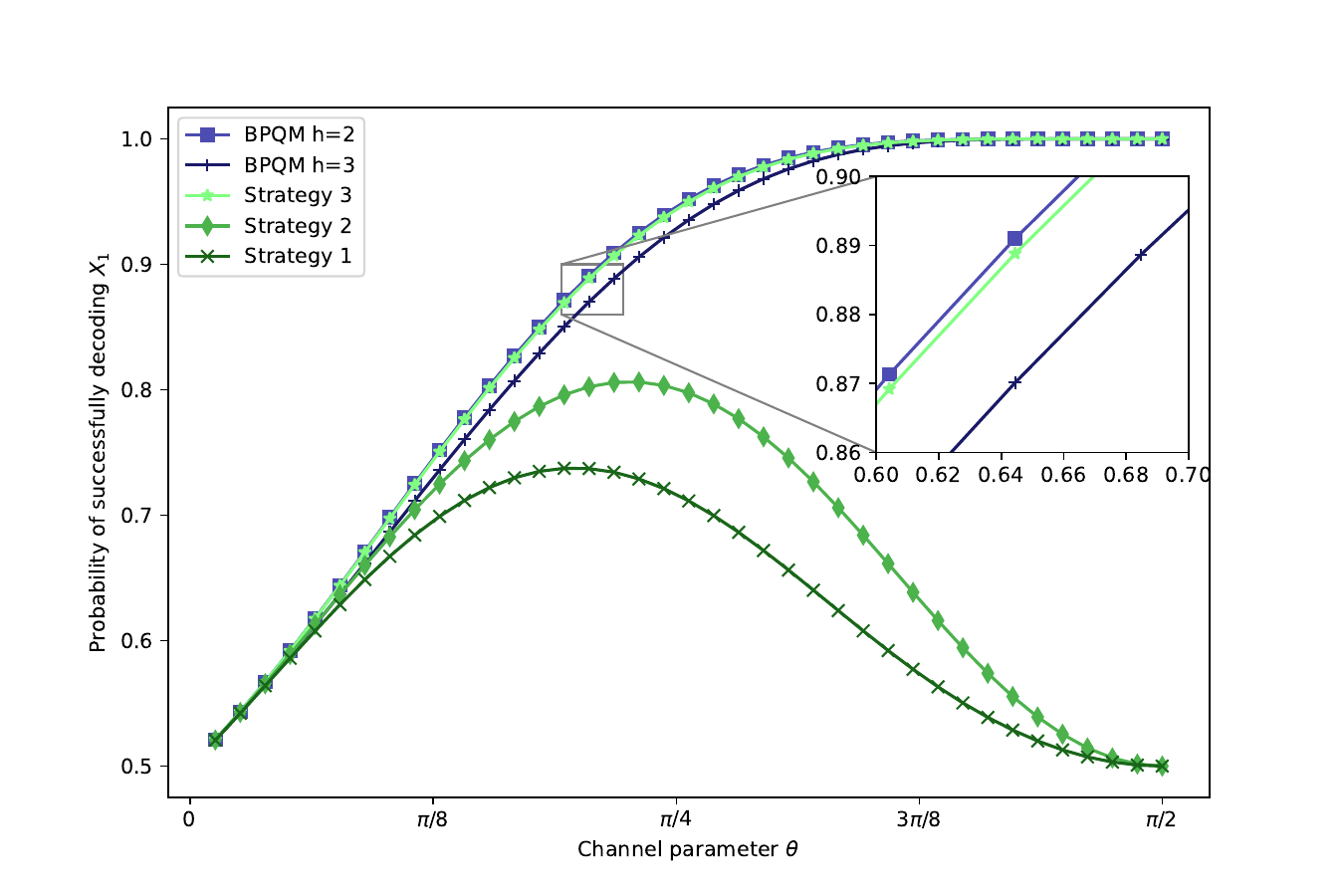}
  \caption{Decoding performance of the three candidate strategies depicted in~\cref{fig:alternative_decoders} for cloning-free decoders of the 8-bit code.}
  \label{fig:alternative_decoders_performance}
\end{figure}

All three strategies are outperformed by $h=2$, though strategy 3 performs only slightly worse than $h=2$. A more careful analysis of the strategies 1 and 2 reveals that their performance depends on the input codeword. They perform badly on codewords where $X_3=1$, and in fact have a zero percent success probability for these codewords in the limit $\theta\rightarrow \pi/2$. This makes sense---the Tanner graph on which their decoder is based on the implicit assumption that $X_7=X_4$, which only holds if $X_3=0$.

\printbibliography[heading=bibintoc,title={\large References}]

@article{belavkin_1975,
  title = {Optimal Multiple Quantum Statistical Hypothesis Testing},
  author = {Belavkin, V. P.},
  date = {1975},
  journaltitle = {Stochastics},
  volume = {1},
  number = {1},
  pages = {315},
  issn = {0090-9491},
  doi = {10.1080/17442507508833114},
  url = {http://www.informaworld.com/10.1080/17442507508833114},
  urldate = {2010-03-04}
}

@article{bridgeman_hand-waving_2017,
  title = {Hand-Waving and Interpretive Dance: An Introductory Course on Tensor Networks},
  author = {Bridgeman, Jacob C. and Chubb, Christopher T.},
  date = {2017},
  journaltitle = {Journal of Physics A: Mathematical and Theoretical},
  volume = {50},
  number = {22},
  eprint = {1603.03039},
  eprinttype = {arxiv},
  primaryclass = {quant-ph},
  pages = {223001},
  doi = {10.1088/1751-8121/aa6dc3},
  url = {http://stacks.iop.org/1751-8121/50/i=22/a=223001},
  urldate = {2017-09-14},
  archiveprefix = {arXiv},
}

@unpublished{rengaswamy_semiclassical_2021,
  title = {A {{Semiclassical Proof}} of {{Duality Between}} the {{Classical BSC}} and the {{Quantum PSC}}},
  author = {Rengaswamy, Narayanan and Pfister, Henry D.},
  date = {2021-03-16},
  eprint = {2103.09225},
  eprinttype = {arxiv},
  primaryclass = {quant-ph},
  url = {http://arxiv.org/abs/2103.09225},
  archiveprefix = {arXiv},
  doi = {10.48550/arXiv.2103.09225}
}

@article{guha_structured_2011,
  title = {Structured Optical Receivers to Attain Superadditive Capacity and the {{Holevo}} Limit},
  author = {Guha, Saikat},
  date = {2011-06-14},
  journaltitle = {Physical Review Letters},
  volume = {106},
  number = {24},
  eprint = {1101.1550},
  eprinttype = {arxiv},
  pages = {240502},
  doi = {10.1103/PhysRevLett.106.240502},
  url = {http://arxiv.org/abs/1101.1550},
  archiveprefix = {arXiv},
  primaryclass = {quant-ph},

}

@article{arikan_channel_2009,
  title = {Channel {{Polarization}}: {{A Method}} for {{Constructing Capacity}}-{{Achieving Codes}} for {{Symmetric Binary}}-{{Input Memoryless Channels}}},
  shorttitle = {Channel {{Polarization}}},
  author = {Arıkan, E.},
  date = {2009},
  journaltitle = {IEEE Transactions on Information Theory},
  volume = {55},
  number = {7},
  eprint = {0807.3917},
  eprinttype = {arxiv},
  primaryclass = {cs.IT},
  pages = {3051--3073},
  issn = {0018-9448},
  doi = {10.1109/TIT.2009.2021379},
  url = {http://dx.doi.org/10.1109/TIT.2009.2021379},
  archiveprefix = {arXiv},
  langid = {english}
}

@article{renes_polar_2014,
  title = {Polar {{Codes}} for {{Private}} and {{Quantum Communication Over Arbitrary Channels}}},
  author = {Renes, Joseph M. and Wilde, Mark M.},
  date = {2014},
  journaltitle = {IEEE Transactions on Information Theory},
  volume = {60},
  number = {6},
  eprint = {1212.2537},
  eprinttype = {arxiv},
  primaryclass = {quant-ph},
  pages = {3090--3103},
  issn = {0018-9448},
  doi = {10.1109/TIT.2014.2314463},
  archiveprefix = {arXiv}
}

@article{wilde_polar_2013,
  title = {Polar Codes for Classical-Quantum Channels},
  author = {Wilde, Mark M. and Guha, Saikat},
  date = {2013},
  journaltitle = {IEEE Transactions on Information Theory},
  volume = {59},
  number = {2},
  eprint = {1109.2591},
  eprinttype = {arxiv},
  primaryclass = {quant-ph},
  pages = {1175--1187},
  issn = {0018-9448},
  doi = {10.1109/TIT.2012.2218792},
  url = {http://dx.doi.org/10.1109/TIT.2012.2218792},
  archiveprefix = {arXiv}
}

@inproceedings{guha_polar_2012,
  title = {Polar Coding to Achieve the {{Holevo}} Capacity of a Pure-Loss Optical Channel},
  booktitle = {Proceedings of the 2012 {{IEEE International Symposium}} on {{Information Theory}} ({{ISIT}})},
  author = {Guha, S. and Wilde, M.M.},
  date = {2012},
  eprint = {1202.0533},
  eprinttype = {arxiv},
  primaryclass = {cs.IT},
  pages = {546--550},
  doi = {10.1109/ISIT.2012.6284250},
  archiveprefix = {arXiv},
  eventtitle = {2012 {{IEEE International Symposium}} on {{Information Theory Proceedings}} ({{ISIT}})}
}

@article{bennett_1973,
  author={Bennett, C. H.},
  journal={IBM Journal of Research and Development}, 
  title={Logical {{Reversibility}} of {{Computation}}}, 
  year={1973},
  volume={17},
  number={6},
  pages={525-532},
  doi={10.1147/rd.176.0525}
}

@article{bergholm_2005,
  title = {Quantum Circuits with Uniformly Controlled One-Qubit Gates},
  author = {Bergholm, Ville and Vartiainen, Juha J. and Mottonen, Mikko and Salomaa, Martti M.},
  date = {2005},
  journaltitle = {Physical Review A},
  shortjournal = {Phys. Rev. A},
  volume = {71},
  number = {5},
  eprint = {quant-ph/0410066},
  eprinttype = {arxiv},
  pages = {052330},
  issn = {1050-2947, 1094-1622},
  doi = {10.1103/PhysRevA.71.052330},
  url = {http://arxiv.org/abs/quant-ph/0410066},
  urldate = {2021-08-22},
  archiveprefix = {arXiv}
}

@article{bhaskar_2016,
  title = {Quantum Algorithms and Circuits for Scientific Computing},
  author = {Bhaskar, Mihir K. and Hadfield, Stuart and Papageorgiou, Anargyros and Petras, Iasonas},
  date = {2016},
  journaltitle = {Quantum Information \& Computation},
  shortjournal = {Quantum Info. Comput.},
  volume = {16},
  number = {3-4},
  doi ={10.26421/QIC16.3-4-2},
  eprint = {1511.08253},
  eprinttype = {arxiv},
  primaryclass = {quant-ph},
  pages = {197--236},
  issn = {1533-7146},
  archiveprefix = {arXiv}
}

@incollection{brent_1976,
  title = {Multiple-Precision Zero-Finding Methods and the Complexity of Elementary Function Evaluation},
  booktitle = {Analytic {{Computational Complexity}}},
  author = {Brent, Richard P.},
  editor = {Traub, J. F.},
  date = {1976},
  eprint = {1004.3412},
  eprinttype = {arxiv},
  primaryclass = {cs.NA},
  pages = {151--176},
  publisher = {{Academic Press}},
  doi = {10.1016/B978-0-12-697560-4.50014-9},
  url = {https://www.sciencedirect.com/science/article/pii/B9780126975604500149},
  urldate = {2021-08-22},
  archiveprefix = {arXiv},
  isbn = {978-0-12-697560-4},
  langid = {english}
}

@article{bruss_1998,
  title = {Optimal Universal and State-Dependent Quantum Cloning},
  author = {Bruß, Dagmar and DiVincenzo, David P. and Ekert, Artur and Fuchs, Christopher A. and Macchiavello, Chiara and Smolin, John A.},
  date = {1998},
  journaltitle = {Physical Review A},
  shortjournal = {Phys. Rev. A},
  volume = {57},
  number = {4},
  eprint = {quant-ph/9705038},
  eprinttype = {arxiv},
  pages = {2368--2378},
  doi = {10.1103/PhysRevA.57.2368},
  url = {https://link.aps.org/doi/10.1103/PhysRevA.57.2368},
  urldate = {2017-12-27},
  archiveprefix = {arXiv}
}

@inproceedings{cao_2017,
  title = {Double-Edge Factor Graphs: {{Definition}}, Properties, and Examples},
  shorttitle = {Double-Edge Factor Graphs},
  booktitle = {2017 {{IEEE Information Theory Workshop}} ({{ITW}})},
  author = {Cao, M. X. and Vontobel, P. O.},
  date = {2017},
  eprint = {1706.00752},
  eprinttype = {arxiv},
  primaryclass = {cs.IT},
  pages = {136--140},
  doi = {10.1109/ITW.2017.8277985},
  archiveprefix = {arXiv},
  eventtitle = {2017 {{IEEE Information Theory Workshop}} ({{ITW}})}
}

@article{cao_2013,
  title = {Quantum Algorithm and Circuit Design Solving the {{Poisson}} Equation},
  author = {Cao, Yudong and Papageorgiou, Anargyros and Petras, Iasonas and Traub, Joseph and Kais, Sabre},
  date = {2013},
  journaltitle = {New Journal of Physics},
  shortjournal = {New J. Phys.},
  volume = {15},
  number = {1},
  eprint = {1207.2485},
  eprinttype = {arxiv},
  primaryclass = {quant-ph},
  pages = {013021},
  issn = {1367-2630},
  doi = {10.1088/1367-2630/15/1/013021},
  url = {http://arxiv.org/abs/1207.2485},
  urldate = {2021-08-22},
  archiveprefix = {arXiv}
}

@article{eldar_2001,
  title = {On Quantum Detection and the Square-Root Measurement},
  author = {Eldar, Y.C. and Forney, G.D., Jr.},
  date = {2001},
  journaltitle = {IEEE Transactions on Information Theory},
  volume = {47},
  number = {3},
  pages = {858--872},
  issn = {0018-9448},
  doi = {10.1109/18.915636},
  eprint = {quant-ph/0005132},
  eprinttype = {arxiv}
}

@article{etzion_1999,
  title = {Which Codes Have Cycle-Free {{Tanner}} Graphs?},
  author = {Etzion, T. and Trachtenberg, A. and Vardy, A.},
  date = {1999},
  journaltitle = {IEEE Transactions on Information Theory},
  volume = {45},
  number = {6},
  pages = {2173--2181},
  issn = {1557-9654},
  doi = {10.1109/18.782170},
  eventtitle = {{{IEEE Transactions}} on {{Information Theory}}}
}

@article{forney_2001,
  title = {Codes on Graphs: Normal Realizations},
  shorttitle = {Codes on Graphs},
  author = {Forney, G. David},
  date = {2001},
  journaltitle = {IEEE Transactions on Information Theory},
  volume = {47},
  number = {2},
  pages = {520--548},
  issn = {0018-9448},
  doi = {10.1109/18.910573}
}

@article{haner_2018,
  title = {Optimizing {{Quantum Circuits}} for {{Arithmetic}}},
  author = {Häner, Thomas and Roetteler, Martin and Svore, Krysta M.},
  date = {2018},
  eprint = {1805.12445},
  eprinttype = {arxiv},
  primaryclass = {quant-ph},
  url = {http://arxiv.org/abs/1805.12445},
  urldate = {2021-08-22},
  archiveprefix = {arXiv},
  doi = {10.48550/arXiv.1805.12445}
}

@book{watrous_theory_2018,
  title = {The {{Theory}} of {{Quantum Information}}},
  author = {Watrous, John},
  date = {2018},
  publisher = {{Cambridge University Press}},
  doi = {10.1017/9781316848142},
  url = {https://www.cambridge.org/core/product/identifier/9781316848142/type/book},
  urldate = {2018-06-28},
  isbn = {978-1-316-84814-2 978-1-107-18056-7}
}

@article{criger_multi-path_2018,
  title = {Multi-Path {{Summation}} for {{Decoding 2D Topological Codes}}},
  author = {Criger, Ben and Ashraf, Imran},
  date = {2018},
  journaltitle = {Quantum},
  shortjournal = {Quantum},
  volume = {2},
  eprint = {1709.02154},
  eprinttype = {arxiv},
  pages = {102},
  issn = {2521-327X},
  doi = {10.22331/q-2018-10-19-102},
  url = {http://arxiv.org/abs/1709.02154},
  urldate = {2021-09-09},
  archiveprefix = {arXiv}
}

@article{wang_enhanced_2012,
  title = {Enhanced {{Feedback Iterative Decoding}} of {{Sparse Quantum Codes}}},
  author = {Wang, Yun-Jiang and Sanders, Barry C. and Bai, Bao-Ming and Wang, Xin-Mei},
  date = {2012},
  journaltitle = {IEEE Transactions on Information Theory},
  shortjournal = {IEEE Trans. Inform. Theory},
  volume = {58},
  number = {2},
  eprint = {0912.4546},
  eprinttype = {arxiv},
  pages = {1231--1241},
  issn = {0018-9448, 1557-9654},
  doi = {10.1109/TIT.2011.2169534},
  url = {http://arxiv.org/abs/0912.4546},
  urldate = {2021-09-09},
  archiveprefix = {arXiv}
}

@article{panteleev_degenerate_2019,
	doi = {10.22331/q-2021-11-22-585},
	url = {https://doi.org/10.22331%2Fq-2021-11-22-585},
	year = 2021,
	month = {nov},
	publisher = {Verein zur Forderung des Open Access Publizierens in den Quantenwissenschaften},
	volume = {5},
	pages = {585},
	author = {Pavel Panteleev and Gleb Kalachev},
	title = {Degenerate {{Quantum LDPC Codes With Good Finite Length Performance}}},
	journal = {Quantum}
}

@article{roffe_decoding_2020,
  title = {Decoding across the Quantum Low-Density Parity-Check Code Landscape},
  author = {Roffe, Joschka and White, David R. and Burton, Simon and Campbell, Earl},
  date = {2020},
  journaltitle = {Physical Review Research},
  shortjournal = {Phys. Rev. Research},
  volume = {2},
  number = {4},
  eprint = {2005.07016},
  eprinttype = {arxiv},
  primaryclass = {quant-ph},
  pages = {043423},
  doi = {10.1103/PhysRevResearch.2.043423},
  url = {https://link.aps.org/doi/10.1103/PhysRevResearch.2.043423},
  urldate = {2021-09-09},
  archiveprefix = {arXiv}
}

@inproceedings{li_pseudocodeword-based_2019,
  title = {Pseudocodeword-Based {{Decoding}} of {{Quantum Stabilizer Codes}}},
  booktitle = {2019 {{IEEE International Symposium}} on {{Information Theory}} ({{ISIT}})},
  author = {Li, July X. and Vontobel, Pascal O.},
  date = {2019},
  eprint = {1903.01202},
  eprinttype = {arxiv},
  primaryclass = {cs.IT},
  pages = {2888--2892},
  doi = {10.1109/ISIT.2019.8849833},
  archiveprefix = {arXiv},
  eventtitle = {2019 {{IEEE International Symposium}} on {{Information Theory}} ({{ISIT}})}
}

@inproceedings{li_pseudocodeword-based_2020,
  title = {Pseudocodeword-Based {{Decoding}} of {{Quantum Color Codes}}},
  author = {Li, July X. and Renes, Joseph M. and Vontobel, Pascal O.},
  date = {2020},
  eprint = {2010.10845},
  eprinttype = {arxiv},
  primaryclass = {quant-ph},
  url = {http://arxiv.org/abs/2010.10845},
  urldate = {2021-09-09},
  archiveprefix = {arXiv},
  doi = {10.48550/arXiv.2010.10845}
}

@article{liu_neural_2019,
  title = {Neural {{Belief}}-{{Propagation Decoders}} for {{Quantum Error}}-{{Correcting Codes}}},
  author = {Liu, Ye-Hua and Poulin, David},
  date = {2019},
  journaltitle = {Physical Review Letters},
  shortjournal = {Phys. Rev. Lett.},
  volume = {122},
  number = {20},
  eprint = {1811.07835},
  eprinttype = {arxiv},
  pages = {200501},
  issn = {0031-9007, 1079-7114},
  doi = {10.1103/PhysRevLett.122.200501},
  url = {http://arxiv.org/abs/1811.07835},
  urldate = {2021-09-09},
  archiveprefix = {arXiv}
}

@article{rigby_modified_2019,
  title = {Modified Belief Propagation Decoders for Quantum Low-Density Parity-Check Codes},
  author = {Rigby, Alex and Olivier, J. C. and Jarvis, Peter},
  date = {2019},
  journaltitle = {Physical Review A},
  shortjournal = {Phys. Rev. A},
  volume = {100},
  number = {1},
  eprint = {1903.07404},
  eprinttype = {arxiv},
  primaryclass = {quant-ph},
  pages = {012330},
  issn = {2469-9926, 2469-9934},
  doi = {10.1103/PhysRevA.100.012330},
  url = {https://link.aps.org/doi/10.1103/PhysRevA.100.012330},
  urldate = {2021-02-05},
  archiveprefix = {arXiv},
  langid = {english}
}

@article{poulin_iterative_2008,
  title = {On the Iterative Decoding of Sparse Quantum Codes},
  author = {Poulin, David and Chung, Yeojin},
  date = {2008},
  journaltitle = {Quantum Information and Computation},
  volume = {8},
  number = {10},
  eprint = {0801.1241},
  eprinttype = {arxiv},
  primaryclass = {quant-ph},
  pages = {987--1000},
  archiveprefix = {arXiv},
  doi = {10.26421/QIC8.10-8}
}

@article{poulin_optimal_2006,
  title = {Optimal and Efficient Decoding of Concatenated Quantum Block Codes},
  author = {Poulin, David},
  date = {2006},
  journaltitle = {Physical Review A},
  shortjournal = {Phys. Rev. A},
  volume = {74},
  number = {5},
  eprint = {quant-ph/0606126},
  eprinttype = {arxiv},
  pages = {052333},
  doi = {10.1103/PhysRevA.74.052333},
  url = {http://link.aps.org/doi/10.1103/PhysRevA.74.052333},
  urldate = {2013-04-07},
  archiveprefix = {arXiv}
}

@article{ban_optimum_1997,
  title = {Optimum Measurements for Discrimination among Symmetric Quantum States and Parameter Estimation},
  author = {Ban, Masashi and Kurokawa, Keiko and Momose, Rei and Hirota, Osamu},
  date = {1997},
  journaltitle = {International Journal of Theoretical Physics},
  shortjournal = {Int J Theor Phys},
  volume = {36},
  number = {6},
  pages = {1269--1288},
  issn = {1572-9575},
  doi = {10.1007/BF02435921},
  url = {https://doi.org/10.1007/BF02435921},
  urldate = {2020-06-23},
  langid = {english}
}

@article{sasaki_quantum_1998,
  title = {Quantum Channels Showing Superadditivity in Classical Capacity},
  author = {Sasaki, Masahide and Kato, Kentaro and Izutsu, Masayuki and Hirota, Osamu},
  date = {1998},
  journaltitle = {Physical Review A},
  shortjournal = {Phys. Rev. A},
  volume = {58},
  number = {1},
  eprint = {quant-ph/9801012},
  eprinttype = {arxiv},
  pages = {146--158},
  doi = {10.1103/PhysRevA.58.146},
  url = {http://link.aps.org/doi/10.1103/PhysRevA.58.146},
  urldate = {2016-07-08},
  archiveprefix = {arXiv}
}

@article{harvey_2021,
  title = {Integer Multiplication in Time {{O}}(n Log n)},
  author = {{Harvey} and {van der Hoeven}},
  date = {2021},
  journaltitle = {Annals of Mathematics},
  shortjournal = {Annals of Mathematics},
  volume = {193},
  number = {2},
  pages = {563},
  issn = {0003486X},
  doi = {10.4007/annals.2021.193.2.4}
}

@article{hastings_quantum_2007,
  title = {Quantum Belief Propagation: {{An}} Algorithm for Thermal Quantum Systems},
  shorttitle = {Quantum Belief Propagation},
  author = {Hastings, M. B.},
  date = {2007-11-13},
  journaltitle = {Physical Review B},
  shortjournal = {Phys. Rev. B},
  volume = {76},
  number = {20},
  eprint = {0706.4094},
  eprinttype = {arxiv},
  primaryclass = {cond-mat},
  pages = {201102},
  doi = {10.1103/PhysRevB.76.201102},
  url = {http://link.aps.org/doi/10.1103/PhysRevB.76.201102},
  urldate = {2013-07-24},
  archiveprefix = {arXiv}
}

@article{hausladen_pretty_1994,
  title = {A `{{Pretty Good}}' {{Measurement}} for {{Distinguishing Quantum States}}},
  author = {Hausladen, Paul and Wootters, William K.},
  date = {1994},
  journaltitle = {Journal of Modern Optics},
  volume = {41},
  number = {12},
  pages = {2385},
  issn = {0950-0340},
  doi = {10.1080/09500349414552221},
  url = {http://www.informaworld.com/10.1080/09500349414552221},
  urldate = {2008-03-17}
}

@book{helstrom_1976,
  title = {Quantum Detection and Estimation Theory},
  author = {Helstrom, C. W.},
  date = {1976},
  series = {Mathematics in {{Science}} and {{Engineering}}},
  volume = {123},
  publisher = {{Academic}},
  location = {{London}},
  url = {http://www.sciencedirect.com/science/bookseries/00765392/123},
  doi = {10.1016/S0076-5392(08)60247-7}
}

@article{kasi_2020,
  title = {Towards {{Quantum Belief Propagation}} for {{LDPC Decoding}} in {{Wireless Networks}}},
  author = {Kasi, Srikar and Jamieson, Kyle},
  date = {2020-09-21},
  journaltitle = {Proceedings of the 26th Annual International Conference on Mobile Computing and Networking},
  eprint = {2007.11069},
  eprinttype = {arxiv},
  pages = {1--14},
  doi = {10.1145/3372224.3419207},
  url = {http://arxiv.org/abs/2007.11069},
  urldate = {2021-08-22},
  archiveprefix = {arXiv},
  primaryclass = {cs.NI}
}

@article{kschischang_2001,
  title = {Factor Graphs and the Sum-Product Algorithm},
  author = {Kschischang, F. R. and Frey, B. J. and Loeliger, H. A.},
  date = {2001-02},
  journaltitle = {IEEE Transactions on Information Theory},
  volume = {47},
  number = {2},
  pages = {498--519},
  issn = {0018-9448},
  doi = {10.1109/18.910572}
}

@article{kudekar_2013,
  title = {Spatially {{Coupled Ensembles Universally Achieve Capacity Under Belief Propagation}}},
  author = {Kudekar, S. and Richardson, T. and Urbanke, R.L.},
  date = {2013-12},
  journaltitle = {IEEE Transactions on Information Theory},
  volume = {59},
  number = {12},
  eprint = {1201.2999},
  eprinttype = {arxiv},
  primaryclass = {cs.IT},
  pages = {7761--7813},
  issn = {0018-9448},
  doi = {10.1109/TIT.2013.2280915},
  archiveprefix = {arXiv}
}

@article{leifer_quantum_2008,
  title = {Quantum {{Graphical Models}} and {{Belief Propagation}}},
  author = {Leifer, M.S. and Poulin, D.},
  date = {2008-08},
  journaltitle = {Annals of Physics},
  shortjournal = {Annals of Physics},
  volume = {323},
  number = {8},
  eprint = {0708.1337},
  eprinttype = {arxiv},
  primaryclass = {quant-ph},
  pages = {1899--1946},
  issn = {0003-4916},
  doi = {10.1016/j.aop.2007.10.001},
  url = {http://www.sciencedirect.com/science/article/pii/S0003491607001509},
  urldate = {2013-04-07},
  archiveprefix = {arXiv}
}

@article{loeliger_2017,
  title = {Factor {{Graphs}} for {{Quantum Probabilities}}},
  author = {Loeliger, H. A. and Vontobel, P. O.},
  date = {2017-09},
  journaltitle = {IEEE Transactions on Information Theory},
  volume = {63},
  number = {9},
  eprint = {1508.00689},
  eprinttype = {arxiv},
  primaryclass = {cs.IT},
  pages = {5642--5665},
  issn = {0018-9448},
  doi = {10.1109/TIT.2017.2716422},
  archiveprefix = {arXiv}
}

@article{loeliger_2004,
  title = {An Introduction to Factor Graphs},
  author = {Loeliger, Hans-Andrea},
  date = {2004-01},
  journaltitle = {IEEE Signal Processing Magazine},
  volume = {21},
  number = {1},
  pages = {28--41},
  issn = {1053-5888},
  doi = {10.1109/MSP.2004.1267047}
}

@article{poulin_markov_2011,
  title = {Markov {{Entropy Decomposition}}: {{A Variational Dual}} for {{Quantum Belief Propagation}}},
  shorttitle = {Markov {{Entropy Decomposition}}},
  author = {Poulin, David and Hastings, Matthew B.},
  date = {2011-02-24},
  journaltitle = {Physical Review Letters},
  shortjournal = {Phys. Rev. Lett.},
  volume = {106},
  number = {8},
  eprint = {1012.2050},
  eprinttype = {arxiv},
  primaryclass = {quant-ph},
  pages = {080403},
  doi = {10.1103/PhysRevLett.106.080403},
  url = {http://link.aps.org/doi/10.1103/PhysRevLett.106.080403},
  urldate = {2013-09-09},
  archiveprefix = {arXiv}
}

@article{renes_2017,
  title = {Belief Propagation Decoding of Quantum Channels by Passing Quantum Messages},
  author = {Renes, Joseph M.},
  date = {2017-07},
  journaltitle = {New Journal of Physics},
  shortjournal = {New J. Phys.},
  volume = {19},
  number = {7},
  eprint = {1607.04833},
  eprinttype = {arxiv},
  primaryclass = {quant-ph},
  pages = {072001},
  issn = {1367-2630},
  doi = {10.1088/1367-2630/aa7c78},
  url = {http://stacks.iop.org/1367-2630/19/i=7/a=072001},
  urldate = {2017-07-28},
  archiveprefix = {arXiv},
  langid = {english}
}

@article{rengaswamy_2020,
  title = {Belief Propagation with Quantum Messages for Quantum-Enhanced Classical Communications},
  author = {Rengaswamy, Narayanan and Seshadreesan, Kaushik P. and Guha, Saikat and Pfister, Henry D.},
  date = {2021},
  journaltitle = {npj Quantum Information},
  volume = {7},
  eprint = {2003.04356},
  eprinttype = {arxiv},
  primaryclass = {quant-ph},
  pages = {97},
  issn = {2056-6387},
  doi = {10.1038/s41534-021-00422-1},
  url = {https://www.nature.com/articles/s41534-021-00422-1},
  urldate = {2021-09-15},
  archiveprefix = {arXiv},
  langid = {english}
}

@article{peierls_statistical_1936,
  title = {Statistical Theory of Superlattices with Unequal Concentrations of the Components},
  author = {Peierls, Rudolf},
  date = {1936-03-02},
  journaltitle = {Proceedings of the Royal Society A},
  shortjournal = {Proc. R. Soc. Lond. A},
  volume = {154},
  number = {881},
  pages = {207--222},
  issn = {0080-4630, 2053-9169},
  doi = {10.1098/rspa.1936.0047},
  url = {https://royalsocietypublishing.org/doi/10.1098/rspa.1936.0047},
  urldate = {2021-02-17},
  langid = {english}
}

@inproceedings{yedidia_generalized_2000,
  title = {Generalized Belief Propagation},
  booktitle = {Proceedings of the 13th {{International Conference}} on {{Neural Information Processing Systems}}},
  author = {Yedidia, Jonathan S. and Freeman, William T. and Weiss, Yair},
  date = {2000},
  series = {{{NIPS}}'00},
  pages = {668--674},
  publisher = {{MIT Press}},
  location = {{Cambridge, MA, USA}}
}

@inproceedings{cao_quantum_2016,
  title = {Quantum Factor Graphs: {{Closing}}-the-Box Operation and Variational Approaches},
  shorttitle = {Quantum Factor Graphs},
  booktitle = {2016 {{International Symposium}} on {{Information Theory}} and {{Its Applications}} ({{ISITA}})},
  author = {Cao, M. X. and Vontobel, P. O.},
  date = {2016},
  pages = {651--655},
  url = {https://ieeexplore.ieee.org/document/7840505},
  eventtitle = {2016 {{International Symposium}} on {{Information Theory}} and {{Its Applications}} ({{ISITA}})}
}

@article{yedidia_constructing_2005,
  title = {Constructing Free-Energy Approximations and Generalized Belief Propagation Algorithms},
  author = {Yedidia, J.S. and Freeman, W.T. and Weiss, Y.},
  date = {2005-07},
  journaltitle = {Information Theory, IEEE Transactions on},
  volume = {51},
  number = {7},
  pages = {2282--2312},
  issn = {0018-9448},
  doi = {10.1109/TIT.2005.850085}
}

@incollection{yedidia_understanding_2003,
  title = {Understanding {{Belief Propagation}} and {{Its Generalizations}}},
  booktitle = {Exploring {{Artificial Intelligence}} in the {{New Millennium}}},
  author = {Yedidia, Jonathan S. and Freeman, William T. and Weiss, Yair},
  editor = {Lakemeyer, Gerhard and Nebel, Bernhard},
  date = {2003},
  pages = {239--269},
  publisher = {{Morgan Kaufmann Publishers Inc.}},
  location = {{San Francisco, CA, USA}},
  url = {www.merl.com/papers/docs/TR2001-22.pdf},
  urldate = {2016-08-26},
  isbn = {978-1-55860-811-5}
}

@article{betheStatisticalTheorySuperlattices1935,
  title = {Statistical {{Theory}} of {{Superlattices}}},
  author = {Bethe, H. A.},
  date = {1935-07-01},
  journaltitle = {Proceedings of the Royal Society A},
  volume = {150},
  number = {871},
  pages = {552--575},
  issn = {1364-5021, 1471-2946},
  doi = {10.1098/rspa.1935.0122},
  url = {http://rspa.royalsocietypublishing.org/content/150/871/552},
  urldate = {2017-09-17},
  langid = {english}
}

@book{richardson_2008,
  title = {Modern {{Coding Theory}}},
  author = {Richardson, Tom and Urbanke, Rüdiger},
  date = {2008},
  publisher = {{Cambridge University Press}},
  isbn = {0-521-85229-3},
  doi = {10.1017/CBO9780511791338}
}

@article{wang_2020,
  title = {Quantum Circuits Design for Evaluating Transcendental Functions Based on a Function-Value Binary Expansion Method},
  author = {Wang, Shengbin and Wang, Zhimin and Li, Wendong and Fan, Lixin and Cui, Guolong and Wei, Zhiqiang and Gu, Yongjian},
  date = {2020-09-22},
  journaltitle = {Quantum Information Processing},
  shortjournal = {Quantum Inf Process},
  volume = {19},
  number = {10},
  eprint = {2001.00807},
  eprinttype = {arxiv},
  primaryclass = {quant-ph},
  pages = {347},
  issn = {1573-1332},
  doi = {10.1007/s11128-020-02855-7},
  url = {https://doi.org/10.1007/s11128-020-02855-7},
  urldate = {2021-08-22},
  archiveprefix = {arXiv},
  langid = {english}
}

\end{document}